\newcommand{\ket}[1]{|{#1}  \rangle}
\newcommand{\norm}[1]{{\|}#1{\|}}
\newcommand{\poly}{\operatorname{poly}}
\DeclareMathOperator{\sign}{sign}
\newcommand{\bfbeta}{\boldsymbol{\beta}}
\newcommand{\bfeta}{\boldsymbol{\eta}}
\newcommand{\bftheta}{\boldsymbol{\theta}}
\newcommand{\bfmu}{\boldsymbol{\mu}}
\newcommand{\bfDelta}{\boldsymbol{\Delta}}
\newcommand{\bfkappa}{\boldsymbol{\kappa}}
\newcommand{\bfhatbeta}{\hat {\boldsymbol{\beta}}}
\newtheorem{theorem}{Theorem}
\newtheorem{lemma}[theorem]{Lemma}
\newtheorem{remark}[theorem]{Remark}
\newtheorem{definition}[theorem]{Definition} 
\newtheorem{fact}[theorem]{Fact}
\newtheorem{assumption}[theorem]{Assumption}
\newtheorem{result}{Result}
\crefname{lemma}{Lemma}{lemma}
\crefname{fact}{Fact}{fact}
\crefname{definition}{Definition}{definition}
\crefname{corollary}{Corollary}{corollary}
\crefname{result}{Result}{result}
\DeclareMathOperator*{\argmin}{arg\,min}
\def\be{\begin{eqnarray}}
\def\ee{\end{eqnarray}}
\definecolor{Pr}{rgb}{0.4,0.3,0.9}
\DeclareRobustCommand{\DE}[2]{#2}
\DeclareRobustCommand{\VANDE}[3]{#3}
\begin{document}

\title{Quantum Algorithms for the Pathwise Lasso}

\author[0,1]{Joao F. Doriguello}
\email{doriguello@renyi.hu}
\homepage{www.joaodoriguello.com}
\orcid{0000-0002-8265-7334}

\author[1,2]{Debbie Lim}
\email{limhueychih@gmail.com}

\author[3]{Chi Seng Pun}
\email{cspun@ntu.edu.sg}

\author[1,4]{Patrick Rebentrost}
\email{patrick@comp.nus.edu.sg}

\author[3]{Tushar Vaidya}
\email{tushar.vaidya@ntu.edu.sg}

\affil[0]{HUN-REN Alfréd Rényi Institute of Mathematics, Budapest, Hungary}
\affil[1]{Centre for Quantum Technologies, National University of Singapore, Singapore}
\affil[2]{Center for Quantum Computer Science, Faculty of Computing, University of Latvia, Latvia}
\affil[3]{School of Physical and Mathematical Sciences, Nanyang Technological University, Singapore}
\affil[4]{Department of Computer Science, National University of Singapore, Singapore}
\date{}

\maketitle

\begin{abstract}
We present a novel quantum high-dimensional linear regression algorithm with an $\ell_1$-penalty based on the classical LARS (Least Angle Regression) pathwise algorithm. Similarly to available classical numerical algorithms for Lasso, our quantum algorithm provides the full regularisation path as the penalty term varies, but quadratically faster per iteration under specific conditions. A quadratic speedup on the number of features/predictors $d$ is possible by using the simple quantum minimum-finding subroutine from D\"urr and H\o{}yer (arXiv’96) in order to obtain the joining time at each iteration. We then improve upon this simple quantum algorithm and obtain a quadratic speedup both in the number of features $d$ and the number of observations $n$ by using the approximate quantum minimum-finding subroutine from Chen and de Wolf (ICALP'23). In order to do so, we construct a quantum unitary based on quantum amplitude estimation to approximately compute the joining times to be searched over by the approximate quantum minimum-finding subroutine. Since the joining times are no longer exactly computed, it is no longer clear that the resulting approximate quantum algorithm obtains a good solution. As another main contribution, we prove, via an approximate version of the KKT conditions and a duality gap, that the LARS algorithm (and therefore our quantum algorithm) is robust to errors. This means that it still outputs a path that minimises the Lasso cost function up to a small error if the joining times are only approximately computed. Furthermore, we show that, when the observations are sampled from a Gaussian distribution, our quantum algorithm's complexity only depends polylogarithmically on $n$, exponentially better than the classical LARS algorithm, while keeping the quadratic improvement on $d$. Moreover, we propose a dequantised version of our quantum algorithm that also retains the polylogarithmic dependence on $n$, albeit presenting the linear scaling on $d$ from the standard LARS algorithm. Finally, we prove query lower bounds for classical and quantum Lasso algorithms.
\end{abstract}


\newpage
\tableofcontents

\section{Introduction}
One of the most important research topics that has attracted renewed attention by statisticians and the machine learning community is high-dimensional data analysis~\cite{bayati2013estimating, buhlmann2011statistics, garrigues2008homotopy, johnstone2009statistical, kelner2022lower, wright2022high, zhao2006model}. For linear regression, this means that the number of data points is less than the number of explanatory variables (features). In machine learning parlance, this is usually referred to as overparameterisation. Solutions to such problems are, however, ill defined. Having more variables to choose from is a double-edged sword. While it gives us freedom of choice, computational cost considerations favour choosing sparse models. Some sort of regularisation in the linear model, usually in the form of a penalty term, is needed to favour sparse models.

Let us consider the regression problem defined as follows. Assume to be given a fixed design matrix $\mathbf{X}\in\mathbb{R}^{n\times d}$  and a vector of observations $\mathbf{y}\in\mathbb{R}^n$. The setting is such that $d\geq n$, i.e., more features than observations are given. Typical examples are $(n,d) = (500,12288), (7500,49152),(18737,65536)$ for image reconstructions~\cite{candes2008enhancing,bora2017compressed,asim2020invertible,song2020improved}. In portfolio optimization, an instance would be $(n,d) = (504,600), (252,486)$ \cite{chiu2017big,pun2019linear}.
The un-regularized $\ell_2$-regression problem is defined as $\min_{\bfbeta \in \mathbb{R}^d}\frac{1}{2}\|\mathbf{y}-\mathbf{X}{\bfbeta}\|_2^2$.
Generally, different norms could be used in the penalty term.  Essentially, the three popular sparse regression models reduce to three prototypes recalled below with $\ell_0, \ell_1, \ell_2$-norms and tuning parameter $\lambda$:
\begin{align*}
 &\underset{\bfbeta \in \mathbb{R}^d} {\min}\frac{1}{2}\|\mathbf{y}-\mathbf{X}{\bfbeta}\|_2^2 + \lambda \|\bfbeta\|_0 \quad \text{for all}~ \lambda >0 ~~\text{(best subset selection)}, \\
 &\underset{\bfbeta \in \mathbb{R}^d} {\min}\frac{1}{2}\|\mathbf{y}-\mathbf{X}{\bfbeta}\|_2^2 + \lambda \|\bfbeta\|_1 \quad \text{for all}~ \lambda >0 ~~\text{(Lasso regression)}, \\
 &\underset{\bfbeta \in \mathbb{R}^d} {\min}\frac{1}{2}\|\mathbf{y}-\mathbf{X}{\bfbeta}\|_2^2 + \lambda \|\bfbeta\|_2 \quad \text{for all}~ \lambda >0 ~~\text{(Ridge regression)}.
\end{align*}
The easiest is Ridge regression which has an analytical solution and the problem is convex, but tends to give non-sparse solutions.
The best subset selection is not convex and is NP-hard~\cite{das2008algorithms} in general. 
Lasso regression with $\ell_1$-penalty, on the other hand, is a convex relaxation of the best subset selection problem and will be our central object of study throughout this paper. Lasso stands for Least Absolute Shrinkage and Selection operator and has been studied extensively in the literature because of its convexity and parsimonious solutions, leading to its use across disciplines and in many different fields where sparsity is a desired prerequisite: compressed sensing~\cite{chen2001atomic, candes2006stable,figueiredo2007gradient}, genetics~\cite{usai2009Lasso, wu2009genome}, and finance~\cite{tian2015variable,coad2020catching,peng2022portfolio}.
For large language models that use deep learning, overparameterisation is actually desired. In high-dimensional linear regression models, in contrast, parsimonious models are desired. The conventional statistical paradigm is to keep the number of features fixed and study the asymptotics as the sample size increases. The other paradigm is predictor selection given a large choice that fits into high-dimensional regression. Selecting which of the many variables available is part of model choice and techniques that help us in choosing the right model have significant value. Having a larger number of variables in statistical models makes it simpler to pick the right model and achieve improved predictive performance. Overparametrisation is a \emph{virtue} and increases the likelihood of including the essential features in the model. Yet this should be balanced with regularisation~\cite{efron2004least, fan2010selective, lockhart2014significance}. 
Introducing an $\ell_1$-penalty confers a number of advantages~\cite{chen2001atomic,Hastie2015, tibshirani1996regression}:
\begin{enumerate}
    \item The $\ell_1$-penalty provides models that can be interpreted in a simple way and thus have advantages compared to black boxes (deep neural networks) in terms of transparency and intelligibility~\cite{murdoch2019definitions, rudin2022interpretable};
    \item If the original model generating the data is sparse, then an $\ell_1$-penalty provides the right framework to recover the original signal;
    \item Lasso selects the true model consistently even with noisy data~\cite{zhao2006model}, and this remains an active area of research~\cite{meng2023model};
    \item As $\ell_1$-penalties are convex, sparsity leads to computational advantages. For example, if we have two million features but only 200 observations, estimating two million parameters becomes extremely difficult.
\end{enumerate}
Generally, the optimal tuning parameter $\lambda$ has to be determined from the data and as such we would like an endogenous model that determines the optimal $\bfbeta$ along with a specific $\lambda$. The key point is that we require a path of solutions as $\lambda$ varies~\cite{efron2004least, friedman2010regularization,zhao2018pathwise}. 

On the other hand, quantum computing is a new paradigm that offers faster computation for certain problems, for example, integer factorisation and discrete logarithm~\cite{shor1994algorithms,shor1999polynomial}. Quantum algorithms have been proposed for unregularised linear regression~\cite{wiebe2012quantum,wang2017quantum,chakraborty2019power,kaneko2021linear}, Ridge regression~\cite{yu2019improved,shao2020quantum,shao2023improved,chen2021quantum}, and simple versions of Lasso regression~\cite{chen2021quantum}. In this paper, we propose new quantum algorithms (and also dequantised versions) for the \emph{pathwise} Lasso regression based on the famous Least Angle Regression (LARS) algorithm.

\subsection{Least Angle Regression algorithm}

The pathwise Lasso regression problem is the problem when the regression parameter $\lambda$ varies. To be more precise, consider the Lasso estimator described above,
\begin{equation}\label{eq:Lasso}
    \hat{\bfbeta} \in \argmin_{\bfbeta \in \mathbb{R}^d} \mathcal{L}^{(\lambda)}(\bfbeta) \quad\text{where}\quad \mathcal{L}^{(\lambda)}(\bfbeta) \triangleq \frac{1}{2}\|\mathbf{y}-\mathbf{X}{\bfbeta}\|_2^2 + \lambda \|\bfbeta\|_1 \quad \text{for all}~ \lambda >0.
\end{equation}
The $\ell_1$-penalty term is $\| {\bfbeta} \|_1 = \sum_{i=1}^d | {\beta}_i |$, whereas the predictive loss is the standard Euclidean squared norm. In this work, we are interested in the Lasso solution $\hat{\bfbeta}(\lambda)$ as a function of the regularisation parameter $\lambda > 0$. For such we define the optimal regularisation path $\mathcal{P}$:
\begin{align*}
    \mathcal{P} \triangleq \{\hat{\bfbeta}(\lambda) : \lambda >0\}.
\end{align*}
Efron \textit{et al.}~\cite{efron2004least} showed that the optimal regularisation path $\mathcal{P}$ is \emph{piecewise linear} and \emph{continuous} with respect to $\lambda$. This means that there exist an $m \in \mathbb{N}$ and $\infty > \lambda_0 > \cdots > \lambda_{m-1} > \lambda_m = 0$ and $\bftheta_0,\dots,\bftheta_{m}\in\mathbb{R}^d$ such that\footnote{Define $[n]\triangleq \{1,\dots,n\}$.}
\begin{equation}\label{eqPath}
    \hat{\bfbeta}(\lambda) = \hat{\bfbeta}(\lambda_t) + (\lambda_t - \lambda)\bftheta_t \quad\text{for}~\lambda_{t+1} < \lambda \leq \lambda_{t} ~~(t\in\{0,\dots,m-1\}).
\end{equation}
There is a maximal value $\lambda_{\max} = \lambda_0$ where $\hat{\bfbeta}(\lambda) = \mathbf{0}$ for all $\lambda \geq \lambda_0$ and $\hat{\beta}_j(\lambda) \neq 0$ for some $j\in[d]$ and $\lambda < \lambda_0$. Such value is $\lambda_0 = \|\mathbf{X}^\top\mathbf{y}\|_\infty$. The $m+1$ points $\lambda_0,\dots,\lambda_m$ where $\partial \hat{\bfbeta}(\lambda)/\partial\lambda$ changes are called \emph{kinks}, and the path $\{\hat{\bfbeta}(\lambda): \lambda_{t+1} < \lambda \leq \lambda_t\}$ between two consecutive kinks $\lambda_{t+1}$ and $\lambda_t$ defines a linear segment. The fact that the regularisation path is piecewise linear and the locations of the kinks can be derived from the Karush-Kuhn-Tucker (KKT) optimality conditions (see \cref{sec:Lasso_path} for more details).

Since the regularisation path is piecewise linear, one needs only to compute all kinks $(\lambda_t, \hat{\bfbeta}(\lambda_t))$ for $t\in\{0,\dots,m-1\}$, from which the whole regularisation path follows by linear interpolation. That is exactly what the Least Angle Regression (LARS) algorithm proposed and named by Efron \textit{et al.}~\cite{efron2004least} does (a very similar idea appeared earlier in the works of Osborne \emph{et al.}~\cite{osborne2000new,osborne2000Lasso}). Starting at $\lambda = \infty$ (or $\lambda_0$) where the Lasso solution is $\hat{\bfbeta}(\lambda) = \mathbf{0}\in\mathbb{R}^d$, the LARS algorithm decreases the regularisation parameter $\lambda$ and computes the regularisation path by finding the kinks along the way with the aid of the KKT conditions. Each kink corresponds to an iteration of the algorithm. More specifically, at each iteration $t$, the LARS algorithm computes the next kink $\lambda_{t+1}$ by finding the closest point (to the previous kink) where the KKT conditions break and thus need to be updated. This is done by performing a search over the \emph{active set} $\mathcal{A} \triangleq \{i\in[d]:\hat{\beta}_i \neq 0\}$ and another search over the \emph{inactive set} $\mathcal{I} \triangleq [d]\setminus\mathcal{A}$. The search over $\mathcal{A}$ finds the next point $\lambda_{t+1}^{\rm cross}$, called \emph{crossing time}, where a variable $i_{t+1}^{\rm cross}$ must leave $\mathcal{A}$ and join $\mathcal{I}$. The search over $\mathcal{I}$ finds the next point $\lambda_{t+1}^{\rm join}$, called \emph{joining time}, where a variable $i_{t+1}^{\rm join}$ must leave $\mathcal{I}$ and join $\mathcal{A}$. The next kink is thus $\lambda_{t+1} = \max\{\lambda_{t+1}^{\rm cross}, \lambda_{t+1}^{\rm join}\}$. The overall complexity of the LARS algorithm per iteration is $O(nd + |\mathcal{A}|^2)$: the two searches over the active and inactive sets require $O(n|\mathcal{A}|)$ and $O(n|\mathcal{I}|)$ time, respectively, while computing the new direction $\partial\hat{\bfbeta}/\partial\lambda = \bftheta_{t+1}$ of the regularisation path, which involves the computation of the pseudo-inverse of a submatrix of $\mathbf{X}$ specified by $\mathcal{A}$, requires $O(n|\mathcal{A}| + |\mathcal{A}|^2)$ time. When the Lasso solution is unique, it is known~\cite{tibshirani2013Lasso} that $|\mathcal{A}| \leq \min\{n,d\} = n$ throughout the LARS algorithm.

\subsection{Our work}
In the high-dimensional setting where $d \gg n$, the search over the inactive set $\mathcal{I}$, and thus the computation of the joining time, is by far the most costly step per iteration. 
In this work, we propose quantum algorithms for the pathwise Lasso regression problem based on the Least Angle Regression (LARS) algorithm.
To the best of our knowledge, this work is the first to present a quantum version of the LARS algorithm.
We propose mainly two quantum algorithms based on the LARS algorithm to speedup the computation of the joining time. Our first quantum algorithm, called \emph{simple quantum} LARS \emph{algorithm}, is a straightforward improvement that utilizes the well-known quantum minimum-finding subroutine from D\"urr and H\o{}yer~\cite{durr1996quantum} to perform the search over $\mathcal{I}$. We assume that the design matrix $\mathbf{X}$ is stored in a quantum-readable read-only memory (QROM) and can be accessed in time $O(\poly\log(nd))$. We also assume that data can be written into quantum-readable classical-writable classical memories (QRAM) of size $O(n)$, which can be later queried in time $O(\poly\log{n})$~\cite{giovannetti2008architectures,giovannetti2008quantum}. The complexity of computing the joining time is now $O(n\sqrt{\smash[b]{|\mathcal{I}|}})$, where $O(n)$ is the time required to compute, via classical circuits, the joining times to be maximised over $\mathcal{I}$. The final runtime of our simple quantum algorithm is $\widetilde{O}(n\sqrt{\smash[b]{|\mathcal{I}|}} + n|\mathcal{A}| + |\mathcal{A}|^2)$ per iteration, where we omit polylog factors in $n$ and $d$.

We then improve upon our simple quantum algorithm by approximately computing the joining times to be maximised over $\mathcal{I}$, thus reducing the factor $O(n)$ to $O(\sqrt{n})$. This is done by constructing a quantum unitary based on quantum amplitude estimation~\cite{brassard2002quantum} that approximately computes the joining times of all the variables in the inactive set $\mathcal{I}$. We then employ the approximate quantum minimum-finding subroutine from Chen and de Wolf~\cite{chen2021quantum} to find an estimate $\epsilon$-close to the true maximum joining time. The result is our \emph{approximate quantum} LARS \emph{algorithm} that achieves a quadratic speedup in both the number of features $d$ and the number of observations $n$. We also propose a sampling-based classical LARS algorithm, which can be seen as a quantisation counterpart. Our improved quantum algorithm (and the sampling-based classical algorithm), however, introduces an error in computing the joining time $\lambda_{t+1}^{\rm join}$. As one of our main contributions, we prove that the LARS algorithm is robust to errors. More specifically, we show that, by slightly adapting the LARS algorithm, it returns (and so our approximate quantum algorithm) an \emph{approximate regularisation path} with error proportional to the error from computing $\lambda_{t+1}^{\rm join}$. We first define an approximate regularisation path. In the following, let $\mathbf{X}^+$ be the Moore–Penrose inverse of $\mathbf{X}$ and given a matrix $\mathbf{A}$, let $\|\mathbf{A}\|_2$ be its the spectral norm, $\|\mathbf{A}\|_1 = \max_{j}\sum_{i} |A_{ij}|$, and $\|\mathbf{A}\|_{\max} = \max_{i,j}|A_{ij}|$.
\begin{definition}\label{def:approximate_Lasso_solution}
    Given $\epsilon : \mathbb{R}_{>0} \to \mathbb{R}_{\geq 0}$, we say that a vector $\widetilde{\bfbeta}\in\mathbb{R}^d$ is an approximate Lasso solution with error $\epsilon(\lambda)$, or an $\epsilon(\lambda)$-minimiser, if
    \begin{align*}
        \mathcal{L}^{(\lambda)}(\widetilde{\bfbeta}) - \min_{\bfbeta\in\mathbb{R}^d}\mathcal{L}^{(\lambda)}(\bfbeta) \leq \epsilon(\lambda), \quad\text{where}\quad \lambda>0.
    \end{align*}
    A set $\widetilde{\mathcal{P}} \triangleq \{\widetilde{\bfbeta}(\lambda)\in\mathbb{R}^d: \lambda >0 \}$ is an approximate regularisation path with error $\epsilon(\lambda)$ if, for all $\lambda>0$, the point $\widetilde{\bfbeta}(\lambda)$ of $\widetilde{\mathcal{P}}$ is an approximate Lasso solution with error $\epsilon(\lambda)$.
\end{definition}
\begin{result}[Informal version of \cref{thr:correctness}]\label{res:result1}
    Let $T\in\mathbb{N}$, $\epsilon \geq 0$, and  $\mathbf{X}\in\mathbb{R}^{n\times d}$. For $\mathcal{A}\subseteq[d]$ of size $|\mathcal{A}|\leq T$, let $\alpha_{\mathcal{A}} > 0$ be such that $\|\mathbf{X}_{\mathcal{A}}^+\mathbf{X}_{\mathcal{A}^c}\|_1 \leq \alpha_{\mathcal{A}}$. Consider an approximate {\rm LARS} algorithm that returns a path $\widetilde{\mathcal{P}} = \{\widetilde{\bfbeta}(\lambda)\in\mathbb{R}^d:\lambda > 0\}$ with at most $T$ kinks and wherein, at each iteration $t$, the joining time $\lambda_{t+1}^{\rm join}$ is approximated by $\widetilde{\lambda}_{t+1}^{\rm join}$ such that $\lambda_{t+1}^{\rm join} \leq \widetilde{\lambda}_{t+1}^{\rm join} \leq \big(1-\frac{\epsilon}{1+\alpha_{\mathcal{A}}}\big)^{-1}\lambda_{t+1}^{\rm join}$. Then $\widetilde{\mathcal{P}}$ is an approximate regularisation path with error $\lambda\epsilon\|\widetilde{\bfbeta}(\lambda)\|_1$.
\end{result}
To prove the above result, we consider an approximate version of the KKT conditions and use a duality gap for the Lasso regression. As far as we are aware, this is the first direct result on the robustness of the LARS algorithm. Compared to the work of Mairal and Yu~\cite{mairal2012complexity}, who also introduced an approximate LARS algorithm and employed similar techniques to analyse its correctness, the computation of the joining time in their case is exact and errors only arise by utilising a first-order optimisation method to find an approximate solution when kinks happen to be too close.

\cref{res:result1} guarantees the correctness of all our algorithms. Their time complexities, on the other hand, are given by the theorem below and summarised in \cref{table1}. 
\begin{result}[Informal version of \cref{thr:simple_quantum,thr:approximate_quantum,thr:approximate_classical}]\label{thr:summary_results}
    Let $\mathbf{X}\in\mathbb{R}^{n\times d}$, $\mathbf{y}\in\mathbb{R}^n$, $\delta\in(0,1)$, $\epsilon>0$, and $T\in\mathbb{N}$. Assume that $\mathbf{X}$ is stored in a {\rm QROM} and we have access to {\rm QRAMs} and classical-samplable structures of memory size $O(n)$. For $\mathcal{A}\subseteq[d]$ of size $|\mathcal{A}| \leq T$, assume there are $\alpha_{\mathcal{A}},\gamma_{\mathcal{A}}\in(0,1)$ such that
    \begin{align*}
        \|\mathbf{X}_{\mathcal{A}}^+\mathbf{X}_{\mathcal{A}^c}\|_1 \leq \alpha_{\mathcal{A}} \qquad\text{and}\qquad
        \frac{\|\mathbf{X}^\top(\mathbf{I} - \mathbf{X}_{\mathcal{A}}\mathbf{X}_{\mathcal{A}}^+) \mathbf{y}\|_\infty}{\|\mathbf{X}\|_{\max}\|(\mathbf{I} - \mathbf{X}_{\mathcal{A}}\mathbf{X}_{\mathcal{A}}^+) \mathbf{y}\|_1} \geq \gamma_{\mathcal{A}}, 
    \end{align*}
    where $\mathcal{A}^c = [d]\setminus\mathcal{A}$ and $\mathbf{X}_{\mathcal{A}} \in \mathbb{R}^{n\times |\mathcal{A}|}$ is the matrix formed by the columns of $\mathbf{X}$ in $\mathcal{A}$.

    \begin{itemize}
        \item There is a quantum {\rm LARS} algorithm that returns an optimal regularisation path with $T$ kinks with probability at least $1-\delta$ and in time
        \begin{align*}
            \widetilde{O}\big(n\sqrt{|\mathcal{I}|} + n|\mathcal{A}| + |\mathcal{A}|^2\big)
        \end{align*}
        per iteration, where $\mathcal{A}$ and $\mathcal{I}$ are the active and inactive sets of the corresponding iteration.

        \item There is a quantum {\rm LARS} algorithm that returns an approximate regularisation path with error $\lambda\epsilon\|\widetilde{\bfbeta}\|_1$ and $T$ kinks with probability at least $1-\delta$ and in time
        \begin{align*}
            \widetilde{O}\left(\frac{\gamma_{\mathcal{A}}^{-1} + \sqrt{n}\|\mathbf{X}\|_{\max}\|\mathbf{X}^+_{\mathcal{A}}\|_2}{(1-\alpha_{\mathcal{A}})\epsilon}\sqrt{|\mathcal{I}|} + n|\mathcal{A}| + |\mathcal{A}|^2 \right)
        \end{align*}
        per iteration, where $\mathcal{A}$ and $\mathcal{I}$ are the active and inactive sets of the corresponding iteration.

        \item There is a classical (dequantised) {\rm LARS} algorithm that returns an approximate regularisation path with error $\lambda\epsilon\|\widetilde{\bfbeta}\|_1$ and $T$ kinks with probability at least $1-\delta$ and in time
        \begin{align*}
            \widetilde{O}\left(\frac{\gamma_{\mathcal{A}}^{-2} + n\|\mathbf{X}\|_{\max}^2\|\mathbf{X}^+_{\mathcal{A}}\|_2^2}{(1-\alpha_{\mathcal{A}})^2\epsilon^2}|\mathcal{I}| + n|\mathcal{A}| + |\mathcal{A}|^2 \right)
        \end{align*}
        per iteration, where $\mathcal{A}$ and $\mathcal{I}$ are the active and inactive sets of the corresponding iteration.
    \end{itemize}
    The notation $\widetilde{O}(\cdot)$ omits $\poly\log$ terms in $n$, $d$, $T$, and $\delta$.
\end{result}

\begin{table}[t]
\centering
\def\arraystretch{1.25}
\begin{tabular}{ |p{5.2cm}|c|c|c|c| }
\hline \textbf{Algorithm} & \textbf{Error} & \textbf{Time complexity per iteration} \\
\hline
Classical LARS~\ref{alg:classicalLasso} & $0$ & $O(n|\mathcal{I}| + n|\mathcal{A}| + |\mathcal{A}|^2)$\\
\hline 
Simple quantum LARS~\ref{alg:warmup_classical_quantum} & $0$ & $\widetilde{O}(n\sqrt{|\mathcal{I}|} + n|\mathcal{A}| + |\mathcal{A}|^2)$\\
\hline 
Approximate classical LARS~\ref{alg:approximate_classical} & $\lambda\epsilon\|\widetilde{\bfbeta}\|_1$ & $\widetilde{O}\bigg(\frac{\gamma_{\mathcal{A}}^{-2} + n\|\mathbf{X}\|_{\max}^2\|\mathbf{X}^+_{\mathcal{A}}\|_2^2}{(1-\alpha_{\mathcal{A}})^2\epsilon^2}|\mathcal{I}| + n|\mathcal{A}| + |\mathcal{A}|^2 \bigg)$\\
\hline
Approximate quantum LARS~\ref{alg:classical_quantum} & $\lambda\epsilon\|\widetilde{\bfbeta}\|_1$ & $\widetilde{O}\bigg(\frac{\gamma_{\mathcal{A}}^{-1} + \sqrt{n}\|\mathbf{X}\|_{\max}\|\mathbf{X}^+_{\mathcal{A}}\|_2}{(1-\alpha_{\mathcal{A}})\epsilon}\sqrt{|\mathcal{I}|} + n|\mathcal{A}| + |\mathcal{A}|^2 \bigg)$\\
\hline 
\end{tabular}\caption{Summary of results. Throughout this work, $n$ is the number of observations, $d$ is the number of features, $\mathcal A$ is the active set, and $\mathcal I$ is the inactive set. In addition, $\epsilon>0$ is the approximation error, $\delta\in(0,1)$ is the failure probability, $T$ is the number of kinks in the regularisation path $\mathcal{P}$, $\mathbf{X}_{\mathcal{A}} = [\mathbf{X}_i]_{i\in\mathcal{A}}\in\mathbb{R}^{n\times |\mathcal{A}|}$ is the matrix formed by the columns of $\mathbf{X}$ in $\mathcal{A}$, $\mathbf{X}^+_{\mathcal{A}}$ is the Moore–Penrose inverse of $\mathbf{X}_{\mathcal{A}}$, and the parameters $\alpha_{\mathcal{A}},\gamma_{\mathcal{A}}\in(0,1)$ are defined in \cref{thr:summary_results}. The notation $\widetilde{O}(\cdot)$ omits $\poly\log$ terms in $n$, $d$, $T$, and $\delta$.}
\label{table1}
\end{table}

Our approximate LARS algorithms output a regularisation path with point-wise error $\lambda\epsilon\|\widetilde{\bfbeta}\|_1$. Even though this is not a purely additive error $\epsilon$, it is still better than a purely multiplicative error $\epsilon\mathcal{L}^{(\lambda)}(\widetilde{\bfbeta})$. It is possible to prove that $\mathcal{L}^{(\lambda)}(\widetilde{\bfbeta}) \leq \frac{1}{1-\epsilon}\min_{\bfbeta\in\mathbb{R}^d}\mathcal{L}^{(\lambda)}(\bfbeta)$ (see \Cref{remark1}), therefore our approximate LARS algorithms output a solution with error at most $\frac{\epsilon}{1-\epsilon}\min_{\bfbeta\in\mathbb{R}^d}\mathcal{L}^{(\lambda)}(\bfbeta)$. Moreover, this type of error $\lambda\epsilon\|\widetilde{\bfbeta}\|_1$ is standard in the literature, see \Cref{sect:bounds_noisy} and also~\cite{buhlmann2011statistics,mairal2012complexity,wainwright2019high}.

The complexity of our approximate quantum LARS algorithm depends on a few properties of the design matrix $\mathbf{X}$: the quantity $\|\mathbf{X}\|_{\max}\|\mathbf{X}^+_{\mathcal{A}}\|_2$ and the parameters $\alpha_{\mathcal{A}}$ and $\gamma_{\mathcal{A}}$. We note that $\|\mathbf{X}\|_{\max}\|\mathbf{X}^+_{\mathcal{A}}\|_2 \leq \|\mathbf{X}\|_2\|\mathbf{X}^+\|_2$ is upper bounded by the condition number of $\mathbf{X}$, which is simply a constant for well-behaved matrices. The existence of the parameter $\alpha_{\mathcal{A}}\in[0,1)$ is often called \emph{(mutual) incoherence} or \emph{strong irrepresentable condition} in the literature~\cite{zhao2006model,huang2008adaptive,wainwright2019high} and is related to the level of orthogonality between columns of $\mathbf{X}$: if the column space of $\mathbf{X}_{\mathcal{A}}$ is orthogonal to $\mathbf{X}_i$, then $\|\mathbf{X}_{\mathcal{A}}^+\mathbf{X}_i\|_1 = 0$. 
Mutual incoherence has been considered by several previous works~\cite{donoho2001uncertainty,elad2002generalized,fuchs2004recovery,tropp2006just,zhao2006model,meinshausen2006high,wainwright2009sharp}. Refs.~\cite{zhao2006model,meinshausen2006high} provide several examples of families of matrices that satisfy mutual incoherence. Finally, the parameter $\gamma_{\mathcal{A}}$, introduced and named by us as \emph{mutual overlap (between $\mathbf{y}$ and $\mathbf{X}$)}, measures the overlap between projections of the observation vector $\mathbf{y}$ and the columns of $\mathbf{X}$ (note that $\gamma_{\mathcal{A}} \leq 1$ by H\"older's inequality).

In the case when $\|\mathbf{X}\|_{\max}\|\mathbf{X}^+_{\mathcal{A}}\|_2 = O(1)$ and $\alpha_{\mathcal{A}}$ and $\gamma_{\mathcal{A}}$ are bounded away from $1$ and $0$, respectively, the complexity per iteration is $\widetilde{O}(\sqrt{\smash[b]{n|\mathcal{I}|}}/\epsilon + n|\mathcal{A}| + |\mathcal{A}|^2)$. For $\epsilon = 1/\poly\log{d}$ and iterations when $|\mathcal{A}| = O(n)$, we obtain the overall quadratic improvement $\widetilde{O}(\sqrt{nd})$ over the classical LARS algorithm. A further speedup beyond quadratic can be obtained if $\|\mathbf{X}\|_{\max}\|\mathbf{X}^+_{\mathcal{A}}\|_2 = o(1)$. Our approximate quantum LARS algorithm, however, depends on the design matrix being well behaved in order to bound the three quantities mentioned above. In \cref{sec:examples}, we bound $\|\mathbf{X}\|_{\max}\|\mathbf{X}^+_{\mathcal{A}}\|_2$, $\alpha_{\mathcal{A}}$, and $\gamma_{\mathcal{A}}$ for the case when $\mathbf{X}$ is a random matrix with rows sampled i.i.d.\ from the multivariate Gaussian distribution $\mathcal{N}(\mathbf{0},\mathbf{\Sigma})$ with covariance matrix $\mathbf{\Sigma}\succ \mathbf{0}\in\mathbb{R}^{d\times d}$. 
\begin{result}[Informal version of \Cref{lem:condition_number,lem:bound_mutual_incoherence,lem:bound_mutual_overlap}] \label{res:result_gaussian}
    Let $\mathbf{X}\in\mathbb{R}^{n\times d}$ be a random matrix with rows sampled i.i.d.\ from the Gaussian distribution $\mathcal{N}(\mathbf{0},\mathbf{\Sigma})$ with positive-definite covariance matrix $\mathbf{\Sigma}\in\mathbb{R}^{d\times d}$. Let $\mathcal{A}\subseteq[d]$ of size $|\mathcal{A}| = O(n/\log{d})$. If $\mathbf{\Sigma}$ is ``well behaved'' 
    (which includes $\mathbf{\Sigma} = \mathbf{I}$), then, with high probability,
    \begin{align*}
        \|\mathbf{X}\|_{\max}\|\mathbf{X}^+_{\mathcal{A}}\|_2 = O\bigg(\sqrt{\frac{\log{d}}{n}} \bigg), \qquad  \alpha_{\mathcal{A}} = 1 - \Omega(1), \qquad\text{and}\qquad \gamma_{\mathcal{A}} = \Omega\bigg(\frac{\|\mathbf{y}\|_2/\|\mathbf{y}\|_1}{\sqrt{\log{d}}} \bigg).
    \end{align*}
    As a consequence, the approximate quantum and classical {\rm LARS} algorithms output an approximate regularisation path with error $\lambda\epsilon\|\widetilde{\bfbeta}\|_1$ and $T = O(n/\log{d})$ kinks with probability at least $1-\delta$ in time $\widetilde{O}((\|\mathbf{y}\|_1/\|\mathbf{y}\|_2)(\sqrt{d}/\epsilon))$ and $\widetilde{O}((\|\mathbf{y}\|_1/\|\mathbf{y}\|_2)(d/\epsilon^2))$ per iteration, respectively, up to $\poly\log$ term in $n$, $d$, and $\delta$. If further $\|\mathbf{y}\|_1/\|\mathbf{y}\|_2 = \poly\log{n}$, then the quantum and classical complexities per iteration are $\widetilde{O}(\sqrt{d}/\epsilon)$ and $\widetilde{O}(d/\epsilon^2)$, respectively.
\end{result}

For the exact meaning of a ``well-behaved'' covariance matrix $\mathbf{\Sigma}$, we refer the reader to \Cref{sec:examples}. On a high level, it means that quantities like the minimum singular value of $\mathbf{\Sigma}$, the norm $\|\sqrt{\mathbf{\Sigma}}\|_1$, and other related properties of the conditional covariance matrix of $\mathbf{X}_{\mathcal{A}}|\mathbf{X}_{\mathcal{A}^c}$ are constant. Mostly special, the case $\mathbf{\Sigma} = \mathbf{I}$, which corresponds to a standard Gaussian matrix with entries sampled i.i.d.\ from $\mathcal{N}(0,1)$, trivially fulfills all these requirements.

\Cref{res:result_gaussian} implies that the complexity of our approximate LARS algorithms, both quantum and classical, can be substantially better than the standard LARS algorithm depending on the input data. In the case when $\mathbf{X}$ is a Gaussian matrix and $\mathbf{y}$ is sparse, so that $\|\mathbf{y}\|_1/\|\mathbf{y}\|_2 = \poly\log{n}$, the approximate quantum and classical complexities per iteration are $\widetilde{O}(\sqrt{d}/\epsilon)$ and $\widetilde{O}(d/\epsilon^2)$, respectively, which only depend polylogarithmically on the number of samples $n$ (assuming that $n|\mathcal{A}| = O(\sqrt{d})$). The quantum runtime, in particular, is more than quadratically better than the usual $O(nd)$ classical runtime (at the expense of approximating the solution). The condition that $\mathbf{y}$ is sparse can be artificially generated through setting to zero some of the regression coefficients of the underlying true solution $\bfbeta^\ast$ connecting $\mathbf{X}$ and $\mathbf{y}$ (see \Cref{sect:bounds_noisy}). We note that sparse inputs have also provided speedups in other settings~\cite{berry2007efficient,harrow2009quantum,berry2014exponential,prakash2014quantum,lloyd2016quantum,gilyen2019quantum,lim2023quantum}. Independently, we believe that this result would be of interest to the compressed sensing community. The complexities for the case of random Gaussian design matrix $\mathbf{X}$ are summarised in \Cref{table2}.

Finally, we prove query lower bounds for any classical and quantum algorithms for Lasso. The proof techniques come mostly from the work of Chen and de Wolf~\cite{chen2021quantum} but adapted to the Lasso problem from \Cref{eq:Lasso} instead of $\min_{\bfbeta\in\mathbb{R}^n}\frac{1}{n}\|\mathbf{y} - \mathbf{X}\bfbeta\|_2^2$ such that $\|\bfbeta\|_1 \leq 1$.
\begin{result}[\Cref{thr:lower_bound,thr:classical_lower_bound}]
    Let $n < d$ such that $n = \Omega(\log{d})$, $\lambda = \Omega(n)$, and $\epsilon \in (0,1/5)$ such that $\epsilon = \Omega(\sqrt{n^3\log{d}}/\lambda^2)$. There is $\mathbf{X}\in\{-1,1\}^{n\times d}$ such that every bounded-error classical or quantum algorithm that computes an $\epsilon\lambda\|\widetilde{\bfbeta}\|_1$-minimiser for the Lasso $\mathcal{L}^{(\lambda)}(\bfbeta)$ uses
    \begin{itemize}
        \item classical: $\Omega\Big(\frac{n^4 d}{\lambda^4\epsilon^{2}\log{d}} \Big)$ queries to $\mathbf{X}$. For $\epsilon = \Theta(\sqrt{n^3\log{d}}/\lambda^2)$, this is $\Omega\Big(\frac{nd}{\log^2{d}}\Big)$;
        \item quantum: $\Omega\Big(\frac{n^3\sqrt{d}}{\lambda^3\epsilon^{3/2}} \Big)$ queries to $\mathbf{X}$. For $\epsilon = \Theta(\sqrt{n^3\log{d}}/\lambda^2)$, this is $\Omega\Big(\frac{n^{3/4}\sqrt{d}}{\log^{3/4}{d}}\Big)$.
    \end{itemize}
\end{result}

We stress that our lower bound does not contradict the results from \Cref{table2} since it is a worst-case bound: there is a design matrix $\mathbf{X}$ such that any classical or quantum algorithm requires $\Omega\Big(\frac{n^4 d}{\lambda^4\epsilon^{2}\log{d}} \Big)$ or $\Omega\Big(\frac{n^3\sqrt{d}}{\lambda^3\epsilon^{3/2}} \Big)$ queries to $\mathbf{X}$, respectively, which does not exclude the existence of easy instances, a trivial case being the all-$0$ matrix. Still, for the range of parameters $\lambda = \Theta(n)$ and $\epsilon = \Theta(\sqrt{n^3\log{d}}/\lambda^2) = \Theta(1)$, the time complexity per iteration of our approximate quantum LARS algorithm with a random Gaussian matrix is $\widetilde{O}(\sqrt{d})$, while for the approximate classical LARS it is $\widetilde{O}(d)$. Together with $O(n)$ iterations in order to reach the desired $\lambda$, their final complexities would be $\widetilde{O}(nd)$ (classical) and $\widetilde{O}(n\sqrt{d})$ (quantum), larger than their respective lower bounds $\widetilde{\Omega}(nd)$ and $\widetilde{\Omega}(n^{3/4}\sqrt{d})$. We note that, in this case, the time complexity of the usual classical LARS algorithm is $O(n^2 d)$ (although the solution has no error).

\begin{table}[t]
\centering 
\def\arraystretch{1.25}
\begin{tabular}{ |p{5.2cm}|c|c|c|c| }
\hline \textbf{Algorithm} & \textbf{Error} & \textbf{Time complexity per iteration} \\
\hline
Classical LARS~\ref{alg:classicalLasso} & $0$ & $O(nd)$\\
\hline 
Simple quantum LARS~\ref{alg:warmup_classical_quantum} & $0$ & $\widetilde{O}(n\sqrt{d} + n|\mathcal{A}|)$\\
\hline 
Approximate classical LARS~\ref{alg:approximate_classical} & $\lambda\epsilon\|\widetilde{\bfbeta}\|_1$ & $\widetilde{O}(d/\epsilon^2 + n|\mathcal{A}|)$\\
\hline
Approximate quantum LARS~\ref{alg:classical_quantum} & $\lambda\epsilon\|\widetilde{\bfbeta}\|_1$ & $\widetilde{O}(\sqrt{d}/\epsilon + n|\mathcal{A}|)$\\
\hline 
\end{tabular}\caption{Summary of results for the case when the design matrix $\mathbf{X}\in\mathbb{R}^{n\times d}$ is a random Gaussian matrix with rows sampled i.i.d.\ from $\mathcal{N}(\mathbf{0},\mathbf{\Sigma})$ with well-behaved covariance matrix $\mathbf{\Sigma}\succ \mathbf{0}\in\mathbb{R}^{d\times d}$ and the vector of observations $\mathbf{y}\in\mathbb{R}^n$ is such that $\|\mathbf{y}\|_1/\|\mathbf{y}\|_2 = \poly\log{n}$. These results are valid for active and inactive sets $\mathcal{A}$ and $\mathcal{I}$ of size $|\mathcal{A}| = O(n/\log{d})$ and $|\mathcal{I}| = \Theta(d)$. The notation $\widetilde{O}(\cdot)$ omits $\poly\log$ terms in $n$, $d$, and the failure probability $\delta$.}
\label{table2}
\end{table}

Compared to the previous work of Chen and de Wolf~\cite{chen2021quantum} on quantum algorithms for Lasso, our work is different in that we study the pathwise Lasso regression problem as the tuning parameter $\lambda$ varies, and not focus on a single point $\lambda$. It looks infeasible to retrieve even a small interval $[\hat{\bfbeta}(\lambda),\hat{\bfbeta}(\lambda')]$ of Lasso solutions via their algorithms. On the other hand, if one wishes to compute $\hat{\bfbeta}(\lambda)$ for a specific $\lambda$, then their algorithms would be preferable. Both works are thus incomparable in terms of results. The output solution from the quantum algorithm of Chen and de Wolf can actually be used to initialise our algorithms instead of the point $(\lambda_0,\hat{\bfbeta}(\lambda_0) = \mathbf{0})$. Regarding quantum techniques, both works employ similar quantum subroutines like quantum amplitude estimation and quantum minimum finding. The fact that no advanced subroutines are needed to improve the dependence on both $n$ and $d$ should be seen as a strength of our (and~\cite{chen2021quantum}) work. 

The paper is organised as follows. In \cref{sect:preliminaries}, we introduce important notations, the computational model, and useful quantum subroutines. In \cref{sec:Lasso_path}, we describe the steps to obtain a Lasso path, conditions for the uniqueness of the Lasso solution, and analyse  the per-iteration time complexity of the classical LARS algorithm. In \cref{sect:quantum_algorithm}, we describe our two quantum LARS algorithms and a classical equivalent algorithm based on sampling. We discuss bounds in the noisy regime in \cref{sect:bounds_noisy} and summarise our work with possible future directions in \cref{sect:discussion}. 

\subsection{Related work}

Seeing the importance of variable selection in the high-dimensional data context, algorithms such as best subsets, forward selection, and backward elimination are well studied and widely used to produce sparse solutions~\cite{mao2002fast, mao2004orthogonal, whitley2000unsupervised, ververidis2005sequential, borboudakis2019forward, reif2014efficient, tan2008genetic, zongker1996algorithms, kumar2014feature, wei2006feature}. Apart from the aforementioned feature selection algorithms, Ridge regression, proposed by Hoerl and Kennard~\cite{hoerl1970ridge}, is another popular and well-studied method~\cite{mcdonald2009ridge, dorugade2014new, hoerl1970ridge, marquardt1975ridge, van2015lecture, hoerl1975ridge, saunders1998ridge, kibria2003performance, vinod1978survey}.  

The homotopy method of Osborne \emph{et al.}~\cite{osborne2000Lasso,osborne2000new} and the LARS algorithm of Efron \textit{et al.}~\cite{efron2004least} are feature selection algorithms which use a less greedy approach as compared to classical forward selection approaches. The LARS algorithm can be modified to give rise to algorithms for Lasso and for the efficient implementation of Forward Stagewise linear regression. Based on the work of Efron \textit{et al.}~\cite{efron2004least}, Rosset and Zhu~\cite{rosset2007piecewise} gave a general characterization of loss-penalty pairs to allow for efficient generation of the full regularised coefficient paths. 
Mairal and Yu~\cite{mairal2012complexity} later provided an $\epsilon$-approximate path for Lasso with at most $O(1/\sqrt{\smash[b]{\epsilon}})$ segments can be obtained by developing an approximate homotopy algorithm based on newly defined $\epsilon$-approximate optimality conditions. 
Other works on algorithms for Lasso and its variants include Refs.~\cite{roth2004generalized,tibshirani2013Lasso,stojnic2013framework,kim2008gradient,gaines2018algorithms,arnold2016efficient}. 

In the quantum setting, considerable amount of work has been done on linear regression using the (ordinary/unregularised) least squares approach. Kaneko \emph{et al.}~\cite{kaneko2021linear} proposed a quantum algorithm for linear regression which outputs classical estimates of the regression coefficients. This improves upon the earlier works~\cite{wiebe2012quantum, wang2017quantum} where the regression coefficients are encoded in the amplitudes of a quantum state. Chakraborty \textit{et al.}~\cite{chakraborty2019power} devised quantum algorithms for the weighted and generalised variants of the least squares problem by using block-encoding techniques and Hamiltonian simulation~\cite{low2019hamiltonian} to improve the quantum liner systems solver. As an application of linear regression to machine learning tasks, Schuld \textit{et al.}~\cite{schuld2016prediction} designed a quantum pattern recognition algorithm based on the ordinary least squares approach, using ideas from~\cite{harrow2009quantum,lloyd2014quantum}. Kerenidis and Prakash~\cite{kerenidis2020quantum} provided a quantum stochastic gradient descent algorithm for the weighted least squares problem using a quantum linear systems solver in the QRAM data structure model~\cite{prakash2014quantum, kerenidis2016quantum}, which allows for efficient quantum state preparation. 

While quantum linear regression has been relatively well studied in the unregularised setting, research on algorithms for regression using regularised least squares is not yet well established. In the sparse access model, Yu \textit{et al.}~\cite{yu2019improved} presented a quantum algorithm for Ridge regression using a technique called parallel Hamiltonian simulation to develop the quantum analogue of $K$-fold cross-validation~\cite{hoerl1970ridge}, which serves as an efficient estimator of Ridge regression. Other works on quantum Ridge regression algorithms in the sparse access model include~\cite{shao2020quantum,shao2023improved,chen2023faster}. Chen and de Wolf~\cite{chen2021quantum} designed quantum algorithms for Lasso and Ridge regression from the perspective of empirical loss minimisation. 
Both of their quantum algorithms output a classical vector whose loss is $\epsilon$-close to the minimum achievable loss. 
Around the same time, Bellante and Zanero~\cite{bellante2022quantum,bellante2024phdthesis} gave a polynomial speedup for the classical matching-pursuit algorithm, which is a heuristic algorithm for the best subset selection model. More recently, Chakraborty \textit{et al.}~\cite{chakraborty2023quantum} gave the first quantum algorithms for least squares with general $\ell_2$-norm regularisation, which includes regularised versions of quantum ordinary least squares, quantum weighted least squares, and quantum generalised least squares. Their algorithms use block-encoding techniques and the framework of quantum singular value transformation~\cite{gilyen2019quantum}, and demonstrate substantial improvement compared to previous results on quantum Ridge regression~\cite{shao2020quantum, chen2023faster, yu2019improved}.

\section{Preliminaries}\label{sect:preliminaries}
\subsection{Notations}
For a positive integer $n\in\mathbb{N}$, let $[n]\triangleq\{1, \dots, n\}$. Given a vector $\mathbf{u}\in\mathbb R^n$, we denote its $i$-th entry as $u_i$ for $i\in[n]$. Let $\|\mathbf{u}\|_p \triangleq \big(\sum_{i=1}^n |x_i|^p\big)^{1/p}$ and by $\mathcal{D}_{\mathbf{u}}$ we denote the distribution over $[n]$ with probability density function $\mathcal{D}_{\mathbf{u}}(i) = |u_i|/\|\mathbf{u}\|_1$. Given a matrix $\mathbf{X}\in\mathbb R^{n\times d}$, we denote its $i$-th column as $\mathbf{X}_{i}$ for $i\in[d]$. Let $\mathbf{I}$ be the identity matrix. Given $\mathcal{A} = \{i_1,\dots,i_k\}\subseteq[d]$, define $\mathcal{A}^c = [d]\setminus\mathcal{A}$ and let $\mathbf{X}_{\mathcal{A}} \in \mathbb{R}^{n\times |\mathcal{A}|}$ be the matrix $\mathbf{X}_{\mathcal{A}} = [\mathbf{X}_{i_1}, \dots, \mathbf{X}_{i_k}]$ formed by the columns of $\mathbf{X}$ in $\mathcal{A}$. The previous notation naturally extends to vectors, i.e., $\mathbf{u}_{\mathcal{A}} = [u_{i_1},\dots,u_{i_k}]$. Given $\mathcal{A}\subseteq[d]$ and $S\subseteq[n]$, $\mathbf{X}_{S\mathcal{A}}$ is the submatrix of $\mathbf{X}$ with rows in $S$ and columns in $\mathcal{A}$. We denote by $\operatorname{col}(\mathbf{X})$, $\operatorname{row}(\mathbf{X})$, $\operatorname{null}(\mathbf{X})$, and $\operatorname{rank}(\mathbf{X})$ the column space, row space, null space, and the rank, respectively, of $\mathbf{X}$. We denote by $\mathbf X^+$ the Moore-Penrose inverse of $\mathbf X$. If $\mathbf{X}$ has linearly independent columns, $\mathbf{X}^+ = (\mathbf{X}^\top \mathbf{X})^{-1} \mathbf{X}^\top$. Note that $\mathbf{X}_{\mathcal{A}}\mathbf{X}_{\mathcal{A}}^+$ is the orthogonal projector onto $\operatorname{col}(\mathbf{X}_{\mathcal{A}})$. Let $\|\mathbf{X}\|_p \triangleq \sup_{\mathbf{u}\in\mathbb{R}^d:\|\mathbf{u}\|_p=1}\|\mathbf{X}\mathbf{u}\|_p$ be the $p$-norm of $\mathbf{X}$, $p\in[1,\infty]$. As special cases, $\|\mathbf{X}\|_1 = \max_{j\in[d]}\sum_{i=1}^n |X_{ij}|$ and $\|\mathbf{X}\|_\infty = \max_{i\in[n]}\sum_{j=1}^d |X_{ij}|$. Let also $\lVert \mathbf X\rVert_{\max} \triangleq \max_{i\in[n],j\in[d]}\vert X_{ij}\vert$. 
We use $\ket{\bar 0}$ to denote the state $\ket{0}\otimes \cdots\otimes\ket{0}$, where the number of qubits is clear from the context. 
We use $\widetilde O(\cdot)$ to hide polylogarithmic factors, i.e., $\widetilde{O}(f(n)) = O(f(n)\cdot \operatorname{poly}\log(f(n)))$. See \cref{app:symbols} for a summary of symbols.

\subsection{Concentration bounds}
We revise a few notions of probability theory and concentration bounds. Let $\mathcal{N}(\mathbf{0},\mathbf{\Sigma})$ be the multivariate Gaussian distribution with covariance matrix $\mathbf{\Sigma}\succ \mathbf{0} \in\mathbb{R}^{d\times d}$. We say that a matrix $\mathbf{X}\in\mathbb{R}^{n\times d}$ is drawn from the $\mathbf{\Sigma}$-Gaussian ensemble if the rows of $\mathbf{X}$ are sampled i.i.d.\ according to $\mathcal{N}(\mathbf{0},\mathbf{\Sigma})$. We note that $\mathbf{Z}$ is drawn from the $\mathbf{I}$-Gaussian ensemble if and only if $\mathbf{X} = \mathbf{Z}\sqrt{\mathbf{\Sigma}}$ is drawn from the $\mathbf{\Sigma}$-Gaussian ensemble. We shall use the concept of sub-Gaussian variable.
\begin{definition}[Sub-Gaussianity]\label{def:SG}
    A random single-variable $z$ is sub-Gaussian with parameter $\sigma^2$ if $\mathbb{E}[e^{t(z-\mathbb{E}[z])}] \leq \exp\big(\frac{t^2 \sigma^2}{2}\big)$ for all $t\in\mathbb{R}$.
\end{definition}

We note that any sub-Gaussian random variable satisfies a Chernoff bound.
\begin{fact}\label{fact:Chernoff}
    If $z$ is a sub-Gaussian random variable with parameter $\sigma^2$, then
    \begin{align*}
        \mathbb{P}[z-\mathbb{E}[z]>t] \leq \exp{\left(-\frac{t^2}{2\sigma^2}  \right)} \quad\text{and}\quad \mathbb{P}[|z-\mathbb{E}[z]|>t] \leq 2\exp{\left(-\frac{t^2}{2\sigma^2}  \right)} \quad\text{for all}~t\in\mathbb{R}.
    \end{align*}
\end{fact}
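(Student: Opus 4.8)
The plan is to apply the standard Chernoff (exponential Markov) method, exploiting that the defining inequality of sub-Gaussianity directly controls the moment-generating function of the centred variable. Throughout I treat the meaningful regime $t>0$; for $t\le 0$ the first inequality should be read only in the one-sided tail sense. Write $\bar z \triangleq z - \mathbb{E}[z]$, so that \cref{def:SG} gives $\mathbb{E}[e^{s\bar z}] \le \exp(s^2\sigma^2/2)$ for every $s\in\mathbb{R}$.

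First I would establish the one-sided bound. For any $s>0$, Markov's inequality applied to the nonnegative random variable $e^{s\bar z}$ yields
\begin{align*}
    \mathbb{P}[\bar z > t] = \mathbb{P}\big[e^{s\bar z} > e^{st}\big] \le e^{-st}\,\mathbb{E}\big[e^{s\bar z}\big] \le \exp\!\Big(\tfrac{s^2\sigma^2}{2} - st\Big).
\end{align*}
The right-hand side is minimised over $s>0$ by the choice $s = t/\sigma^2$, which is admissible precisely because $t>0$. Substituting this value collapses the exponent to $-t^2/(2\sigma^2)$, giving $\mathbb{P}[\bar z > t] \le \exp(-t^2/(2\sigma^2))$, which is the first claim.

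For the two-sided bound I would observe that the condition in \cref{def:SG} is symmetric in sign: since it holds for all $s\in\mathbb{R}$, the variable $-\bar z$ is itself sub-Gaussian with the same parameter $\sigma^2$. Applying the one-sided bound just derived to both $\bar z$ and $-\bar z$ and combining the two tail events by the union bound gives
\begin{align*}
    \mathbb{P}[|\bar z| > t] \le \mathbb{P}[\bar z > t] + \mathbb{P}[-\bar z > t] \le 2\exp\!\Big(-\frac{t^2}{2\sigma^2}\Big),
\end{align*}
which is the second claim.

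There is no substantive obstacle here: the argument is entirely routine, and the only step requiring a genuine choice is the optimisation $s = t/\sigma^2$, whose admissibility hinges on restricting to $t>0$. The sole point worth flagging is the implicit sign convention on $t$, since for $t<0$ the right-hand side $\exp(-t^2/(2\sigma^2))$ drops below $1$ and the first inequality is meaningful only as a one-sided tail statement.
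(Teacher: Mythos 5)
Your argument is the standard Chernoff/exponential-Markov method and it is correct; the paper states \cref{fact:Chernoff} without proof precisely because this is the textbook derivation, so there is nothing to compare against. Your side remark is also apt: as written the paper's ``for all $t\in\mathbb{R}$'' is too strong, since for $t<0$ the one-sided bound can fail (the left-hand side may be close to $1$ while the right-hand side is below $1$), and the statement should be read for $t>0$, which is the only regime in which it is ever invoked.
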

\begin{fact}\label{fact:sub_gaussian}
    If each coordinate of $\mathbf{w}\in\mathbb{R}^n$ is an independent sub-Gaussian random variable with parameter $\sigma^2$, then, for any $\mathbf{u}\in \mathbb{R}^{n}$, $\mathbf{u}^\top \mathbf{w}$ is sub-Gaussian with parameter $\sigma^2\|\mathbf{u}\|_2^2$. 
\end{fact}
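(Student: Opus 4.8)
The plan is to verify the moment-generating-function bound from the definition of sub-Gaussianity (\cref{def:SG}) directly, exploiting the independence of the coordinates of $\mathbf{w}$ to factor the MGF into a product over coordinates. Write $X \triangleq \mathbf{u}^\top\mathbf{w} = \sum_{i=1}^n u_i w_i$, so that the centred variable is $X - \mathbb{E}[X] = \sum_{i=1}^n u_i(w_i - \mathbb{E}[w_i])$. The goal is then to show that $\mathbb{E}[e^{t(X-\mathbb{E}[X])}] \leq \exp(t^2\sigma^2\|\mathbf{u}\|_2^2/2)$ for all $t\in\mathbb{R}$, which is precisely the statement that $X$ is sub-Gaussian with parameter $\sigma^2\|\mathbf{u}\|_2^2$.

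First I would use the independence of $w_1,\dots,w_n$ to factor $\mathbb{E}[e^{t(X-\mathbb{E}[X])}] = \prod_{i=1}^n \mathbb{E}[e^{t u_i(w_i-\mathbb{E}[w_i])}]$. Then, applying the sub-Gaussian hypothesis to each $w_i$ with the real argument $t u_i$ substituted for $t$ in \cref{def:SG}, each factor is bounded by $\exp(t^2 u_i^2 \sigma^2/2)$. Taking the product and using $\sum_{i=1}^n u_i^2 = \|\mathbf{u}\|_2^2$ collapses the bound to $\exp(t^2\sigma^2\|\mathbf{u}\|_2^2/2)$, matching the target exactly.

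There is no genuine obstacle here: the argument is the standard MGF tensorisation for independent sub-Gaussian variables. The only points requiring minor care are centring $X$ correctly, so that the product of the individual centred MGFs applies cleanly, and observing that the inequality in \cref{def:SG} holds for \emph{every} real argument, which justifies substituting $t u_i$ irrespective of the sign of $u_i$. Consequently the whole proof reduces to these two short lines of computation.
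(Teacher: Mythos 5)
Your proposal is correct and follows exactly the same route as the paper's one-line proof: factor the MGF over coordinates by independence, apply the sub-Gaussian bound with argument $tu_i$ to each factor, and sum the exponents to get $\sigma^2\|\mathbf{u}\|_2^2$. Your explicit centring of $X-\mathbb{E}[X]$ is in fact slightly more careful than the paper's version, which silently drops the means.
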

\begin{proof}
    $\mathbb{E}\left[ \exp\left( t\, \mathbf{u}^\top \mathbf{w}\right)\right]
        = \prod_{i=1}^n\mathbb{E}\left[ \exp\left( t\, u_i w_i\right)\right] 
        \leq  \prod_{i=1}^n\exp\left( \frac{t^2\sigma^2 u_i^2}{2}\right)
        =\exp\left( \frac{t^2 \sigma^2 \norm{\mathbf{u}}_2^2}{2}\right)$, $\forall t\in\mathbb{R}$.
\end{proof}

More generally, the sum of bounded independent random variables can be bounded via Hoeffding's inequality.
\begin{fact}[Hoeffding's bound]\label{fact:Hoeffding}
    Let $z_1,\dots,z_n$ be independent random variables such that, for $i\in[n]$, $a_{i}\leq z_{i}\leq b_{i}$ almost surely. Let $z \triangleq \sum_{i=1}^n z_i$. Then
    \begin{align*}
        \mathbb{P}[|z - \mathbb{E}[z]| \geq t] \leq 2\exp\left(- \frac{2t^2}{\sum_{i=1}^n (b_i - a_i)^2} \right) \quad\text{for all}~t>0.
    \end{align*}
\end{fact}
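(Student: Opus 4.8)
The plan is to derive Hoeffding's bound by recognising each centred summand as a bounded, and hence sub-Gaussian, random variable, and then invoking the machinery already set up in \Cref{fact:sub_gaussian,fact:Chernoff}. Concretely, I would set $w_i \triangleq z_i - \mathbb{E}[z_i]$, so that $\mathbb{E}[w_i] = 0$ and $w_i$ takes values in an interval of width $b_i - a_i$. The crux of the argument is \emph{Hoeffding's lemma}: any mean-zero random variable $w$ supported on an interval of length $\ell$ is sub-Gaussian with parameter $\sigma^2 = \ell^2/4$, i.e.\ $\mathbb{E}[e^{tw}] \le \exp(t^2 \ell^2/8)$ for all $t\in\mathbb{R}$.

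Granting this lemma, the rest is immediate. Each $w_i$ is sub-Gaussian with parameter $(b_i-a_i)^2/4$, and by independence the moment generating function of $z - \mathbb{E}[z] = \sum_{i} w_i$ factorises exactly as in the proof of \Cref{fact:sub_gaussian}, yielding
\begin{align*}
\mathbb{E}\big[e^{t(z-\mathbb{E}[z])}\big] = \prod_{i=1}^n \mathbb{E}[e^{tw_i}] \le \prod_{i=1}^n \exp\!\Big(\frac{t^2 (b_i-a_i)^2}{8}\Big) = \exp\!\Big(\frac{t^2}{2}\cdot\frac{\sum_{i=1}^n (b_i-a_i)^2}{4}\Big).
\end{align*}
Thus $z - \mathbb{E}[z]$ is itself sub-Gaussian with parameter $\sigma^2 = \frac14\sum_{i=1}^n (b_i-a_i)^2$. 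Applying the two-sided tail estimate of \Cref{fact:Chernoff} with this parameter gives $\mathbb{P}[|z-\mathbb{E}[z]| \ge t] \le 2\exp(-t^2/(2\sigma^2)) = 2\exp\!\big(-2t^2/\sum_{i=1}^n (b_i-a_i)^2\big)$, which is exactly the claimed bound.

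The main obstacle is therefore proving Hoeffding's lemma, which does not follow from the preceding facts. I would prove it by convexity: for $w\in[a,b]$ write $w$ as a convex combination of the endpoints, so that $e^{tw} \le \frac{b-w}{b-a}e^{ta} + \frac{w-a}{b-a}e^{tb}$; taking expectations and using $\mathbb{E}[w]=0$ bounds $\mathbb{E}[e^{tw}]$ by $e^{\psi(t)}$, where $\psi(t) = \log\big(\frac{b e^{ta} - a e^{tb}}{b-a}\big)$. A direct computation gives $\psi(0)=\psi'(0)=0$, while $\psi''(t)$ equals the variance of a two-point random variable valued in $\{a,b\}$, hence $\psi''(t)\le (b-a)^2/4$ for all $t$. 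A second-order Taylor expansion of $\psi$ about $0$ then yields $\psi(t)\le t^2(b-a)^2/8$, completing the lemma with $\ell=b-a$. Since $w_i=z_i-\mathbb{E}[z_i]$ lives in an interval of the same width $b_i-a_i$ as $[a_i,b_i]$, the lemma applies verbatim and the whole argument closes.
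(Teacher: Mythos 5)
The paper does not prove \Cref{fact:Hoeffding} at all: it is stated as a citation-level ``Fact'' with no argument attached, so there is no in-paper proof to compare yours against. Your proposal is a correct and complete proof of the statement, and it is the standard one. It also dovetails cleanly with the paper's own machinery: once you establish Hoeffding's lemma, each centred summand $w_i = z_i - \mathbb{E}[z_i]$ is sub-Gaussian with parameter $(b_i-a_i)^2/4$ in exactly the sense of \Cref{def:SG}, the factorisation of the moment generating function mirrors the one-line computation in \Cref{fact:sub_gaussian}, and the two-sided tail bound of \Cref{fact:Chernoff} with $\sigma^2 = \frac{1}{4}\sum_{i=1}^n(b_i-a_i)^2$ gives precisely $2\exp\bigl(-2t^2/\sum_{i=1}^n(b_i-a_i)^2\bigr)$. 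The only genuinely new ingredient is Hoeffding's lemma itself, and your sketch of it is sound: the convexity bound $e^{tw}\le \frac{b-w}{b-a}e^{ta}+\frac{w-a}{b-a}e^{tb}$ is valid (note that after centring one has $a\le 0\le b$, so the weights are legitimate probabilities when you take expectations with $\mathbb{E}[w]=0$), $\psi(0)=\psi'(0)=0$ checks out, and $\psi''(t)$ is indeed the variance of a two-point distribution on $\{a,b\}$, hence at most $(b-a)^2/4$ by the usual Popoviciu-type bound, giving $\psi(t)\le t^2(b-a)^2/8$ by Taylor's theorem. The only cosmetic mismatch is that \Cref{fact:Chernoff} is stated with a strict inequality in the event while \Cref{fact:Hoeffding} uses $\ge t$; this is resolved by taking a limit $s\uparrow t$ in the bound, which is continuous in $t$. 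No gaps.
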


L\'evy's lemma~\cite{milman1986asymptotic,ledoux2001concentration,Anderson_Guionnet_Zeitouni_2009,Vershynin_2018,Meckes_2019} will also be useful.
\begin{fact}[L\'evy's lemma, {\cite[Corollary~4.4.28]{Anderson_Guionnet_Zeitouni_2009}}]\label{fact:levy_lemma}
    Let $f: \mathbb{S}^d\to \mathbb{R}$ be a function defined on the $d$-dimensional hypersphere~$\mathbb{S}^d$. Assume $f$ is $K$-Lipschitz, i.e., $|f(\psi) - f(\phi)| \leq K\|\psi - \phi\|$. Then
    \begin{align*}
        \operatorname*{\mathbb{P}}_{\psi\sim\mu_{\rm Haar}}[|f(\psi) - \mathbb{E}[f]| \geq \epsilon] \leq 2\exp\left(-\frac{d\epsilon^2}{4 K^2}\right) \quad\text{for all}~\epsilon > 0.
    \end{align*}
\end{fact}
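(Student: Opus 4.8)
The plan is to prove the stated sub-Gaussian tail bound as an instance of the concentration-of-measure phenomenon on the sphere. There are two standard routes: the functional route via a logarithmic Sobolev inequality and the Herbst argument, and the geometric route via the spherical isoperimetric inequality (which is what gives Lévy's lemma its name). I would take the log-Sobolev route as primary, since it directly produces a sub-Gaussian moment generating function in exactly the form of \cref{def:SG} and aligns with the approach of the cited reference \cite[Corollary~4.4.28]{Anderson_Guionnet_Zeitouni_2009}, and use the isoperimetric route as an independent cross-check on the constant.

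First route (log-Sobolev $+$ Herbst). I would start from the fact that the uniform (Haar) probability measure $\mu_{\rm Haar}$ on the unit sphere $\mathbb{S}^d\subset\mathbb{R}^{d+1}$ satisfies a logarithmic Sobolev inequality with constant $\rho = \Theta(1/d)$; concretely $\rho = 1/(d-1)$, which follows from the Bakry–Émery curvature-dimension criterion together with the Ricci lower bound $\mathrm{Ric}\geq (d-1)g$ for the unit sphere. I would then run the Herbst argument: for a $K$-Lipschitz $f$, set $\Lambda(\lambda) = \log\mathbb{E}[e^{\lambda(f-\mathbb{E}[f])}]$, apply the LSI to the function $e^{\lambda f/2}$, and bound the resulting Dirichlet energy using $|\nabla_{\mathbb{S}^d} f|\leq K$ to obtain the differential inequality $\Lambda'(\lambda)/\lambda - \Lambda(\lambda)/\lambda^2 \leq \rho K^2/2$. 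Integrating from $\lambda=0$ (where $\Lambda(\lambda)/\lambda\to 0$ by centring) yields $\Lambda(\lambda)\leq \rho K^2\lambda^2/2$, i.e.\ $f-\mathbb{E}[f]$ is sub-Gaussian with parameter $\sigma^2 = \rho K^2 = O(K^2/d)$ in the sense of \cref{def:SG}. The two-sided Chernoff bound of \cref{fact:Chernoff} then gives $\mathbb{P}[|f-\mathbb{E}[f]|\geq \epsilon]\leq 2\exp(-\epsilon^2/(2\sigma^2)) = 2\exp(-(d-1)\epsilon^2/(2K^2))$, which for the stated $\mathbb{S}^d$ implies the claimed $2\exp(-d\epsilon^2/(4K^2))$ with room to spare.

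Second route (isoperimetry), as a check. I would let $M$ be a median of $f$ and $A = \{f\leq M\}$, so $\mu_{\rm Haar}(A)\geq 1/2$. By the Lipschitz property the Euclidean $t$-enlargement $A_t$ is contained in $\{f\leq M+Kt\}$, and the Lévy–Gromov spherical isoperimetric inequality (spherical caps minimise the measure of enlargements among sets of fixed measure) lower-bounds $\mu_{\rm Haar}(A_t)$ by the measure of the $t$-enlargement of a hemisphere. A direct cap-measure estimate gives $\mu_{\rm Haar}(f>M+Kt)\leq \exp(-dt^2/2)$, and the symmetric argument controls the lower tail. Finally I would convert the median to the mean using $|M-\mathbb{E}[f]| = O(K/\sqrt{d})$, which only shifts $\epsilon$ by a lower-order amount.

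The main obstacle is not the Herbst computation, which is routine once the inequality is in hand, but establishing the logarithmic Sobolev constant $\rho = \Theta(1/d)$ with the correct dimensional dependence — this is where the whole factor of $d$ in the exponent originates, and a clean self-contained proof rests on the Bakry–Émery criterion and the curvature of the sphere. I note that the ambient-Euclidean versus geodesic normalisation of the Lipschitz constant works in our favour: since $\|x-y\|_2\leq d_{\rm geo}(x,y)$ on $\mathbb{S}^d$, any Euclidean $K$-Lipschitz $f$ is automatically geodesic $K$-Lipschitz, so $|\nabla_{\mathbb{S}^d} f|\leq K$ and no constant is lost there. What remains is only to track numerical factors; since the functional route already delivers the slightly stronger exponent $(d-1)\epsilon^2/(2K^2)$, the advertised bound follows, and for the sharp constant I would simply defer to \cite[Corollary~4.4.28]{Anderson_Guionnet_Zeitouni_2009}.
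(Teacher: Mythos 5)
The paper does not prove this statement at all: it is imported as a \textbf{Fact} with a bare citation to \cite[Corollary~4.4.28]{Anderson_Guionnet_Zeitouni_2009}, so there is no in-paper argument to compare yours against. Your sketch is a correct rendition of the standard proof. The log-Sobolev route is sound: Bakry--\'Emery with $\mathrm{Ric}=(d-1)g$ on the round $\mathbb{S}^d$ gives the LSI constant $\rho=1/(d-1)$, the Herbst differential inequality and its integration are stated correctly, and your observation that Euclidean $K$-Lipschitz implies geodesic $K$-Lipschitz (since $\|\psi-\phi\|_2\le d_{\rm geo}(\psi,\phi)$) correctly disposes of the only place a constant could leak. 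The resulting tail $2\exp(-(d-1)\epsilon^2/(2K^2))$ dominates the advertised $2\exp(-d\epsilon^2/(4K^2))$ precisely when $(d-1)/2\ge d/4$, i.e.\ for $d\ge 2$; you should say this explicitly rather than ``with room to spare,'' and note that $d=1$ is a degenerate case your LSI constant does not cover (it is handled trivially or by the cited reference, and is irrelevant to the paper's application where $d$ is large). The isoperimetric cross-check is also standard and fine as a sketch, though the median-to-mean conversion costs an additive $O(K/\sqrt{d})$ shift in $\epsilon$, which is why that route typically yields the bound only up to adjusted constants rather than verbatim. Since the paper's usage only needs the $\exp(-\Omega(d\epsilon^2/K^2))$ scaling, either route suffices.
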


We will also make use of the following lemma about the relative approximation of a ratio of two real numbers that are given by relative approximations. 
 
\begin{lemma}\label{lem:error_propagation}
    Let $\widetilde{a},\widetilde{b}\in\mathbb{R}$ be estimates of $a,b\in\mathbb{R}\setminus\{0\}$, respectively, such that $|a - \widetilde{a}| \leq \epsilon_a$ and $|b - \widetilde{b}| \leq \epsilon_b$, where $\epsilon_a \geq 0$ and $\epsilon_b\in[0,|b|/2]$. Then 
    \begin{align*}
        \left|\frac{\widetilde{a}}{\widetilde{b}} - \frac{a}{b} \right| \leq 2\frac{|a|}{|b|}\left(\frac{\epsilon_a}{|a|} + \frac{\epsilon_b}{|b|}\right).
    \end{align*}
\end{lemma}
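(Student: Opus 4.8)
The plan is to put the difference over a common denominator and reduce the claim to two independent estimates. First I would write
$$\frac{\widetilde{a}}{\widetilde{b}} - \frac{a}{b} = \frac{\widetilde{a}\,b - a\,\widetilde{b}}{\widetilde{b}\,b},$$
so that it suffices to upper bound the numerator $|\widetilde{a}\,b - a\,\widetilde{b}|$ and lower bound the denominator $|\widetilde{b}\,b| = |\widetilde{b}|\,|b|$.

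For the numerator, the key algebraic step is to insert and cancel a cross term, rewriting $\widetilde{a}\,b - a\,\widetilde{b} = (\widetilde{a}-a)\,b - a\,(\widetilde{b}-b)$. Applying the triangle inequality together with the hypotheses $|a-\widetilde{a}|\le\epsilon_a$ and $|b-\widetilde{b}|\le\epsilon_b$ then gives $|\widetilde{a}\,b - a\,\widetilde{b}| \le \epsilon_a|b| + |a|\epsilon_b$. For the denominator, I would invoke the assumption $\epsilon_b \le |b|/2$, which is exactly what keeps $\widetilde{b}$ bounded away from zero: the reverse triangle inequality yields $|\widetilde{b}| \ge |b| - |b-\widetilde{b}| \ge |b| - \epsilon_b \ge |b|/2$, and hence $|\widetilde{b}\,b| \ge |b|^2/2$.

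Combining the two estimates then gives
$$\left|\frac{\widetilde{a}}{\widetilde{b}} - \frac{a}{b}\right| \le \frac{\epsilon_a|b| + |a|\epsilon_b}{|b|^2/2} = 2\left(\frac{\epsilon_a}{|b|} + \frac{|a|\,\epsilon_b}{|b|^2}\right),$$
and factoring $|a|/|b|$ out of the right-hand side recovers precisely the claimed bound $2\frac{|a|}{|b|}\big(\frac{\epsilon_a}{|a|} + \frac{\epsilon_b}{|b|}\big)$ (the two expressions are in fact identical, so no slack is lost). There is no genuine obstacle here, as the argument is entirely elementary; the only step requiring care is the denominator lower bound, where the hypothesis $\epsilon_b \le |b|/2$ is used in an \emph{essential} way to guarantee $\widetilde{b}\neq 0$ and to control the blow-up of $1/\widetilde{b}$, and it is exactly this step that produces the leading factor of $2$ in the final bound.
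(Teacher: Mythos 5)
Your proof is correct and follows essentially the same route as the paper's: the same cross-term decomposition $\widetilde{a}b - a\widetilde{b} = (\widetilde{a}-a)b - a(\widetilde{b}-b)$, the same lower bound $|\widetilde{b}| \geq |b| - \epsilon_b \geq |b|/2$ from the hypothesis $\epsilon_b \leq |b|/2$, and the same final combination yielding $\frac{2\epsilon_a}{|b|} + \frac{2|a|\epsilon_b}{|b|^2}$. No issues.
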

\begin{proof}
    First note that $|b| - |\widetilde{b}| \leq \epsilon_b \implies \frac{1}{|\widetilde{b}|} \leq \frac{1}{|b| - \epsilon_b} \leq \frac{2}{|b|}$. Then
    \begin{align*}
        \left|\frac{\widetilde{a}}{\widetilde{b}} - \frac{a}{b} \right| =  \left|\frac{\widetilde{a}b - ab + ab - a\widetilde{b}}{b\widetilde{b}} \right| \leq \left|\frac{\widetilde{a} - a}{\widetilde{b}} \right| + \left|\frac{a(b-\widetilde{b})}{b\widetilde{b}} \right| &\leq \frac{2\epsilon_a}{|b|} + \frac{|a|}{|b|}\frac{2\epsilon_b}{|b|} = 2\frac{|a|}{|b|}\left(\frac{\epsilon_a}{|a|} + \frac{\epsilon_b}{|b|}\right). \qedhere
    \end{align*}
\end{proof}

\subsection{Computational model}
\label{sec:computational_model}

\paragraph*{Classical computational model.} Our classical computational model is a classical random-access machine. The input to the Lasso problem is a matrix $\mathbf{X}\in\mathbb{R}^{n\times d}$ and a vector $\mathbf{y}\in\mathbb{R}^n$, which are stored in a classical-readable read-only memory (ROM). For simplicity, reading any entry of $\mathbf{X}$ and $\mathbf{y}$ takes constant time. 
The classical computer can write bits to a classical-samplable-and-writable memory. More specifically, the classical computer can write a vector $\mathbf{u}\in\mathbb{R}^m$ (or parts of it) into a low-overhead samplable data structure in time $O(m)$, which allows it to sample an index $i\in[m]$ from the probability distribution $\mathcal{D}_{\mathbf{u}}$ in time $O(\poly\log{m})$. We refer to the ability to sample from $\mathcal{D}_{\mathbf{u}}$ as having sampling access to the vector $\mathbf{u}$. In this work, all classical-samplable-and-writable memories have size $m=O(n)$. We assume an arithmetic model in which we ignore issues arising from the fixed-point representation of real numbers. All basic arithmetic operations in this model take constant time.

\paragraph*{Quantum computational model.} Our quantum computational model is a classical random-access machine with access to a quantum computer. The input to the Lasso problem is a matrix $\mathbf{X}\in\mathbb{R}^{n\times d}$ and a vector $\mathbf{y}\in\mathbb{R}^n$, which are stored in a classical-readable read-only memory (ROM). For simplicity, reading any entry of $\mathbf{X}$ and $\mathbf{y}$ takes constant time. We assume that $\mathbf{X}$ is also stored in a quantum-readable read-only memory (QROM), whose single entries can be queried. This means that the quantum computer has access to an oracle $\mathcal{O}_{\mathbf{X}}$ that performs the mapping
\begin{align*}
    \mathcal{O}_{\mathbf{X}} : |i,j\rangle|\bar{0}\rangle \mapsto |i,j\rangle|X_{ij}\rangle, \quad \forall i\in[n], j\in[d],
\end{align*}
in time $O(\operatorname{poly}\log(nd))$. The classical computer can also write bits to a quantum-readable classical-writable classical memory (QRAM)~\cite{giovannetti2008architectures,giovannetti2008quantum,allcock2023constant}, which can be accessed in superposition by the quantum computer. More specifically, the classical computer can write a vector $\mathbf{u}\in\mathbb{R}^m$ (or parts of it) into the memory of a QRAM in time $O(m)$, which allows the quantum computer to invoke an oracle $\mathcal{U}_{\mathbf{u}}$ that performs the mapping
\begin{align*}
    \mathcal{U}_{\mathbf{u}} : |i\rangle|\bar{0}\rangle \mapsto |i\rangle|u_{i}\rangle, \quad \forall i\in[m],
\end{align*}
in time $O(\operatorname{poly}\log{m})$. We refer to the ability to invoke $\mathcal{U}_{\mathbf{u}}$ as having quantum access to the vector $\mathbf{u}$. In this work, all QRAMs have size $m = O(n)$. We do not assume that $\mathbf{y}$ has been previously stored in a QROM. Instead, it can be stored in a QRAM in time $O(n)$ by the classical computer.

The classical computer can send the description of a quantum circuit to the quantum computer, which is a sequence of quantum gates from a universal gate set plus queries to the bits stored in the QROM and/or QRAM; the quantum computer runs the circuit, performs a measurement in the computational basis, and returns the measurement outcome to the classical computer. We refer to the runtime of a classical/quantum computation as the number of basic gates performed plus the time complexity of all the calls to the QROM and QRAMs. We assume an arithmetic model in which we ignore issues arising from the fixed-point representation of real numbers. All basic arithmetic operations in this model take constant time. 

Kerenidis and Prakash~\cite{prakash2014quantum,kerenidis2016quantum} introduced a classical data structure (sub-sequentially called KP-tree) to store a vector $\mathbf{u}\in\mathbb{R}^m$ to enable the efficient preparation of the state
\begin{align*}
    \sum_{i=1}^m \sqrt{\frac{|u_{i}|}{\|\mathbf{u}\|_1}}\ket{i}\ket{\operatorname{sign}(u_i)}.
\end{align*}
We shall commonly use KP-trees of auxiliary vectors in order to prepare the above states.
\begin{fact}[KP-tree~\cite{prakash2014quantum,kerenidis2016quantum,chen2021quantum}]\label{KP-tree}
    Let $\mathbf{u}\in\mathbb R^{m}$ be a vector with $w \in \mathbb N$ non-zero entries. There is a data structure called ${\rm KP}$-tree of size $O(w\poly\log{m})$ that stores each input $(i,u_{i})$ in time $O(\poly\log{m})$. After the data structure has been constructed, there is a quantum algorithm that prepares the state $\sum_{i=1}^m \sqrt{\frac{|u_{i}|}{\|\mathbf{u}\|_1}}\ket{i}\ket{\operatorname{sign}(u_i)}$ up to negligible error in time $O(\operatorname{poly}\log{m})$. 
\end{fact}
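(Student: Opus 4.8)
The plan is to realise the KP-tree as a complete binary tree with $m$ leaves, where leaf $i$ stores the pair $(|u_i|,\sign(u_i))$ and every internal node $v$ stores the partial sum $b_v \triangleq \sum_{i\in\mathrm{leaves}(v)} |u_i|$ of the absolute values in the subtree rooted at $v$. First I would observe that only the $O(\log m)$ ancestors of each of the $w$ non-zero leaves ever hold a non-zero value, so at most $O(w\log m)$ nodes need to be materialised; storing each value with $\poly\log m$ bits of precision then gives total size $O(w\poly\log m)$. To insert a pair $(i,u_i)$ one walks from leaf $i$ up to the root, creating any missing node and incrementing its stored sum by $|u_i|$; this touches $O(\log m)$ nodes and costs $O(\poly\log m)$ time. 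By construction the root stores $b_{\mathrm{root}} = \|\mathbf{u}\|_1$.

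Next I would describe state preparation as a sequence of $O(\log m)$ controlled rotations that descend the tree one level at a time, building the $\lceil\log_2 m\rceil$-bit index register bit by bit from the root downwards. Writing $v_L,v_R$ for the children of an internal node $v$, the rotation attached to $v$ implements
\begin{align*}
    \ket{v}\ket{0} \mapsto \ket{v}\Big(\sqrt{b_{v_L}/b_v}\,\ket{0} + \sqrt{b_{v_R}/b_v}\,\ket{1}\Big),
\end{align*}
conditioned on the current prefix being $v$; appending the new qubit moves one level deeper. The angle $\theta_v = \arccos\!\big(\sqrt{b_{v_L}/b_v}\big)$ is computed on the fly from $b_{v_L},b_v$ fetched from the QRAM holding the tree, so each level costs one QRAM query plus $O(\poly\log m)$ arithmetic, applied in superposition over all nodes of that level. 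After all levels the amplitude on leaf $i$ telescopes to $\sqrt{\prod_{v\to i} b_{\mathrm{child}}/b_{\mathrm{parent}}} = \sqrt{|u_i|/\|\mathbf{u}\|_1}$, and a final QRAM query copies $\sign(u_i)$ into the sign register, producing the target state. The total cost is $O(\log m)$ levels times $O(\poly\log m)$ per level, i.e.\ $O(\poly\log m)$.

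The one point requiring care—and the main obstacle—is the error analysis, since the angles $\theta_v$ are only stored and manipulated to finite precision. I would argue that representing each $b_v$ (equivalently each $\theta_v$) with $O(\poly\log m)$ bits makes the per-level angle error inverse-superpolynomially small, and since the errors of a product of $O(\log m)$ single-qubit rotations add at most linearly in operator norm, the prepared state differs from the ideal one by a negligible (inverse-superpolynomial) amount. Zero-weight subtrees ($b_v = 0$) must be skipped so that no $0/0$ rotation is ever invoked; this is automatic because such nodes carry zero amplitude and can be assigned the identity. Collecting the size, insertion-time, and state-preparation bounds yields the claimed guarantees.
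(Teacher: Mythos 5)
The paper does not prove this statement — it imports it as a Fact with citations to Kerenidis–Prakash and Chen–de Wolf — and your construction is precisely the standard one from those references: a binary tree of partial $\ell_1$-sums with $O(\log m)$-depth updates, level-by-level controlled rotations with angles read from QRAM, and a finite-precision error argument. Your proposal is correct and matches the intended (cited) proof, including the $\ell_1$-normalisation and the separate sign register used here rather than the $\ell_2$ variant of the original KP-tree.
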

\begin{remark}
    It can be argued that the {\rm QROM/QRAM} model is too strong of an assumption and it might not be realistic~{\rm \cite{jaques2023qram}}. Nonetheless, we believe it is a natural and fair quantum equivalent to the classical computational model. If we make the assumption that the input can be classically accessed in constant time and/or sampled in poly-logarithmically time, then a quantum computer should also require, in all fairness, a comparable amount of time to quantumly access the input.
\end{remark}

\subsection{Quantum subroutines}

In this section, we review some useful quantum subroutines that shall be used in the rest of the paper, starting with the quantum minimum-finding algorithm from D\"urr and H\o{}yer~\cite{durr1996quantum}.

\begin{fact}[Quantum minimum finding~\cite{durr1996quantum}]\label{quantum_minimum_finding}
    Given quantum access to a vector $\mathbf{u}\in\mathbb R^m$, there is a quantum algorithm that finds $\min_{i\in[m]} u_i$ with success probability $1-\delta$ in time $\widetilde{O}(\sqrt{m}\log \frac{1}{\delta})$. 
\end{fact}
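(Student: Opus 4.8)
The plan is to prove this via the quantum minimum-finding algorithm of D\"urr and H\o{}yer~\cite{durr1996quantum}, whose engine is Grover-type amplitude amplification. First I would maintain a classical \emph{threshold} index $y\in[m]$, initialised uniformly at random, and repeatedly use a Grover search to find some index $i$ with $u_i < u_y$, updating $y\leftarrow i$ whenever such an $i$ is found. The marking oracle needed for this search, $O_y:\ket{i}\mapsto(-1)^{[u_i<u_y]}\ket{i}$, is built from the quantum-access unitary $\mathcal{U}_{\mathbf u}:\ket{i}\ket{\bar 0}\mapsto\ket{i}\ket{u_i}$ by loading $u_i$, comparing it against the stored value $u_y$, flipping a phase, and uncomputing; each application costs $O(1)$ queries to $\mathcal{U}_{\mathbf u}$ plus $\widetilde O(1)$ arithmetic, hence $\widetilde O(1)$ time. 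Since the number $t=|\{i:u_i<u_y\}|$ of marked indices is unknown and shrinks over the run, I would use the standard variant of Grover search that copes with an unknown number of solutions (amplitude amplification with geometrically increasing, randomly chosen iteration counts), which returns a uniformly random marked index in expected $O(\sqrt{m/t})$ queries when $t\geq 1$ and certifies $t=0$ within $O(\sqrt m)$ queries.

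The heart of the argument is the query-count analysis of this random descent. Index the entries by increasing value, so that a threshold of \emph{rank} $r$ has exactly $r-1$ strictly smaller entries; a successful search from rank $r$ then moves the threshold to a uniformly random rank in $\{1,\dots,r-1\}$, and the minimum is the unique rank-$1$ threshold, at which the search reports ``none.'' Writing $q_j$ for the probability that rank $j$ is ever the threshold, a short induction on the descent recurrence gives the clean closed form $q_j=1/j$ (the starting threshold is uniform, and the conditional probability of visiting $j$ from any rank above $j$ is exactly $1/j$). The expected total number of queries is therefore
\begin{align*}
    \sum_{j=1}^m q_j\cdot O\!\big(\sqrt{m/\max\{j-1,1\}}\big) = O(\sqrt m) + O(\sqrt m)\sum_{j\geq 2}\frac{1}{j\sqrt{j-1}} = O(\sqrt m),
\end{align*}
the sum converging because of its $j^{-3/2}$ tail. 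Cutting the run off after a fixed budget of $C\sqrt m$ queries and outputting the final threshold then succeeds with probability at least $1/2$ by Markov's inequality (absorbing the per-stage failure probability of the inner search into the constant). Since each query costs $\widetilde O(1)$ time, one run takes $\widetilde O(\sqrt m)$ time.

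Finally, to boost the success probability from $1/2$ to $1-\delta$, I would run $O(\log\frac1\delta)$ independent copies of the above and output the smallest value returned across all copies. A single run fails to identify the true minimum with probability at most $1/2$, so all runs fail with probability at most $2^{-\Omega(\log(1/\delta))}\leq\delta$, while the total time is $\widetilde O(\sqrt m\log\frac1\delta)$, as claimed.

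I expect the main obstacle to be the expected-query analysis rather than the oracle construction or the amplification: one must correctly model the descent as a Markov chain on ranks, establish the visiting probabilities $q_j$, and---crucially---argue that the Grover subroutine with an \emph{unknown} and changing number of marked items still costs only $O(\sqrt{m/t})$ per stage in expectation, so that the weighted harmonic sum converges to $O(\sqrt m)$ rather than incurring an extra logarithmic factor. Converting the resulting \emph{expected} query bound into a fixed budget with a guaranteed constant success probability is the remaining technical point.
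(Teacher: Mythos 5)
The paper does not prove this statement; it imports it verbatim as a known result from D\"urr and H\o{}yer, so there is no in-paper proof to compare against. Your reconstruction is the standard argument from that reference --- threshold descent driven by Grover search with an unknown number of marked items, the visiting probabilities $q_j = 1/j$, the convergent $\sum_{j\ge 2} j^{-1}(j-1)^{-1/2} = O(1)$ sum giving $O(\sqrt{m})$ expected queries, a Markov cutoff, and $O(\log\frac{1}{\delta})$ independent repetitions combined by taking the minimum --- and it is correct as stated.
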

The above subroutine was later generalised by Chen and de Wolf~\cite{chen2021quantum} in the case when one has quantum access to the entries of $\mathbf{u}$ up to some additive error. We note that a similar result, but in a different setting, appeared in~\cite{van2017quantum,quek2020robust}.
\begin{fact}[{\cite[Theorem~2.4]{chen2021quantum}}]\label{fact:minimum_finding}
    Let $\delta_1,\delta_2\in(0,1)$ such that $\delta_2 = O\big(\delta_1^2/(m\log(1/\delta_1))\big)$, $\epsilon>0$, and $\mathbf{u}\in\mathbb{R}^m$. Suppose access to a unitary that maps $|k\rangle|\bar{0}\rangle \mapsto |k\rangle|R_k\rangle$ such that, for every $k\in[m]$, after measuring the state $|R_k\rangle$, with probability at least $1-\delta_2$ the first register $r_k$ of the measurement outcome satisfies $|r_k - u_k| \leq \epsilon$. Then there is a quantum algorithm that finds an index $k\in[m]$ such that $u_k \leq \min_{j\in[m]}u_j + 2\epsilon$ with probability at least $1-\delta_1$ and in time $\widetilde{O}(\sqrt{m}\log(1/\delta_1))$.
\end{fact}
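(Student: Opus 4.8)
The plan is to run the D\"urr--H\o{}yer minimum-finding procedure of \cref{quantum_minimum_finding}, but with the exact value oracle replaced by an approximate comparison oracle built from the given unitary $U\colon |k\rangle|\bar 0\rangle \mapsto |k\rangle|R_k\rangle$. Three things then have to be established: (i) the index finally returned has value within $2\epsilon$ of the true minimum; (ii) the total number of calls to $U$, $U^{-1}$, and the comparison is $\widetilde{O}(\sqrt{m}\log(1/\delta_1))$; and (iii) the probabilistic, entangled nature of the marking does not spoil the amplitude amplification, which is exactly where the hypothesis $\delta_2 = O\big(\delta_1^2/(m\log(1/\delta_1))\big)$ is consumed. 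Parts (i) and (ii) are comparatively routine; part (iii) is the crux. No boosting of the estimation unitary is needed, since the hypothesis already hands us a sufficiently small per-index error $\delta_2$.

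For the comparison oracle I would fix a classical threshold $\tau$ (in D\"urr--H\o{}yer this is the measured estimate $r_{k_0}$ of the current best candidate) and define a unitary $C_\tau$ that coherently writes the bit $[r_k < \tau]$ into a flag qubit, where $r_k$ denotes the first register of $|R_k\rangle$. Conditioning on the good event $\mathcal{G}$ that every estimate the algorithm actually reads off satisfies $|r_k - u_k| \le \epsilon$, the flag faithfully reports whether $u_k$ lies below $\tau$ up to an additive $\epsilon$-slack, so on $\mathcal{G}$ the procedure behaves exactly like D\"urr--H\o{}yer executed on surrogate values $\tilde u_k$ with $|\tilde u_k - u_k| \le \epsilon$. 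Consequently the returned index $k$ minimises the surrogates, giving $u_k \le \tilde u_k + \epsilon \le \min_{j} \tilde u_j + \epsilon \le \min_{j} u_j + 2\epsilon$, which is claim (i); claim (ii) is inherited from \cref{quantum_minimum_finding}, since each of the $O(\log m)$ amplification stages costs $O(\sqrt{m})$ applications of $U$, $U^{-1}$, and $C_\tau$, with only $\poly\log$ arithmetic overhead for the register sizes and the comparison.

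The main obstacle is claim (iii): the flag produced by $C_\tau U$ is entangled with the garbage in $|R_k\rangle$ and is wrong on the $\le \delta_2$ squared-amplitude portion of each $|R_k\rangle$ corresponding to an inaccurate estimate. Writing $U|k\rangle|\bar 0\rangle = |k\rangle\big(|R_k^{\mathrm{good}}\rangle + |R_k^{\mathrm{bad}}\rangle\big)$ with $\||R_k^{\mathrm{bad}}\rangle\|^2 \le \delta_2$, the faulty reflection $V$ agrees with the ideal reflection $V'$ everywhere except on $\sum_k \alpha_k |k\rangle|R_k^{\mathrm{bad}}\rangle$; hence $\|(V - V')|\psi\rangle\| \le 2\sqrt{\delta_2}$ for every normalised input $|\psi\rangle = \sum_k \alpha_k |k\rangle|\bar 0\rangle$. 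Since a single execution makes $Q = \widetilde{O}(\sqrt{m}\log(1/\delta_1))$ calls, a hybrid argument bounds the trace distance between the faulty and ideal runs by $2Q\sqrt{\delta_2}$, and the assumed scaling $\delta_2 = O\big(\delta_1^2/(m\log(1/\delta_1))\big)$ makes this $O(\delta_1)$. Combining this with a union bound over the $O(\log m)$ classical threshold measurements (each accurate with probability $1-\delta_2$) and with the intrinsic success probability of the ideal D\"urr--H\o{}yer run yields overall success $1 - \delta_1$, completing the argument. The delicate point throughout is the square-root loss when passing from the probability error $\delta_2$ to the amplitude error $\sqrt{\delta_2}$, which is precisely what dictates the quadratic-in-$\delta_1$, linear-in-$m$ form of the required bound on $\delta_2$.
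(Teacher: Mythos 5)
The paper does not actually prove this statement: it is imported verbatim as a black-box fact from Chen and de Wolf~\cite{chen2021quantum}, so there is no in-paper proof to compare yours against. Your reconstruction follows the same strategy as the cited source: run D\"urr--H\o{}yer (\cref{quantum_minimum_finding}) with the exact comparison replaced by a coherent comparison against a measured threshold, and control the damage from the imperfect unitary by a hybrid argument that converts the per-call probability error $\delta_2$ into a per-call amplitude error $O(\sqrt{\delta_2})$ accumulated over $\widetilde{O}(\sqrt{m}\log(1/\delta_1))$ calls. That is indeed exactly where the quadratic-in-$\delta_1$, linear-in-$m$ hypothesis on $\delta_2$ is spent, and your $2\epsilon$ accounting is correct.

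One step of your sketch is too loose as written. You claim that on the good event the run coincides with D\"urr--H\o{}yer executed on fixed surrogate values $\tilde{u}_k$ with $|\tilde{u}_k - u_k|\le\epsilon$. But $|R_k\rangle$ is in general a superposition of several \emph{good} branches carrying different estimates $r_k$, all within $\epsilon$ of $u_k$; for an index with $|u_k-\tau|\le\epsilon$ the flag $[r_k<\tau]$ can therefore take both values across good branches, so there is no well-defined surrogate $\tilde{u}_k$ and no well-defined marked set --- items near the threshold are only fractionally marked. The standard patch is to argue directly: any index returned by an exponential Grover stage in a marked branch satisfies $r_k<\tau$ in that branch and hence $u_k<\tau+\epsilon$, while every index with $u_j<\tau-\epsilon$ is fully marked, so if the search fails (with high probability) no such index exists and the final candidate $k_0$ with threshold $\tau=r_{k_0}$ obeys $u_{k_0}\le\tau+\epsilon\le\min_{j\in[m]}u_j+2\epsilon$. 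This recovers exactly the stated guarantee; the rest of your argument, including the harmless $\sqrt{\log(1/\delta_1)}$ slack in matching $2Q\sqrt{\delta_2}$ to $O(\delta_1)$ (absorbed by the $\widetilde{O}$ and the implicit constant in the hypothesis on $\delta_2$), goes through.
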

We shall also need the amplitude estimation subroutine from Brassard \emph{et al.}~\cite{brassard2002quantum}.
\begin{fact}[{\cite[Theorem~12]{brassard2002quantum}}]\label{fact:amplitude_estimation}
    Given $M\in\mathbb{N}$ and access to an $(n+1)$-qubit unitary $\mathbf{U}$ satisfying
    \begin{align*}
        \mathbf{U}|0^n\rangle|0\rangle = \sqrt{a}|\psi_1\rangle|1\rangle + \sqrt{1-a}|\psi_0\rangle|0\rangle,
    \end{align*}
    where $|\psi_0\rangle$ and $|\psi_1\rangle$ are arbitrary $n$-qubit states and $a\in(0,1)$, there is a quantum algorithm that uses $O(M)$ applications of $\mathbf{U}$ and $\mathbf{U}^\dagger$ and $\widetilde{O}(M)$ elementary gates, and outputs a state $|\Lambda\rangle$ such that, after measuring $|\Lambda\rangle$, with probability at least $9/10$, the first register $\lambda$ of the outcome satisfies
    \begin{align*}
        |a - \lambda| \leq \frac{\sqrt{a(1-a)}}{M} + \frac{1}{M^2}.
    \end{align*}
\end{fact}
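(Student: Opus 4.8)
The plan is to reduce the estimation of $a$ to quantum phase estimation applied to a Grover-type iterate, following the amplitude-estimation construction of Brassard \emph{et al.} First I would isolate the two-dimensional invariant subspace on which the whole computation takes place. Writing $|\psi\rangle \triangleq U|0^n\rangle|0\rangle = \sqrt{a}\,|\psi_1\rangle|1\rangle + \sqrt{1-a}\,|\psi_0\rangle|0\rangle$ and parameterising $a = \sin^2\theta$ with $\theta\in[0,\pi/2]$, the state reads $|\psi\rangle = \sin\theta\,|G\rangle + \cos\theta\,|B\rangle$, where $|G\rangle = |\psi_1\rangle|1\rangle$ is the ``good'' subspace (flagged by the last qubit being $|1\rangle$) and $|B\rangle = |\psi_0\rangle|0\rangle$ is the ``bad'' one. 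Everything below stays inside the real plane $\operatorname{span}\{|G\rangle,|B\rangle\}$.

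Next I would introduce the amplitude-amplification operator $Q = -U S_0 U^\dagger S_\chi$, where $S_\chi$ reflects about the bad subspace by flipping the sign whenever the flag qubit is $|1\rangle$ (implementable with a single-qubit $Z$ on that qubit) and $S_0 = I - 2|0^{n+1}\rangle\langle 0^{n+1}|$ reflects about the all-zero state. A short computation shows that $Q$ restricts to a rotation by angle $2\theta$ on the plane $\operatorname{span}\{|G\rangle,|B\rangle\}$, hence has eigenvalues $e^{\pm 2i\theta}$ there, and $|\psi\rangle$ is an equal-weight superposition of the two corresponding eigenvectors $|w_\pm\rangle$. Each application of $Q$ costs one call to $U$ and one to $U^\dagger$.

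The third step is to run textbook phase estimation on $Q$ with $|\psi\rangle$ loaded into the target register, using $M$ controlled applications of $Q$. Since $Q$ has eigenvalue $e^{\pm 2i\theta} = e^{2\pi i(\pm\theta/\pi)}$ on the two components of $|\psi\rangle$, the measured register $\lambda$ returns an estimate $\widetilde\theta$ of $\theta$; the two branches $\pm\theta$ (equivalently $\theta$ and $\pi-\theta$) are harmless because $a = \sin^2\theta$ is unchanged under $\theta\mapsto-\theta$ and $\theta\mapsto\pi-\theta$, so both collapse to the same estimate of $a$. The standard phase-estimation guarantee yields $|\widetilde\theta - \theta| \leq \pi/M$ with constant probability, and the success probability can be lifted from the canonical $8/\pi^2$ to $9/10$ at the cost of only a constant factor in $M$ (more precision qubits, or a short median amplification), which is absorbed into the $O(M)$ query count.

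The hard part will be the final error-propagation step that turns the additive guarantee on $\theta$ into the claimed bound on $a$. Setting $\lambda = \sin^2\widetilde\theta$, I would use the product-to-sum identity $\sin^2\widetilde\theta - \sin^2\theta = \sin(\widetilde\theta+\theta)\sin(\widetilde\theta-\theta)$, so that $|a-\lambda| \leq |\sin(\widetilde\theta-\theta)|\,|\sin(\widetilde\theta+\theta)|$. Bounding $|\sin(\widetilde\theta-\theta)| \leq |\widetilde\theta-\theta| \leq \pi/M$ and expanding $\sin(\widetilde\theta+\theta) = \sin 2\theta\cos(\widetilde\theta-\theta) + \cos 2\theta\sin(\widetilde\theta-\theta)$ gives $|\sin(\widetilde\theta+\theta)| \leq |\sin 2\theta| + \pi/M = 2\sqrt{a(1-a)} + \pi/M$, using $|\sin 2\theta| = 2|\sin\theta\cos\theta| = 2\sqrt{a(1-a)}$. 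Multiplying out produces $|a-\lambda| \leq 2\pi\sqrt{a(1-a)}/M + \pi^2/M^2$, and a redefinition of the iteration count $M$ (which only changes the $O(M)$ query bound by a constant) absorbs the $2\pi$ and $\pi^2$ factors to give the stated $\sqrt{a(1-a)}/M + 1/M^2$. The delicate points are keeping the two eigenvalue branches consistent throughout and checking the boundary cases $a\in\{0,1\}$, where $\theta\in\{0,\pi/2\}$, the $\sqrt{a(1-a)}$ term vanishes, and the $1/M^2$ term must carry the whole error.
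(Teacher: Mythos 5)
The paper does not prove this statement: it is imported verbatim as Fact~\ref{fact:amplitude_estimation} with a citation to Brassard \emph{et al.}, so there is no in-paper proof to compare against. Your reconstruction is exactly the argument of the cited source — Grover iterate $Q=-US_0U^\dagger S_\chi$ with eigenvalues $e^{\pm 2i\theta}$ on the invariant plane, phase estimation giving $|\widetilde\theta-\theta|\leq\pi/M$ with probability $8/\pi^2$, and the identity $\sin^2\widetilde\theta-\sin^2\theta=\sin(\widetilde\theta+\theta)\sin(\widetilde\theta-\theta)$ yielding $|a-\lambda|\leq 2\pi\sqrt{a(1-a)}/M+\pi^2/M^2$ — and the final constant-factor rescaling of $M$ and probability boost to $9/10$ are handled correctly, so the proposal is sound.
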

Finally, the following result, which is based on~\cite[Theorem~3.4]{chen2021quantum}, constructs the mapping $|j\rangle|\bar{0}\rangle \mapsto |j\rangle|R_j\rangle$ such that $R_j$ holds an approximation to $\mathbf{A}_j^\top \mathbf{u}$ up to an additive error, where $\mathbf{A}\in\mathbb{R}^{n\times d}$ and $\mathbf{u}\in\mathbb{R}^n$ can be accessed quantumly via KP-trees.

\begin{lemma}\label{lem:terms_estimation}
    Given $\mathbf{A}\in\mathbb{R}^{n\times d}$ and $\mathbf{u}\in\mathbb{R}^n$, assume they are stored in {\rm KP}-trees and we have quantum access to their entries. Assume that $\|\mathbf{u}\|_\infty$ and $\|\mathbf{A}_j\|_\infty$, $j\in[d]$, are known. Let $\delta\in(0,1)$ and $\epsilon > 0$. Then there is a quantum algorithm that implements the mapping $|j\rangle|\bar{0}\rangle \mapsto |j\rangle|R_j\rangle$ such that, for all $j\in[d]$, after measuring the state $|R_j\rangle$, with probability at least $1-\delta$, the outcome $r_j$ of the first register satisfies
    \begin{align*}
        \left|r_j - \mathbf{A}_j^\top \mathbf{u} \right| \leq \epsilon\min(\|\mathbf{A}_j\|_\infty\|\mathbf{u}\|_1,\|\mathbf{A}_j\|_1\|\mathbf{u}\|_\infty)
    \end{align*}
    in time $O(\epsilon^{-1}\log(1/\delta)\poly\log(nd))$.
\end{lemma}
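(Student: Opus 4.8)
The plan is to reduce the problem to estimating a single inner product via amplitude estimation, following the construction behind \cite[Theorem~3.4]{chen2021quantum}. Fix $j$ and work coherently in it throughout. Using the KP-tree for $\mathbf{u}$ (\cref{KP-tree}) I would prepare, independently of $j$, the state $\sum_{i=1}^n \sqrt{|u_i|/\|\mathbf{u}\|_1}\,|i\rangle|\operatorname{sign}(u_i)\rangle$. Then, controlled on $i$ and $j$, I query $A_{ij}$ and $\operatorname{sign}(A_{ij})$ from the QROM and apply a rotation on a fresh qubit that writes amplitude $\sqrt{|A_{ij}|/\|\mathbf{A}_j\|_\infty}$ into $|1\rangle$; this rotation is well defined since $|A_{ij}|\le\|\mathbf{A}_j\|_\infty$, and $\|\mathbf{A}_j\|_\infty$ is assumed known. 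The resulting unitary $U_j$ is built entirely from KP-tree preparation, one QROM query, and elementary arithmetic, so a single application (and its inverse) costs $O(\poly\log(nd))$.

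The signed inner product is recovered by splitting according to the sign of the product $A_{ij}u_i$. Computing $\operatorname{sign}(u_i)\oplus\operatorname{sign}(A_{ij})$ into a flag qubit and conjoining it (via a Toffoli) with the rotation qubit being $|1\rangle$ defines two marked subspaces whose amplitudes are
\begin{align*}
    a_j^{+} = \frac{1}{\|\mathbf{u}\|_1\|\mathbf{A}_j\|_\infty}\!\!\sum_{i:\,A_{ij}u_i>0}\!\!|A_{ij}u_i|, \qquad a_j^{-} = \frac{1}{\|\mathbf{u}\|_1\|\mathbf{A}_j\|_\infty}\!\!\sum_{i:\,A_{ij}u_i<0}\!\!|A_{ij}u_i|,
\end{align*}
so that $\|\mathbf{u}\|_1\|\mathbf{A}_j\|_\infty(a_j^{+}-a_j^{-}) = \mathbf{A}_j^\top\mathbf{u}$ exactly. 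I would run amplitude estimation (\cref{fact:amplitude_estimation}) with $M=\Theta(1/\epsilon)$ on each marking, controlled on $|j\rangle$, to obtain estimates $\widetilde a_j^{\pm}$ with additive error $O(1/M)$, and then compute $r_j = \|\mathbf{u}\|_1\|\mathbf{A}_j\|_\infty(\widetilde a_j^{+}-\widetilde a_j^{-})$ coherently into the output register. Since $|\widetilde a_j^{\pm}-a_j^{\pm}|=O(1/M)$, the triangle inequality gives $|r_j-\mathbf{A}_j^\top\mathbf{u}| = O(\epsilon\,\|\mathbf{u}\|_1\|\mathbf{A}_j\|_\infty)$, which after rescaling $\epsilon$ by the hidden constant yields the first bound $\epsilon\|\mathbf{A}_j\|_\infty\|\mathbf{u}\|_1$. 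The symmetric construction — preparing from the KP-tree of the column $\mathbf{A}_j$ and rotating by $\sqrt{|u_i|/\|\mathbf{u}\|_\infty}$ — gives error $\epsilon\|\mathbf{A}_j\|_1\|\mathbf{u}\|_\infty$; since $\|\mathbf{u}\|_1,\|\mathbf{A}_j\|_1$ sit at the KP-tree roots and $\|\mathbf{u}\|_\infty,\|\mathbf{A}_j\|_\infty$ are known, I select, per $j$, whichever construction has the smaller normalisation, producing the claimed $\min$.

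For the runtime and confidence, amplitude estimation makes $O(M)=O(1/\epsilon)$ calls to $U_j,U_j^\dagger$ at $O(\poly\log(nd))$ each, and \cref{fact:amplitude_estimation} succeeds only with probability $9/10$. To reach $1-\delta$ I run $O(\log(1/\delta))$ independent copies and take the coherent median, giving total time $O(\epsilon^{-1}\log(1/\delta)\poly\log(nd))$, as required. The main obstacle is coherence: because the output $|R_j\rangle$ must later be fed into approximate minimum finding over $j$ (\cref{fact:minimum_finding}), I cannot measure at any point, so both the amplitude estimation and the median-of-copies powering must be executed as unitaries controlled on $|j\rangle$ with all measurements deferred, and the sign bookkeeping that turns the difference of two amplitudes into the signed inner product must be carried out reversibly and uncomputed so that no spurious entanglement with the index register remains.
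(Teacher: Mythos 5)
Your proposal is correct and follows essentially the same route as the paper's proof: prepare the KP-tree state for $\mathbf{u}$, query $A_{ij}$, apply a controlled rotation normalised by $\|\mathbf{A}_j\|_\infty\|\mathbf{u}\|_1$, estimate the positive and negative sign-parts separately with amplitude estimation at $M=\Theta(1/\epsilon)$, recombine as $\|\mathbf{A}_j\|_\infty\|\mathbf{u}\|_1(\widetilde a_+-\widetilde a_-)$, and swap the roles of $\mathbf{A}_j$ and $\mathbf{u}$ for the second bound in the $\min$. Your explicit treatment of the $\log(1/\delta)$ median boosting and of keeping everything coherent in $|j\rangle$ is a welcome addition that the paper leaves implicit.
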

\begin{proof}
    Fix $j\in[d]$. We start by approximating $\mathbf{A}_j^\top \mathbf{u} = \sum_{i=1}^n A_{ij}u_i$ up to error $\epsilon\|\mathbf{A}_j\|_{\infty}\|\mathbf{u}\|_1$. Apply the operator from \cref{KP-tree} to create 
    \begin{align}\label{state1}
        \sum_{i=1}^n \sqrt{\frac{|u_i|}{\|\mathbf{u}\|_1}} |i\rangle|s_i\rangle|\bar{0}\rangle|0\rangle
    \end{align}
    in time $O(\poly\log{n})$, where $s_i \triangleq \operatorname{sign}(u_i)$. On the other hand, using query access to $\mathbf{A}$ create the mapping $|i\rangle|s_i\rangle|\bar{0}\rangle \mapsto |i\rangle|s_i\rangle|s_i A_{ij}\rangle$ costing time $O(\poly\log(nd))$. Using this operator on the first three registers in \cref{state1}, we obtain
    \begin{align}\label{state2}
        \sum_{i=1}^n \sqrt{\frac{|u_i|}{\|\mathbf{u}\|_1}} |i\rangle|s_i\rangle|s_i A_{ij}\rangle|0\rangle.
    \end{align}
    Now, define the positive-controlled rotation for each $a\in\mathbb{R}$ such that
    \begin{align*}
        U_{\rm CR^+} :|a\rangle|0\rangle \mapsto \begin{cases}
            |a\rangle(\sqrt{a}|1\rangle + \sqrt{1-a}|0\rangle) \quad &\text{if}~a\in(0,1],\\
            |a\rangle|0\rangle \quad &\text{otherwise}.
        \end{cases}
    \end{align*}
    Applying $U_{\rm CR^+}$ on the last two registers in \cref{state2}, we obtain
    \begin{align*}
        &\sum_{\substack{i\in[n] \\ s_i A_{ij} > 0}} \sqrt{\frac{|u_i|}{\|\mathbf{u}\|_1}} |i\rangle|s_i\rangle|s_i A_{ij}\rangle\left(\sqrt{\frac{s_i A_{ij}}{\|\mathbf{A}_j\|_{\infty}}}|1\rangle + \sqrt{1 - \frac{s_i A_{ij}}{\|\mathbf{A}_j\|_{\infty}}}|0\rangle\right) + \sum_{\substack{i\in[n] \\ s_i A_{ij} \leq 0}} \sqrt{\frac{|u_i|}{\|\mathbf{u}\|_1}}|i\rangle|s_i\rangle|s_i A_{ij}\rangle|0\rangle\\
    &\begin{multlined}[b][\textwidth]
        = \sum_{\substack{i\in[n] \\ s_i A_{ij} > 0}} \sqrt{\frac{A_{ij}u_i}{\|\mathbf{A}_j\|_{\infty}\|\mathbf{u}\|_1}}|i\rangle|s_i\rangle|s_i A_{ij}\rangle|1\rangle\\
        + \Bigg(\sum_{\substack{i\in[n] \\ s_i A_{ij} > 0}}  \sqrt{\frac{|u_i|}{\|\mathbf{u}\|_1}-\frac{A_{ij}u_i}{\|\mathbf{A}_j\|_{\infty}\|\mathbf{u}\|_1}}|i\rangle|s_i\rangle|s_i A_{ij}\rangle  + \sum_{\substack{i\in[n] \\ s_i A_{ij} \leq 0}} \sqrt{\frac{|u_i|}{\|\mathbf{u}\|_1}} |i\rangle|s_i\rangle|s_i A_{ij}\rangle\Bigg)|0\rangle
    \end{multlined}\\
    &= \sqrt{a_+}|\phi_1\rangle|1\rangle + \sqrt{1-a_+}|\phi_0\rangle|0\rangle, 
    \end{align*}
    where
    \begin{align*}
        a_+ = \sum_{\substack{i\in[n] \\ s_i A_{ij} > 0}} \frac{A_{ij}u_i}{\|\mathbf{A}_j\|_{\infty}\|\mathbf{u}\|_1} \quad\text{and}\quad \ket{\phi_1} =\sum_{\substack{i\in[n] \\ s_i A_{ij} > 0}} \sqrt{\frac{A_{ij}u_i}{\sum_{k\in[n], s_k A_{kj} > 0} A_{kj}u_k}} |i\rangle|s_i\rangle|s_i A_{ij}\rangle.
    \end{align*}
    Here, $\ket{\phi_1}$ and $\ket{\phi_0}$ are unit vectors.
    We now use \cref{fact:amplitude_estimation} to get an estimate $\widetilde{a}_+$ such that
    \begin{align*}
        |\widetilde{a}_+ - a_+ | \leq \frac{\epsilon}{2}\sqrt{a_+} \implies 
        \left|\|\mathbf{A}_j\|_{\infty}\|\mathbf{u}\|_1\widetilde{a}_+ - \!\sum_{\substack{i\in[n] \\ s_i A_{ij} > 0}} \!\! A_{ij}u_i\right| &\leq \frac{\epsilon}{2}\sqrt{\|\mathbf{A}_j\|_{\infty}\|\mathbf{u}\|_1 \!\sum_{\substack{i\in[n] \\ s_i A_{ij}>0}} \!\! A_{ij}u_i} \leq\frac{\epsilon}{2}\|\mathbf{A}_j\|_{\infty}\|\mathbf{u}\|_1
    \end{align*}
    in time $O(\epsilon^{-1}\log(1/\delta)\poly\log(nd))$ and probability at least $1-\delta/2$. Notice that we know $\|\mathbf{u}\|_1$ since it is stored in a KP-tree. Similarly, we estimate
    \begin{align*}
        a_- = - \sum_{\substack{i\in[n] \\ s_i A_{ij} < 0}} \frac{A_{ij}u_i}{\|\mathbf{A}_j\|_{\infty}\|\mathbf{u}\|_1}
    \end{align*}
    with additive error $\frac{\epsilon}{2}\|\mathbf{A}_j\|_{\infty}\|\mathbf{u}\|_1$. Thus $\mathbf{A}_j^\top\mathbf{u} = \sum_{i=1}^n A_{ij}u_i = \|\mathbf{A}_j\|_{\infty}\|\mathbf{u}\|_1(a_+ - a_-)$ is estimated with additive error $\epsilon\|\mathbf{A}_j\|_{\infty}\|\mathbf{u}\|_1$ and success probability at least $1-\delta$ in time $O(\epsilon^{-1}\log(1/\delta)\poly\log(nd))$. 

    Swapping the roles of $\mathbf{A}_j$ and $\mathbf{u}$ and repeating the above steps leads to an estimate of $\mathbf{A}_j^\top\mathbf{u}$ with additive error $\epsilon\|\mathbf{A}_j\|_1\|\mathbf{u}\|_\infty$ in time $O(\epsilon^{-1}\log(1/\delta)\poly\log(nd))$.
\end{proof}

It is possible to prove a classical result analogous to the one above. More specifically, if we have sampling access to $\mathbf{A}\in\mathbb{R}^{n\times d}$ and $\mathbf{u}\in\mathbb{R}^n$, then there is a classical algorithm that allows to approximate $\mathbf{A}_j^\top \mathbf{u}$ up to an additive error.
\begin{lemma}\label{lem:sampling-based_inner_product_estimation}
    Let $\mathbf{A}\in\mathbb{R}^{n\times d}$ and $\mathbf{u}\in\mathbb{R}^n$. Assume sampling access to $\mathbf{u}$ and to the columns of $\mathbf{A}$. Let $\delta\in(0,1)$ and $\epsilon > 0$. There is a classical algorithm that, for any $j\in[d]$, outputs $r_j$ such that 
    \begin{align*}
        | r_j - \mathbf{A}_j^\top \mathbf{u} |\leq \epsilon\min(\|\mathbf{A}_j\|_\infty\|\mathbf{u}\|_1,\|\mathbf{A}_j\|_1\|\mathbf{u}\|_\infty)
    \end{align*}
    with probability at least $1-\delta$ and in time $O(\epsilon^{-2}\log(1/\delta)\poly\log(nd))$.
\end{lemma}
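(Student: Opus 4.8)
The plan is to build an unbiased importance-sampling estimator of the inner product $\mathbf{A}_j^\top \mathbf{u} = \sum_{i=1}^n A_{ij} u_i$ and to drive its variance down by averaging. For the bound $\epsilon\|\mathbf{A}_j\|_\infty \|\mathbf{u}\|_1$, I would sample an index $i \in [n]$ from the distribution $\mathcal{D}_{\mathbf{u}}$ (available from the sampling access to $\mathbf{u}$), read off $u_i$ and $A_{ij}$, and form the single-sample random variable $Z \triangleq \sign(u_i)\,\|\mathbf{u}\|_1\, A_{ij}$. A direct computation gives $\mathbb{E}[Z] = \sum_{i=1}^n \frac{|u_i|}{\|\mathbf{u}\|_1}\sign(u_i)\|\mathbf{u}\|_1 A_{ij} = \sum_{i=1}^n u_i A_{ij} = \mathbf{A}_j^\top\mathbf{u}$, so $Z$ is unbiased, and $|Z| = |A_{ij}|\,\|\mathbf{u}\|_1 \leq \|\mathbf{A}_j\|_\infty\|\mathbf{u}\|_1 \triangleq M$, so $Z$ lies in an interval of width $2M$.

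Next I would take $k$ independent copies $Z_1,\dots,Z_k$ and output their empirical mean $r_j = \frac1k\sum_{\ell=1}^k Z_\ell$. Applying Hoeffding's inequality (\cref{fact:Hoeffding}) to $\sum_\ell Z_\ell$, whose summands each lie in an interval of length $2M$, yields $\mathbb{P}[|r_j - \mathbf{A}_j^\top\mathbf{u}| \geq \epsilon M] \leq 2\exp(-k\epsilon^2/2)$. Choosing $k = \lceil 2\epsilon^{-2}\ln(2/\delta)\rceil = O(\epsilon^{-2}\log(1/\delta))$ makes the failure probability at most $\delta$. Each of the $k$ iterations costs one sample from $\mathcal{D}_{\mathbf{u}}$ and a constant number of entry reads, i.e.\ $O(\poly\log(nd))$ time, for a total runtime $O(\epsilon^{-2}\log(1/\delta)\poly\log(nd))$, as claimed.

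For the second bound $\epsilon\|\mathbf{A}_j\|_1\|\mathbf{u}\|_\infty$, I would simply swap the roles of $\mathbf{A}_j$ and $\mathbf{u}$: sample $i$ from $\mathcal{D}_{\mathbf{A}_j}$ (available from the sampling access to the columns of $\mathbf{A}$) and use $W \triangleq \sign(A_{ij})\,\|\mathbf{A}_j\|_1\, u_i$, which is again unbiased with $|W| \leq \|\mathbf{A}_j\|_1\|\mathbf{u}\|_\infty$; the identical Hoeffding argument then gives error $\epsilon\|\mathbf{A}_j\|_1\|\mathbf{u}\|_\infty$. To obtain the minimum of the two bounds in a single output, I would compare the two products (the $\ell_1$ norms $\|\mathbf{u}\|_1,\|\mathbf{A}_j\|_1$ come with the sampling data structures and the $\ell_\infty$ norms can be stored or precomputed) and run whichever estimator carries the smaller guarantee. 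There is no serious obstacle here; the only points requiring care are verifying that the importance-weighted variable is exactly unbiased and that its deterministic bound is the correct norm product, so that Hoeffding delivers error relative to $\min(\|\mathbf{A}_j\|_\infty\|\mathbf{u}\|_1,\|\mathbf{A}_j\|_1\|\mathbf{u}\|_\infty)$ rather than against the far larger trivial bound $n\|\mathbf{A}_j\|_\infty\|\mathbf{u}\|_\infty$.
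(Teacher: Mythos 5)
Your proposal is correct and follows essentially the same route as the paper's proof: an importance-sampling estimator $\sign(\cdot)\,\|\cdot\|_1\,(\cdot)$ that is unbiased for $\mathbf{A}_j^\top\mathbf{u}$ and bounded by the relevant norm product, averaged over $O(\epsilon^{-2}\log(1/\delta))$ samples with Hoeffding's inequality, and the symmetric swap of $\mathbf{A}_j$ and $\mathbf{u}$ for the second bound. The only cosmetic difference is the order in which the two estimators are presented.
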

\begin{proof}
    Fix $j\in[d]$. Let $z$ be the random variable defined by
    \begin{align*}
        \mathbb{P}\left[z = \|\mathbf{A}_j\|_1u_i \operatorname{sign}(A_{ij})\right] = \frac{|A_{ij}|}{\|\mathbf{A}_j\|_1} \quad\text{for}~ i\in[n].
    \end{align*}
    Then
    \begin{align*}
        \mathbb{E}[z] &= \sum_{i=1}^n \frac{|A_{ij}|}{\|\mathbf{A}_j\|_1}\|\mathbf{A}_j\|_1u_i \operatorname{sign}(A_{ij}) = \mathbf{A}_j^\top \mathbf{u}.
    \end{align*}
    Sample $q = \lceil 2\epsilon^{-2}\ln(2/\delta)\rceil$ indices $\{i_1,\dots,i_q\}\subseteq[n]$ from the distribution $\mathcal{D}_{\mathbf{A}_j}$ by using sampling access to $\mathbf{A}$ and set $z_k = \|\mathbf{A}_j\|_1 u_{i_k} \operatorname{sign}(A_{i_k j})$, $k\in[q]$. The algorithm outputs $\widehat{z} = \frac{1}{q}\sum_{k=1}^q z_k$ as an estimate for $\mathbf{A}_j^\top \mathbf{u}$. Since $z_1,\dots,z_q$ are i.i.d.\ copies of $z$, a Hoeffding's bound (\cref{fact:Hoeffding}) gives
    \begin{align*}
        \mathbb{P}[|\widehat{z} - \mathbf{A}_j^\top \mathbf{u}| \geq \epsilon\|\mathbf{A}_j\|_1\|\mathbf{u}\|_\infty] \leq 2e^{-\epsilon^2q/2} = \delta,
    \end{align*}
    so the classical algorithm approximates $\mathbf{A}_j^\top \mathbf{u}$ with additive error $\epsilon\|\mathbf{A}_j\|_1\|\mathbf{u}\|_\infty$ and success probability at least $1-\delta$. The final runtime is the number of samples $q$ times the complexity $O(\poly\log(nd))$ of sampling each index. Finally, it is possible to repeat the same procedure but swapping the roles of $\mathbf{A}_j$ and $\mathbf{u}$, so that $z$ is now defined as $\mathbb{P}\left[z = \|\mathbf{u}\|_1 A_{ij} \operatorname{sign}(u_i)\right] = |u_i|/\|\mathbf{u}\|_1$, $i\in[n]$.
\end{proof}

\section{Lasso path and the LARS algorithm}\label{sec:Lasso_path}
\subsection{Lasso solution}

In this section, we review several known properties of the solution to the Lasso regression problem with inputs $(\mathbf{X},\mathbf{y})\in\mathbb{R}^{n\times d}\times\mathbb{R}^n$ defined as
\begin{align}\label{eq:lasso_problem}
    \hat{\bfbeta} \in \argmin_{\bfbeta \in \mathbb{R}^d} \mathcal{L}^{(\lambda)}_{\mathbf{X},\mathbf{y}}(\bfbeta) \quad\text{where}\quad \mathcal{L}^{(\lambda)}_{\mathbf{X},\mathbf{y}}(\bfbeta) \triangleq \frac{1}{2}\|\mathbf{y}-\mathbf{X}{\bfbeta}\|_2^2 + \lambda \|\bfbeta\|_1 \quad \text{for}~ \lambda >0.
\end{align}
Here $\mathbf{X}\in\mathbb{R}^{n\times d}$ is the design matrix and $\mathbf{y}\in\mathbb{R}^n$ is a vector of observations. When the inputs are clear from context, we shall simply write $\mathcal{L}^{(\lambda)}(\bfbeta)$.
We will cover optimality conditions for the Lasso problem, the general form for its solution, and continuity and uniqueness conditions. Most of these results can be found in~\cite{rosset2007piecewise,mairal2012complexity,tibshirani2013Lasso,sjostrand2018spasm}. 
We start with obtaining the optimality conditions for the Lasso problem.
\begin{fact}[{\cite[Lemma~1]{tibshirani2013Lasso}}]\label{fact:fit_uniqueness}
    Every Lasso solution $\hat{\bfbeta}\in\mathbb{R}^d$ gives the same fitted value $\mathbf{X}\hat{\bfbeta}$.
\end{fact}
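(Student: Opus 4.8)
The plan is to exploit convexity of the Lasso objective together with the \emph{strict} convexity of the squared-Euclidean-norm loss when viewed as a function of the fitted value $\mathbf{X}\bfbeta$. Write the objective as $f(\bfbeta) = \frac{1}{2}\|\mathbf{y} - \mathbf{X}\bfbeta\|_2^2 + \lambda\|\bfbeta\|_1$ and let $c^\ast = \min_{\bfbeta\in\mathbb{R}^d} f(\bfbeta)$ be its minimum value, which is attained since $f$ is continuous, coercive, and convex. The set of minimizers is convex because $f$ is a sum of a convex quadratic and a convex norm. I would argue by contradiction: suppose there exist two Lasso solutions $\hat{\bfbeta}_1$ and $\hat{\bfbeta}_2$ with $f(\hat{\bfbeta}_1) = f(\hat{\bfbeta}_2) = c^\ast$ but with distinct fitted values $\mathbf{X}\hat{\bfbeta}_1 \neq \mathbf{X}\hat{\bfbeta}_2$.

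Next I would examine the objective at the midpoint $\bfbeta^\ast = \tfrac{1}{2}(\hat{\bfbeta}_1 + \hat{\bfbeta}_2)$. The key step is to bound the loss term using the strict convexity of the map $\mathbf{v}\mapsto \frac{1}{2}\|\mathbf{y}-\mathbf{v}\|_2^2$, applied at the \emph{distinct} points $\mathbf{v}_1 = \mathbf{X}\hat{\bfbeta}_1$ and $\mathbf{v}_2 = \mathbf{X}\hat{\bfbeta}_2$, which yields the strict inequality
\begin{align*}
    \frac{1}{2}\|\mathbf{y} - \mathbf{X}\bfbeta^\ast\|_2^2 < \frac{1}{2}\left(\frac{1}{2}\|\mathbf{y} - \mathbf{X}\hat{\bfbeta}_1\|_2^2 + \frac{1}{2}\|\mathbf{y} - \mathbf{X}\hat{\bfbeta}_2\|_2^2\right).
\end{align*}
For the penalty I would use only the ordinary (non-strict) convexity of the $\ell_1$-norm, giving $\lambda\|\bfbeta^\ast\|_1 \leq \lambda\big(\tfrac{1}{2}\|\hat{\bfbeta}_1\|_1 + \tfrac{1}{2}\|\hat{\bfbeta}_2\|_1\big)$. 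Adding the two bounds produces $f(\bfbeta^\ast) < \tfrac{1}{2}\big(f(\hat{\bfbeta}_1) + f(\hat{\bfbeta}_2)\big) = c^\ast$, which contradicts the minimality of $c^\ast$. Hence no two solutions can have different fitted values, i.e.\ $\mathbf{X}\hat{\bfbeta}$ is the same for every Lasso solution $\hat{\bfbeta}$.

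There is no serious obstacle in this argument; the only point requiring care is applying strictness at the right place. Strict convexity holds for the quadratic loss as a function of the \emph{fit} $\mathbf{X}\bfbeta$, but not as a function of $\bfbeta$ itself when $\mathbf{X}$ has a nontrivial null space (which is exactly the regime $d\geq n$ of interest), so the argument must route the strictness through $\mathbf{X}\hat{\bfbeta}_1 \neq \mathbf{X}\hat{\bfbeta}_2$ rather than through $\hat{\bfbeta}_1 \neq \hat{\bfbeta}_2$. This is precisely why the conclusion is about uniqueness of the fitted value rather than uniqueness of $\hat{\bfbeta}$, and it is consistent with the fact that the Lasso solution itself may fail to be unique in the high-dimensional setting.
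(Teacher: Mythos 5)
Your proof is correct and is essentially the standard argument from the cited reference (Tibshirani 2013, Lemma 1); the paper itself states this as a fact without reproducing a proof. Your routing of strict convexity through the fitted value $\mathbf{X}\hat{\bfbeta}$ rather than through $\hat{\bfbeta}$ is exactly the right care to take in the $d\geq n$ regime.
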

%
\begin{fact}
    \label{fact:KKT1}
    A vector $\hat{\bfbeta}\in\mathbb{R}^d$ is a solution to the Lasso problem if and only if
    \begin{align}\label{eq:KKT1}
        \mathbf{X}^\top (\mathbf{y}-\mathbf{X}\bfhatbeta) = \lambda \mathbf{s} \quad\text{where}~
        s_i \in \begin{cases}
            \{\sign(\hat{\beta}_i)\} &\text{if}~ \hat{\beta}_i \neq 0, \\
            [-1,1] &\text{if}~ \hat{\beta}_i = 0, \\
        \end{cases}\quad
        \text{for}~ i\in[d].
    \end{align}
    Let $\mathcal{A}\triangleq \{ i\in[d]:  |\mathbf{X}_i^\top(\mathbf{y}-\mathbf{X}{\bfhatbeta})| =\lambda\}$, $\mathcal{I}\triangleq[d]\setminus\mathcal{A}$, and $\bfeta \triangleq \operatorname{sign}(\mathbf{X}^\top (\mathbf{y} - \mathbf{X}\hat{\bfbeta}))\in\{-1,0,1\}^d$. Any Lasso solution $\hat{\bfbeta}$ is of the form
    \begin{align}\label{eq:lasso_solution}
        \hat{\bfbeta}_{\mathcal{I}} = \mathbf{0} \quad\text{and}\quad \hat{\bfbeta}_{\mathcal{A}} = \mathbf{X}_{\mathcal{A}}^+(\mathbf{y} - \lambda(\mathbf{X}_{\mathcal{A}}^+)^\top\bfeta_{\mathcal{A}}) + \mathbf{b},
    \end{align}
    where
    \begin{align}\label{eq:restriction_lasso}
        \mathbf{b}\in\operatorname{null}(\mathbf{X}_{\mathcal{A}}) \quad\text{and}\quad \eta_i\big([\mathbf{X}_{\mathcal{A}}^+(\mathbf{y} - \lambda(\mathbf{X}_{\mathcal{A}}^+)^\top\bfeta_{\mathcal{A}})]_i + b_i\big) \geq 0~\text{for}~i\in\mathcal{A}.
    \end{align}
\end{fact}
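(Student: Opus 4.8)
The plan is to treat the two assertions separately: first derive the stationarity condition \eqref{eq:KKT1} from convex subdifferential calculus, and then read off the closed form \eqref{eq:lasso_solution}--\eqref{eq:restriction_lasso} directly from that condition. For the first part, observe that the objective $f(\bfbeta) = \frac12\|\mathbf{y} - \mathbf{X}\bfbeta\|_2^2 + \lambda\|\bfbeta\|_1$ is convex, so $\hat{\bfbeta}$ is a global minimiser if and only if $\mathbf{0}\in\partial f(\hat{\bfbeta})$. By the sum rule (the smooth term is differentiable everywhere), the smooth part contributes its gradient $-\mathbf{X}^\top(\mathbf{y} - \mathbf{X}\hat{\bfbeta})$, while the subdifferential of $\lambda\|\cdot\|_1$ at $\hat{\bfbeta}$ is exactly the set of vectors $\lambda\mathbf{s}$ with $s_i = \sign(\hat\beta_i)$ when $\hat\beta_i\neq 0$ and $s_i\in[-1,1]$ when $\hat\beta_i = 0$. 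Setting the sum to zero and rearranging yields $\mathbf{X}^\top(\mathbf{y} - \mathbf{X}\hat{\bfbeta}) = \lambda\mathbf{s}$, which is \eqref{eq:KKT1}; convexity makes this condition both necessary and sufficient, giving the ``if and only if''.

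For the structural form, I would first use \eqref{eq:KKT1} to pin down the support. Since $|\mathbf{X}_i^\top(\mathbf{y}-\mathbf{X}\hat{\bfbeta})| = \lambda|s_i|$, the active set $\mathcal{A}$ is precisely $\{i : |s_i| = 1\}$, and for $i\in\mathcal{I}$ we have $|s_i| < 1$, which by the subgradient rule forces $\hat\beta_i = 0$; hence $\hat{\bfbeta}_{\mathcal{I}} = \mathbf{0}$ and, consequently, $\mathbf{X}\hat{\bfbeta} = \mathbf{X}_{\mathcal{A}}\hat{\bfbeta}_{\mathcal{A}}$. Restricting \eqref{eq:KKT1} to the active coordinates and using $\eta_i = s_i$ there (since $|s_i| = 1$), the condition becomes the linear system $\mathbf{X}_{\mathcal{A}}^\top(\mathbf{y} - \mathbf{X}_{\mathcal{A}}\hat{\bfbeta}_{\mathcal{A}}) = \lambda\bfeta_{\mathcal{A}}$, equivalently $\mathbf{X}_{\mathcal{A}}^\top\mathbf{X}_{\mathcal{A}}\hat{\bfbeta}_{\mathcal{A}} = \mathbf{X}_{\mathcal{A}}^\top\mathbf{y} - \lambda\bfeta_{\mathcal{A}}$. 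Because $\mathbf{X}_{\mathcal{A}}$ may be rank deficient, I would solve this via the Moore--Penrose inverse rather than $(\mathbf{X}_{\mathcal{A}}^\top\mathbf{X}_{\mathcal{A}})^{-1}$: the general solution is any particular solution plus an arbitrary $\mathbf{b}\in\operatorname{null}(\mathbf{X}_{\mathcal{A}}^\top\mathbf{X}_{\mathcal{A}}) = \operatorname{null}(\mathbf{X}_{\mathcal{A}})$, which produces the null-space term in \eqref{eq:lasso_solution}.

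The technical heart is verifying that $\mathbf{X}_{\mathcal{A}}^+(\mathbf{y} - \lambda(\mathbf{X}_{\mathcal{A}}^+)^\top\bfeta_{\mathcal{A}})$ is such a particular solution. I would use two pseudoinverse identities: $\mathbf{X}_{\mathcal{A}}^\top\mathbf{X}_{\mathcal{A}}\mathbf{X}_{\mathcal{A}}^+ = \mathbf{X}_{\mathcal{A}}^\top$ (so the $\mathbf{y}$-term reproduces $\mathbf{X}_{\mathcal{A}}^\top\mathbf{y}$), and $\mathbf{X}_{\mathcal{A}}^+(\mathbf{X}_{\mathcal{A}}^+)^\top = (\mathbf{X}_{\mathcal{A}}^\top\mathbf{X}_{\mathcal{A}})^+$, which follows from $\mathbf{X}^+ = (\mathbf{X}^\top\mathbf{X})^+\mathbf{X}^\top$ together with the Penrose relation $\mathbf{M}^+\mathbf{M}\mathbf{M}^+ = \mathbf{M}^+$ for the symmetric matrix $\mathbf{M} = \mathbf{X}_{\mathcal{A}}^\top\mathbf{X}_{\mathcal{A}}$. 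The subtlety I expect to be the main obstacle is that the $\bfeta_{\mathcal{A}}$-term only reproduces $-\lambda\bfeta_{\mathcal{A}}$ after applying $\mathbf{X}_{\mathcal{A}}^\top\mathbf{X}_{\mathcal{A}}(\mathbf{X}_{\mathcal{A}}^\top\mathbf{X}_{\mathcal{A}})^+$, which is the orthogonal projector onto $\operatorname{row}(\mathbf{X}_{\mathcal{A}})$ and hence acts as the identity on $\bfeta_{\mathcal{A}}$ only if $\bfeta_{\mathcal{A}}\in\operatorname{row}(\mathbf{X}_{\mathcal{A}})$. This membership is not an assumption but must be extracted from \eqref{eq:KKT1} itself: $\bfeta_{\mathcal{A}} = \tfrac{1}{\lambda}\mathbf{X}_{\mathcal{A}}^\top(\mathbf{y} - \mathbf{X}\hat{\bfbeta})\in\operatorname{col}(\mathbf{X}_{\mathcal{A}}^\top) = \operatorname{row}(\mathbf{X}_{\mathcal{A}})$.

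With the projector acting trivially on $\bfeta_{\mathcal{A}}$, the proposed vector checks out as a particular solution and \eqref{eq:lasso_solution} follows. It then remains to recover the sign constraint \eqref{eq:restriction_lasso}, which is simply a restatement of $s_i = \sign(\hat\beta_i)$ whenever $\hat\beta_i\neq 0$: since $\eta_i = s_i$ on $\mathcal{A}$, this is equivalent to $\eta_i\hat\beta_i\geq 0$ (allowing the boundary case $\hat\beta_i = 0$), that is, $\eta_i\big([\mathbf{X}_{\mathcal{A}}^+(\mathbf{y} - \lambda(\mathbf{X}_{\mathcal{A}}^+)^\top\bfeta_{\mathcal{A}})]_i + b_i\big)\geq 0$ for $i\in\mathcal{A}$, completing the characterisation.
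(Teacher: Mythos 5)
Your proposal is correct and follows essentially the same route as the paper's proof: subgradient optimality for \eqref{eq:KKT1}, reading off $\hat{\bfbeta}_{\mathcal{I}}=\mathbf{0}$ from the subgradient rule, reducing to the linear system $\mathbf{X}_{\mathcal{A}}^\top\mathbf{X}_{\mathcal{A}}\hat{\bfbeta}_{\mathcal{A}} = \mathbf{X}_{\mathcal{A}}^\top\mathbf{y} - \lambda\bfeta_{\mathcal{A}}$, invoking $\bfeta_{\mathcal{A}}\in\operatorname{row}(\mathbf{X}_{\mathcal{A}})$ to make the pseudoinverse computation go through, and adding the $\operatorname{null}(\mathbf{X}_{\mathcal{A}})$ freedom plus the sign constraint. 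The only cosmetic difference is that you verify the particular solution by direct substitution where the paper rearranges via the identity $\mathbf{A}^+(\mathbf{A}^\top)^+\mathbf{A}^\top = \mathbf{A}^+$; these are equivalent.
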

\begin{proof}
    \cref{eq:KKT1} can be obtained by considering the Karush–Kuhn–Tucker conditions, or sub-gradient optimality conditions. More specifically, a vector $\hat{\bfbeta}\in\mathbb{R}^d$ is a solution to the Lasso problem if and only if (see, e.g.,~\cite[Proposition~3.1.5]{borwein2006convex})
    \begin{align*}
        \mathbf{0} \in \{-\nabla\mathcal{L}^{(\lambda)}(\hat{\bfbeta}) = \mathbf{X}^\top (\mathbf{y} - \mathbf{X}\hat{\bfbeta}) - \lambda \mathbf{s}: \mathbf{s}\in\partial\|\hat{\bfbeta}\|_1\},
    \end{align*}
    where $\partial\|\hat{\bfbeta}\|_1$ denotes the sub-gradient of the $\ell_1$-norm at $\hat{\bfbeta}$. \cref{eq:KKT1} then follows from the fact that the sub-gradients $\mathbf{s}\in\mathbb{R}^d$ of $\|\hat{\bfbeta}\|_1$ are vectors such that $s_i = \operatorname{sign}(\hat{\beta}_i)$ if $\hat{\beta}_i \neq 0$ and $|s_i| \leq 1$ otherwise.

    Moving on, first note that $\bfeta$ is a sub-gradient and that the uniqueness of $\mathbf{X}\hat{\bfbeta}$ (\cref{fact:fit_uniqueness}) implies the uniqueness of $\mathcal{A}$ and $\bfeta$. By the definition of the sub-gradients $\mathbf{s}$ in \cref{eq:KKT1}, we know that $\hat{\bfbeta}_{\mathcal{I}} = \mathbf{0}$, and therefore the block $\mathcal{A}$ of \cref{eq:KKT1} can be written as
    \begin{align*}
        \mathbf{X}_{\mathcal{A}}^\top (\mathbf{y} - \mathbf{X}_{\mathcal{A}}\hat{\bfbeta}_{\mathcal{A}}) = \lambda \bfeta_{\mathcal{A}}.
    \end{align*}
    This means that $\bfeta_{\mathcal{A}}\in\operatorname{row}(\mathbf{X}_{\mathcal{A}})$, and so $\bfeta_{\mathcal{A}} = \mathbf{X}_{\mathcal{A}}^\top (\mathbf{X}_{\mathcal{A}}^\top)^+ \bfeta_{\mathcal{A}}$. Using this fact and rearranging the above equation, we obtain
    \begin{align*}
        \mathbf{X}_{\mathcal{A}}^\top\mathbf{X}_{\mathcal{A}}\hat{\bfbeta}_{\mathcal{A}} = \mathbf{X}_{\mathcal{A}}^\top (\mathbf{y} - \lambda (\mathbf{X}_{\mathcal{A}}^\top)^+ \bfeta_{\mathcal{A}}).
    \end{align*}
    Therefore, any solution $\hat{\bfbeta}_{\mathcal{A}}$ to the above equation is of the form
    \begin{align*}
        \hat{\bfbeta}_{\mathcal{A}} = \mathbf{X}_{\mathcal{A}}^+(\mathbf{y} - \lambda (\mathbf{X}_{\mathcal{A}}^\top)^+ \bfeta_{\mathcal{A}}) + \mathbf{b}, \quad\quad \mathbf{b}\in\operatorname{null}(\mathbf{X}_{\mathcal{A}}),
    \end{align*}
    where we used the identity $\mathbf{A}^+(\mathbf{A}^\top)^+\mathbf{A}^\top = \mathbf{A}^+$. Any $\mathbf{b}\in\operatorname{null}(\mathbf{X}_{\mathcal{A}})$ produces a valid Lasso solution $\hat{\bfbeta}_{\mathcal{A}}$ as long as $\hat{\bfbeta}_{\mathcal{A}}$ has the correct signs given by $\bfeta_{\mathcal{A}}$, i.e., $\operatorname{sign}(\hat{\beta}_{i}) = \eta_i$ for $i\in\mathcal{A}$. This restriction can be written as $\hat \beta_i \eta_i\geq 0$, which is \cref{eq:restriction_lasso}. This completes the proof.
\end{proof}
We shall abuse the definition of the Karush-Kuhn-Tucker (KKT) conditions and refer to \cref{eq:KKT1} as the KKT conditions themselves for the Lasso problem. The set
\begin{align*}
    \mathcal{A}\triangleq \{ i\in[d]:  |\mathbf{X}_i^\top(\mathbf{y}-\mathbf{X}{\bfhatbeta})| =\lambda\}
\end{align*}
and the sub-gradient vector
\begin{align*}
    \bfeta \triangleq \operatorname{sign}(\mathbf{X}^\top (\mathbf{y} - \mathbf{X}\hat{\bfbeta}))\in\{-1,0,1\}^d
\end{align*}
are normally referred to as \emph{active} (or \emph{equicorrelation}) \emph{set} and \emph{equicorrelation signs}, respectively. The set $\mathcal{I}\triangleq[d]\setminus\mathcal{A}$ is called \emph{inactive set}.
\begin{fact}[{\cite[Lemma~9]{tibshirani2013Lasso}}]
    For any $\mathbf{y}\in\mathbb{R}^n$, $\mathbf{X}\in\mathbb{R}^{n \times d}$, and $\lambda>0$, the Lasso solution
    \begin{align*}
        \hat{\bfbeta}_{\mathcal{I}} = \mathbf{0} \quad\text{and}\quad \hat{\bfbeta}_{\mathcal{A}} = \mathbf{X}_{\mathcal{A}}^+(\mathbf{y} - \lambda(\mathbf{X}_{\mathcal{A}}^+)^\top\bfeta_{\mathcal{A}})
    \end{align*}
    has the minimum $\ell_2$-norm over all Lasso solutions.
\end{fact}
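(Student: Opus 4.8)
The plan is to leverage the explicit parametrisation of all Lasso solutions given in \cref{fact:KKT1} and reduce the claim to a one-line orthogonality argument. By that fact, every Lasso solution satisfies $\hat{\bfbeta}_{\mathcal{I}} = \mathbf{0}$ and $\hat{\bfbeta}_{\mathcal{A}} = \mathbf{w} + \mathbf{b}$, where I abbreviate $\mathbf{w} \triangleq \mathbf{X}_{\mathcal{A}}^+(\mathbf{y} - \lambda(\mathbf{X}_{\mathcal{A}}^+)^\top\bfeta_{\mathcal{A}})$ (using the identity $(\mathbf{X}_{\mathcal{A}}^\top)^+ = (\mathbf{X}_{\mathcal{A}}^+)^\top$ to match the two expressions) and $\mathbf{b} \in \operatorname{null}(\mathbf{X}_{\mathcal{A}})$ ranges over the null space subject to the sign constraints of \cref{eq:restriction_lasso}. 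Since $\hat{\bfbeta}_{\mathcal{I}} = \mathbf{0}$ for every solution, the full $\ell_2$-norm satisfies $\|\hat{\bfbeta}\|_2^2 = \|\hat{\bfbeta}_{\mathcal{A}}\|_2^2 = \|\mathbf{w} + \mathbf{b}\|_2^2$, so it suffices to show that this quantity is minimised over the admissible $\mathbf{b}$ precisely when $\mathbf{b} = \mathbf{0}$, which is the solution singled out in the statement.

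The key step is the observation that $\mathbf{w}$ and $\mathbf{b}$ lie in orthogonal subspaces. Indeed, a standard property of the Moore–Penrose inverse is that $\operatorname{col}(\mathbf{X}_{\mathcal{A}}^+) = \operatorname{row}(\mathbf{X}_{\mathcal{A}})$, so $\mathbf{w}$, being of the form $\mathbf{X}_{\mathcal{A}}^+(\cdot)$, lies in $\operatorname{row}(\mathbf{X}_{\mathcal{A}})$. On the other hand $\mathbf{b}\in\operatorname{null}(\mathbf{X}_{\mathcal{A}})$, and since the row space and the null space of any matrix are orthogonal complements, we get $\mathbf{w}^\top\mathbf{b} = 0$. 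The Pythagorean identity then gives $\|\mathbf{w} + \mathbf{b}\|_2^2 = \|\mathbf{w}\|_2^2 + \|\mathbf{b}\|_2^2 \geq \|\mathbf{w}\|_2^2$, with equality if and only if $\mathbf{b} = \mathbf{0}$. Hence the solution corresponding to $\mathbf{b} = \mathbf{0}$, namely $\hat{\bfbeta}_{\mathcal{A}} = \mathbf{w}$, uniquely attains the minimum $\ell_2$-norm among all Lasso solutions.

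There is essentially no hard obstacle here once the parametrisation of \cref{fact:KKT1} is in hand; the argument is pure linear algebra resting on the range property of the pseudo-inverse and the orthogonality of $\operatorname{row}(\mathbf{X}_{\mathcal{A}})$ and $\operatorname{null}(\mathbf{X}_{\mathcal{A}})$. The only point requiring a moment of care is to confirm that $\mathbf{b} = \mathbf{0}$ is itself admissible, i.e.\ that the sign constraints of \cref{eq:restriction_lasso} are satisfied there; but this is immediate because $\mathbf{w}$ is exactly the Lasso solution named in the statement, so its entries already carry the correct signs $\bfeta_{\mathcal{A}}$ on $\mathcal{A}$.
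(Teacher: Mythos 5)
Your proposal is correct and follows essentially the same route as the paper: both invoke the parametrisation of Lasso solutions from \cref{fact:KKT1} and then apply the Pythagorean identity, using that $\mathbf{X}_{\mathcal{A}}^+(\mathbf{y} - \lambda(\mathbf{X}_{\mathcal{A}}^+)^\top\bfeta_{\mathcal{A}})\in\operatorname{row}(\mathbf{X}_{\mathcal{A}})$ is orthogonal to $\mathbf{b}\in\operatorname{null}(\mathbf{X}_{\mathcal{A}})$. Your extra remark on the admissibility of $\mathbf{b}=\mathbf{0}$ is a detail the paper leaves implicit, but it does not change the argument.
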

\begin{proof}
    By \cref{fact:KKT1}, any Lasso solution has $\ell_2$-norm
    \begin{align*}
        \|\hat{\bfbeta}\|_2^2 = \|\mathbf{X}_{\mathcal{A}}^+(\mathbf{y} - \lambda(\mathbf{X}_{\mathcal{A}}^+)^\top\bfeta_{\mathcal{A}})\|^2 + \|\mathbf{b}\|^2,
    \end{align*}
    since $\mathbf{b}\in\operatorname{null}(\mathbf{X}_{\mathcal{A}})$. Hence the $\ell_2$-norm is minimised when $\mathbf{b}=\mathbf{0}$.
\end{proof}

It follows from \cref{fact:KKT1} that the Lasso solution is unique if $\operatorname{null}(\mathbf{X}_{\mathcal{A}}) = \{\mathbf{0}\}$ and is given by
\begin{align}\label{eq:lasso_solution_unique}
    \hat{\bfbeta}_{\mathcal{I}} = \mathbf{0} \quad\text{and}\quad \hat{\bfbeta}_{\mathcal{A}} = (\mathbf{X}_{\mathcal{A}}^\top \mathbf{X}_{\mathcal{A}})^{-1}(\mathbf{X}_{\mathcal{A}}^\top \mathbf{y} - \lambda\bfeta_{\mathcal{A}}).
\end{align}
It is known~\cite[Lemma~14]{tibshirani2013Lasso} that there always is a Lasso solution such that $|\mathcal{A}| \leq \min\{n,d\}$. If the Lasso solution is unique, then the active set has size at most $\min\{n,d\}$. The sufficient condition $\operatorname{null}(\mathbf{X}_{\mathcal{A}}) = \{\mathbf{0}\}$ for uniqueness has appeared several times in the literature~\cite{osborne2000Lasso,fuchs2005recovery,wainwright2009sharp,candes2009near,tibshirani2013Lasso}. Since the active set $\mathcal{A}$ depends on the Lasso solution at $\mathbf{y},\mathbf{X},\lambda$, the condition that $\mathbf{X}_{\mathcal{A}}$ has full rank is somewhat circular. Because of this, more natural conditions are assumed which imply $\operatorname{null}(\mathbf{X}_{\mathcal{A}}) = \{\mathbf{0}\}$, e.g., the general position property~\cite{rosset2004following,donoho2006most,dossal2012necessary,tibshirani2013Lasso} and the matrix $\mathbf{X}$ being drawn from a continuous probability distribution~\cite{tibshirani2013Lasso}. We say that a matrix $\mathbf{X}\in\mathbb{R}^{n\times d}$ has columns in \emph{general position} if no $k$-dimensional subspace in $\mathbb{R}^n$, for $k < \min\{n,d\}$, contains more than $k+1$ elements from the set $\{\pm \mathbf{X}_1,\dots, \pm\mathbf{X}_d\}$, excluding antipodal pairs. It is known~\cite{rosset2004following,donoho2006most,dossal2012necessary,tibshirani2013Lasso} that this is enough to guarantee the uniqueness of the Lasso solution.
\begin{fact}[{\cite[Lemma~3]{tibshirani2013Lasso}}]\label{fact:general_position}
    If the columns of $\mathbf{X}\in\mathbb{R}^{n\times d}$ are in general position, then the Lasso solution is unique for any $\mathbf{y}\in\mathbb{R}^n$ and $\lambda>0$ and is given by {\rm \cref{eq:lasso_solution_unique}}.
\end{fact}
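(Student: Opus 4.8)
The plan is to prove uniqueness by reducing it to the condition $\operatorname{null}(\mathbf{X}_{\mathcal{A}}) = \{\mathbf{0}\}$ already isolated after \cref{fact:KKT1}, and then to show that general position forces $\mathbf{X}_{\mathcal{A}}$ to have linearly independent columns. First, recall from \cref{fact:fit_uniqueness} that the fitted value $\mathbf{X}\hat{\bfbeta}$, and hence the residual $\mathbf{r} \triangleq \mathbf{y} - \mathbf{X}\hat{\bfbeta}$, the active set $\mathcal{A}$, and the sign vector $\bfeta$, are the same for every Lasso solution. By \cref{fact:KKT1}, any two solutions can differ only through the term $\mathbf{b}\in\operatorname{null}(\mathbf{X}_{\mathcal{A}})$ appearing in \cref{eq:lasso_solution}. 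Thus it suffices to show that, under general position, $\operatorname{null}(\mathbf{X}_{\mathcal{A}}) = \{\mathbf{0}\}$, in which case $\mathbf{b} = \mathbf{0}$ is forced and the unique solution is exactly the one in \cref{eq:lasso_solution_unique}.

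For the core step I would argue by contradiction: suppose the columns $\{\mathbf{X}_i : i\in\mathcal{A}\}$ are linearly dependent, with $k \triangleq \operatorname{rank}(\mathbf{X}_{\mathcal{A}}) < |\mathcal{A}|$. The decisive observation is that the equicorrelation condition converts linear dependence into \emph{affine} dependence. Indeed, since $i\in\mathcal{A}$ means $|\mathbf{X}_i^\top\mathbf{r}| = \lambda$, the signed columns $\mathbf{a}_i \triangleq \eta_i\mathbf{X}_i$ satisfy $\mathbf{a}_i^\top\mathbf{r} = \lambda$ for every $i\in\mathcal{A}$; here $\eta_i\in\{-1,+1\}$ on the active set and $\mathbf{r}\neq\mathbf{0}$, since otherwise $\lambda\eta_i = 0$ would contradict $\lambda>0$. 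Hence all signed active columns lie on the affine hyperplane $\{\mathbf{v}:\mathbf{v}^\top\mathbf{r} = \lambda\}$, which does not pass through the origin.

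Next I would select $k$ linearly independent signed columns $\mathbf{a}_{i_1},\dots,\mathbf{a}_{i_k}$ spanning $\operatorname{col}(\mathbf{X}_{\mathcal{A}})$; since $|\mathcal{A}| > k$ there is a further index $j\in\mathcal{A}$ with $\mathbf{a}_j = \sum_{l=1}^k c_l\mathbf{a}_{i_l}$. Taking the inner product of this relation with $\mathbf{r}$ and using $\mathbf{a}_i^\top\mathbf{r}=\lambda$ for all the vectors involved yields $\lambda = \lambda\sum_{l=1}^k c_l$, so $\sum_{l=1}^k c_l = 1$; that is, $\mathbf{a}_j$ is an \emph{affine} combination of $\mathbf{a}_{i_1},\dots,\mathbf{a}_{i_k}$. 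Consequently the $k+1$ signed columns $\mathbf{a}_{i_1},\dots,\mathbf{a}_{i_k},\mathbf{a}_j$ all lie in the affine span of the first $k$ of them, a flat of dimension at most $k-1$. As these are signed versions of distinct columns, and no two of them are antipodal (the $\mathbf{a}_{i_l}$ being linearly independent, and $\mathbf{a}_j$ lying in their affine hull), this places more than $k$ of the elements of $\{\pm\mathbf{X}_1,\dots,\pm\mathbf{X}_d\}$ inside a $(k-1)$-dimensional flat with $k-1 < \min\{n,d\}$, contradicting general position. Therefore $\mathbf{X}_{\mathcal{A}}$ has full column rank, $\operatorname{null}(\mathbf{X}_{\mathcal{A}}) = \{\mathbf{0}\}$, and by the discussion above the solution is unique and equals \cref{eq:lasso_solution_unique}.

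I expect the main obstacle to be precisely the passage from linear dependence to a genuine violation of general position: a naive count using the linear span only yields $k+1$ signed columns inside a $k$-dimensional subspace, which is permitted. The crucial leverage is the equicorrelation identity $\mathbf{a}_i^\top\mathbf{r} = \lambda$ holding with the \emph{same} value for every active $i$, which upgrades the dependence to an affine one and lowers the relevant flat dimension by one, so that $k+1$ points land in a $(k-1)$-dimensional flat. Care must also be taken with the antipodal and distinctness bookkeeping in the general position condition and with confirming $\mathbf{r}\neq\mathbf{0}$ at the outset.
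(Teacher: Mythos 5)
Your proof is correct, and it is essentially the argument of~\cite[Lemma~3]{tibshirani2013Lasso}, which the paper imports without reproving: reduce uniqueness to $\operatorname{null}(\mathbf{X}_{\mathcal{A}})=\{\mathbf{0}\}$ via \cref{fact:fit_uniqueness} and \cref{fact:KKT1}, then use the equicorrelation identity $\eta_i\mathbf{X}_i^\top\mathbf{r}=\lambda$ to upgrade a linear dependence among the signed active columns to an affine one, landing $k+1$ non-antipodal signed columns in a $(k-1)$-dimensional affine flat and contradicting general position. The dimension-drop step and the bookkeeping (distinct indices, $\mathbf{r}\neq\mathbf{0}$, $k\leq\min\{n,d\}$) are all handled correctly.
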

Another sufficient condition was given by Tibshirani~\cite{tibshirani2013Lasso}, who proved the following.
\begin{fact}[{\cite[Lemma~4]{tibshirani2013Lasso}}]\label{fact:continuous_distribution}
    If the entries of $\mathbf{X}\in\mathbb{R}^{n\times d}$ are drawn from a continuous probability distribution on $\mathbb{R}^{nd}$, then the Lasso solution is unique for any $\mathbf{y}\in\mathbb{R}^n$ and $\lambda>0$ and is given by {\rm \cref{eq:lasso_solution_unique}} with probability $1$. 
\end{fact}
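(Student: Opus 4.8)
The plan is to reduce the statement to \cref{fact:general_position}: since columns in general position already guarantee both uniqueness and the closed form \cref{eq:lasso_solution_unique}, it suffices to show that a matrix $\mathbf{X}$ whose entries are drawn from a continuous (i.e.\ absolutely continuous) distribution on $\mathbb{R}^{nd}$ has columns in general position with probability $1$. Because $n$ and $d$ are fixed, I would express the event ``$\mathbf{X}$ is not in general position'' as a \emph{finite} union of simpler events, control each one separately, and then combine them with a union bound.

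First I would unwind the definition. General position fails precisely when there exist an integer $0 \le k < \min\{n,d\}$, a set $S = \{i_0,\dots,i_{k+1}\} \subseteq [d]$ of $k+2$ distinct indices, and signs such that the $k+2$ corresponding signed columns all lie in a common $k$-dimensional subspace of $\mathbb{R}^n$. Since a linear subspace is invariant under scalar multiplication, the signs are irrelevant and $\pm\mathbf{X}_i$ lies in a subspace $V$ iff $\mathbf{X}_i$ does; thus, for fixed $(k,S)$, the failure event is simply $E_{k,S} = \{\operatorname{rank}\mathbf{X}_S \le k\}$, where $\mathbf{X}_S\in\mathbb{R}^{n\times(k+2)}$ is the submatrix with columns indexed by $S$ (indeed, $k+2$ vectors lie in a common $k$-dimensional subspace iff their span has dimension at most $k$, and any such span can be enlarged to an exactly $k$-dimensional $V$ since $k \le n$).

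Next I would argue that each $E_{k,S}$ is Lebesgue-null. The bound $k < \min\{n,d\}$ gives $k+1 \le n$, so $\mathbf{X}_S$ contains a $(k+1)\times(k+1)$ submatrix $M$ (take any $k+1$ rows and the first $k+1$ columns of $S$). The determinant $\det M$ is a polynomial in a distinct subset of the entries of $\mathbf{X}$ and is \emph{not} identically zero (it equals $1$ when those entries form an identity block), so its zero set has Lebesgue measure zero in $\mathbb{R}^{nd}$. Since $\operatorname{rank}\mathbf{X}_S \le k$ forces every $(k+1)$-minor, in particular $\det M$, to vanish, we have $E_{k,S} \subseteq \{\det M = 0\}$, a null set. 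As the distribution is continuous, it assigns probability zero to Lebesgue-null sets, whence $\mathbb{P}[E_{k,S}] = 0$.

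Finally, since there are only finitely many valid pairs $(k,S)$ (those with $k+2 \le d$), a union bound gives $\mathbb{P}[\mathbf{X}\text{ not in general position}] = 0$, and \cref{fact:general_position} then yields uniqueness of the Lasso solution and the form \cref{eq:lasso_solution_unique} with probability $1$. I expect the only delicate point to be the translation of the geometric ``common $k$-dimensional subspace'' condition into the algebraic rank/minor condition, together with the observation that signs and antipodal pairs do not affect it; once that reduction is in place, the nonvanishing-minor argument is routine.
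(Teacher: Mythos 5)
The paper itself does not prove this statement: it is recorded as a Fact with a pointer to Tibshirani's Lemma~4, so the only meaningful benchmark is Tibshirani's original argument. Your overall architecture --- reduce to \cref{fact:general_position}, and show that a continuous distribution places the columns in general position almost surely by writing the failure event as a finite union of events, each contained in the zero set of a not-identically-zero polynomial in the entries of $\mathbf{X}$ --- is exactly that argument, and the nonvanishing-minor/null-set/union-bound machinery is sound.

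There is, however, a genuine gap in the step you yourself flag as delicate. The notion of general position on which \cref{fact:general_position} rests (Tibshirani's Lemma~3) is the \emph{affine} one: no $k$-dimensional \emph{affine} subspace may contain more than $k+1$ of the signed columns. This is forced by how the lemma is proved: from the KKT conditions one extracts a dependence $s_j\mathbf{X}_j=\sum_{i\neq j} c_i s_i\mathbf{X}_i$ with $\sum_i c_i=1$, i.e.\ an affine dependence among signed columns, and it is also the only reading under which the signs and the ``excluding antipodal pairs'' clause do any work. Your reduction of the failure event to $E_{k,S}=\{\operatorname{rank}\mathbf{X}_S\le k\}$ discards both the signs and the affine structure and establishes only the strictly weaker \emph{linear} general position. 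That is not enough to invoke \cref{fact:general_position}: for instance, $\mathbf{X}_1=\mathbf{X}_2\neq\mathbf{0}$ (with the remaining columns generic) satisfies ``every $k+2$ columns have rank at least $k+1$'' for every $k$, yet violates affine general position (take $k=0$) and yields a non-unique Lasso solution for suitable $\mathbf{y}$. The repair is local: for each $k<\min\{n,d\}$, each $S=\{i_1,\dots,i_{k+2}\}$, and each sign pattern $\sigma\in\{\pm1\}^{k+2}$, the correct event is that the $(n+1)\times(k+2)$ matrix obtained by appending a row of ones to $[\sigma_1\mathbf{X}_{i_1},\dots,\sigma_{k+2}\mathbf{X}_{i_{k+2}}]$ has rank at most $k+1$ (equivalently, the $k+2$ signed columns have affine span of dimension at most $k$). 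Since $k+2\le n+1$, this matrix admits a $(k+2)\times(k+2)$ minor that is a non-zero polynomial in the entries of $\mathbf{X}_S$ (evaluate at $\mathbf{X}_{i_j}=\mathbf{e}_j$ for $j\le k+1$ and $\mathbf{X}_{i_{k+2}}=\mathbf{0}$; the minor equals $\prod_j\sigma_j\neq 0$), so each such event is Lebesgue-null, and the union --- now also over the finitely many sign patterns --- remains null. With that substitution your proof goes through.
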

It is well known that, when $\operatorname{null}(\mathbf{X}_{\mathcal{A}}) = \{\mathbf{0}\}$, the KKT conditions from \cref{fact:KKT1} imply that the regularisation path $\mathcal{P} \triangleq \{\hat{\bfbeta}(\lambda)\in\mathbb{R}^d:\lambda > 0\}$ comprising all the solutions for $\lambda > 0$ is continuous piecewise linear, which was first proven by Efron, Hastie, Johnstone, and Tibshirani~\cite{efron2004least}, cf.\ \cref{eqPath}. 
This fact was later extended to the case when $\operatorname{rank}(\mathbf{X}_{\mathcal{A}}) < |\mathcal{A}|$ by Tibshirani~\cite{tibshirani2013Lasso}, who proved that the path of solutions with $\mathbf{b} = \mathbf{0}$ in \cref{eq:lasso_solution} is continuous and piecewise linear. 
\begin{fact}[{\cite[Appendix~A]{tibshirani2013Lasso}}]
    The regularisation path $\mathcal{P} = \{\hat{\bfbeta}(\lambda)\in\mathbb{R}^d:\lambda > 0\}$ formed by the Lasso solution
    \begin{align*}
        \hat{\bfbeta}_{\mathcal{I}}(\lambda) = \mathbf{0} \quad\text{and}\quad \hat{\bfbeta}_{\mathcal{A}}(\lambda) = \mathbf{X}_{\mathcal{A}}^+(\mathbf{y} - \lambda(\mathbf{X}_{\mathcal{A}}^+)^\top\bfeta_{\mathcal{A}}),
    \end{align*}
    with $\mathcal{A} = \{ i\in[d]:  |\mathbf{X}_i^\top(\mathbf{y}-\mathbf{X}\hat{\bfbeta}(\lambda))| = \lambda\}$ and $\mathcal{I} =[d]\setminus\mathcal{A}$, is continuous piecewise linear.
\end{fact}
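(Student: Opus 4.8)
The plan is to exploit the affine structure that the stated closed form inherits once the equicorrelation set and signs are fixed, and then to glue the affine pieces together using uniqueness of the fitted value. Throughout, write $\mathbf{c}(\lambda) \triangleq \mathbf{X}^\top(\mathbf{y} - \mathbf{X}\hat{\bfbeta}(\lambda))$ for the correlation vector, so that $\mathcal{A}(\lambda) = \{i : |c_i(\lambda)| = \lambda\}$ and $\bfeta(\lambda) = \operatorname{sign}(\mathbf{c}(\lambda))$.

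First I would establish local linearity. Fix $\lambda^\ast > 0$ and set $\mathcal{A} = \mathcal{A}(\lambda^\ast)$, $\bfeta = \bfeta(\lambda^\ast)$. The stated formula $\hat{\bfbeta}_{\mathcal{A}}(\lambda) = \mathbf{X}_{\mathcal{A}}^+\mathbf{y} - \lambda\mathbf{X}_{\mathcal{A}}^+(\mathbf{X}_{\mathcal{A}}^+)^\top\bfeta_{\mathcal{A}}$ (with $\hat{\bfbeta}_{\mathcal{I}} = \mathbf{0}$) is manifestly affine in $\lambda$. Substituting it into $\mathbf{c}(\lambda)$ and using $\mathbf{X}_{\mathcal{A}}\mathbf{X}_{\mathcal{A}}^+(\mathbf{X}_{\mathcal{A}}^+)^\top = (\mathbf{X}_{\mathcal{A}}^+)^\top$ together with $\bfeta_{\mathcal{A}}\in\operatorname{row}(\mathbf{X}_{\mathcal{A}})$ shows that $c_i(\lambda) = \lambda\eta_i$ for $i\in\mathcal{A}$ (so the equicorrelation conditions hold identically in $\lambda$), while $c_i(\lambda)$ is merely affine in $\lambda$ for $i\in\mathcal{I}$. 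Hence the only constraints of \cref{fact:KKT1} that can fail as $\lambda$ moves away from $\lambda^\ast$ are the affine inequalities $|c_i(\lambda)|\le\lambda$ for $i\in\mathcal{I}$ and the sign conditions $\eta_i\hat{\beta}_i(\lambda)\ge 0$ for $i\in\mathcal{A}$. Each is linear in the single variable $\lambda$, so the set of $\lambda$ on which $(\mathcal{A},\bfeta)$ remains valid is an interval, and on that interval the path coincides with one affine function of $\lambda$.

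Next I would argue finiteness of the pieces. Since $\mathcal{A}\subseteq[d]$ and $\bfeta\in\{-1,0,1\}^d$ range over finitely many values, and the validity region of each pair is an interval by the previous step, the breakpoints---values of $\lambda$ at which some inactive correlation first satisfies $|c_i(\lambda)| = \lambda$ (a variable joins $\mathcal{A}$) or some active sign constraint becomes tight, i.e.\ $\hat\beta_i(\lambda) = 0$ (a variable leaves)---are isolated. This partitions $(0,\lambda_{\max}]$ into finitely many segments on each of which $\hat{\bfbeta}(\lambda)$ is linear, exactly as in \cref{eqPath}.

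The hard part will be continuity at the breakpoints. At a breakpoint $\lambda_t$ the equicorrelation set and/or signs jump between the value $(\mathcal{A}^+,\bfeta^+)$ governing $\lambda$ slightly above $\lambda_t$ and the value $(\mathcal{A}^-,\bfeta^-)$ governing $\lambda$ slightly below, and I must show the two affine pieces agree at $\lambda_t$. My plan is to observe that both one-sided limits $\hat{\bfbeta}^+(\lambda_t)$ and $\hat{\bfbeta}^-(\lambda_t)$ satisfy the KKT conditions of \cref{fact:KKT1} at $\lambda_t$ (the governing affine inequalities hold with equality in the limit), so both are genuine Lasso solutions at $\lambda_t$; by \cref{fact:fit_uniqueness} they share the fitted value $\mathbf{X}\hat{\bfbeta}(\lambda_t)$, hence the same correlations and the same true equicorrelation set there. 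Since each piece is produced by the $\mathbf{b}=\mathbf{0}$ branch of \cref{eq:lasso_solution}, which by the minimum-$\ell_2$-norm characterisation following \cref{fact:KKT1} yields the \emph{unique} minimum-$\ell_2$-norm Lasso solution, both limits equal that unique minimiser and therefore coincide. The technical obstacle I anticipate is verifying that the $\mathbf{b}=\mathbf{0}$ formula written with $\mathcal{A}^+$ still returns the minimum-norm solution at the shared $\lambda_t$ even though $\mathcal{A}^+$ differs from the true equicorrelation set there; this reduces to block-pseudoinverse identities relating $\mathbf{X}_{\mathcal{A}^+}^+$ and $\mathbf{X}_{\mathcal{A}^-}^+$, and is exactly where the rank-deficient case---in which a coefficient may vanish without a column leaving the column span---must be handled carefully. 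Once that identity is in place, left- and right-continuity at every breakpoint follow, and combined with the affine behaviour on each segment this yields a continuous, piecewise-linear path.
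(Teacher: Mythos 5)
The paper does not actually prove this fact---it is imported verbatim from Tibshirani's Appendix~A---so there is no in-paper proof to compare against; the closest analogue is the continuity/piecewise-linearity argument embedded in the proof of \cref{thr:correctness}, which handles insertion and deletion at a kink by showing the new-basis formula equals $\mathbf{X}_{\mathcal{A}^\ast}^+\mathbf{X}_{\mathcal{A}^\ast}$ applied to the old solution and then invoking the minimum-$\ell_2$-norm convention. Your skeleton is the standard one and its first two stages are sound: the computation showing $\mathbf{X}_{\mathcal{A}}^\top(\mathbf{y}-\mathbf{X}\hat{\bfbeta}(\lambda))=\lambda\bfeta_{\mathcal{A}}$ identically in $\lambda$ is exactly the one the paper reuses, and the finiteness-of-segments argument via the finitely many pairs $(\mathcal{A},\bfeta)$, each valid on an interval, matches the paper's remark about the $3^d$ bound.

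The genuine gap is the continuity step, and it is the entire content of the theorem. Your argument is: both one-sided limits satisfy the KKT conditions at $\lambda_t$, hence are Lasso solutions there; by \cref{fact:fit_uniqueness} they share the fitted value; each is produced by the $\mathbf{b}=\mathbf{0}$ branch, hence each is the unique minimum-$\ell_2$-norm solution; hence they coincide. The third link is unproven and does not come for free. Sharing the fitted value only determines the solution modulo a null-space perturbation supported on the equicorrelation set, and the minimum-norm characterisation following \cref{fact:KKT1} applies to the formula written with the \emph{true} equicorrelation set $\mathcal{A}(\lambda_t)$ at $\lambda_t$. The one-sided limit $\hat{\bfbeta}^{+}(\lambda_t)$ is instead computed with $\mathcal{A}^{+}$, which may be a strict subset of $\mathcal{A}(\lambda_t)$; its nonzero block lies in $\operatorname{row}(\mathbf{X}_{\mathcal{A}^{+}})$, and the zero-padding of $\operatorname{row}(\mathbf{X}_{\mathcal{A}^{+}})$ into $\mathbb{R}^{|\mathcal{A}(\lambda_t)|}$ need not be contained in $\operatorname{row}(\mathbf{X}_{\mathcal{A}(\lambda_t)})$ when $\mathbf{X}_{\mathcal{A}(\lambda_t)}$ is rank-deficient. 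So the claim that both limits are the minimum-norm solution is precisely the block-pseudoinverse identity you flag as ``anticipated'' but never establish; without it the two affine pieces could in principle disagree at $\lambda_t$ by an element of $\operatorname{null}(\mathbf{X}_{\mathcal{A}(\lambda_t)})$. To close the argument you must either prove that identity (as Tibshirani does, with care in the degenerate case) or replace the appeal to minimum-norm uniqueness by a direct computation showing $\mathbf{X}_{\mathcal{A}^{\pm}}^+\mathbf{X}_{\mathcal{A}^{\pm}}$ fixes the limiting solution, which is the route taken inside the proof of \cref{thr:correctness}. A secondary, more cosmetic point: your local-linearity step fixes $\mathcal{A}=\mathcal{A}(\lambda^\ast)$ and shows that formula is valid on an interval, but the statement defines the path with $\mathcal{A}=\mathcal{A}(\lambda)$ at every $\lambda$; you should note that the two formulas agree wherever both sets are equicorrelation-consistent, otherwise the ``one affine function per interval'' claim is about a different path than the one in the statement.
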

%
We call \emph{kink} a point where the direction $\partial\hat{\bfbeta}(\lambda)/\partial\lambda$ changes. The path $\{\hat{\bfbeta}(\lambda): \lambda_{t+1} < \lambda \leq \lambda_t\}$ between two consecutive kinks $\lambda_{t+1}$ and $\lambda_t$ thus defines a linear segment. An important consequence of the above result is that any continuous piecewise linear regularisation path has at most $3^d$ linear segments, since two different linear segments must have different equicorrelation signs $\bfeta\in\{-1,0,1\}^d$. This result was improved slightly to $(3^d+1)/2$ iterations by Mairal and Yu~\cite{mairal2012complexity}, who also exhibited a worst-case Lasso problem with exactly $(3^d+1)/2$ number of linear segments.

\subsection{The LARS algorithm}

The Lasso solution path $\{\hat{\bfbeta}(\lambda):\lambda>0\}$ can be computed by the Least Angle Regression (LARS) algorithm\footnote{By LARS algorithm we mean the version of the algorithm that computes the Lasso path and not the version that performs a kind of forward variable selection.} which was proposed and named by Efron \textit{et al.}~\cite{efron2004least}, although similar ideas have already appeared in the work of Osborne \textit{et al.}~\cite{osborne2000Lasso,osborne2000new}. Since its introduction, the LARS algorithm have been improved and generalised~\cite{rosset2007piecewise,mairal2012complexity,tibshirani2013Lasso,sjostrand2018spasm}. In compressed sensing literature, this algorithm is better known as the Homotopy method~\cite{foucart2013invitation}.

Starting at $\lambda = \infty$, where the Lasso solution is trivially $\mathbf{0}\in\mathbb{R}^d$, the LARS algorithm iteratively computes all the kinks in the Lasso solution path by decreasing the parameter $\lambda$ and checking for points where the path’s linear trajectory changes. This is done by making sure that the KKT conditions from \cref{fact:KKT1} are always satisfied. When a kink is reached, a coordinate from the solution $\hat{\bfbeta}(\lambda)$ will go from non-zero (being in the active set $\mathcal{A}$) to zero (join the inactive set $\mathcal{I}$), or vice-versa, i.e., from zero (leave $\mathcal{I}$) to non-zero (into $\mathcal{A}$). We shall now describe the LARS algorithm in more details, following~\cite{tibshirani2013Lasso}. \cref{alg:classicalLasso} summarises the LARS algorithm.

The LARS algorithm starts, without loss of generality, at the first kink $\lambda_0 = \|\mathbf{X}^\top\mathbf{y}\|_\infty$, since $\hat{\bfbeta}(\lambda) = \mathbf{0}$ for $\lambda \geq \lambda_0$ (plug $\hat{\bfbeta}(\lambda_0) = \mathbf{0}$ in \cref{eq:KKT1}). The corresponding active and inactive sets are $\mathcal{A} = \operatorname{argmax}_{j\in[d]}|\mathbf{X}^\top_j \mathbf{y}|$ and $\mathcal{I} = [d]\setminus\mathcal{A}$, respectively, while the equicorrelation sign is $\bfeta_{\mathcal{A}} = \operatorname{sign}(\mathbf{X}^\top_{\mathcal{A}}\mathbf{y})$. Since every iteration after the initial setup is identical, assume by induction that the LARS algorithm has computed the first $t+1$ kinks $(\lambda_0,\hat{\bfbeta}(\lambda_0)),\dots,(\lambda_t,\hat{\bfbeta}(\lambda_t))$. At the beginning of the $t$-th iteration, the regularisation parameter is $\lambda = \lambda_t$, the Lasso solution is $\hat{\bfbeta}(\lambda_t)$, and the active and inactive sets are $\mathcal{A}$ and $\mathcal{I}$. From the previous solution $\hat{\bfbeta}(\lambda_t)$ we obtain the equicorrelation signs 
\begin{align*}
    \bfeta_{\mathcal{A}} = \operatorname{sign}\big(\mathbf{X}^\top_{\mathcal{A}}(\mathbf{y} - \mathbf{X}_{\mathcal{A}}\hat{\bfbeta}_{\mathcal{A}}(\lambda_t))\big).
\end{align*}
According to \cref{eq:lasso_solution}, for $\lambda \leq \lambda_t$ sufficiently close to $\lambda_t$, the Lasso solution is\footnote{From now on, we shall assume $\mathbf{b} = \mathbf{0}$ in order to guarantee the continuity of the Lasso path.}
\begin{align}\label{eq:lasso_solution2}
    \hat{\bfbeta}_{\mathcal{I}}(\lambda) = \mathbf{0} \quad\text{and}\quad \hat{\bfbeta}_{\mathcal{A}}(\lambda) = \mathbf{X}_{\mathcal{A}}^+(\mathbf{y} - \lambda(\mathbf{X}_{\mathcal{A}}^+)^\top\bfeta_{\mathcal{A}}) = \bfmu - \lambda\bftheta,
\end{align}
where $\bfmu = \mathbf{X}_{\mathcal{A}}^+ \mathbf{y}$ and $\bftheta = (\mathbf{X}_{\mathcal{A}}^\top \mathbf{X}_{\mathcal{A}})^+ \bfeta_{\mathcal{A}}$. We now decrease $\lambda$ while keeping \cref{eq:lasso_solution2}. As $\lambda$ decreases, two important checks must be made. First, we check when (i.e., we compute the next value of $\lambda$ at which) a variable $i_{t+1}^{\rm join}$ in $\mathcal{I}$ should join the active set $\mathcal{A}$ because it has attain the maximum correlation $|\mathbf{X}_{i_{t+1}^{\rm join}}^\top (\mathbf{y} - \mathbf{X}\hat{\bfbeta})| = \lambda$. We call this event the next \emph{joining time} $\lambda_{t+1}^{\rm join}$. Second, we check when a variable $i_{t+1}^{\rm cross}$ in the active set $\mathcal{A}$ should leave $\mathcal{A}$ and join $\mathcal{I}$ because $\hat{\beta}_{i_{t+1}^{\rm cross}}$ crossed through zero. We call this event the next \emph{crossing time} $\lambda_{t+1}^{\rm cross}$.

For the first check, for each $i\in\mathcal{I}$, consider the equation
\begin{align*}
    \mathbf{X}_{i}^\top (\mathbf{y} - \mathbf{X}_{\mathcal{A}}\hat{\bfbeta}_{\mathcal{A}}(\lambda)) = \pm \lambda \implies \mathbf{X}_{i}^\top (\mathbf{y} - \mathbf{X}_{\mathcal{A}}(\bfmu - \lambda\bftheta)) = \pm \lambda.
\end{align*}
The solution to the above equation for $\lambda$, interpreted as the joining time of the $i$-th variable, is
\begin{equation} \label{eqSolution}
    \Lambda_i^{\rm join} (\mathcal A) \triangleq \frac{\mathbf{X}_i^\top (\mathbf{y} - \mathbf{X}_{\mathcal{A}}\bfmu)}{\pm 1 - \mathbf{X}_i^\top \mathbf{X}_{\mathcal{A}}\bftheta} = \frac{\mathbf{X}_i^\top (\mathbf{I} - \mathbf{X}_{\mathcal{A}} \mathbf{X}_{\mathcal{A}}^+)\mathbf{y}}{\pm 1 - \mathbf{X}_i^\top (\mathbf{X}_{\mathcal{A}}^\top)^+ \bfeta_{\mathcal{A}}},
\end{equation}
where we used that $\mathbf{X}_{\mathcal{A}}\mathbf{X}_{\mathcal{A}}^+(\mathbf{X}_{\mathcal{A}}^\top)^+ = (\mathbf{X}_{\mathcal{A}}^\top)^+$. Only one of the possibilities $+1$ or $-1$ will lead to a value in the necessary interval $[0,\lambda_t]$, and the following expressions always consider that solution. Therefore, the next joining time $\lambda_{t+1}^{\rm join}$ and coordinate $i_{t+1}^{\rm join}$ are
\begin{align*}
    \lambda_{t+1}^{\rm join} = \max_{i\in\mathcal{I}}\{\Lambda_i^{\rm join} (\mathcal A) \} \quad\text{and}\quad i_{t+1}^{\rm join} = \operatorname*{argmax}_{i\in\mathcal{I}}\{\Lambda_i^{\rm join} (\mathcal A) \}.
\end{align*}
For the second check, for each $i\in\mathcal{A}$, we solve the equation
\begin{align*}
    \hat{\beta}_i(\lambda) = 0 \implies \mu_i - \lambda \theta_i = 0.
\end{align*}
The crossing time of the $i$-th variable, which is the solution to the above equation restricted to values $\lambda\leq \lambda_t$, is thus defined as
\begin{align*}
    \Lambda_i^{\rm cross} (\mathcal A) \triangleq \frac{\mu_i}{\theta_i}\cdot\mathbf{1}\left[\frac{\mu_i}{\theta_i} \leq \lambda_t\right] = \frac{[\mathbf{X}_{\mathcal{A}}^+ \mathbf{y}]_i}{[(\mathbf{X}_{\mathcal{A}}^\top \mathbf{X}_{\mathcal{A}})^+ \bfeta_{\mathcal{A}}]_i}\cdot\mathbf{1}\left[\frac{[\mathbf{X}_{\mathcal{A}}^+ \mathbf{y}]_i}{[(\mathbf{X}_{\mathcal{A}}^\top \mathbf{X}_{\mathcal{A}})^+ \bfeta_{\mathcal{A}}]_i} \leq \lambda_t\right].
\end{align*}
Therefore, the next crossing time $\lambda_{t+1}^{\rm join}$ and coordinate $i_{t+1}^{\rm join}$ are
\begin{align*}
    \lambda_{t+1}^{\rm cross} = \max_{i\in\mathcal{A}}\{\Lambda_i^{\rm cross} (\mathcal A)\} \quad\text{and}\quad i_{t+1}^{\rm cross} = \operatorname*{argmax}_{i\in\mathcal{A}}\{\Lambda_i^{\rm cross} (\mathcal A)\}.
\end{align*}
Finally, the next kink of the regularisation path is $(\lambda_{t+1},\hat{\bfbeta}(\lambda_{t+1}))$ where $\lambda_{t+1} = \max\{\lambda_{t+1}^{\rm join}, \lambda_{t+1}^{\rm cross}\}$ and $\hat{\bfbeta}_{\mathcal{I}}(\lambda_{t+1}) = \mathbf{0}$ and $\hat{\bfbeta}_{\mathcal{A}}(\lambda_{t+1}) = \bfmu - \lambda_{t+1}\bftheta$. If $\lambda_{t+1}^{\rm join} > \lambda_{t+1}^{\rm cross}$, then the coordinate $i_{t+1}^{\rm join}$ should join the active set $\mathcal{A}$. Otherwise, the coordinate $i_{t+1}^{\rm cross}$ should leave the active set $\mathcal{A}$. This ends the $t$-th iteration of the LARS algorithm.

\begin{algorithm}[t]
\caption{Classical LARS algorithm for the pathwise Lasso}
\DontPrintSemicolon
\label{alg:classicalLasso}

\KwIn{Vector $\mathbf{y}\in\mathbb{R}^n$ and matrix $\mathbf{X}\in\mathbb{R}^{n\times d}$}

\KwOut{Exact regularisation path $\mathcal{P}$}

Initialise $\mathcal{A}=\operatorname{argmax}_{j\in[d]} |\mathbf{X}_j^\top \mathbf{y}|$, $\mathcal{I}=[d]\setminus\mathcal{A}$, $\lambda_0 = \|\mathbf{X}^\top\mathbf{y}\|_\infty$, $\hat{\bfbeta}(\lambda_0) = \mathbf{0}$, $t=0$

    
   \While{$\mathcal{I}\neq \varnothing$}
   {
        $\bfeta_{\mathcal{A}} \gets \sign\big(\mathbf{X}_{\mathcal{A}}^\top(\mathbf{y} - \mathbf{X}_{\mathcal{A}}\hat{\bfbeta}_{\mathcal{A}}(\lambda_{t}))\big)$

        $\bfmu \gets \mathbf{X}_{\mathcal{A}}^+ \mathbf{y}$ \tcp*{$\bfmu \in \mathbb{R}^{|\mathcal{A}|}$}
        
        $\bftheta \gets (\mathbf{X}_{\mathcal{A}}^\top \mathbf{X}_{\mathcal{A}})^+ \bfeta_{\mathcal{A}}$ \tcp*{$\bftheta \in \mathbb{R}^{|\mathcal{A}|}$}

        $\Lambda_i^{\rm cross} \gets \frac{\mu_i}{\theta_i} \cdot \mathbf{1}\left[\frac{\mu_i}{\theta_i} \leq \lambda_t\right]\quad \forall i\in\mathcal{A}$
        
        $\Lambda_i^{\rm join} \gets \frac{\mathbf{X}_i^\top (\mathbf{y}-\mathbf{X}_{\mathcal{A}}\bfmu)}{\pm 1 - \mathbf{X}_i^\top \mathbf{X}_{\mathcal{A}}\bftheta}\quad \forall i \in \mathcal I$ 
   
        $i_{t+1}^{\rm cross}\gets \operatorname{argmax}_{i\in \mathcal{A}} \{\Lambda_i^{\rm cross}\}$ and $\lambda_{t+1}^{\rm cross}\gets \max_{i\in \mathcal{A}} \{\Lambda_i^{\rm cross}\}$
        
        $i_{t+1}^{\rm join} \gets \operatorname{argmax}_{i\in \mathcal{I}}  \{\Lambda_i^{\rm join} \}$ and $\lambda_{t+1}^{\rm join} \gets \max_{i\in \mathcal{I}}  \{\Lambda_i^{\rm join}\}$

        $\lambda_{t+1} \gets \max\{\lambda_{t+1}^{\rm join}, \lambda_{t+1}^{\rm cross} \}$\;
        
        $\hat{\bfbeta}_{\mathcal{A}}(\lambda_{t+1}) \gets  \bfmu - \lambda_{t+1}\bftheta$\;
        
        \If{$\lambda_{t+1} = \lambda_{t+1}^{\rm cross}$}
        {        

            Move $i_{t+1}^{\rm cross}$ from $\mathcal{A}$ to $\mathcal{I}$
            
        }
        \Else
        {

            Move $i_{t+1}^{\rm join}$ from $\mathcal{I}$ to $\mathcal{A}$
        }

        $t\gets t+1$
    
  }
\Return coefficients $[(\lambda_0,\hat{\bfbeta}(\lambda_0)),(\lambda_1,\hat{\bfbeta}(\lambda_1)),\dots]$
\end{algorithm}

The time complexity of each iteration in the LARS algorithm is given by the following result.
\begin{fact}
    The time complexity per iteration of the {\rm LARS} algorithm is $O(nd + |\mathcal{A}|^2)$, where $\mathcal{A}$ is the active set of the corresponding iteration.
\end{fact}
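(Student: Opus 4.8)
The plan is to walk through a single pass of the while loop in \cref{alg:classicalLasso} and bound the cost of each line in turn, namely the computation of $\bfeta_{\mathcal{A}}$, of $\bfmu$ and $\bftheta$, of the crossing and joining times, and of the two argmaxes. Two lines will dominate: the evaluation of all joining times $\Lambda_i^{\rm join}(\mathcal{A})$ over $i\in\mathcal{I}$, which I expect to cost $O(nd)$, and the maintenance of the (pseudo-)inverse of $\mathbf{X}_{\mathcal{A}}$, which I expect to cost $O(n|\mathcal{A}|+|\mathcal{A}|^2)$; everything else will be subdominant.

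For the joining times I would avoid recomputing the numerator and denominator of \cref{eqSolution} independently for each $i$. Instead I would first form the two length-$n$ vectors $\mathbf{r}\triangleq(\mathbf{I}-\mathbf{X}_{\mathcal{A}}\mathbf{X}_{\mathcal{A}}^+)\mathbf{y}=\mathbf{y}-\mathbf{X}_{\mathcal{A}}\bfmu$ and $\mathbf{q}\triangleq(\mathbf{X}_{\mathcal{A}}^\top)^+\bfeta_{\mathcal{A}}=\mathbf{X}_{\mathcal{A}}\bftheta$, each in time $O(n|\mathcal{A}|)$ once $\bfmu$ and $\bftheta$ are available. Then $\Lambda_i^{\rm join}(\mathcal{A})=\mathbf{X}_i^\top\mathbf{r}/(\pm 1-\mathbf{X}_i^\top\mathbf{q})$ reduces to two inner products of length $n$ per index $i\in\mathcal{I}$ (the correct sign being the one yielding a value in $[0,\lambda_t]$), so the full sweep over $\mathcal{I}$ costs $O(n|\mathcal{I}|)$. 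Adding the $O(n|\mathcal{A}|)$ setup gives $O(n|\mathcal{A}|+n|\mathcal{I}|)=O(nd)$. The crossing times $\mu_i/\theta_i$ over $i\in\mathcal{A}$, the two argmaxes, and the update $\hat{\bfbeta}_{\mathcal{A}}(\lambda_{t+1})=\bfmu-\lambda_{t+1}\bftheta$ then contribute only $O(d)$ in total.

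The remaining ingredient, and the step I expect to be the real obstacle, is computing $\bfmu=\mathbf{X}_{\mathcal{A}}^+\mathbf{y}$ and $\bftheta=(\mathbf{X}_{\mathcal{A}}^\top\mathbf{X}_{\mathcal{A}})^+\bfeta_{\mathcal{A}}$ within the stated budget. Forming $\mathbf{X}_{\mathcal{A}}^\top\mathbf{X}_{\mathcal{A}}$ from scratch already costs $O(n|\mathcal{A}|^2)$, which exceeds the target $O(nd+|\mathcal{A}|^2)$ as soon as $|\mathcal{A}|\gg\sqrt{d}$, so naive recomputation is not allowed. The key observation is that between consecutive iterations the active set $\mathcal{A}$ changes by exactly one index, so I would maintain the inverse $(\mathbf{X}_{\mathcal{A}}^\top\mathbf{X}_{\mathcal{A}})^{-1}$ (assuming $\mathbf{X}_{\mathcal{A}}$ has full column rank, as guaranteed under the uniqueness conditions above) across iterations and update it by a block-matrix/rank-one step. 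Inserting or deleting a column merely appends or removes one row and column of the Gram matrix: the new entries $\mathbf{X}_{\mathcal{A}}^\top\mathbf{X}_i$ cost $O(n|\mathcal{A}|)$, and the Schur-complement update of the inverse costs $O(|\mathcal{A}|^2)$. With $(\mathbf{X}_{\mathcal{A}}^\top\mathbf{X}_{\mathcal{A}})^{-1}$ in hand, $\bfmu$ follows from $\mathbf{X}_{\mathcal{A}}^\top\mathbf{y}$ (cost $O(n|\mathcal{A}|)$) times the inverse (cost $O(|\mathcal{A}|^2)$), $\bftheta$ from one further matrix-vector product (cost $O(|\mathcal{A}|^2)$), and $\bfeta_{\mathcal{A}}$ costs $O(n|\mathcal{A}|)$. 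Summing all contributions, and using $n|\mathcal{A}|\leq nd$, the per-iteration cost is $O(nd)+O(n|\mathcal{A}|+|\mathcal{A}|^2)=O(nd+|\mathcal{A}|^2)$, as claimed.
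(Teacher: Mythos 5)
Your proposal is correct and follows essentially the same route as the paper: precompute $\mathbf{y}-\mathbf{X}_{\mathcal{A}}\bfmu$ and $\mathbf{X}_{\mathcal{A}}\bftheta$ once so the joining-time sweep costs $O(n|\mathcal{I}|)$, and maintain the inverse incrementally across iterations via a rank-one update in $O(n|\mathcal{A}|+|\mathcal{A}|^2)$. The only cosmetic difference is that you update $(\mathbf{X}_{\mathcal{A}}^\top\mathbf{X}_{\mathcal{A}})^{-1}$ by a Schur-complement step (under a full-column-rank assumption), whereas the paper updates $\mathbf{X}_{\mathcal{A}}^+$ directly via the Sherman--Morrison formula for the Moore--Penrose inverse, while noting in passing that your Gram-matrix variant works as well.
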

\begin{proof}
    Let us start with the computation of $\mathbf{X}_{\mathcal{A}}^+$. Simply computing $\mathbf{X}_{\mathcal{A}}^+$ would require $O(n|\mathcal{A}|^2+|\mathcal{A}|^3)$ time. However, the time complexity can be reduced to $O(n|\mathcal{A}| + |\mathcal{A}|^2)$ by using the Sherman–Morrison formula applied to the Moore-Penrose inverse. More specifically, suppose we have previously computed $\mathbf{X}_{\mathcal{A}}^+$ and the index $i_{t+1}^{\rm join}$ is then added to the active set $\mathcal{A}$ to obtain the new active set $\mathcal{A}' = \mathcal{A}\cup\{i_{t+1}^{\rm join}\}$. Without lost of generality, write $\mathcal{A} = \{j_1,\dots,j_k\}$ and $\mathcal{A}' = \{j_1,\dots,j_k,i_{t+1}^{\rm join}\}$, where $k=|\mathcal{A}|$. Let $\mathbf{A} \in \mathbb{R}^{n\times (|\mathcal{A}| + 1)}$ be the matrix $\mathbf{X}_{\mathcal{A}}$ augmented with a zero column at the position 
    corresponding to $i_{t+1}^{\rm join}$, i.e., $\mathbf{A}_\ell = \mathbf{X}_{j_\ell}$ if $\ell\in [k]$, and $\mathbf{A}_\ell = \mathbf{0}$ if $\ell = k+1$. 
    The augmented matrix is written using a rank-$1$ update as
    \begin{align*}
    \mathbf{X}_{\mathcal{A}'} = \mathbf{A} + \mathbf{X}_{i^{\rm join}_{t+1}}\mathbf{e}_{k+1}^\top,
    \end{align*}
    where $\mathbf{e}_\ell\in\{0,1\}^{|\mathcal{A}|+1}$ is the column vector with $1$ in position $\ell$ and $0$ elsewhere. Therefore, by using the Sherman–Morrison formula applied to the Moore-Penrose inverse (see~\cite{meyer1973generalized}), the inverse $\mathbf{X}_{\mathcal{A}'}^+$ can be computed from the matrix $\mathbf{X}_{\mathcal{A}}^+$ and the vectors $\mathbf{X}_{i^{\rm join}_{t+1}}$ and $\mathbf{e}_{k+1}$ using simple matrix and vector multiplication, which requires $O(n|\mathcal{A}| + |\mathcal{A}|^2)$ time. Regarding the other quantities,
    \begin{itemize}
        \item Computing $\bfeta_{\mathcal{A}}$, $\bfmu$, and $\bftheta$ requires $O(n|\mathcal{A}| + |\mathcal{A}|^2)$ time (the updating argument can be used as well for $(\mathbf{X}_{\mathcal{A}}^\top \mathbf{X}_{\mathcal{A}})^+$);

        \item Finding $i_{t+1}^{\rm cross}$ and $\lambda_{t+1}^{\rm cross}$ requires $O(|\mathcal{A}|)$ time;

        \item Finding $i_{t+1}^{\rm join}$ and $\lambda_{t+1}^{\rm join}$ requires $O(n|\mathcal{I}|)$ time (the computations $\mathbf{y} -\mathbf{X}_{\mathcal{A}}\bfmu$ and $\mathbf{X}_{\mathcal{A}}\bftheta$ must be performed just once);

        \item Updating $\hat{\bfbeta}_{\mathcal{A}}(\lambda_{t+1})$ requires $O(|\mathcal{A}|)$ time.

    \end{itemize}
    In total, a single iteration requires $O(n|\mathcal{I}| + n|\mathcal{A}| + |\mathcal{A}|^2) = O(nd + |\mathcal{A}|^2)$ time. 
\end{proof}

Regarding the number of iterations taken by the LARS algorithm, heuristically, it is on average $O(n)$~\cite{rosset2007piecewise}. This can be understood as follows. If $n>d$, it would take $O(d)$ steps to add all the variables. If $n < d$, then we only add at most $n$ variables in the case of a unique solution, since $|\mathcal{A}| \leq \min\{n,d\}$. Dropping variables is rare as $\lambda$ is successfully decreased, usually $O(1)$ times. In the worst case, though, the number of iterations can be at most $(3^d+1)/2$ as previously mentioned.

\section{Quantum algorithms}\label{sect:quantum_algorithm}

In this section, we introduce our quantum algorithms based on the LARS algorithm. In \cref{sec:simple_quantum}, we propose our simple quantum LARS algorithm that exactly computes the pathwise Lasso, while in \cref{sec:approximate_quantum} we improve upon this simple algorithm and propose the approximate quantum LARS algorithm. In \cref{sec:approximate_classical}, we dequantise the approximate LARS algorithm using sampling techniques. Finally, in \cref{sec:examples}, we analyse the algorithms' complexity for the case when $\mathbf{X}$ is a random matrix from the $\mathbf{\Sigma}$-Gaussian ensemble.

\subsection{Simple quantum LARS algorithm}
\label{sec:simple_quantum}

As a warm-up, we propose a simple quantum LARS algorithm for the Lasso path by quantising the search step of $i_{t+1}^{\rm join}$ and $\lambda_{t+1}^{\rm join}$ using the quantum minimum-finding subroutine from D\"urr and H\o{}yer~\cite{durr1996quantum} (\cref{quantum_minimum_finding}). For this, we input the vectors $\mathbf{y} - \mathbf{X}_{\mathcal{A}}\bfmu$ and $\mathbf{X}_{\mathcal{A}}\bftheta$ into KP-tree data structures accessed by QRAMs. Moreover, as described in \cref{sec:computational_model}, we assume the matrix $\mathbf{X}$ is stored in a QROM. The final result is an improvement in the runtime over the usual LARS algorithm to $\widetilde{O}(n\sqrt{\smash[b]{|\mathcal{I}|}} + n|\mathcal{A}| + |\mathcal{A}|^2)$ per iteration.

\begin{algorithm}[ht]
\caption{Simple quantum LARS algorithm for the pathwise Lasso}
\DontPrintSemicolon
\label{alg:warmup_classical_quantum}

\KwIn{$T\in\mathbb{N}$, $\delta\in(0,1)$, $\mathbf{y}\in\mathbb{R}^n$, and $\mathbf{X}\in\mathbb{R}^{n\times d}$}

\KwOut{Exact regularisation path $\mathcal{P}$ with $T$ kinks with probability $\geq 1-\delta$}

Initialise $\mathcal{A}=\operatorname{argmax}_{j\in[d]} |\mathbf{X}_j^\top \mathbf{y}|$, $\mathcal{I}=[d]\setminus\mathcal{A}$, $\lambda_0 = \|\mathbf{X}^\top\mathbf{y}\|_\infty$, $\hat{\bfbeta}(\lambda_0) = \mathbf{0}$, $t=0$

    
   \While{$\mathcal{I}\neq \varnothing$, $t \leq T$}
   {
        $\bfeta_{\mathcal{A}} \gets \sign\big(\mathbf{X}_{\mathcal{A}}^\top(\mathbf{y} - \mathbf{X}_{\mathcal{A}}\hat{\bfbeta}_{\mathcal{A}}(\lambda_{t}))\big)$

        $\bfmu \gets \mathbf{X}_{\mathcal{A}}^+ \mathbf{y}$ \tcp*{$\bfmu \in \mathbb{R}^{|\mathcal{A}|}$}
        
        $\bftheta \gets (\mathbf{X}_{\mathcal{A}}^\top \mathbf{X}_{\mathcal{A}})^+ \bfeta_{\mathcal{A}}$ \tcp*{$\bftheta \in \mathbb{R}^{|\mathcal{A}|}$}

        Compute $\mathbf{y} - \mathbf{X}_{\mathcal{A}}\bfmu$ and $\mathbf{X}_{\mathcal{A}}\bftheta$ classically and input them into QRAMs

        Define $\Lambda_i^{\rm join} \triangleq \frac{\mathbf{X}_i^\top (\mathbf{y}-\mathbf{X}_{\mathcal{A}}\bfmu)}{\pm 1 - \mathbf{X}_i^\top \mathbf{X}_{\mathcal{A}}\bftheta}$ and $\Lambda_i^{\rm cross} \triangleq \frac{\mu_i}{\theta_i} \cdot \mathbf{1}\left[\frac{\mu_i}{\theta_i} \leq \lambda_t\right]$
   
        $i_{t+1}^{\rm cross}\gets \operatorname{argmax}_{i\in \mathcal{A}} \{\Lambda_i^{\rm cross}\}$ and $\lambda_{t+1}^{\rm cross}\gets \max_{i\in \mathcal{A}} \{\Lambda_i^{\rm cross}\}$
        
        $i_{t+1}^{\rm join} \gets \operatorname{argmax}_{i\in \mathcal{I}}  \{\Lambda_i^{\rm join} \}$ and $\lambda_{t+1}^{\rm join} \gets \max_{i\in \mathcal{I}}  \{\Lambda_i^{\rm join} \}$ with failure probability $\frac{\delta}{T}$ (\cref{quantum_minimum_finding})

        $\lambda_{t+1} \gets \max\{\lambda_{t+1}^{\rm join}, \lambda_{t+1}^{\rm cross} \}$\;
        
        $\hat{\bfbeta}_{\mathcal{A}}(\lambda_{t+1}) \gets  \bfmu - \lambda_{t+1}\bftheta$\;
        
        \If{$\lambda_{t+1} = \lambda_{t+1}^{\rm cross}$}
        {        

            Move $i_{t+1}^{\rm cross}$ from $\mathcal{A}$ to $\mathcal{I}$
            
        }
        \Else
        {

            Move $i_{t+1}^{\rm join}$ from $\mathcal{I}$ to $\mathcal{A}$
        }

        $t\gets t+1$
    
  }
\Return coefficients $[(\lambda_0,\hat{\bfbeta}(\lambda_0)),(\lambda_1,\hat{\bfbeta}(\lambda_1)),\dots]$
\end{algorithm}

\begin{theorem}\label{thr:simple_quantum}
    Let $\mathbf{X}\in\mathbb{R}^{n\times d}$ and $\mathbf{y}\in\mathbb{R}^n$. Assume $\mathbf{X}$ is stored in a {\rm QROM} and we have access to {\rm QRAMs} of size $O(n)$. Let $\delta\in(0,1)$ and $T\in\mathbb{N}$. The simple quantum {\rm LARS} {\rm \cref{alg:warmup_classical_quantum}} outputs, with probability at least $1-\delta$, at most $T$ kinks of the optimal regularisation path in time 
    \begin{align*}
        O\big(n\sqrt{|\mathcal{I}|}\log(T/\delta)\poly\log(nd) + n|\mathcal{A}| + |\mathcal{A}|^2\big) 
    \end{align*}
    per iteration, where $\mathcal{A}$ and $\mathcal{I}$ are the active and inactive sets of the corresponding iteration.
\end{theorem}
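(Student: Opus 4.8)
The plan is to establish two claims: first, that conditioned on all quantum subroutine calls succeeding, \cref{alg:warmup_classical_quantum} is functionally identical to the classical \cref{alg:classicalLasso} and hence returns the exact optimal regularisation path; and second, that each iteration runs within the stated time. The only algorithmic difference between the two is how the pair $(\lambda_{t+1}^{\rm join}, i_{t+1}^{\rm join})$ is obtained --- by exhaustive search classically versus by quantum minimum finding here --- so the correctness argument reduces to controlling the failure probability of the quantum step. I would set the failure probability of each invocation of \cref{quantum_minimum_finding} to $\delta/T$. Since the loop performs at most $T$ iterations (it terminates once $t > T$), a union bound gives total failure probability at most $\delta$. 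Conditioned on this event, every quantum minimum-finding call returns the \emph{exact} maximiser of $\{\Lambda_i^{\rm join}\}_{i\in\mathcal{I}}$, so all computed quantities coincide exactly with those of classical LARS and the output is the exact path.

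For the runtime, the first task is to exhibit quantum access to the vector $(\Lambda_i^{\rm join})_{i\in\mathcal{I}}$ so that \cref{quantum_minimum_finding} applies. The key observation is that, after the two length-$n$ vectors $\mathbf{y} - \mathbf{X}_{\mathcal{A}}\bfmu$ and $\mathbf{X}_{\mathcal{A}}\bftheta$ have been computed classically and written into QRAMs, each joining time
\[
    \Lambda_i^{\rm join} = \frac{\mathbf{X}_i^\top(\mathbf{y} - \mathbf{X}_{\mathcal{A}}\bfmu)}{\pm 1 - \mathbf{X}_i^\top(\mathbf{X}_{\mathcal{A}}\bftheta)}
\]
is determined by two $n$-dimensional inner products. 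Given $i$, a coherent subroutine computes $\mathbf{X}_i^\top \mathbf{v}$ for either QRAM-stored $\mathbf{v}\in\mathbb{R}^n$ by summing the $n$ products $X_{ji}v_j$, querying $X_{ji}$ from the QROM and $v_j$ from the QRAM; this is exact and costs $O(n\poly\log(nd))$ time. The two candidate signs $\pm 1$ in the denominator are handled by selecting the unique one yielding a value in $[0,\lambda_t]$, a constant-overhead coherent comparison. We thus obtain a unitary $|i\rangle|\bar 0\rangle \mapsto |i\rangle|\Lambda_i^{\rm join}\rangle$ realisable in time $O(n\poly\log(nd))$, i.e., quantum access to $(\Lambda_i^{\rm join})_{i\in\mathcal{I}}$ at $O(n\poly\log(nd))$ per query.

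Feeding this into \cref{quantum_minimum_finding} (applied to $-\Lambda_i^{\rm join}$ to convert the maximum into a minimum) with failure probability $\delta/T$ uses $\widetilde{O}(\sqrt{|\mathcal{I}|}\log(T/\delta))$ queries, for total cost $O(n\sqrt{|\mathcal{I}|}\log(T/\delta)\poly\log(nd))$ for the joining step. The remaining per-iteration work mirrors the classical analysis: computing $\bfeta_{\mathcal{A}}$, $\bfmu$, and $\bftheta$ via the Sherman--Morrison rank-one update of $\mathbf{X}_{\mathcal{A}}^+$ costs $O(n|\mathcal{A}| + |\mathcal{A}|^2)$; classically forming $\mathbf{y} - \mathbf{X}_{\mathcal{A}}\bfmu$ and $\mathbf{X}_{\mathcal{A}}\bftheta$ and writing them into size-$O(n)$ QRAMs costs $O(n|\mathcal{A}| + n)$; and the crossing search and the update of $\hat{\bfbeta}_{\mathcal{A}}(\lambda_{t+1})$ each cost $O(|\mathcal{A}|)$. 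Summing yields the claimed $O(n\sqrt{|\mathcal{I}|}\log(T/\delta)\poly\log(nd) + n|\mathcal{A}| + |\mathcal{A}|^2)$ per iteration.

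The main obstacle is the oracle construction and the associated bookkeeping rather than any deep difficulty: one must verify that precomputing the two $\mathbb{R}^n$ vectors once per iteration genuinely reduces the per-query cost from $O(n|\mathcal{A}|)$ --- the naive cost of evaluating $\mathbf{X}_i^\top \mathbf{X}_{\mathcal{A}}\bftheta$ with $\bftheta\in\mathbb{R}^{|\mathcal{A}|}$ --- down to $O(n)$. This is precisely what makes the size-$O(n)$ QRAM assumption sufficient and what produces the $n\sqrt{|\mathcal{I}|}$ scaling rather than $n|\mathcal{A}|\sqrt{|\mathcal{I}|}$. I would also confirm that the sign selection in the denominator and any degenerate cases (such as a vanishing denominator) are handled so that the oracle is well defined on every index in $\mathcal{I}$.
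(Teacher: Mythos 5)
Your proposal is correct and follows essentially the same route as the paper's proof: precompute $\mathbf{y}-\mathbf{X}_{\mathcal{A}}\bfmu$ and $\mathbf{X}_{\mathcal{A}}\bftheta$ classically, store them in QRAM so that each coherent evaluation of $\Lambda_i^{\rm join}$ costs $O(n\poly\log(nd))$, run D\"urr--H\o{}yer with per-iteration failure probability $\delta/T$, and charge the remaining steps as in the classical analysis. Your write-up is in fact more explicit than the paper's (which compresses the correctness-via-union-bound and oracle-construction details into one sentence), but no new ideas are introduced and nothing is missing.
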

\begin{proof}
    The time complexity of \cref{alg:warmup_classical_quantum} is basically the same as \cref{alg:classicalLasso}, the only difference being that uploading $\mathbf{y} - \mathbf{X}_{\mathcal{A}}\bfmu$ and $\mathbf{X}_{\mathcal{A}}\bftheta$ into KP-trees takes $O(n\log{n})$ time, and finding $i_{t+1}^{\rm join}$ and $\lambda_{t+1}^{\rm join}$ now requires $O\big(n\sqrt{|\mathcal{I}|}\log(T/\delta)\poly\log(nd)\big)$ time, since that accessing $\mathbf{X}_i$ and computing $\Lambda_i^{\rm join}(\mathcal A) = \frac{\mathbf{X}_i^\top (\mathbf{y}-\mathbf{X}_{\mathcal{A}}\bfmu)}{\pm 1 - \mathbf{X}_i^\top \mathbf{X}_{\mathcal{A}}\bftheta} $ requires $O(n\poly\log(nd))$ and $O(n)$ time, respectively. 
\end{proof}

\subsection{Approximate quantum LARS algorithm}
\label{sec:approximate_quantum}

We now improve our previous simple quantum LARS algorithm. The main idea is to \emph{approximately} compute the joining times $\Lambda_i^{\rm join}(\mathcal A)$ (see \cref{eqSolution}) within the quantum minimum-finding algorithm instead of exactly computing them. This will lead to a quadratic improvement on the $n$ dependence in the term $\widetilde{O}(n\sqrt{\smash[b]{|\mathcal{I}|}})$ to $\widetilde{O}(\sqrt{\smash[b]{n|\mathcal{I}|}})$. Such approximation, however, hinders our ability to exactly find the joining variables $i_{t+1}^{\rm join}$ and points $\lambda_{t+1}^{\rm join}$. We can now only obtain a joining point $\widetilde{\lambda}_{t+1}^{\rm join}$ which is $\epsilon$-close to the true joining point $\lambda_{t+1}^{\rm join}$. This imposes new complications on the correctness analysis of the LARS algorithm, since we can no longer guarantee that the KKT conditions are satisfied. In order to tackle this issue, we consider an approximate version of the KKT conditions. We note that a similar concept was already introduced in~\cite{mairal2012complexity}.

\begin{definition}\label{def:appr_kkt}
    Let $\epsilon\geq 0$. A vector $\widetilde{\bfbeta}\in\mathbb{R}^d$ satisfies the $\mathsf{KKT}_{\lambda}(\epsilon)$ condition if and only if, $\forall j\in[d]$,
    \begin{align*}
        \lambda(1 - \epsilon) \leq \mathbf{X}_j^\top(\mathbf{y} - \mathbf{X}\widetilde{\bfbeta})\sign(\widetilde{\beta}_j) \leq \lambda &\quad~\text{if}~\widetilde{\beta}_j \neq 0,\\
        |\mathbf{X}_j^\top(\mathbf{y} - \mathbf{X}\widetilde{\bfbeta})| \leq \lambda &\quad~\text{if}~\widetilde{\beta}_j = 0.
    \end{align*}
\end{definition}
The reason for introducing the above approximate version of the KKT conditions is that it leads to approximate Lasso solutions, as proven in the next lemma.
\begin{lemma}\label{lem:kkt_error}
    If a vector $\widetilde{\bfbeta}\in\mathbb{R}^d$ satisfies the $\mathsf{KKT}_{\lambda}(\epsilon)$ condition for $\epsilon\geq 0$, then $\widetilde{\bfbeta}$ minimises the Lasso cost function $\mathcal{L}^{(\lambda)}(\bfbeta)$ up to an error $\lambda\epsilon\|\widetilde{\bfbeta}\|_1$, i.e.,
    \begin{align*}
        \frac{1}{2}\|\mathbf{y} - \mathbf{X}\widetilde{\bfbeta}\|_2^2 + \lambda \|\widetilde{\bfbeta}\|_1 - \left(\operatorname*{min}_{\bfbeta\in\mathbb{R}^d} \frac{1}{2}\|\mathbf{y} - \mathbf{X}\bfbeta\|_2^2 + \lambda \|\bfbeta\|_1\right) \leq \lambda\epsilon  \|\widetilde{\bfbeta}\|_1.
    \end{align*}
\end{lemma}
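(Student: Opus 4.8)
The plan is to use convex duality for the Lasso problem. The Lasso cost function $\frac{1}{2}\|\mathbf{y} - \mathbf{X}\bfbeta\|_2^2 + \lambda \|\bfbeta\|_1$ is convex, so I would bound the duality gap between the primal value at $\widetilde{\bfbeta}$ and the dual optimal value, which lower-bounds the primal minimum. The natural dual variable here is the residual $\mathbf{r} = \mathbf{y} - \mathbf{X}\widetilde{\bfbeta}$. I would verify that the Fenchel dual of the Lasso is
\begin{align*}
    \max_{\mathbf{z}\in\mathbb{R}^n} \left( \frac{1}{2}\|\mathbf{y}\|_2^2 - \frac{1}{2}\|\mathbf{y} - \mathbf{z}\|_2^2 \right) \quad\text{subject to}~ \|\mathbf{X}^\top \mathbf{z}\|_\infty \leq \lambda,
\end{align*}
so that any dual-feasible $\mathbf{z}$ gives a valid lower bound on the primal minimum.

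The key step is choosing a feasible dual point. The $\mathsf{KKT}_{\lambda}(\epsilon)$ condition gives $|\mathbf{X}_j^\top \mathbf{r}| \leq \lambda$ for every $j\in[d]$ (this holds in both cases of \cref{def:appr_kkt}), so the residual $\mathbf{r}$ itself is already dual-feasible. Plugging $\mathbf{z} = \mathbf{r}$ into the dual gives the lower bound $\frac{1}{2}\|\mathbf{y}\|_2^2 - \frac{1}{2}\|\mathbf{y} - \mathbf{r}\|_2^2 = \frac{1}{2}\|\mathbf{y}\|_2^2 - \frac{1}{2}\|\mathbf{X}\widetilde{\bfbeta}\|_2^2$ on $\min_{\bfbeta} (\cdots)$. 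Subtracting this from the primal value at $\widetilde{\bfbeta}$ and expanding $\frac{1}{2}\|\mathbf{y} - \mathbf{X}\widetilde{\bfbeta}\|_2^2$, the quadratic terms should telescope, leaving the duality gap in the clean form
\begin{align*}
    \lambda\|\widetilde{\bfbeta}\|_1 - (\mathbf{X}\widetilde{\bfbeta})^\top \mathbf{r} = \lambda\|\widetilde{\bfbeta}\|_1 - \widetilde{\bfbeta}^\top \mathbf{X}^\top(\mathbf{y} - \mathbf{X}\widetilde{\bfbeta}).
\end{align*}

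The final step is to bound this gap using the first (nonzero-coordinate) case of the $\mathsf{KKT}_{\lambda}(\epsilon)$ condition. Writing $\widetilde{\bfbeta}^\top \mathbf{X}^\top \mathbf{r} = \sum_{j:\widetilde{\beta}_j\neq 0} \widetilde{\beta}_j (\mathbf{X}_j^\top \mathbf{r})$ and using $\widetilde{\beta}_j = |\widetilde{\beta}_j|\sign(\widetilde{\beta}_j)$ together with the lower bound $\mathbf{X}_j^\top \mathbf{r}\,\sign(\widetilde{\beta}_j) \geq \lambda(1-\epsilon)$, I get $\widetilde{\bfbeta}^\top \mathbf{X}^\top \mathbf{r} \geq \lambda(1-\epsilon)\sum_j |\widetilde{\beta}_j| = \lambda(1-\epsilon)\|\widetilde{\bfbeta}\|_1$. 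Hence the gap is at most $\lambda\|\widetilde{\bfbeta}\|_1 - \lambda(1-\epsilon)\|\widetilde{\bfbeta}\|_1 = \lambda\epsilon\|\widetilde{\bfbeta}\|_1$, which is exactly the claimed bound. The main obstacle, if any, is getting the Fenchel dual and the sign bookkeeping exactly right; the algebra after that is essentially forced. One subtlety I would double-check is that the inner product identity $\widetilde{\bfbeta}^\top\mathbf{X}^\top(\mathbf{y}-\mathbf{X}\widetilde{\bfbeta}) = (\mathbf{X}\widetilde{\bfbeta})^\top\mathbf{r}$ matches the telescoped quadratic terms, so that no stray factor of $\frac{1}{2}$ survives in the gap expression.
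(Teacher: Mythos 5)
Your proposal is correct and follows essentially the same route as the paper: weak duality for the Lasso, with the residual $\mathbf{y}-\mathbf{X}\widetilde{\bfbeta}$ (the paper uses its negation $\widetilde{\bfkappa}=\mathbf{X}\widetilde{\bfbeta}-\mathbf{y}$, an equivalent sign convention) as the dual-feasible point, and the nonzero-coordinate part of the $\mathsf{KKT}_{\lambda}(\epsilon)$ condition to bound the resulting gap by $\lambda\epsilon\|\widetilde{\bfbeta}\|_1$. The algebra you outline (the quadratic terms cancelling to leave $\lambda\|\widetilde{\bfbeta}\|_1-\widetilde{\bfbeta}^\top\mathbf{X}^\top(\mathbf{y}-\mathbf{X}\widetilde{\bfbeta})$) checks out and matches the paper's computation.
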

\begin{proof}
    The primal problem  $\min_{\bfbeta\in\mathbb{R}^d, {\bf z}\in \mathbb R^n} \frac{1}{2}\|\mathbf{y} - {\bf z} \|_2^2 + \lambda \|\bfbeta\|_1 ~\text{s.t.}~{\bf z}=\mathbf{X} \bfbeta$, with dummy variable $\bf z$, leads to the dual problem
    \begin{align*}
        \max_{\bfkappa\in\mathbb{R}^n} - \frac{1}{2}\bfkappa^\top\bfkappa - \bfkappa^\top \mathbf{y} \quad\text{s.t.}~\|\mathbf{X}^\top\bfkappa\|_\infty \leq \lambda.
    \end{align*}
    It is known that, given a pair of feasible primal and dual variables $(\widetilde{\bfbeta}, \widetilde{\bfkappa})$, the difference
    \begin{align*}
        \delta_\lambda(\widetilde{\bfbeta}, \widetilde{\bfkappa}) \triangleq \left(\frac{1}{2}\|\mathbf{y} - \mathbf{X}\widetilde{\bfbeta}\|_2^2 + \lambda \|\widetilde{\bfbeta}\|_1\right) - \left(- \frac{1}{2}\widetilde{\bfkappa}^\top\widetilde{\bfkappa} - \widetilde{\bfkappa}^\top \mathbf{y}\right)
    \end{align*}
    is called a duality gap and provides an optimality guarantee (see, e.g.,~\cite[Section~4.3]{borwein2006convex}):
    \begin{align*}
        0 \leq \left(\frac{1}{2}\|\mathbf{y} - \mathbf{X}\widetilde{\bfbeta}\|_2^2 + \lambda \|\widetilde{\bfbeta}\|_1\right) - \left(\min_{\bfbeta\in\mathbb{R}^d} \frac{1}{2}\|\mathbf{y} - \mathbf{X}\bfbeta\|_2^2 + \lambda \|\bfbeta\|_1\right) \leq \delta_\lambda(\widetilde{\bfbeta},\widetilde{\bfkappa}).
    \end{align*}
    The vector $\widetilde{\bfkappa} = \mathbf{X}\widetilde{\bfbeta} - \mathbf{y}$ is feasible for the dual problem since $\widetilde{\bfbeta}$ satisfies the $\mathsf{KKT}_\lambda(\epsilon)$ condition ($\|\mathbf{X}^\top\widetilde{\bfkappa}\|_\infty = \|\mathbf{X}^\top(\mathbf{y} - \mathbf{X}\widetilde{\bfbeta})\|_\infty \leq \lambda$). We can thus compute the duality gap,
    \begin{align*}
        \delta_\lambda(\widetilde{\bfbeta}, \widetilde{\bfkappa}) &= \left(\frac{1}{2}\|\mathbf{y} - \mathbf{X}\widetilde{\bfbeta}\|_2^2 + \lambda \|\widetilde{\bfbeta}\|_1\right) - \left(- \frac{1}{2}\widetilde{\bfkappa}^\top\widetilde{\bfkappa} - \widetilde{\bfkappa}^\top \mathbf{y}\right)\\
        &= \frac{1}{2}\widetilde{\bfkappa}^\top \widetilde{\bfkappa} + \lambda\|\widetilde{\bfbeta}\|_1 + \frac{1}{2}\widetilde{\bfkappa}^\top\widetilde{\bfkappa} + \widetilde{\bfkappa}^\top \mathbf{y}\\
        &= \lambda\|\widetilde{\bfbeta}\|_1 + (\widetilde{\bfkappa} + \mathbf{y})^\top \widetilde{\bfkappa},
    \end{align*}
    but
    \begin{align*}
        (\widetilde{\bfkappa} + \mathbf{y})^\top \widetilde{\bfkappa} = \widetilde{\bfbeta}^\top \mathbf{X}^\top (\mathbf{X}\widetilde{\bfbeta} - \mathbf{y}) = \sum_{i=1}^d |\widetilde{\beta}_i| \mathbf{X}_i^\top (\mathbf{X}\widetilde{\bfbeta} - \mathbf{y})\operatorname{sign}(\widetilde{\beta}_i) \leq -\lambda(1-\epsilon)\|\widetilde{\bfbeta}\|_1,
    \end{align*}
    using that $\widetilde{\bfbeta}$ satisfies the $\mathsf{KKT}_\lambda(\epsilon)$ condition. Therefore $\delta_\lambda(\widetilde{\bfbeta}, \widetilde{\bfkappa}) \leq \lambda\epsilon \|\widetilde{\bfbeta}\|_1$. 
\end{proof}

\begin{remark}\label{remark1}
    Even though the error $\lambda\epsilon\|\widetilde{\bfbeta}\|_1$ might look unnatural, it can be relaxed to a relative error by observing that it implies
    \begin{align*}
        \mathcal{L}^{(\lambda)}(\widetilde{\bfbeta}) - \min_{\bfbeta\in\mathbb{R}^d}\mathcal{L}^{(\lambda)}(\bfbeta) \leq \epsilon\lambda\|\widetilde{\bfbeta}\|_1 \leq \epsilon \mathcal{L}^{(\lambda)}(\widetilde{\bfbeta}) &\iff (1-\epsilon)\big(\mathcal{L}^{(\lambda)}(\widetilde{\bfbeta}) - \min_{\bfbeta\in\mathbb{R}^d}\mathcal{L}^{(\lambda)}(\bfbeta)\big) \leq \epsilon\min_{\bfbeta\in\mathbb{R}^d} \mathcal{L}^{(\lambda)}(\bfbeta) \\
        &\iff \mathcal{L}^{(\lambda)}(\widetilde{\bfbeta}) - \min_{\bfbeta\in\mathbb{R}^d}\mathcal{L}^{(\lambda)}(\bfbeta) \leq \frac{\epsilon}{1-\epsilon}\min_{\bfbeta\in\mathbb{R}^d} \mathcal{L}^{(\lambda)}(\bfbeta).
    \end{align*}
\end{remark}

We now present our approximate quantum LARS \cref{alg:classical_quantum} for the pathwise Lasso. Its main idea is quite simple: we improve the search of the joining variable $i_{t+1}^{\rm join}$ and time $\lambda_{t+1}^{\rm join}$ over $i\in\mathcal{I}$ by approximately computing the joining times $\Lambda_i^{\rm join}(\mathcal A) = \frac{\mathbf{X}_i^\top (\mathbf{y}-\mathbf{X}_{\mathcal{A}}\bfmu)}{\pm 1 - \mathbf{X}_i^\top \mathbf{X}_{\mathcal{A}}\bftheta}$ and using the approximate quantum minimum-finding subroutine from Chen and de Wolf~\cite{chen2021quantum} (\cref{fact:minimum_finding}). More specifically, the joining variable $\widetilde{i}_{t+1}^{\rm join}$ is obtained to be any variable $j\in\mathcal{I}$ that maximises $\Lambda_j^{\rm join}(\mathcal A)$ up to some small relative error $\epsilon$, i.e., at any iteration, \cref{alg:classical_quantum} randomly samples from the set
\begin{align}\label{eq:joining_time}
    \left\{j\in\mathcal{I}:\Lambda_j^{\rm join}(\mathcal A) \geq \Big(1-\frac{\epsilon}{1+\alpha_{\mathcal{A}}}\Big)\max_{i\in \mathcal{I}}  \{\Lambda_i^{\rm join}(\mathcal A) \}\right\},
\end{align}
where $\alpha_{\mathcal{A}} > 0$ is such that $\|\mathbf{X}_{\mathcal{A}}^+\mathbf{X}_{\mathcal{A}^c}\|_1 \leq \alpha_{\mathcal{A}}$. Since the corresponding joining time $\Lambda^{\rm join}_{\widetilde{i}_{t+1}^{\rm join}}(\mathcal A)$ can be smaller than the true joining time $\lambda_{t+1}^{\rm join} = \max_{i\in\mathcal{I}}\{\Lambda_i^{\rm join}(\mathcal A) \}$, we rescale it by a factor $\big(1-\frac{\epsilon}{1+\alpha_{\mathcal{A}}}\big)^{-1}$ and set $\widetilde{\lambda}_{t+1}^{\rm join} = \big(1-\frac{\epsilon}{1+\alpha_{\mathcal{A}}}\big)^{-1}\Lambda^{\rm join}_{\widetilde{i}_{t+1}^{\rm join}}(\mathcal A)$. This ensures that $\widetilde{\lambda}_{t+1}^{\rm join} \geq \lambda_{t+1}^{\rm join}$ and consequently that the current iteration stops before $\lambda_{t+1}^{\rm join}$, 
which guarantees that the approximate KKT conditions are satisfied as shown in \cref{thr:correctness} below.

\begin{algorithm}[ht]
\caption{Approximate quantum LARS algorithm for the pathwise Lasso}
\DontPrintSemicolon
\label{alg:classical_quantum}

\KwIn{$T\in\mathbb{N}$, $\delta\in(0,1)$, $\epsilon>0$, $\mathbf{y}\in\mathbb{R}^n$, $\mathbf{X}\in\mathbb{R}^{n\times d}$, and $\{\alpha_{\mathcal{A}} \geq \|\mathbf{X}_{\mathcal{A}}^+\mathbf{X}_{\mathcal{A}^c}\|_1\}_{\mathcal{A}\subseteq[d]:|\mathcal{A}|\leq T}$}

\KwOut{Regularisation path $\widetilde{\mathcal{P}}$ with $T$ kinks and error $\lambda \epsilon \|\widetilde{\bfbeta}(\lambda)\|_1$ with probability $\geq 1-\delta$}

Initialise $\mathcal{A}=\operatorname{argmax}_{j\in[d]} |\mathbf{X}_j^\top \mathbf{y}|$, $\mathcal{I}=[d]\setminus\mathcal{A}$, $\lambda_0 = \|\mathbf{X}^\top\mathbf{y}\|_\infty$, $\widetilde{\bfbeta}(\lambda_0) = 0$, $t=0$

    
   \While{$\mathcal{I}\neq \varnothing$, $t \leq T$}
   {
        $\bfeta_{\mathcal{A}} \gets \frac{1}{\lambda_{t}}\mathbf{X}_{\mathcal{A}}^\top(\mathbf{y} - \mathbf{X}_{\mathcal{A}}\widetilde{\bfbeta}_{\mathcal{A}}(\lambda_{t}))$

        $\bfmu \gets \mathbf{X}_{\mathcal{A}}^+ \mathbf{y}$ \tcp*{$\bfmu \in \mathbb{R}^{|\mathcal{A}|}$}
        
        $\bftheta \gets (\mathbf{X}_{\mathcal{A}}^\top \mathbf{X}_{\mathcal{A}})^+ \bfeta_{\mathcal{A}}$ \tcp*{$\bftheta \in \mathbb{R}^{|\mathcal{A}|}$}

        Classically compute $\mathbf{y} - \mathbf{X}_{\mathcal{A}}\bfmu$ and $\mathbf{X}_{\mathcal{A}}\bftheta$ and input them into QRAMs and KP-trees

        Define $\Lambda_i^{\rm join} \triangleq \frac{\mathbf{X}_i^\top (\mathbf{y}-\mathbf{X}_{\mathcal{A}}\bfmu)}{\pm 1 - \mathbf{X}_i^\top \mathbf{X}_{\mathcal{A}}\bftheta}$ and $\Lambda_i^{\rm cross} \triangleq \frac{\mu_i}{\theta_i} \cdot \mathbf{1}\left[\frac{\mu_i}{\theta_i} \leq \lambda_t\right]$ 
   
        $i_{t+1}^{\rm cross}\gets \operatorname{argmax}_{i\in \mathcal{A}} \{\Lambda_i^{\rm cross}\}$ and $\lambda_{t+1}^{\rm cross}\gets \max_{i\in \mathcal{A}}\{\Lambda_i^{\rm cross}\}$
        
        Obtain $\widetilde{i}_{t+1}^{\rm join} \in \big\{j\in\mathcal{I}: \Lambda_j^{\rm join} \geq \big(1-\frac{\epsilon}{1+\alpha_{\mathcal{A}}}\big)\max_{i\in \mathcal{I}}  \{\Lambda_i^{\rm join} \} \big\}$ with failure probability $\frac{\delta}{T}$ (\cref{fact:minimum_finding,lem:terms_estimation})
        
        $\widetilde{\lambda}_{t+1}^{\rm join} \gets  \big(1-\frac{\epsilon}{1+\alpha_{\mathcal{A}}}\big)^{-1}\Lambda_{\widetilde{i}_{t+1}^{\rm join}}^{\rm join}$\;
        
        $\lambda_{t+1} \gets \min\{\lambda_t,\max\{\lambda_{t+1}^{\rm cross}, \widetilde{\lambda}_{t+1}^{\rm join}\}\}$\;
        
        $\widetilde{\bfbeta}_{\mathcal{A}}(\lambda_{t+1}) \gets  \bfmu - \lambda_{t+1}\bftheta$\;
        
        \If{$\lambda_{t+1} = \lambda_{t+1}^{\rm cross}$}
        {        
            Move $i_{t+1}^{\rm cross}$ from $\mathcal{A}$ to $\mathcal{I}$
            
        }
        \Else
        {
            Move $\widetilde{i}_{t+1}^{\rm join}$ from $\mathcal{I}$ to $\mathcal{A}$
        }
        
        $t\gets t+1$
    
  }
\Return coefficients $[(\lambda_0,\widetilde{\bfbeta}(\lambda_0)),(\lambda_1,\widetilde{\bfbeta}(\lambda_1)),\dots]$
\end{algorithm}

We notice that, since the joining variable is randomly sampled from the set in \cref{eq:joining_time}, it might be possible that $\widetilde{\lambda}_{t+2}^{\rm join} > \widetilde{\lambda}_{t+1}^{\rm join}$, i.e., the joining time at some later iteration is greater than the joining time at some earlier iteration. While this is counter to the original LARS algorithm, where the regularisation parameter $\lambda$ always decreases, there is in principle no problem in allowing $\lambda$ to increase within a small interval as long as the approximate KKT conditions are satisfied. Nonetheless, in order to avoid redundant segments in the Lasso path, we set the next iteration's regularisation parameter $\lambda_{t+1}$ as the minimum between the current $\lambda_t$ and $\max\{\lambda_{t+1}^{\rm cross}, \widetilde{\lambda}_{t+1}^{\rm join}\}$. The solution path can thus stay ``stationary'' for a few iterations. Ideally, one could just move all the variables from the set in \cref{eq:joining_time} into $\mathcal{A}$ at once and the result would be the same. This is reminiscent to the approximate homotopy algorithm of Mairal and Yu~\cite{mairal2012complexity}.

We show next that the imprecision in finding $\lambda_{t+1}^{\rm join}$ leads to a path that satisfies the approximate KKT conditions from \cref{def:appr_kkt} and thus approximates the Lasso function up to a small error.
\begin{theorem}\label{thr:correctness}
    Let $T\in\mathbb{N}$, $\epsilon\geq 0$, $\mathbf{X}\in\mathbb{R}^{n\times d}$, and $\mathbf{y}\in\mathbb{R}^n$. For $\mathcal{A}\subseteq[d]$ of size $|\mathcal{A}|\leq T$, let $\alpha_{\mathcal{A}} > 0$ be such that $\|\mathbf{X}_{\mathcal{A}}^+\mathbf{X}_{\mathcal{A}^c}\|_1 \leq \alpha_{\mathcal{A}}$. Consider an approximate {\rm LARS} algorithm (e.g., {\rm \cref{alg:classical_quantum}}) that returns a solution path $\widetilde{\mathcal{P}} = \{\widetilde{\bfbeta}(\lambda)\in\mathbb{R}^d: \lambda > 0\}$ with at most $T$ kinks and wherein, at each iteration~$t$, the joining variable $\widetilde{i}_{t+1}^{\rm join}$ is taken from the set
    \begin{align*}
        \left\{j\in\mathcal{I}:  \Lambda_j^{\rm join}(\mathcal A) \geq \Big(1-\frac{\epsilon}{1+\alpha_{\mathcal{A}}}\Big)\max_{i\in \mathcal{I}}  \{\Lambda_i^{\rm join}(\mathcal A) \}\right\}
    \end{align*}
    and the corresponding joining time $\widetilde{\lambda}_{t+1}^{\rm join}$ is
    \begin{align*}
        \widetilde{\lambda}_{t+1}^{\rm join} = \Big(1-\frac{\epsilon}{1+\alpha_{\mathcal{A}}}\Big)^{-1}\Lambda^{\rm join}_{\widetilde{i}_{t+1}^{\rm join}}(\mathcal A),
    \end{align*}
    where $\mathcal{A}$ and $\mathcal{I}$ are the active and inactive sets at the corresponding iteration $t$ and
    \begin{align*}
        \Lambda_j^{\rm join}(\mathcal A) = \frac{\mathbf{X}_j^\top (\mathbf{y}-\mathbf{X}_{\mathcal{A}}\bfmu)}{\pm 1 - \mathbf{X}_j^\top \mathbf{X}_{\mathcal{A}}\bftheta},\quad \bfmu = \mathbf{X}_{\mathcal{A}}^+ \mathbf{y}, \quad \bftheta = (\mathbf{X}_{\mathcal{A}}^\top \mathbf{X}_{\mathcal{A}})^+ \bfeta_{\mathcal{A}}, \quad \bfeta_{\mathcal{A}} = \frac{1}{\lambda_{t}}\mathbf{X}_{\mathcal{A}}^\top(\mathbf{y} - \mathbf{X}_{\mathcal{A}}\widetilde{\bfbeta}_{\mathcal{A}}(\lambda_{t})).
    \end{align*}
    Then $\widetilde{\mathcal{P}}$ is a continuous and piecewise linear approximate regularisation path with error $\lambda\epsilon\|\widetilde{\bfbeta}(\lambda)\|_1$.
\end{theorem}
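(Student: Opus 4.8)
The plan is to reduce the statement to the single claim that every point $\widetilde{\bfbeta}(\lambda)$ produced by the algorithm satisfies the approximate optimality condition $\mathsf{KKT}_\lambda(\epsilon)$ of \cref{def:appr_kkt}; once this holds, \cref{lem:kkt_error} upgrades it directly to the error bound $\lambda\epsilon\|\widetilde{\bfbeta}(\lambda)\|_1$, so that $\widetilde{\mathcal{P}}$ is an approximate regularisation path. Continuity and piecewise linearity come almost for free: each segment is the affine map $\widetilde{\bfbeta}_{\mathcal{A}}(\lambda) = \bfmu - \lambda\bftheta$ with $\widetilde{\bfbeta}_{\mathcal{I}}(\lambda) = \mathbf{0}$, the identities $(\mathbf{X}_{\mathcal{A}}^\top\mathbf{X}_{\mathcal{A}})^+\mathbf{X}_{\mathcal{A}}^\top = \mathbf{X}_{\mathcal{A}}^+$ give $\bfmu - \lambda_t\bftheta = \widetilde{\bfbeta}_{\mathcal{A}}(\lambda_t)$ at the left endpoint, and the variable that joins or crosses at a kink has coefficient $0$ there, so consecutive segments agree at $\lambda_{t+1}$. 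The substance is an induction over iterations $t$, with hypothesis that $\widetilde{\bfbeta}(\lambda_t)$ satisfies $\mathsf{KKT}_{\lambda_t}(\epsilon)$ with active set $\mathcal{A}$ and equicorrelation $\bfeta_{\mathcal{A}}$; the base case at $\lambda_0 = \|\mathbf{X}^\top\mathbf{y}\|_\infty$ holds since $\widetilde{\bfbeta}(\lambda_0) = \mathbf{0}$.

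For the inductive step I would first record two structural facts on the segment $\lambda\in[\lambda_{t+1},\lambda_t]$. A direct computation with $\bfmu = \mathbf{X}_{\mathcal{A}}^+\mathbf{y}$ and $\bftheta = (\mathbf{X}_{\mathcal{A}}^\top\mathbf{X}_{\mathcal{A}})^+\bfeta_{\mathcal{A}}$ shows $\mathbf{X}_{\mathcal{A}}^\top(\mathbf{y} - \mathbf{X}_{\mathcal{A}}\widetilde{\bfbeta}_{\mathcal{A}}(\lambda)) = \lambda\bfeta_{\mathcal{A}}$, i.e.\ the equicorrelation is \emph{constant} along the whole segment, while the hypothesis forces $\|\bfeta_{\mathcal{A}}\|_\infty\leq 1$ (because $|\mathbf{X}_j^\top(\mathbf{y}-\mathbf{X}\widetilde{\bfbeta}(\lambda_t))|\leq\lambda_t$ for every $j$). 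Hence for active $j$ with $\widetilde{\beta}_j(\lambda)\neq 0$ one has $\mathbf{X}_j^\top(\mathbf{y}-\mathbf{X}\widetilde{\bfbeta}(\lambda))\sign(\widetilde{\beta}_j(\lambda)) = \lambda\,\eta_j\sign(\widetilde{\beta}_j(\lambda))$, and since the crossing-time rule guarantees no sign flip before $\lambda_{t+1}\geq\lambda_{t+1}^{\rm cross}$, the active part of $\mathsf{KKT}_\lambda(\epsilon)$ reduces to $|\eta_j|\in[1-\epsilon,1]$. For the inactive variables I would view each $\Lambda_i^{\rm join}(\mathcal{A})$ as the place where the linear function $f_i(\lambda)=\mathbf{X}_i^\top(\mathbf{y}-\mathbf{X}_{\mathcal{A}}(\bfmu-\lambda\bftheta))$ first meets $\pm\lambda$; monotonicity of $f_i(\lambda)/\lambda$ together with $\lambda\geq\lambda_{t+1}\geq\widetilde{\lambda}_{t+1}^{\rm join}\geq\lambda_{t+1}^{\rm join}\geq\Lambda_i^{\rm join}(\mathcal{A})$ yields $|f_i(\lambda)|\leq\lambda$, establishing the inactive part (in fact with no error).

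The heart of the argument, and the step I expect to be the main obstacle, is propagating $|\eta_j|\geq 1-\epsilon$ through a joining event, which is precisely where the rescaling factor and the mutual-incoherence parameter $\alpha_{\mathcal{A}}$ enter. Writing $\rho=1-\frac{\epsilon}{1+\alpha_{\mathcal{A}}}$ so that $\widetilde{\lambda}_{t+1}^{\rm join}=\Lambda_{\widetilde{i}}^{\rm join}(\mathcal{A})/\rho$, I would substitute this $\lambda$ into $f_{\widetilde{i}}$, use $f_{\widetilde{i}}(\Lambda_{\widetilde{i}}^{\rm join})=s\,\Lambda_{\widetilde{i}}^{\rm join}$ (with $s$ the correct joining sign) to eliminate the constant term, and obtain the new equicorrelation $\eta_{\widetilde{i}}=\rho s+(1-\rho)g$ with $g=(\mathbf{X}_{\mathcal{A}}^+\mathbf{X}_{\widetilde{i}})^\top\bfeta_{\mathcal{A}}$. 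Bounding $|g|\leq\|\mathbf{X}_{\mathcal{A}}^+\mathbf{X}_{\widetilde{i}}\|_1\|\bfeta_{\mathcal{A}}\|_\infty\leq\alpha_{\mathcal{A}}$ then gives $\eta_{\widetilde{i}}s\geq\rho-(1-\rho)\alpha_{\mathcal{A}}=1-\epsilon$, and the same estimate yields $\eta_{\widetilde{i}}s\leq 1$, hence $\sign(\eta_{\widetilde{i}})=s$. Because the joining variable contributes $0$ to the fit at $\lambda_{t+1}$, the fitted value and every correlation are continuous across the kink, so the equicorrelations of already-active coordinates are unchanged and the bound $|\eta_j|\geq 1-\epsilon$ is preserved for all active coordinates, closing the induction. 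The genuinely delicate point inside this last step is the sign consistency $\sign(\widetilde{\beta}_j(\lambda))=\sign(\eta_j)$: one must argue that a coordinate entering with equicorrelation sign $s$ has its coefficient $\widetilde{\beta}_j(\lambda)=(\lambda_{t+1}-\lambda)\theta_j$ driven in the same direction, which is where the crossing-time bookkeeping and the degenerate ``stationary'' iterations $\lambda_{t+1}=\lambda_t$ require careful handling.
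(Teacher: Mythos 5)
Your proposal follows essentially the same route as the paper's proof: reduce everything to showing that $\widetilde{\bfbeta}(\lambda)$ satisfies $\mathsf{KKT}_{\lambda}(\epsilon)$, invoke \cref{lem:kkt_error}, induct over iterations using the constancy of $\bfeta_{\mathcal{A}}$ along each segment, and control the joining event via H\"older with $\|\mathbf{X}_{\mathcal{A}}^+\mathbf{X}_{\widetilde{i}}\|_1\leq\alpha_{\mathcal{A}}$ and the rescaling factor $\big(1-\frac{\epsilon}{1+\alpha_{\mathcal{A}}}\big)^{-1}$, which is chosen exactly so that $\rho(1+\alpha_{\mathcal{A}})-\alpha_{\mathcal{A}}=1-\epsilon$. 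Your closed-form $\eta_{\widetilde{i}}=\rho s+(1-\rho)g$ is just a clean repackaging of the paper's inequality chain with $\Delta_{t+1}^{\rm join}$, and the sign-consistency issue you flag at the end is treated at the same (informal) level in the paper's own argument.
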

\begin{proof}
    We shall prove that the Lasso solution $\widetilde{\bfbeta}(\lambda)$ satisfies the $\mathsf{KKT}_{\lambda}(\epsilon)$ condition for all $\lambda>0$. According to \cref{lem:kkt_error}, $\widetilde{\mathcal{P}}$ is then an approximate regularisation path with error $\lambda\epsilon\|\widetilde{\bfbeta}(\lambda)\|_1$.
    
    The proof is by induction on the iteration loop $t$. The case $t=0$ is trivial as $\hat{\bfbeta}(\lambda) = \mathbf{0}$ for $\lambda \geq \lambda_0$. Assume then that the computed path through iteration $t-1$ satisfies $\mathsf{KKT}_{\lambda}(\epsilon)$ for all $\lambda \geq \lambda_t$. Consider the $t$-th iteration with active set $\mathcal{A} = \{i\in[d]:\widetilde{\beta}_i \neq 0\}$ and equicorrelation signs $\bfeta_{\mathcal{A}} = \frac{1}{\lambda_t}\mathbf{X}_{\mathcal{A}}^\top(\mathbf{y} - \mathbf{X}\widetilde{\bfbeta}(\lambda_t))$. The induction hypothesis implies that the current Lasso solution $\widetilde{\bfbeta}(\lambda_t)$ satisfies $\mathsf{KKT}_{\lambda_t}(\epsilon)$ at $\lambda_t$:
    \begin{align*}
        \lambda_t(1-\epsilon) \leq \mathbf{X}_j^\top(\mathbf{y} - \mathbf{X}\widetilde{\bfbeta}(\lambda_t))\sign(\widetilde{\beta}_j(\lambda_t)) \leq \lambda_t&\quad~\text{if}~\widetilde{\beta}_j \neq 0,\\
        |\mathbf{X}_j^\top(\mathbf{y} - \mathbf{X}\widetilde{\bfbeta}(\lambda_t))| \leq \lambda_t &\quad~\text{if}~\widetilde{\beta}_j = 0.
    \end{align*}
    Recall that for $\lambda \leq \lambda_t$ the Lasso solution is
    \begin{align*}
        \widetilde{\bfbeta}_{\mathcal{I}}(\lambda) = \mathbf{0} \quad\text{and}\quad \widetilde{\bfbeta}_{\mathcal{A}}(\lambda) = \mathbf{X}_{\mathcal{A}}^+(\mathbf{y} - \lambda(\mathbf{X}_{\mathcal{A}}^+)^\top\bfeta_{\mathcal{A}}).
    \end{align*}
    Thus
    \begin{align*}
        \mathbf{X}_{\mathcal{A}}^\top(\mathbf{y} - \mathbf{X}_{\mathcal{A}}\widetilde{\bfbeta}_{\mathcal{A}}(\lambda)) &= \mathbf{X}_{\mathcal{A}}^\top \mathbf{y} - \mathbf{X}_{\mathcal{A}}^\top \mathbf{X}_{\mathcal{A}}\mathbf{X}_{\mathcal{A}}^+ \mathbf{y} + \lambda\mathbf{X}_{\mathcal{A}}^\top \mathbf{X}_{\mathcal{A}}(\mathbf{X}_{\mathcal{A}}^\top\mathbf{X}_{\mathcal{A}})^+\bfeta_{\mathcal{A}} \\
        &= \lambda\mathbf{X}_{\mathcal{A}}^\top (\mathbf{X}_{\mathcal{A}}^\top)^+\bfeta_{\mathcal{A}} \tag{by $\mathbf{X}_{\mathcal{A}}^\top \mathbf{X}_{\mathcal{A}}\mathbf{X}^+_{\mathcal{A}} = \mathbf{X}_{\mathcal{A}}^\top$}\\
        &= \lambda \bfeta_{\mathcal{A}}, \tag{by $\bfeta_{\mathcal{A}}\in\operatorname{row}(\mathbf{X}_{\mathcal{A}})$}
    \end{align*}
    where we used the identity $\mathbf{A}^\top \mathbf{A}\mathbf{A}^+ = \mathbf{A}^\top$. Since 
    \begin{align*}
        \eta_j \operatorname{sign}(\widetilde{\beta}_j(\lambda)) = \eta_j \operatorname{sign}(\widetilde{\beta}_j(\lambda_t)) = \frac{1}{\lambda_t}\mathbf{X}_j^\top(\mathbf{y} - \mathbf{X}\widetilde{\bfbeta}(\lambda_t))\operatorname{sign}(\widetilde{\beta}_j(\lambda_t))
    \end{align*}
    for $\lambda$ sufficiently close to $\lambda_t$ and since $\widetilde{\bfbeta}_{\mathcal{A}}(\lambda_t)$ satisfies $\mathsf{KKT}_{\lambda_t}(\epsilon)$, we conclude that
    \begin{align*}
        \lambda(1-\epsilon) \leq \mathbf{X}_j^\top(\mathbf{y} - \mathbf{X}\widetilde{\bfbeta}(\lambda))\sign(\widetilde{\beta}_j(\lambda)) \leq \lambda \quad\text{for all}~ j\in\mathcal{A}
    \end{align*}
    as required. As $\lambda$ decreases, one of the following conditions must break: either $\eta_j \neq 0$ for all $j\in\mathcal{A}$ or $\|\mathbf{X}_{\mathcal{I}}^\top(\mathbf{y} - \mathbf{X}\widetilde{\bfbeta}(\lambda))\|_\infty \leq \lambda$. The first breaks at the next crossing time $\lambda_{t+1}^{\rm cross} = \max_{i\in\mathcal{A}}\{\Lambda_i^{\rm cross}(\mathcal{A})\}$, and the second breaks at the next joining time $\lambda^{\rm join}_{t+1} = \max_{i\in \mathcal{I}} \{\Lambda_i^{\rm join}(\mathcal A)\}$. Since we only decrease $\lambda$ to $\lambda_{t+1} = \min\{\lambda_t,\max\{{\lambda}_{t+1}^{\rm cross}, \widetilde{\lambda}_{t+1}^{\rm join}\}\}$, we have verified that $\widetilde{\bfbeta}(\lambda)$ satisfies $\mathsf{KKT}_{\lambda}(\epsilon)$ for $\lambda \geq \lambda_{t+1}$.

    We now prove that adding or deleting variables from the active set preserves the condition $\mathsf{KKT}_{\lambda_{t+1}}(\epsilon)$ at $\lambda_{t+1}$. More specifically, let $\mathcal{A}^\ast$ and $\bfeta^\ast_{\mathcal{A}^\ast}$ denote the active set and equicorrelation signs at the beginning of the $(t+1)$-th iteration.

    \emph{Case 1 (Deletion)}: Let us start with the case when a variable leaves the active set at $\lambda_{t+1} = \lambda_{t+1}^{\rm cross}$, i.e., $\mathcal{A}^\ast = \mathcal{A}\setminus\{i^{\rm cross}_{t+1}\}$ is formed by removing an element from $\mathcal{A}$. The Lasso solution before deletion with equicorrelation signs $\bfeta_{\mathcal{A}}$ is
    \begin{align*}
        \widetilde{\bfbeta}_{\mathcal{A}}(\lambda_{t+1}) = \begin{bmatrix*}
            \widetilde{\bfbeta}_{\mathcal{A}^\ast}(\lambda_{t+1})\\
            \widetilde{\beta}_{i_{t+1}^{\rm cross}}(\lambda_{t+1})
        \end{bmatrix*} = 
        \begin{bmatrix*}
            \big[\mathbf{X}_{\mathcal{A}}^+ (\mathbf{y} - \lambda_{t+1} (\mathbf{X}_{\mathcal{A}}^+)^\top\bfeta_{\mathcal{A}})\big]_{\mathcal{A}^\ast}\\
            0
        \end{bmatrix*}
    \end{align*}
    since the variable $i_{t+1}^{\rm cross}$ crosses through zero at $\lambda_{t+1} = \lambda_{t+1}^{\rm cross}$. On the other hand, the Lasso solution after deletion is
    \begin{align*}
        \widetilde{\bfbeta}^\ast_{\mathcal{A}}(\lambda_{t+1}) = \begin{bmatrix*}
            \widetilde{\bfbeta}^\ast_{\mathcal{A}^\ast}(\lambda_{t+1}) \\
            0
        \end{bmatrix*} = 
        \begin{bmatrix*}
            \mathbf{X}_{\mathcal{A}^\ast}^+ (\mathbf{y} - \lambda_{t+1} (\mathbf{X}_{\mathcal{A}^\ast}^+)^\top\bfeta^\ast_{\mathcal{A}^\ast})\\
            0
        \end{bmatrix*}, 
    \end{align*}
    where $\bfeta^\ast_{\mathcal{A}^\ast} = \frac{1}{\lambda_{t+1}}\mathbf{X}_{\mathcal{A}^\ast}^\top(\mathbf{y} - \mathbf{X}_{\mathcal{A}^\ast}\widetilde{\bfbeta}_{\mathcal{A}^\ast}(\lambda_{t+1}))$. Therefore
    \begin{align*}
        \mathbf{X}_{\mathcal{A}^\ast}^+ (\mathbf{y} - \lambda_{t+1} (\mathbf{X}_{\mathcal{A}^\ast}^+)^\top\bfeta^\ast_{\mathcal{A}^\ast}) &= \mathbf{X}_{\mathcal{A}^\ast}^+\mathbf{y} -  \mathbf{X}_{\mathcal{A}^\ast}^+(\mathbf{X}_{\mathcal{A}^\ast}^+)^\top\mathbf{X}_{\mathcal{A}^\ast}^\top\mathbf{y} + \mathbf{X}_{\mathcal{A}^\ast}^+(\mathbf{X}_{\mathcal{A}^\ast}^+)^\top\mathbf{X}_{\mathcal{A}^\ast}^\top\mathbf{X}_{\mathcal{A}^\ast}\widetilde{\bfbeta}_{\mathcal{A}^\ast}(\lambda_{t+1})\\
        &= \mathbf{X}_{\mathcal{A}^\ast}^+\mathbf{X}_{\mathcal{A}^\ast}\widetilde{\bfbeta}_{\mathcal{A}^\ast}(\lambda_{t+1}), \tag{by $\mathbf{X}_{\mathcal{A}^\ast}^+(\mathbf{X}_{\mathcal{A}^\ast}^+)^\top\mathbf{X}_{\mathcal{A}^\ast}^\top = \mathbf{X}_{\mathcal{A}^\ast}^+$}
    \end{align*}
    again using that $\mathbf{A}^+(\mathbf{A}^+)^\top \mathbf{A}^\top = \mathbf{A}^+$. The solution $\widetilde{\bfbeta}_{\mathcal{A}^\ast}(\lambda_{t+1})$ to the above equation must be
    \begin{align*}
        \widetilde{\bfbeta}_{\mathcal{A}^\ast}(\lambda_{t+1}) = \mathbf{X}_{\mathcal{A}^\ast}^+(\mathbf{y} - \lambda_{t+1} (\mathbf{X}_{\mathcal{A}^\ast}^+)^\top\bfeta^\ast_{\mathcal{A}^\ast}) + \mathbf{b} = \widetilde{\bfbeta}^\ast_{\mathcal{A}^\ast}(\lambda_{t+1}) + \mathbf{b},
    \end{align*}
    where $\mathbf{b}\in\operatorname{null}(\mathbf{X}_{\mathcal{A}^\ast})$. Since $\widetilde{\bfbeta}_{\mathcal{A}^\ast}(\lambda_{t+1})$ must have minimum $\ell_2$-norm, $\mathbf{b}=\mathbf{0}$ and so
    \begin{align*}
        \widetilde{\bfbeta}^\ast_{\mathcal{A}}(\lambda_{t+1}) = \begin{bmatrix*}
            \widetilde{\bfbeta}^\ast_{\mathcal{A}^\ast}(\lambda_{t+1}) \\
            0
        \end{bmatrix*} = 
        \begin{bmatrix*}
            \widetilde{\bfbeta}_{\mathcal{A}^\ast}(\lambda_{t+1}) \\
            0
        \end{bmatrix*},
    \end{align*}
    which implies that $\widetilde{\mathcal{P}}$ is continuous at $\lambda_{t+1}$. Finally,
    \begin{align*}
         \mathbf{X}_j^\top(\mathbf{y} - \mathbf{X}_{\mathcal{A}^\ast}\widetilde{\bfbeta}^\ast_{\mathcal{A}^\ast}(\lambda_{t+1})) = \mathbf{X}_j^\top(\mathbf{y} - \mathbf{X}_{\mathcal{A}}\widetilde{\bfbeta}_{\mathcal{A}}(\lambda_{t+1})) \quad \text{for all}~j\in[d]
    \end{align*}
    and so $\widetilde{\bfbeta}^\ast(\lambda_{t+1})$ satisfies the $\mathsf{KKT}_{\lambda_{t+1}}(\epsilon)$ condition.

    \emph{Case 2 (Insertion)}: Now we look at the case when a variable joins the active set at $\lambda_{t+1} = \min\{\lambda_t, \widetilde{\lambda}_{t+1}^{\rm join}\}$, i.e., $\mathcal{A}^\ast = \mathcal{A}\cup\{\widetilde{i}_{t+1}^{\rm join}\}$ is formed by adding an element to $\mathcal{A}$. The Lasso solution before insertion is
    \begin{align*}
        \widetilde{\bfbeta}_{\mathcal{A}^\ast}(\lambda_{t+1}) = \begin{bmatrix*}
            \widetilde{\bfbeta}_{\mathcal{A}}(\lambda_{t+1}) \\ 0
        \end{bmatrix*},
    \end{align*}
    while the Lasso solution after insertion with equicorrelation signs
    \begin{align*}
        \bfeta^\ast_{\mathcal{A}^\ast} = \frac{1}{\lambda_{t+1}}\mathbf{X}_{\mathcal{A}^\ast}^\top(\mathbf{y} - \mathbf{X}_{\mathcal{A}^\ast}\widetilde{\bfbeta}_{\mathcal{A}^\ast}(\lambda_{t+1})) = \frac{1}{\lambda_{t+1}}\mathbf{X}_{\mathcal{A}^\ast}^\top(\mathbf{y} - \mathbf{X}_{\mathcal{A}}\widetilde{\bfbeta}_{\mathcal{A}}(\lambda_{t+1}))
    \end{align*}
    is
    \begin{align*}
        \widetilde{\bfbeta}^\ast_{\mathcal{A}^\ast}(\lambda_{t+1}) &= 
        \mathbf{X}_{\mathcal{A}^\ast}^+ (\mathbf{y} - \lambda_{t+1} (\mathbf{X}_{\mathcal{A}^\ast}^+)^\top \bfeta^\ast_{\mathcal{A}^\ast}) \\
        &= \mathbf{X}_{\mathcal{A}^\ast}^+ \mathbf{y} - \mathbf{X}_{\mathcal{A}^\ast}^+(\mathbf{X}_{\mathcal{A}^\ast}^+)^\top\mathbf{X}_{\mathcal{A}^\ast}^\top\mathbf{y} + \mathbf{X}_{\mathcal{A}^\ast}^+(\mathbf{X}_{\mathcal{A}^\ast}^+)^\top\mathbf{X}_{\mathcal{A}^\ast}^\top\mathbf{X}_{\mathcal{A}^\ast}\widetilde{\bfbeta}_{\mathcal{A}^\ast}(\lambda_{t+1})\\
        &= \mathbf{X}_{\mathcal{A}^\ast}^+\mathbf{X}_{\mathcal{A}^\ast}\widetilde{\bfbeta}_{\mathcal{A}^\ast}(\lambda_{t+1}), \tag{by $\mathbf{X}_{\mathcal{A}^\ast}^+(\mathbf{X}_{\mathcal{A}^\ast}^+)^\top\mathbf{X}_{\mathcal{A}^\ast}^\top = \mathbf{X}_{\mathcal{A}^\ast}^+$}
    \end{align*}
    again using that $\mathbf{A}^+(\mathbf{A}^+)^\top \mathbf{A}^\top = \mathbf{A}^+$. The solution $\widetilde{\bfbeta}_{\mathcal{A}^\ast}(\lambda_{t+1})$ to the above equation must be
    \begin{align*}
        \widetilde{\bfbeta}_{\mathcal{A}^\ast}(\lambda_{t+1}) = \mathbf{X}_{\mathcal{A}^\ast}^+(\mathbf{y} - \lambda_{t+1} (\mathbf{X}_{\mathcal{A}^\ast}^+)^\top\bfeta^\ast_{\mathcal{A}^\ast}) + \mathbf{b} = \widetilde{\bfbeta}^\ast_{\mathcal{A}^\ast}(\lambda_{t+1}) + \mathbf{b},
    \end{align*}
    where $\mathbf{b}\in\operatorname{null}(\mathbf{X}_{\mathcal{A}^\ast})$. Since $\widetilde{\bfbeta}_{\mathcal{A}^\ast}(\lambda_{t+1})$ must have minimum $\ell_2$-norm, $\mathbf{b}=\mathbf{0}$ and so
    \begin{align*}
        \widetilde{\bfbeta}^\ast_{\mathcal{A}^\ast}(\lambda_{t+1}) = \widetilde{\bfbeta}_{\mathcal{A}^\ast}(\lambda_{t+1}) = \begin{bmatrix*}
            \widetilde{\bfbeta}_{\mathcal{A}}(\lambda_{t+1}) \\ 0
        \end{bmatrix*},
    \end{align*}
    which implies that $\widetilde{\mathcal{P}}$ is continuous at $\lambda_{t+1}$. Also, 
    \begin{align*}
        \mathbf{X}_j^\top(\mathbf{y} - \mathbf{X}_{\mathcal{A}^\ast}\widetilde{\bfbeta}^\ast_{\mathcal{A}^\ast}(\lambda_{t+1})) = \mathbf{X}_j^\top(\mathbf{y} - \mathbf{X}_{\mathcal{A}}\widetilde{\bfbeta}_{\mathcal{A}}(\lambda_{t+1}))\quad \text{for all}~j\in[d].
    \end{align*}
    It only remains to prove that the variable $\widetilde{i}_{t+1}^{\rm join}$ satisfies the $\mathsf{KKT}_{\lambda_{t+1}}(\epsilon)$ condition once it is added to $\mathcal{A}$. Indeed, first notice that the variable $\widetilde{i}_{t+1}^{\rm join}$ achieves correlation
    \begin{align*}
        \big|\mathbf{X}_{\widetilde{i}_{t+1}^{\rm join}}^\top(\mathbf{y} - \mathbf{X}_{\mathcal{A}}\widetilde{\bfbeta}_{\mathcal{A}}(\Lambda^{\rm join}_{\widetilde{i}_{t+1}^{\rm join}}))| = \Lambda^{\rm join}_{\widetilde{i}_{t+1}^{\rm join}}
    \end{align*}
    at the point $\Lambda^{\rm join}_{\widetilde{i}_{t+1}^{\rm join}} = \Lambda^{\rm join}_{\widetilde{i}_{t+1}^{\rm join}}(\mathcal{A})$. Define $\Delta_{t+1}^{\rm join} \triangleq \min\{\lambda_t,\widetilde{\lambda}^{\rm join}_{t+1}\} - \Lambda^{\rm join}_{\widetilde{i}_{t+1}^{\rm join}}$ and note that $\Delta_{t+1}^{\rm join} \leq \frac{\epsilon}{1+\alpha_{\mathcal{A}}}\widetilde{\lambda}^{\rm join}_{t+1}$ since $\lambda_t \geq \Lambda^{\rm join}_{\widetilde{i}_{t+1}^{\rm join}}$ and $\big(1-\frac{\epsilon}{1+\alpha_{\mathcal{A}}}\big)\widetilde{\lambda}_{t+1}^{\rm join} \leq \Lambda^{\rm join}_{\widetilde{i}_{t+1}^{\rm join}}$. Then, at the point $\lambda_{t+1} = \min\{\lambda_t,\widetilde{\lambda}^{\rm join}_{t+1}\}$,
    \begin{align*}
        \big|\mathbf{X}_{\widetilde{i}_{t+1}^{\rm join}}^\top(\mathbf{y} - \mathbf{X}_{\mathcal{A}}\widetilde{\bfbeta}_{\mathcal{A}}(\widetilde{\lambda}^{\rm join}_{t+1}))\big| &= \big|\mathbf{X}_{\widetilde{i}_{t+1}^{\rm join}}^\top(\mathbf{y} - \mathbf{X}_{\mathcal{A}}\widetilde{\bfbeta}_{\mathcal{A}}(\Lambda^{\rm join}_{\widetilde{i}_{t+1}^{\rm join}})) + \Delta_{t+1}^{\rm join}\mathbf{X}_{\widetilde{i}_{t+1}^{\rm join}}^\top (\mathbf{X}_{\mathcal{A}}^+)^\top \bfeta_{\mathcal{A}}\big| \\
        &\geq \big|\mathbf{X}_{\widetilde{i}_{t+1}^{\rm join}}^\top(\mathbf{y} - \mathbf{X}_{\mathcal{A}}\widetilde{\bfbeta}_{\mathcal{A}}(\Lambda^{\rm join}_{\widetilde{i}_{t+1}^{\rm join}}))\big| - \Delta_{t+1}^{\rm join}\big|\mathbf{X}_{\widetilde{i}_{t+1}^{\rm join}}^\top (\mathbf{X}_{\mathcal{A}}^+)^\top \bfeta_{\mathcal{A}}\big| \tag{Triangle inequality}\\
        &= \Lambda^{\rm join}_{\widetilde{i}_{t+1}^{\rm join}} - \Delta_{t+1}^{\rm join}\big|\mathbf{X}_{\widetilde{i}_{t+1}^{\rm join}}^\top(\mathbf{X}_{\mathcal{A}}^+)^\top  \bfeta_{\mathcal{A}}\big| \tag{$\big|\mathbf{X}_{\widetilde{i}_{t+1}^{\rm join}}^\top(\mathbf{y} - \mathbf{X}_{\mathcal{A}}\widetilde{\bfbeta}_{\mathcal{A}}(\Lambda^{\rm join}_{\widetilde{i}_{t+1}^{\rm join}}))\big| = \Lambda^{\rm join}_{\widetilde{i}_{t+1}^{\rm join}}$}\\
        &= \Lambda^{\rm join}_{\widetilde{i}_{t+1}^{\rm join}} - \frac{\Delta_{t+1}^{\rm join}}{\lambda_t} \big|\mathbf{X}_{\widetilde{i}_{t+1}^{\rm join}}^\top (\mathbf{X}_{\mathcal{A}}^+)^\top  \mathbf{X}_{\mathcal{A}}^\top(\mathbf{y} - \mathbf{X}_{\mathcal{A}}\widetilde{\bfbeta}_{\mathcal{A}}(\lambda_t))\big| \\
        &\geq \Lambda^{\rm join}_{\widetilde{i}_{t+1}^{\rm join}} - \frac{\Delta_{t+1}^{\rm join}}{\lambda_t} \|\mathbf{X}_{\mathcal{A}}^+\mathbf{X}_{\widetilde{i}_{t+1}^{\rm join}}   \|_1\|\mathbf{X}_{\mathcal{A}}^\top(\mathbf{y} - \mathbf{X}_{\mathcal{A}}\widetilde{\bfbeta}_{\mathcal{A}}(\lambda_t))\|_\infty \tag{H\"older's inequality}\\
        &\geq \Lambda^{\rm join}_{\widetilde{i}_{t+1}^{\rm join}} - \Delta_{t+1}^{\rm join}\|\mathbf{X}_{\mathcal{A}}^+\mathbf{X}_{\widetilde{i}_{t+1}^{\rm join}} \|_1 \tag{$\|\mathbf{X}_{\mathcal{A}}^\top(\mathbf{y} - \mathbf{X}_{\mathcal{A}}\widetilde{\bfbeta}_{\mathcal{A}}(\lambda_t))\|_\infty \leq \lambda_t$}\\
        &\geq \Big(1-\frac{\epsilon}{1+\alpha_{\mathcal{A}}}\Big)\widetilde{\lambda}_{t+1}^{\rm join} - \frac{\epsilon\widetilde{\lambda}^{\rm join}_{t+1}}{1+\alpha_{\mathcal{A}}}\|\mathbf{X}_{\mathcal{A}}^+\mathbf{X}_{\widetilde{i}_{t+1}^{\rm join}} \|_1 \tag{$\Lambda^{\rm join}_{\widetilde{i}_{t+1}^{\rm join}} \geq \big(1-\frac{\epsilon}{1+\alpha_{\mathcal{A}}}\big)\widetilde{\lambda}_{t+1}^{\rm join}$}\\
        &\geq (1-\epsilon)\widetilde{\lambda}_{t+1}^{\rm join}, \tag{$\max_{j\in \mathcal{A}^c}\|\mathbf{X}_{\mathcal{A}}^+\mathbf{X}_j\|_1 \leq \alpha_{\mathcal{A}}$}
    \end{align*}
    where we used that $\widetilde{\bfbeta}(\lambda_t)$ satisfies the $\mathsf{KKT}_{\lambda_t}(\epsilon)$ condition at $\lambda_t$,  $\|\mathbf{X}_{\mathcal{A}}^\top(\mathbf{y} - \mathbf{X}_{\mathcal{A}}\widetilde{\bfbeta}_{\mathcal{A}}(\lambda_t))\|_\infty \leq \lambda_t$.
\end{proof}

After asserting the correctness of \cref{alg:classical_quantum}, we now analyse its complexity. Specifically, we show how to sample the joining variable $\widetilde{i}_{t+1}^{\rm join}$ from the set in \cref{eq:joining_time} by combining the approximate quantum minimum-finding subroutine from Chen and de Wolf~\cite{chen2021quantum} (\cref{fact:minimum_finding}) and the unitary map that approximates (parts of) the joining times $\Lambda_i^{\rm join}(\mathcal A) = \frac{\mathbf{X}_i^\top (\mathbf{y}-\mathbf{X}_{\mathcal{A}}\bfmu)}{\pm 1 - \mathbf{X}_i^\top \mathbf{X}_{\mathcal{A}}\bftheta}$ (\cref{lem:terms_estimation}). Our final complexity depends on the mutual incoherence between different column subspaces and the overlap between $\mathbf{y}$ and different column subspaces. 
\begin{theorem}\label{thr:approximate_quantum}
    Let $\mathbf{X}\in\mathbb{R}^{n\times d}$ and $\mathbf{y}\in\mathbb{R}^n$. Assume $\mathbf{X}$ is stored in a {\rm QROM} and we have access to {\rm QRAMs} of size $O(n)$. Let $\delta\in(0,1)$, $\epsilon > 0$, and $T\in\mathbb{N}$. For $\mathcal{A}\subseteq[d]$ of size $|\mathcal{A}| \leq T$, assume there are $\alpha_{\mathcal{A}},\gamma_{\mathcal{A}}\in(0,1)$ such that
    \begin{align*}
        \|\mathbf{X}_{\mathcal{A}}^+\mathbf{X}_{\mathcal{A}^c}\|_1 \leq \alpha_{\mathcal{A}} \quad\text{and}\quad
        \frac{\|\mathbf{X}^\top(\mathbf{I} - \mathbf{X}_{\mathcal{A}}\mathbf{X}_{\mathcal{A}}^+) \mathbf{y}\|_\infty}{\|\mathbf{X}\|_{\max}\|(\mathbf{I} - \mathbf{X}_{\mathcal{A}}\mathbf{X}_{\mathcal{A}}^+) \mathbf{y}\|_1} \geq \gamma_{\mathcal{A}}.
    \end{align*}
    The approximate quantum {\rm LARS} {\rm \cref{alg:classical_quantum}} outputs, with probability at least $1-\delta$, a continuous piecewise linear approximate regularisation path $\widetilde{\mathcal{P}} = \{\widetilde{\bfbeta}(\lambda)\in\mathbb{R}^d:\lambda>0\}$ with error $\lambda\epsilon\|\widetilde{\bfbeta}(\lambda)\|_1$ and at most $T$ kinks in time
    \begin{align*}
        O\left(\frac{\gamma^{-1}_{\mathcal{A}} + \sqrt{n}\|\mathbf{X}\|_{\max}\|\mathbf{X}^+_{\mathcal{A}}\|_2}{(1-\alpha_{\mathcal{A}})\epsilon}\sqrt{|\mathcal{I}|}\log^2(T/\delta)\poly\log(nd) + n|\mathcal{A}| + |\mathcal{A}|^2 \right)
    \end{align*}
    per iteration, where $\mathcal{A}$ and $\mathcal{I}$ are the active and inactive sets of the corresponding iteration.
\end{theorem}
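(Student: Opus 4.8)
The correctness of \cref{alg:classical_quantum}---that $\widetilde{\mathcal{P}}$ is a continuous piecewise linear approximate regularisation path with error $\lambda\epsilon\|\widetilde{\bfbeta}(\lambda)\|_1$---is already furnished by \cref{thr:correctness}, provided the returned joining variable $\widetilde{i}_{t+1}^{\rm join}$ lies in the set of \cref{eq:joining_time}. So the plan is to (i) certify that the quantum search returns such a variable with the stated success probability, and (ii) bound the per-iteration runtime. The classical overhead matches the exact LARS algorithm: computing $\bfeta_{\mathcal{A}}$, $\bfmu$, $\bftheta$ and updating $\mathbf{X}_{\mathcal{A}}^+$ via the Sherman--Morrison update costs $O(n|\mathcal{A}|+|\mathcal{A}|^2)$, the crossing time costs $O(|\mathcal{A}|)$, and forming the residual $\mathbf{r}\triangleq\mathbf{y}-\mathbf{X}_{\mathcal{A}}\bfmu$ and the vector $\mathbf{w}\triangleq\mathbf{X}_{\mathcal{A}}\bftheta$ and loading them into KP-trees costs $O(n\poly\log n)$ by \cref{KP-tree}. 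The crux is the quantum search for $\widetilde{i}_{t+1}^{\rm join}$.

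For the search I would build a unitary $|i\rangle|\bar 0\rangle\mapsto|i\rangle|R_i\rangle$ whose measured first register estimates $\Lambda_i^{\rm join}(\mathcal A)=\frac{\mathbf{X}_i^\top\mathbf{r}}{\pm 1-\mathbf{X}_i^\top\mathbf{w}}$, and feed it to the robust minimum-finding routine (\cref{fact:minimum_finding}) run over $i\in\mathcal{I}$, negating to convert the maximisation into a minimisation. The numerator $\mathbf{X}_i^\top\mathbf{r}$ and the denominator $\pm 1-\mathbf{X}_i^\top\mathbf{w}$ are each estimated by \cref{lem:terms_estimation} applied with $\mathbf{u}=\mathbf{r}$ and $\mathbf{u}=\mathbf{w}$, at relative precisions $\epsilon_1$ and $\epsilon_2$, costing $O((\epsilon_1^{-1}+\epsilon_2^{-1})\poly\log(nd)\log(1/\delta_2))$ per call; dividing the two estimates yields $R_i$. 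Mutual incoherence controls the denominator: since $\mathbf{X}_i^\top\mathbf{w}=(\mathbf{X}_{\mathcal{A}}^+\mathbf{X}_i)^\top\bfeta_{\mathcal{A}}$ and $\|\bfeta_{\mathcal{A}}\|_\infty\le 1$, Hölder gives $|\mathbf{X}_i^\top\mathbf{w}|\le\|\mathbf{X}_{\mathcal{A}}^+\mathbf{X}_i\|_1\le\alpha_{\mathcal{A}}$, so $|b_i|\triangleq|\pm 1-\mathbf{X}_i^\top\mathbf{w}|\ge 1-\alpha_{\mathcal{A}}>0$.

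The heart of the argument is calibrating $\epsilon_1,\epsilon_2$. Writing $a_i=\mathbf{X}_i^\top\mathbf{r}$, \cref{lem:terms_estimation} yields additive errors $\epsilon_a\le\epsilon_1\|\mathbf{X}\|_{\max}\|\mathbf{r}\|_1$ and $\epsilon_b\le\epsilon_2\|\mathbf{X}\|_{\max}\|\mathbf{w}\|_1$, and \cref{lem:error_propagation}, using $|a_i|/|b_i|=\Lambda_i^{\rm join}$ and $|b_i|\ge 1-\alpha_{\mathcal{A}}$, bounds the additive error on $\Lambda_i^{\rm join}$ by $\frac{2\epsilon_a}{1-\alpha_{\mathcal{A}}}+\frac{2\Lambda_i^{\rm join}\epsilon_b}{1-\alpha_{\mathcal{A}}}$. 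For \cref{fact:minimum_finding} to output an index of \cref{eq:joining_time}, it suffices that this error be at most a $\frac{\epsilon}{2(1+\alpha_{\mathcal{A}})}$-fraction of $\lambda_{t+1}^{\rm join}=\max_{i\in\mathcal{I}}\Lambda_i^{\rm join}$. For the numerator term I would invoke mutual overlap: since $\mathbf{X}_i^\top\mathbf{r}=\mathbf{X}_i^\top(\mathbf{I}-\mathbf{X}_{\mathcal{A}}\mathbf{X}_{\mathcal{A}}^+)\mathbf{y}$ vanishes on $\mathcal{A}$, one has $\max_{i\in\mathcal{I}}|a_i|=\|\mathbf{X}^\top(\mathbf{I}-\mathbf{X}_{\mathcal{A}}\mathbf{X}_{\mathcal{A}}^+)\mathbf{y}\|_\infty\ge\gamma_{\mathcal{A}}\|\mathbf{X}\|_{\max}\|\mathbf{r}\|_1$, whence $\lambda_{t+1}^{\rm join}\ge\frac{\gamma_{\mathcal{A}}}{1+\alpha_{\mathcal{A}}}\|\mathbf{X}\|_{\max}\|\mathbf{r}\|_1$; this makes $\epsilon_1=\Theta(\epsilon\gamma_{\mathcal{A}}(1-\alpha_{\mathcal{A}}))$ sufficient, i.e.\ $\epsilon_1^{-1}=O(\gamma_{\mathcal{A}}^{-1}/(\epsilon(1-\alpha_{\mathcal{A}})))$. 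For the denominator term I would use only $\Lambda_i^{\rm join}\le\lambda_{t+1}^{\rm join}$, so the factor $\lambda_{t+1}^{\rm join}$ cancels and $\epsilon_2=\Theta(\epsilon(1-\alpha_{\mathcal{A}}))$ suffices, at cost $\epsilon_2^{-1}=O(\|\mathbf{X}\|_{\max}\|\mathbf{w}\|_1/(\epsilon(1-\alpha_{\mathcal{A}})))$; finally $\|\mathbf{w}\|_1\le\sqrt n\,\|\mathbf{w}\|_2\le\sqrt n\,\|\mathbf{X}_{\mathcal{A}}^+\|_2\|\bfeta_{\mathcal{A}}\|_2$ produces the advertised factor $\sqrt n\,\|\mathbf{X}\|_{\max}\|\mathbf{X}_{\mathcal{A}}^+\|_2$ (with $\|\bfeta_{\mathcal{A}}\|_2$ a benign norm of the sign vector).

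Assembling, each evaluation of the unitary costs $O\big(\frac{\gamma_{\mathcal{A}}^{-1}+\sqrt n\|\mathbf{X}\|_{\max}\|\mathbf{X}_{\mathcal{A}}^+\|_2}{(1-\alpha_{\mathcal{A}})\epsilon}\poly\log(nd)\log(1/\delta_2)\big)$, and \cref{fact:minimum_finding} runs it $\widetilde O(\sqrt{|\mathcal{I}|})$ times. Setting the minimum-finding failure probability to $\delta_1=\delta/T$ forces $\delta_2=O(\delta_1^2/(|\mathcal{I}|\log(1/\delta_1)))$, so $\log(1/\delta_2)=O(\log(|\mathcal{I}|T/\delta))$ is absorbed into $\poly\log(nd)$ together with one factor $\log(T/\delta)$, while the $\log(1/\delta_1)$ from the search contributes the second, giving the $\log^2(T/\delta)$ in the statement; a union bound over the $\le T$ iterations yields overall success probability $\ge 1-\delta$. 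Adding the classical $O(n|\mathcal{A}|+|\mathcal{A}|^2)$ overhead gives the claimed per-iteration runtime. I expect the calibration of \emph{the preceding paragraph} to be the main obstacle: converting the multiplicative error on the ratio $\Lambda_i^{\rm join}$ into a single additive tolerance usable by robust minimum-finding, while keeping $\epsilon_1,\epsilon_2$ as large as possible, hinges on deploying $|b_i|\ge 1-\alpha_{\mathcal{A}}$ for the denominator and the overlap bound $\lambda_{t+1}^{\rm join}\gtrsim\gamma_{\mathcal{A}}\|\mathbf{X}\|_{\max}\|\mathbf{r}\|_1$ for the numerator in exactly the right places, so that the $\gamma_{\mathcal{A}}^{-1}$ and $\sqrt n\|\mathbf{X}\|_{\max}\|\mathbf{X}_{\mathcal{A}}^+\|_2$ contributions decouple cleanly.
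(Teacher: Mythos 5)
Your overall architecture coincides with the paper's proof: correctness is delegated to \cref{thr:correctness}, the unitary $|i\rangle|\bar 0\rangle\mapsto|i\rangle|R_i\rangle$ is built by estimating numerator and denominator separately via \cref{lem:terms_estimation} and combining them with \cref{lem:error_propagation}, the denominator is bounded below by $1-\alpha_{\mathcal{A}}$ via mutual incoherence, the numerator tolerance is calibrated against $\lambda_{t+1}^{\rm join}\geq\frac{\gamma_{\mathcal{A}}}{1+\alpha_{\mathcal{A}}}\|\mathbf{X}\|_{\max}\|\mathbf{y}-\mathbf{X}_{\mathcal{A}}\bfmu\|_1$ via mutual overlap, and \cref{fact:minimum_finding} with $\delta_1=\delta/T$ supplies the $\sqrt{|\mathcal{I}|}\log^2(T/\delta)$ factor. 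Your choices $\epsilon_1=\Theta(\epsilon\gamma_{\mathcal{A}}(1-\alpha_{\mathcal{A}}))$ and $\epsilon_2=\Theta(\epsilon(1-\alpha_{\mathcal{A}})/(\|\mathbf{X}\|_{\max}\|\mathbf{X}_{\mathcal{A}}\bftheta\|_1))$ match the paper's up to constants.

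There is, however, one genuine gap, precisely at the step you flag as producing "the advertised factor." You bound $\|\mathbf{w}\|_1\leq\sqrt{n}\,\|\mathbf{w}\|_2\leq\sqrt{n}\,\|\mathbf{X}_{\mathcal{A}}^+\|_2\|\bfeta_{\mathcal{A}}\|_2$ and dismiss $\|\bfeta_{\mathcal{A}}\|_2$ as benign. It is not: $\bfeta_{\mathcal{A}}$ has $|\mathcal{A}|$ entries each of magnitude in $[1-\epsilon,1]$ (by the $\mathsf{KKT}_{\lambda_t}(\epsilon)$ condition), so $\|\bfeta_{\mathcal{A}}\|_2=\Theta(\sqrt{|\mathcal{A}|})$. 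Your route therefore yields $\epsilon_2^{-1}=O\big(\sqrt{n|\mathcal{A}|}\,\|\mathbf{X}\|_{\max}\|\mathbf{X}_{\mathcal{A}}^+\|_2/((1-\alpha_{\mathcal{A}})\epsilon)\big)$, an extra $\sqrt{|\mathcal{A}|}$ over the statement; since $|\mathcal{A}|$ can grow to order $n$ (and is $\Theta(n/\log d)$ in the Gaussian application), this would erode precisely the quadratic speedup in $n$ that the theorem is claiming. The paper avoids this by never passing through the Euclidean norm of the sub-gradient: it writes
\begin{align*}
\|\mathbf{X}_{\mathcal{A}}\bftheta\|_1=\frac{1}{\lambda_t}\big\|(\mathbf{X}_{\mathcal{A}}^+)^\top\mathbf{X}_{\mathcal{A}}^\top(\mathbf{y}-\mathbf{X}_{\mathcal{A}}\widetilde{\bfbeta}_{\mathcal{A}}(\lambda_t))\big\|_1\leq\frac{1}{\lambda_t}\|\mathbf{X}_{\mathcal{A}}^+\|_\infty\big\|\mathbf{X}_{\mathcal{A}}^\top(\mathbf{y}-\mathbf{X}_{\mathcal{A}}\widetilde{\bfbeta}_{\mathcal{A}}(\lambda_t))\big\|_\infty\leq\|\mathbf{X}_{\mathcal{A}}^+\|_\infty\leq\sqrt{n}\,\|\mathbf{X}_{\mathcal{A}}^+\|_2,
\end{align*}
i.e.\ it charges the correlation vector in the $\infty$-norm, where the approximate KKT condition caps it at $\lambda_t$, rather than charging $\bfeta_{\mathcal{A}}$ in the $2$-norm, where it costs $\sqrt{|\mathcal{A}|}$. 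You should replace your Cauchy--Schwarz step with this $\infty$-norm argument (or an equivalent one) to recover the stated complexity; everything else in your write-up goes through.
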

\begin{proof}
    The correctness of the approximate quantum LARS \cref{alg:classical_quantum} follows from \cref{thr:correctness}, since it is a LARS algorithm wherein the joining variable $\widetilde{i}_{t+1}^{\rm join}$ is taken from the set
    \begin{align}\label{eq:sampling_set}
        \left\{j\in\mathcal{I}: \Lambda_j^{\rm join}(\mathcal A) \geq \Big(1-\frac{\epsilon}{1+\alpha_{\mathcal{A}}}\Big)\max_{i\in \mathcal{I}}  \{\Lambda_i^{\rm join}(\mathcal A) \} \right\}
    \end{align}
    and the corresponding joining point $\widetilde{\lambda}_{t+1}^{\rm join}$ is
    \begin{align*}
        \widetilde{\lambda}_{t+1}^{\rm join} = \Big(1-\frac{\epsilon}{1+\alpha_{\mathcal{A}}}\Big)^{-1}\Lambda^{\rm join}_{\widetilde{i}_{t+1}^{\rm join}}(\mathcal A).
    \end{align*}
    Regarding the time complexity of the algorithm, the most expensive steps are computing $\mathbf{X}_{\mathcal{A}}^+$ in time $O(n|\mathcal{A}| + |\mathcal{A}|^2)$, inputting $\mathbf{y} - \mathbf{X}_{\mathcal{A}}\bfmu$ and $\mathbf{X}_{\mathcal{A}}\bftheta$ into KP-trees in time $O(n\log{n})$, and sampling $\widetilde{i}_{t+1}^{\rm join}$. We now show how to obtain the joining variable $\widetilde{i}_{t+1}^{\rm join}$, i.e., how to sample it from the set in \cref{eq:sampling_set}. In the following, fix $\mathcal{A}$ and $\mathcal{I}$.
    
    The first step is to create a unitary operator that maps $|i\rangle|\bar{0}\rangle \mapsto |i\rangle|R_i\rangle$ such that, for all $i\in\mathcal{I}$, after measuring the state $|R_i\rangle$, with probability at least $1-\delta_0$, where $\delta_0 = O\big(\delta^2/(T^2 |\mathcal{I}|\log(T/\delta))\big)$, the outcome $r_i$ of the first register satisfies
    \begin{align}\label{eq:map1}
        |r_i - \Lambda_i^{\rm join}(\mathcal A) | \leq \epsilon_0.
    \end{align}
    For such, we use \cref{lem:terms_estimation} to build maps $|i\rangle|\bar{0}\rangle \mapsto |i\rangle|\Psi_i\rangle$ and $|i\rangle|\bar{0}\rangle \mapsto |i\rangle|\Phi_i\rangle$ whose respective outcomes $\psi_i$ and $\phi_i$ of the first registers satisfy
    \begin{align*}
        |\psi_i - \mathbf{X}_i^\top (\mathbf{y} - \mathbf{X}_{\mathcal{A}}\bfmu)| &\leq \epsilon_1\|\mathbf{X}_i\|_{\infty}\|\mathbf{y} - \mathbf{X}_{\mathcal{A}}\bfmu\|_1 \leq \epsilon_1\|\mathbf{X}\|_{\max}\|\mathbf{y} - \mathbf{X}_{\mathcal{A}}\bfmu\|_1 , \\
        |\phi_i - \mathbf{X}_i^\top \mathbf{X}_{\mathcal{A}} \bftheta| &\leq \epsilon_2 \|\mathbf{X}_i\|_{\infty}\|\mathbf{X}_{\mathcal{A}}\bftheta\|_1 \leq \epsilon_2 \|\mathbf{X}\|_{\max} \|\mathbf{X}_{\mathcal{A}}\bftheta\|_1,
    \end{align*}
    in time $O(\epsilon_1^{-1}\log(1/\delta_0)\poly\log(nd))$ and $O(\epsilon_2^{-1}\log(1/\delta_0)\poly\log(nd))$, respectively. Then, by using \cref{lem:error_propagation}, we can construct the desired map $|i\rangle|\bar{0}\rangle \mapsto |i\rangle|R_i\rangle$ as in \cref{eq:map1} with
    \begin{align}\label{eq:epsilon0}
        \epsilon_0 = \max_{i\in\mathcal{I}}\left\{\Lambda_i^{\rm join}(\mathcal A)\left(\frac{\epsilon_1\|\mathbf{X}\|_{\max}\|\mathbf{y} - \mathbf{X}_{\mathcal{A}}\bfmu\|_1}{|\mathbf{X}_i^\top (\mathbf{y}-\mathbf{X}_{\mathcal{A}}\bfmu)|} + \frac{\epsilon_2 \|\mathbf{X}\|_{\max}
         \|\mathbf{X}_{\mathcal{A}}\bftheta\|_1}{|{\pm} 1 - \mathbf{X}_i^\top \mathbf{X}_{\mathcal{A}}\bftheta|}\right)\right\}.
    \end{align}
    Let us upper bound $\epsilon_0$. First note that $\max_{i\in\mathcal{I}}\{\Lambda^{\rm join}_i(\mathcal A)\} = \lambda_{t+1}^{\rm join}$ by the definition of $\lambda_{t+1}^{\rm join}$. Regarding the other quantities,
    %
    \begin{align*}
        \|\mathbf{X}_{\mathcal{A}}\bftheta\|_1 = \|(\mathbf{X}_{\mathcal{A}}^+)^\top\bfeta_{\mathcal{A}}\|_1
        &= \frac{1}{\lambda_t} \|(\mathbf{X}_{\mathcal{A}}^+)^\top\mathbf{X}_{\mathcal{A}}^\top(\mathbf{y} - \mathbf{X}_{\mathcal{A}}\widetilde{\bfbeta}_{\mathcal{A}}(\lambda_t)) \|_1 \\
        &\leq \frac{1}{\lambda_t} \|\mathbf{X}_{\mathcal{A}}^+\|_\infty\|\mathbf{X}_{\mathcal{A}}^\top(\mathbf{y} - \mathbf{X}_{\mathcal{A}}\widetilde{\bfbeta}_{\mathcal{A}}(\lambda_t)) \|_\infty \tag{H\"older's inequality}\\
        &\leq \|\mathbf{X}_{\mathcal{A}}^+\|_\infty \tag{$\|\mathbf{X}_{\mathcal{A}}^\top (\mathbf{y} - \mathbf{X}_{\mathcal{A}}\widetilde{\bfbeta}_{\mathcal{A}}(\lambda_t))\|_\infty \leq \lambda_t$}\\
        &\leq \sqrt{n} \|\mathbf{X}_{\mathcal{A}}^+\|_2,
        \tag{$\|\mathbf{X}_{\mathcal{A}}^+\|_\infty
        \leq \sqrt{n}\|\mathbf{X}_{\mathcal{A}}^+\|_2$}
    \end{align*}
    using that $\|\mathbf{X}_{\mathcal{A}}^\top (\mathbf{y} - \mathbf{X}_{\mathcal{A}}\widetilde{\bfbeta}_{\mathcal{A}}(\lambda_t))\|_\infty \leq \lambda_t$ since $\widetilde{\bfbeta}_{\mathcal{A}}(\lambda_t)$ satisfies the $\mathsf{KKT}_{\lambda_t}(\epsilon)$ condition. Also,
    \begin{align*}
        |{\pm} 1 - \mathbf{X}_i^\top \mathbf{X}_{\mathcal{A}}\bftheta| &\geq 1 - |\mathbf{X}_i^\top \mathbf{X}_{\mathcal{A}}\bftheta| \\
        &= 1 - \frac{1}{\lambda_t}|\mathbf{X}_i^\top (\mathbf{X}_{\mathcal{A}}^+)^\top \mathbf{X}_{\mathcal{A}}^\top (\mathbf{y} - \mathbf{X}_{\mathcal{A}}\widetilde{\bfbeta}_{\mathcal{A}}(\lambda_t))| \\
        &\geq 1 - \frac{1}{\lambda_t}\|\mathbf{X}_{\mathcal{A}}^+\mathbf{X}_i \|_1\|\mathbf{X}_{\mathcal{A}}^\top (\mathbf{y} - \mathbf{X}_{\mathcal{A}}\widetilde{\bfbeta}_{\mathcal{A}}(\lambda_t))\|_\infty \tag{H\"older's inequality}\\
        &\geq 1-\alpha_{\mathcal{A}}, \tag{$\|\mathbf{X}_{\mathcal{A}}^\top (\mathbf{y} - \mathbf{X}_{\mathcal{A}}\widetilde{\bfbeta}_{\mathcal{A}}(\lambda_t))\|_\infty \leq \lambda_t$ and $\|\mathbf{X}_{\mathcal{A}}^+\mathbf{X}_i \|_1 \leq \alpha_{\mathcal{A}}$}
    \end{align*}
    The above inequalities are sufficient to bound the second term in \cref{eq:epsilon0}. Regarding the first term,
    \begin{align*}
         \max_{i\in\mathcal{I}}\left\{\left|\frac{\mathbf{X}_i^\top (\mathbf{y}-\mathbf{X}_{\mathcal{A}}\bfmu)}{\pm 1 - \mathbf{X}_i^\top \mathbf{X}_{\mathcal{A}}\bftheta}\right|\frac{\epsilon_1\|\mathbf{X}\|_{\max}\|\mathbf{y} - \mathbf{X}_{\mathcal{A}}\bfmu\|_1}{|\mathbf{X}_i^\top (\mathbf{y}-\mathbf{X}_{\mathcal{A}}\bfmu)|}\right\} &= \epsilon_1 \max_{i\in\mathcal{I}}\left\{ \frac{\|\mathbf{X}\|_{\max}\|\mathbf{y} - \mathbf{X}_{\mathcal{A}}\bfmu\|_1}{|{\pm} 1 - \mathbf{X}_i^\top \mathbf{X}_{\mathcal{A}}\bftheta|}\right\}\\
         &\leq \frac{\epsilon_1}{1-\alpha_{\mathcal{A}}}\|\mathbf{X}\|_{\max}\|\mathbf{y} - \mathbf{X}_{\mathcal{A}}\bfmu\|_1 \\
         &= \frac{\epsilon_1}{1-\alpha_{\mathcal{A}}}\|\mathbf{X}\|_{\max}\|\mathbf{y} - \mathbf{X}_{\mathcal{A}}\bfmu\|_1 \frac{\lambda_{t+1}^{\rm join}}{\max_{i\in\mathcal{I}}\left|\frac{\mathbf{X}_i^\top (\mathbf{y}-\mathbf{X}_{\mathcal{A}}\bfmu)}{\pm 1 - \mathbf{X}_i^\top \mathbf{X}_{\mathcal{A}}\bftheta}\right|}\\
         &\leq \frac{(1+\alpha_{\mathcal{A}})\epsilon_1 \lambda_{t+1}^{\rm join}}{1-\alpha_{\mathcal{A}}}\frac{\|\mathbf{X}\|_{\max}\|\mathbf{y} - \mathbf{X}_{\mathcal{A}}\bfmu\|_1}{\max_{i\in\mathcal{I}}|\mathbf{X}_i^\top(\mathbf{y} - \mathbf{X}_{\mathcal{A}}\bfmu)|} \\
         &\leq \frac{(1+\alpha_{\mathcal{A}})\epsilon_1\gamma_{\mathcal{A}}^{-1}\lambda_{t+1}^{\rm join}}{1-\alpha_{\mathcal{A}}},
    \end{align*}
     using that $\max_{i\in\mathcal{I}}|\mathbf{X}_i^\top(\mathbf{y} - \mathbf{X}_{\mathcal{A}}\bfmu)| = \|\mathbf{X}_{\mathcal{I}}^\top(\mathbf{y} - \mathbf{X}_{\mathcal{A}}\bfmu)\|_\infty = \|\mathbf{X}^\top(\mathbf{y} - \mathbf{X}_{\mathcal{A}}\bfmu)\|_\infty \geq \gamma_{\mathcal{A}}\|\mathbf{X}\|_{\max}\|\mathbf{y} - \mathbf{X}_{\mathcal{A}}\bfmu\|_1$ and that $1-\alpha_{\mathcal{A}} \leq |{\pm} 1 - \mathbf{X}_i^\top \mathbf{X}_{\mathcal{A}}\bftheta| \leq 1+\alpha_{\mathcal{A}}$. All the above leads to
    \begin{align*}
         \epsilon_0 \leq \frac{\epsilon_1 (1+\alpha_{\mathcal{A}})\gamma_{\mathcal{A}}^{-1} + \epsilon_2 \sqrt{n}\|\mathbf{X}\|_{\max} \|\mathbf{X}_{\mathcal{A}}^+\|_2}{1-\alpha_{\mathcal{A}}}\lambda^{\rm join}_{t+1}.
    \end{align*}
    By taking 
    \begin{align*}
        \epsilon_1 =  \frac{\epsilon(1-\alpha_{\mathcal{A}})\gamma_{\mathcal{A}}}{2(1+\alpha_{\mathcal{A}})^2} \quad\text{and}\quad \epsilon_2 =  \frac{\epsilon(1-\alpha_{\mathcal{A}})}{2(1+\alpha_{\mathcal{A}})\sqrt{n}\|\mathbf{X}\|_{\max}\|\mathbf{X}_{\mathcal{A}}^+\|_2},
    \end{align*}
    then $\epsilon_0 \leq \frac{\epsilon}{2(1+\alpha_{\mathcal{A}})}\lambda^{\rm join}_{t+1}$ in time $O\Big(\frac{\gamma_{\mathcal{A}}^{-1} + \sqrt{n}\|\mathbf{X}\|_{\max}\|\mathbf{X}^+_{\mathcal{A}}\|_2}{(1-\alpha_{\mathcal{A}})\epsilon}\log(1/\delta_0)\poly\log(nd) \Big)$. Finally, we apply the approximate quantum minimum-finding algorithm (\cref{fact:minimum_finding}) to obtain an index $\widetilde{i}_{t+1}^{\rm join}\in\mathcal{I}$ such that
    \begin{align*}
        \Lambda^{\rm join}_{\widetilde{i}_{t+1}^{\rm join}}(\mathcal A) \geq \max_{i\in \mathcal{I}}  \{\Lambda^{\rm join}_i(\mathcal A)\} - 2\epsilon_0 \geq \Big(1-\frac{\epsilon}{1+\alpha_{\mathcal{A}}}\Big)\max_{i\in \mathcal{I}}  \{\Lambda_i^{\rm join}(\mathcal A)\},
    \end{align*}
    i.e., $\widetilde{i}_{t+1}^{\rm join}$ belongs to the set from \cref{eq:sampling_set} with probability at least $1-\delta/T$ as promised. 
    The total complexity is $\widetilde{O}(\sqrt{|\mathcal{I}|}\log(T/\delta))$ times the complexity of constructing the mapping $|i\rangle|\bar{0}\rangle \mapsto |i\rangle|R_i\rangle$. The final success probability follows from a union bound.
\end{proof}

If $\|\mathbf{X}\|_{\max}\|\mathbf{X}_{\mathcal{A}}^+\|_2$ is constant and the parameters $\alpha_{\mathcal{A}},\gamma_{\mathcal{A}}\in(0,1)$ are bounded away from $1$ and~$0$, respectively, then the complexity of \cref{alg:classical_quantum} is $\widetilde{O}(\sqrt{\smash[b]{n|\mathcal{I}|}}/\epsilon + n|\mathcal{A}| + |\mathcal{A}|^2)$ per iteration. If also $|\mathcal{A}| = O(n)$, then we obtain a fully quadratic advantage $\widetilde{O}(\sqrt{nd})$ over the classical LARS algorithm.

\subsection{Approximate classical LARS algorithm}
\label{sec:approximate_classical}

By using similar techniques and results behind the approximate quantum LARS algorithm, it is possible to devise an analogous approximate classical LARS algorithm, which can be seen as a dequantised counterpart to our quantum algorithms. The idea is again to approximately compute the joining times $\Lambda_i^{\rm join}(\mathcal A)=\frac{\mathbf{X}_i^\top (\mathbf{y}-\mathbf{X}_{\mathcal{A}}\bfmu)}{\pm 1 - \mathbf{X}_i^\top \mathbf{X}_{\mathcal{A}}\bftheta}$, but by using the classical sampling procedure from \cref{lem:sampling-based_inner_product_estimation} instead. For well-behaved design matrices, the complexity per iteration is $\widetilde{O}(n|\mathcal{I}|/\epsilon^2 + n|\mathcal{A}| + |\mathcal{A}|^2)$. The approximate classical LARS algorithm for the pathwise Lasso is shown in \cref{alg:approximate_classical}.

\begin{algorithm}[ht]
\caption{Approximate classical LARS algorithm for the pathwise Lasso}
\DontPrintSemicolon
\label{alg:approximate_classical}

\KwIn{$T\in\mathbb{N}$, $\delta\in(0,1)$, $\epsilon>0$, $\mathbf{y}\in\mathbb{R}^n$, $\mathbf{X}\in\mathbb{R}^{n\times d}$, and $\{\alpha_{\mathcal{A}} \geq \|\mathbf{X}_{\mathcal{A}}^+\mathbf{X}_{\mathcal{A}^c}\|_1\}_{\mathcal{A}\subseteq[d]:|\mathcal{A}|\leq T}$}

\KwOut{Regularisation path $\widetilde{\mathcal{P}}$ with $T$ kinks and error $\lambda \epsilon \|\widetilde{\bfbeta}(\lambda)\|_1$ with probability $\geq 1-\delta$}

Initialise $\mathcal{A}=\operatorname{argmax}_{j\in[d]} |\mathbf{X}_j^\top \mathbf{y}|$, $\mathcal{I}=[d]\setminus\mathcal{A}$, $\lambda_0 = \|\mathbf{X}^\top\mathbf{y}\|_\infty$, $\widetilde{\bfbeta}(\lambda_0) = \mathbf{0}$, $t=0$

    
   \While{$\mathcal{I}\neq \varnothing$, $t \leq T$}
   {
        $\bfeta_{\mathcal{A}} \gets \frac{1}{\lambda_{t}}\mathbf{X}_{\mathcal{A}}^\top(\mathbf{y} - \mathbf{X}_{\mathcal{A}}\widetilde{\bfbeta}_{\mathcal{A}}(\lambda_{t}))$

        $\bfmu \gets \mathbf{X}_{\mathcal{A}}^+ \mathbf{y}$ \tcp*{$\bfmu \in \mathbb{R}^{|\mathcal{A}|}$}
        
        $\bftheta \gets (\mathbf{X}_{\mathcal{A}}^\top \mathbf{X}_{\mathcal{A}})^+ \bfeta_{\mathcal{A}}$ \tcp*{$\bftheta \in \mathbb{R}^{|\mathcal{A}|}$}

        Input $\mathbf{y} - \mathbf{X}_{\mathcal{A}}\bfmu$ and $\mathbf{X}_{\mathcal{A}}\bftheta$ into classical-samplable memories

        Define $\Lambda_i^{\rm join} \triangleq \frac{\mathbf{X}_i^\top (\mathbf{y}-\mathbf{X}_{\mathcal{A}}\bfmu)}{\pm 1 - \mathbf{X}_i^\top \mathbf{X}_{\mathcal{A}}\bftheta}$ and $\Lambda_i^{\rm cross} \triangleq \frac{\mu_i}{\theta_i} \cdot \mathbf{1}\left[\frac{\mu_i}{\theta_i} \leq \lambda_t\right]$
   
        $i_{t+1}^{\rm cross}\gets \operatorname{argmax}_{i\in \mathcal{A}} \{\Lambda_i^{\rm cross}\}$ and $\lambda_{t+1}^{\rm cross}\gets \max_{i\in \mathcal{A}} \{\Lambda_i^{\rm cross}\}$
        
        Compute $\{r_i\}_{i\in\mathcal{I}}$ such that $|r_i - \Lambda_i^{\rm join}| \leq \frac{\epsilon\lambda_{t+1}^{\rm join}}{1+\alpha_{\mathcal{A}}}$ with failure probability $\frac{\delta}{T|\mathcal{I}|}$ (\cref{lem:sampling-based_inner_product_estimation})
        
        $\widetilde{i}_{t+1}^{\rm join} \gets \operatorname{argmax}_{i\in\mathcal{I}} \{r_i\}$ and $\widetilde{\lambda}_{t+1}^{\rm join} \gets \big(1-\frac{\epsilon}{1+\alpha_{\mathcal{A}}}\big)^{-1}\max_{i\in\mathcal{I}} \{r_i\}$
        
        $\lambda_{t+1} \gets \min\{\lambda_t,\max\{\lambda_{t+1}^{\rm cross}, \widetilde{\lambda}_{t+1}^{\rm join}\}\}$\;
        
        $\widetilde{\bfbeta}_{\mathcal{A}}(\lambda_{t+1}) \gets  \bfmu - \lambda_{t+1}\bftheta$\;
        
        \If{$\lambda_{t+1} = \lambda_{t+1}^{\rm cross}$}
        {        
            Move $i_{t+1}^{\rm cross}$ from $\mathcal{A}$ to $\mathcal{I}$
            
        }
        \Else
        {
            Move $\widetilde{i}_{t+1}^{\rm join}$ from $\mathcal{I}$ to $\mathcal{A}$
        }
        
        $t\gets t+1$
    
  }
\Return coefficients $[(\lambda_0,\widetilde{\bfbeta}(\lambda_0)),(\lambda_1,\widetilde{\bfbeta}(\lambda_1)),\dots]$
\end{algorithm}

\begin{theorem}\label{thr:approximate_classical}
    Let $\mathbf{X}\in\mathbb{R}^{n\times d}$ and $\mathbf{y}\in\mathbb{R}^n$. Assume access to classical-samplable structures of size $O(n)$. 
    Let $\delta\in(0,1)$, $\epsilon > 0$, and $T\in\mathbb{N}$. For $\mathcal{A}\subseteq[d]$ of size $|\mathcal{A}|\leq T$, assume there are $\alpha_{\mathcal{A}},\gamma_{\mathcal{A}}\in(0,1)$ such that
    \begin{align*}
        \|\mathbf{X}_{\mathcal{A}}^+\mathbf{X}_{\mathcal{A}^c}\|_1 \leq \alpha_{\mathcal{A}} \quad\text{and}\quad
        \frac{\|\mathbf{X}^\top(\mathbf{I} - \mathbf{X}_{\mathcal{A}}\mathbf{X}_{\mathcal{A}}^+) \mathbf{y}\|_\infty}{\|\mathbf{X}\|_{\max}\|(\mathbf{I} - \mathbf{X}_{\mathcal{A}}\mathbf{X}_{\mathcal{A}}^+) \mathbf{y}\|_1} \geq \gamma_{\mathcal{A}}.
    \end{align*}
    The approximate classical {\rm LARS} {\rm \cref{alg:approximate_classical}} outputs, with probability at least $1-\delta$, a continuous piecewise linear approximate regularisation path $\widetilde{\mathcal{P}} = \{\widetilde{\bfbeta}(\lambda)\in\mathbb{R}^d:\lambda>0\}$ with error $\lambda\epsilon\|\widetilde{\bfbeta}(\lambda)\|_1$ and at most $T$ kinks in time
    \begin{align*}
        O\left(\frac{\gamma_{\mathcal{A}}^{-2} + n\|\mathbf{X}\|_{\max}^2\|\mathbf{X}^+_{\mathcal{A}}\|_2^2}{(1-\alpha_{\mathcal{A}})^2\epsilon^2}|\mathcal{I}|\log(Td/\delta)\poly\log{n} + n|\mathcal{A}| + |\mathcal{A}|^2 \right)
    \end{align*}
    per iteration, where $\mathcal{A}$ and $\mathcal{I}$ are the active and inactive sets of the corresponding iteration.
\end{theorem}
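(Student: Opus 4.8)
The plan is to follow the same two-part structure as the proof of \cref{thr:approximate_quantum}: first establish correctness by reducing to the robustness result \cref{thr:correctness}, and then analyse the per-iteration running time, replacing the quantum inner-product estimation of \cref{lem:terms_estimation} and the quantum minimum-finding step by their classical sampling-based analogues. Since \cref{alg:approximate_classical} is structurally identical to \cref{alg:classical_quantum} except for how the joining times are estimated, the bulk of the argument carries over verbatim, and the only genuinely new ingredient is the classical cost model.

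For correctness, I would verify that the index $\widetilde{i}_{t+1}^{\rm join} = \operatorname{argmax}_{i\in\mathcal{I}}\{r_i\}$ returned by the algorithm lies in the set $\{j\in\mathcal{I}: \Lambda_j^{\rm join}(\mathcal{A}) \geq (1-\frac{\epsilon}{1+\alpha_{\mathcal{A}}})\max_{i\in\mathcal{I}}\Lambda_i^{\rm join}(\mathcal{A})\}$ required by \cref{thr:correctness}. This is the standard ``estimate-then-argmax'' argument: since each estimate satisfies $|r_i - \Lambda_i^{\rm join}(\mathcal{A})| \leq \epsilon_0$ with $\epsilon_0 \leq \frac{\epsilon}{2(1+\alpha_{\mathcal{A}})}\lambda_{t+1}^{\rm join}$, the maximiser of the $r_i$ achieves a true value within $2\epsilon_0$ of the true maximum $\lambda_{t+1}^{\rm join} = \max_{i\in\mathcal{I}}\Lambda_i^{\rm join}(\mathcal{A})$, which is exactly the multiplicative slack needed. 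Once membership is established, \cref{thr:correctness} directly yields that $\widetilde{\mathcal{P}}$ is a continuous piecewise linear approximate regularisation path with error $\lambda\epsilon\|\widetilde{\bfbeta}(\lambda)\|_1$.

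For the running time, the non-sampling steps — updating $\mathbf{X}_{\mathcal{A}}^+$ via the Sherman–Morrison formula, forming $\bfeta_{\mathcal{A}}, \bfmu, \bftheta$, writing $\mathbf{y} - \mathbf{X}_{\mathcal{A}}\bfmu$ and $\mathbf{X}_{\mathcal{A}}\bftheta$ into samplable memories, and the crossing-time search over $\mathcal{A}$ — contribute $O(n|\mathcal{A}| + |\mathcal{A}|^2)$, exactly as in the classical LARS analysis. The dominant term comes from estimating all $|\mathcal{I}|$ joining times. For each $i\in\mathcal{I}$ I would estimate the numerator $\mathbf{X}_i^\top(\mathbf{y} - \mathbf{X}_{\mathcal{A}}\bfmu)$ and the denominator $\mathbf{X}_i^\top\mathbf{X}_{\mathcal{A}}\bftheta$ to relative errors $\epsilon_1,\epsilon_2$ using \cref{lem:sampling-based_inner_product_estimation}, and combine them through \cref{lem:error_propagation}. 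The auxiliary bounds $\|\mathbf{X}_{\mathcal{A}}\bftheta\|_1 \leq \sqrt{n}\|\mathbf{X}_{\mathcal{A}}^+\|_2$, $|{\pm}1 - \mathbf{X}_i^\top\mathbf{X}_{\mathcal{A}}\bftheta| \geq 1-\alpha_{\mathcal{A}}$, and $\max_{i\in\mathcal{I}}|\mathbf{X}_i^\top(\mathbf{y} - \mathbf{X}_{\mathcal{A}}\bfmu)| \geq \gamma_{\mathcal{A}}\|\mathbf{X}\|_{\max}\|\mathbf{y} - \mathbf{X}_{\mathcal{A}}\bfmu\|_1$ are identical to those derived inside \cref{thr:approximate_quantum} — they depend only on the KKT-based structure of $\bftheta$ and $\bfeta_{\mathcal{A}}$, not on whether the estimation is quantum or classical — so the same choices $\epsilon_1 = \frac{\epsilon(1-\alpha_{\mathcal{A}})\gamma_{\mathcal{A}}}{2(1+\alpha_{\mathcal{A}})^2}$ and $\epsilon_2 = \frac{\epsilon(1-\alpha_{\mathcal{A}})}{2(1+\alpha_{\mathcal{A}})\sqrt{n}\|\mathbf{X}\|_{\max}\|\mathbf{X}_{\mathcal{A}}^+\|_2}$ guarantee the required bound on $\epsilon_0$.

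The only step that differs, and the source of the complexity gap relative to the quantum algorithm, is the cost model: \cref{lem:sampling-based_inner_product_estimation} estimates one inner product in time $O(\epsilon_i^{-2}\log(1/\delta_0)\poly\log(nd))$ — a quadratic rather than linear dependence on the inverse relative error — and we must carry out all $|\mathcal{I}|$ estimates explicitly rather than searching them with $\widetilde{O}(\sqrt{|\mathcal{I}|})$ queries. Substituting the above $\epsilon_1,\epsilon_2$ therefore turns the quantum factor $\frac{\gamma_{\mathcal{A}}^{-1} + \sqrt{n}\|\mathbf{X}\|_{\max}\|\mathbf{X}_{\mathcal{A}}^+\|_2}{(1-\alpha_{\mathcal{A}})\epsilon}\sqrt{|\mathcal{I}|}$ into $\frac{\gamma_{\mathcal{A}}^{-2} + n\|\mathbf{X}\|_{\max}^2\|\mathbf{X}_{\mathcal{A}}^+\|_2^2}{(1-\alpha_{\mathcal{A}})^2\epsilon^2}|\mathcal{I}|$, which is the claimed leading term. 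Taking the per-estimate failure probability $\delta_0 = \delta/(T|\mathcal{I}|)$ and applying a union bound over the $|\mathcal{I}|$ estimates in each of the $T$ iterations gives the overall success probability $1-\delta$ and produces the $\log(Td/\delta)$ factor. I do not anticipate a conceptual obstacle; the main care is simply in confirming that the norm bounds transfer unchanged and in tracking the quadratic error dependence cleanly through \cref{lem:error_propagation}.
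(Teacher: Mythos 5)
Your proposal is correct and follows essentially the same route as the paper's proof: reduce correctness to \cref{thr:correctness} via the estimate-then-argmax argument, reuse the norm bounds and the choices of $\epsilon_1,\epsilon_2$ from the proof of \cref{thr:approximate_quantum} (which indeed depend only on the KKT structure, not on the estimation method), and account for the quadratic $1/\epsilon^2$ cost of \cref{lem:sampling-based_inner_product_estimation} together with the explicit loop over all $|\mathcal{I}|$ indices and a union bound over the $T|\mathcal{I}|$ estimates. The paper's proof is just a terser version of exactly this argument.
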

\begin{proof}
    We start by noticing that the joining time computation step outputs $\widetilde{i}_{t+1}^{\rm join} = \operatorname{argmax}_{i\in\mathcal{I}} \{r_i\}$ and $\widetilde{\lambda}_{t+1}^{\rm join} = \big(1-\frac{\epsilon}{1+\alpha_{\mathcal{A}}}\big)^{-1}\max_{i\in\mathcal{I}} \{r_i\}$, where $|r_i - \Lambda_i^{\rm join}(\mathcal A) | \leq \frac{\epsilon}{1+\alpha_{\mathcal{A}}}\lambda_{t+1}^{\rm join}$.
    %
    Since $\max_{i\in\mathcal{I}} \{r_i\} \geq r_{i_{t+1}^{\rm join}} \geq \Lambda^{\rm join}_{{i}_{t+1}^{\rm join}}(\mathcal A)  - \frac{\epsilon}{1+\alpha_{\mathcal{A}}}\lambda_{t+1}^{\rm join} = \big(1-\frac{\epsilon}{1+\alpha_{\mathcal{A}}}\big)\lambda_{t+1}^{\rm join}$, this means the joining variable $\widetilde{i}_{t+1}^{\rm join}$ is taken from the set
    \begin{align*}
        \left\{j\in\mathcal{I}:  \Lambda_j^{\rm join}(\mathcal A)  \geq \Big(1-\frac{\epsilon}{1+\alpha_{\mathcal{A}}}\Big)\max_{i\in \mathcal{I}}  \{ \Lambda_i^{\rm join}(\mathcal A)  \}\right\}.
    \end{align*}
    \cref{thr:correctness} thus guarantees the correctness of the algorithm. Regarding the time complexity, the most expensive steps are computing $\mathbf{X}_{\mathcal{A}}^+$ in time $O(n|\mathcal{A}| + |\mathcal{A}|^2)$, inputting $\mathbf{y} - \mathbf{X}_{\mathcal{A}}\bfmu$ and $\mathbf{X}_{\mathcal{A}}\bftheta$ into classical-samplable memories in time $O(n\log{n})$, and computing the $|\mathcal{I}|$ values $\{r_i\}_{i\in\mathcal{I}}$. By following the exact same steps as in the proof of \cref{thr:approximate_quantum}, approximating any quantity $ \Lambda_i^{\rm join}(\mathcal A) $ within precision $\frac{\epsilon}{1+\alpha_{\mathcal{A}}}\lambda_{t+1}^{\rm join}$ and failure probability at most $\frac{\delta}{T|\mathcal{I}|}$ requires time
    \begin{align*}
        O\left(\frac{\gamma_{\mathcal{A}}^{-2} + n\|\mathbf{X}\|^2_{\max}\|\mathbf{X}^+_{\mathcal{A}}\|^2_2}{(1-\alpha_{\mathcal{A}})^2\epsilon^2}\log(Td/\delta)\poly\log{n} \right)
    \end{align*}
    (note that the term $O(\poly\log{n})$ comes from sampling the vectors $\mathbf{y} - \mathbf{X}_{\mathcal{A}}\bfmu$ and $\mathbf{X}_{\mathcal{A}}\bftheta$). The final success probability follows from a union bound.
\end{proof}

If $\|\mathbf{X}\|_{\max}\|\mathbf{X}^+_{\mathcal{A}}\|_2 = o(1)$ and the parameters $\alpha_{\mathcal{A}}$ and $\gamma_{\mathcal{A}}$ are bounded away from $1$ and $0$, respectively, then it is possible to obtain a speedup compared to the standard LARS algorithm as we shall see in the next section. For several inputs $\mathbf{X}$ and $\mathbf{y}$, however, we expect \Cref{alg:approximate_classical} to perform worse than \Cref{alg:classicalLasso}. We interpret this as evidence for the difficulty in properly dequantising the approximate quantum LARS \Cref{alg:classical_quantum}. For ill-behaved design matrices, approximating a function of its entries should be as costly as exactly computing it.

\subsection{Gaussian random design matrix}
\label{sec:examples}

For $\mathcal{A}\subseteq[d]$, the complexity of the approximate LARS algorithms depends on the quantities
\begin{align*}
    \|\mathbf{X}\|_{\max}\|\mathbf{X}^+_{\mathcal{A}}\|_2, \qquad  \|\mathbf{X}_{\mathcal{A}}^+\mathbf{X}_{\mathcal{A}^c}\|_1, \qquad\text{and}\qquad \frac{\|\mathbf{X}^\top(\mathbf{I} - \mathbf{X}_{\mathcal{A}}\mathbf{X}_{\mathcal{A}}^+) \mathbf{y}\|_\infty}{\|\mathbf{X}\|_{\max}\|(\mathbf{I} - \mathbf{X}_{\mathcal{A}}\mathbf{X}_{\mathcal{A}}^+) \mathbf{y}\|_1}.
\end{align*}
In this section, we shall bound these quantities for the specific case when the design matrix $\mathbf{X}\in\mathbb{R}^{n\times d}$ is a random matrix from the $\mathbf{\Sigma}$-Gaussian ensemble. According to \cref{fact:continuous_distribution}, in this situation the Lasso solution is unique almost surely, thus $\operatorname{null}(\mathbf{X}_{\mathcal{A}}) = \{\mathbf{0}\}$ and $\mathbf{X}_{\mathcal{A}}^+ = (\mathbf{X}_{\mathcal{A}}^\top \mathbf{X}_{\mathcal{A}})^{-1} \mathbf{X}_{\mathcal{A}}^\top$. As we shall prove below, with high probability, $\|\mathbf{X}\|_{\max}\|\mathbf{X}^+_{\mathcal{A}}\|_2$ is $O(\sqrt{\log(d)/n)}$ and the mutual incoherence is at most $1/2$. The mutual overlap between $\mathbf{y}$ and $\mathbf{X}$, on the other hand, is $\Omega(\|\mathbf{y}\|_2/(\|\mathbf{y}\|_1\sqrt{\log{d}}))$ with high probability, which equals $\Omega(1/\sqrt{\log{d}})$ when $\mathbf{y}$ is sparse. These bounds yield quantum and classical complexities $\widetilde{O}(\sqrt{d}/\epsilon)$ and $\widetilde{O}(d/\epsilon^2)$ per iteration for \Cref{alg:classical_quantum,alg:approximate_classical}, respectively, as stated in \Cref{res:result_gaussian}.

Let us start our analysis with the quantity $\|\mathbf{X}\|_{\max}\|\mathbf{X}^+_{\mathcal{A}}\|_2$. The next result states that, if $\mathbf{\Sigma}$ is well behaved, which includes $\mathbf{\Sigma} = \mathbf{I}$, then $\|\mathbf{X}\|_{\max}\|\mathbf{X}_{\mathcal{A}}^+\|_2 = O(\sqrt{\log(d)/n})$ with high probability.
\begin{lemma}\label{lem:condition_number}
    Let $\mathbf{X}\in\mathbb{R}^{n\times d}$ be a random matrix from the $\mathbf{\Sigma}$-Gaussian ensemble, $\mathbf{\Sigma}\succ \mathbf{0}$. Given $\mathcal{A}\subseteq[d]$ of size $|\mathcal{A}|\leq n$, let $\mathbf{\Sigma}_{\mathcal{A}|\mathcal{A}^c} = \mathbf{\Sigma}_{\mathcal{A}\mathcal{A}} - \mathbf{\Sigma}_{\mathcal{A}\mathcal{A}^c}(\mathbf{\Sigma}_{\mathcal{A}^c\mathcal{A}^c})^{-1}\mathbf{\Sigma}_{\mathcal{A}^c\mathcal{A}} \succ \mathbf{0}$ be the conditional covariance matrix of $\mathbf{X}_{\mathcal{A}}|\mathbf{X}_{\mathcal{A}^c}$. Then, for $\delta\in(2e^{-n/2},1)$,
    \begin{align*}
        \mathbb{P}\left[\|\mathbf{X}\|_{\max}\|\mathbf{X}_{\mathcal{A}}^+\|_2 \leq \frac{\|\sqrt{\mathbf{\Sigma}}\|_1\sqrt{\ln(4nd/\delta)}}{(\sqrt{n}-\sqrt{2\ln(2/\delta)})\sigma_{\min}(\sqrt{\smash[b]{\mathbf{\Sigma}_{\mathcal{A}|\mathcal{A}^c}}}) - \sqrt{\operatorname{Tr}[\mathbf{\Sigma}_{\mathcal{A}|\mathcal{A}^c}]}}\right] \geq 1-\delta.
    \end{align*}
    In particular, if $\mathbf{\Sigma} = \mathbf{I}$ and $|\mathcal{A}| \leq n/2$, then
    \begin{align*}
        \mathbb{P}\left[\|\mathbf{X}\|_{\max}\|\mathbf{X}_{\mathcal{A}}^+\|_2 \leq \frac{\sqrt{\ln(4nd/\delta)}}{(1-1/\sqrt{2})\sqrt{n}-\sqrt{2\ln(2/\delta)}}\right] \geq 1-\delta.
    \end{align*}
\end{lemma}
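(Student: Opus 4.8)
The plan is to bound the two factors $\|\mathbf{X}\|_{\max}$ and $\|\mathbf{X}_{\mathcal{A}}^+\|_2 = 1/\sigma_{\min}(\mathbf{X}_{\mathcal{A}})$ separately, each on an event of probability at least $1-\delta/2$, and then combine them by a union bound so that $\|\mathbf{X}\|_{\max}\|\mathbf{X}_{\mathcal{A}}^+\|_2 \le \|\mathbf{X}\|_{\max}/\sigma_{\min}(\mathbf{X}_{\mathcal{A}})$ is controlled with probability at least $1-\delta$. This separation already explains the two logarithmic terms in the statement: the $\ln(4nd/\delta)$ will come from a union bound over the $nd$ entries of $\mathbf{X}$ with failure budget $\delta/2$, and the $\ln(2/\delta)$ from a single concentration inequality for $\sigma_{\min}$ with the remaining budget $\delta/2$.

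For the numerator I would note that each row of $\mathbf{X}$ is $\mathcal{N}(\mathbf{0},\mathbf{\Sigma})$, so writing $X_{ij} = [\sqrt{\mathbf{\Sigma}}]_{j,:}\,\mathbf{z}_i$ with $\mathbf{z}_i\sim\mathcal{N}(\mathbf{0},\mathbf{I})$, \cref{fact:sub_gaussian} shows each $X_{ij}$ is sub-Gaussian with parameter $\Sigma_{jj}=\|[\sqrt{\mathbf{\Sigma}}]_{j,:}\|_2^2$. Applying the Gaussian tail bound (\cref{fact:Chernoff}) entrywise and union-bounding over all $nd$ entries yields, with probability at least $1-\delta/2$, a bound of the form $\|\mathbf{X}\|_{\max}\le \max_{j}\sqrt{\Sigma_{jj}}\,\sqrt{\ln(4nd/\delta)}$ (the precise constant following from the tail). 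Finally I would use $\sqrt{\Sigma_{jj}}=\|[\sqrt{\mathbf{\Sigma}}]_{j,:}\|_2\le \|[\sqrt{\mathbf{\Sigma}}]_{j,:}\|_1\le \|\sqrt{\mathbf{\Sigma}}\|_1$, where the last step uses that $\sqrt{\mathbf{\Sigma}}$ is symmetric, so its induced $1$- and $\infty$-norms coincide, to replace $\max_j\sqrt{\Sigma_{jj}}$ by $\|\sqrt{\mathbf{\Sigma}}\|_1$ and obtain the numerator.

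For the denominator I would lower bound $\sigma_{\min}(\mathbf{X}_{\mathcal{A}})$. Conditioning on $\mathbf{X}_{\mathcal{A}^c}$ (the same conditioning used in the companion mutual-incoherence lemma), Gaussian conditioning gives the decomposition $\mathbf{X}_{\mathcal{A}} = \mathbf{M} + \mathbf{E}$, where $\mathbf{M}=\mathbf{X}_{\mathcal{A}^c}\mathbf{\Sigma}_{\mathcal{A}^c\mathcal{A}^c}^{-1}\mathbf{\Sigma}_{\mathcal{A}^c\mathcal{A}}$ is deterministic given $\mathbf{X}_{\mathcal{A}^c}$ and $\mathbf{E}$ has independent rows $\mathcal{N}(\mathbf{0},\mathbf{\Sigma}_{\mathcal{A}|\mathcal{A}^c})$; this is precisely where the conditional covariance $\mathbf{\Sigma}_{\mathcal{A}|\mathcal{A}^c}$ enters. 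Writing $\sigma_{\min}(\mathbf{X}_{\mathcal{A}}) = \min_{\|\mathbf{v}\|_2=1}\max_{\|\mathbf{u}\|_2=1}\mathbf{u}^\top(\mathbf{M}+\mathbf{E})\mathbf{v}$, I would apply Gordon's Gaussian min--max comparison inequality to the (conditionally) Gaussian process $\mathbf{u}^\top\mathbf{E}\mathbf{v}$, which produces the mean estimate $\sqrt{n}\,\sigma_{\min}(\sqrt{\smash[b]{\mathbf{\Sigma}_{\mathcal{A}|\mathcal{A}^c}}})-\sqrt{\operatorname{Tr}[\mathbf{\Sigma}_{\mathcal{A}|\mathcal{A}^c}]}$, the trace term arising from $\mathbb{E}\|\sqrt{\smash[b]{\mathbf{\Sigma}_{\mathcal{A}|\mathcal{A}^c}}}\,\mathbf{h}\|_2\le\sqrt{\operatorname{Tr}[\mathbf{\Sigma}_{\mathcal{A}|\mathcal{A}^c}]}$, and then use Gaussian Lipschitz concentration of $\mathbf{E}\mapsto\sigma_{\min}(\mathbf{M}+\mathbf{E})$ to subtract a deviation term of order $\sqrt{2\ln(2/\delta)}\,\sigma_{\min}(\sqrt{\smash[b]{\mathbf{\Sigma}_{\mathcal{A}|\mathcal{A}^c}}})$, valid with probability at least $1-\delta/2$. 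The hypothesis $\delta\in(2e^{-n/2},1)$ guarantees $\sqrt{2\ln(2/\delta)}<\sqrt{n}$, so the leading $\sqrt{n}$ term is not swamped and the denominator stays positive.

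The hard part will be this smallest-singular-value step: one must run Gordon's comparison in the presence of the deterministic drift $\mathbf{M}$ produced by the conditioning, and arrange the concentration so that $\sigma_{\min}(\sqrt{\smash[b]{\mathbf{\Sigma}_{\mathcal{A}|\mathcal{A}^c}}})$, rather than $\sigma_{\max}$, is the quantity factored out of both the $\sqrt{n}$ and the deviation contributions, exactly as in the statement. Combining the two events by a union bound then gives the general bound, and the special case follows at once: when $\mathbf{\Sigma}=\mathbf{I}$ the off-diagonal blocks vanish, so $\mathbf{\Sigma}_{\mathcal{A}|\mathcal{A}^c}=\mathbf{I}_{|\mathcal{A}|}$, whence $\sigma_{\min}(\sqrt{\smash[b]{\mathbf{\Sigma}_{\mathcal{A}|\mathcal{A}^c}}})=1$, $\|\sqrt{\mathbf{\Sigma}}\|_1=1$, and $\operatorname{Tr}[\mathbf{\Sigma}_{\mathcal{A}|\mathcal{A}^c}]=|\mathcal{A}|\le n/2$, so that $\sqrt{\operatorname{Tr}[\mathbf{\Sigma}_{\mathcal{A}|\mathcal{A}^c}]}\le\sqrt{n}/\sqrt{2}$ and the denominator collapses to $(1-1/\sqrt{2})\sqrt{n}-\sqrt{2\ln(2/\delta)}$.
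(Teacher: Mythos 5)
Your treatment of the numerator is essentially the paper's: the paper writes $\mathbf{X}=\mathbf{Z}\sqrt{\mathbf{\Sigma}}$ with $\mathbf{Z}$ from the $\mathbf{I}$-Gaussian ensemble, bounds $\|\mathbf{X}\|_{\max}\le\|\sqrt{\mathbf{\Sigma}}\|_1\|\mathbf{Z}\|_{\max}$ deterministically, and then controls $\|\mathbf{Z}\|_{\max}$ by a Chernoff bound plus a union bound over the $nd$ entries with budget $\delta/2$; your entrywise sub-Gaussian argument with $\max_j\sqrt{\Sigma_{jj}}\le\|\sqrt{\mathbf{\Sigma}}\|_1$ is the same computation with the norm inequality applied at a different point. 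The overall structure (two events, each of probability $\ge 1-\delta/2$, combined by a union bound) and the special case $\mathbf{\Sigma}=\mathbf{I}$ also match.

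The gap is in your denominator step. The paper does \emph{not} condition on $\mathbf{X}_{\mathcal{A}^c}$: it treats the rows of $\mathbf{X}_{\mathcal{A}}$ as i.i.d.\ mean-zero Gaussians with covariance $\mathbf{\Sigma}_{\mathcal{A}|\mathcal{A}^c}$ and invokes \cite[Theorem~6.1]{wainwright2019high} as a black box, which directly yields $\mathbb{P}\big[\|\mathbf{X}_{\mathcal{A}}^+\|_2^{-1}\le(1-\epsilon)\sqrt{n}\,\sigma_{\min}(\sqrt{\smash[b]{\mathbf{\Sigma}_{\mathcal{A}|\mathcal{A}^c}}})-\sqrt{\operatorname{Tr}[\mathbf{\Sigma}_{\mathcal{A}|\mathcal{A}^c}]}\big]\le e^{-n\epsilon^2/2}$; taking $\epsilon=\sqrt{2\ln(2/\delta)/n}$ (which is $<1$ precisely because $\delta>2e^{-n/2}$) gives the remaining $\delta/2$. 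The Gordon comparison and Gaussian Lipschitz concentration you describe live inside that citation, but for a matrix with no drift. Your conditioning produces $\mathbf{X}_{\mathcal{A}}=\mathbf{M}+\mathbf{E}$, and the step you flag as ``the hard part'' is genuinely unresolved: Weyl-type perturbation gives $\sigma_{\min}(\mathbf{M}+\mathbf{E})\ge\sigma_{\min}(\mathbf{E})-\|\mathbf{M}\|_2$, importing a term absent from the statement; projecting onto $\operatorname{col}(\mathbf{X}_{\mathcal{A}^c})^\perp$ to annihilate $\mathbf{M}$ fails because that subspace can be trivial when $d\ge n$; and carrying the deterministic term $\mathbf{u}^\top\mathbf{M}\mathbf{v}$ through Gordon's min--max comparison leaves you lower-bounding $\|\mathbf{M}\mathbf{v}+c\,\mathbf{g}\|_2$, which again costs $\|\mathbf{M}\mathbf{v}\|_2$. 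So as proposed the argument does not close; the fix is to drop the conditioning and apply the packaged smallest-singular-value bound to $\mathbf{X}_{\mathcal{A}}$ itself. (Your instinct that $\mathbf{\Sigma}_{\mathcal{A}|\mathcal{A}^c}$ should enter via conditioning is understandable --- the marginal row covariance of $\mathbf{X}_{\mathcal{A}}$ is $\mathbf{\Sigma}_{\mathcal{A}\mathcal{A}}$, and the paper's one-line justification for using the conditional covariance instead is itself quite terse --- but its proof proceeds with no drift term at all.)
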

\begin{proof}
    Let us start by bounding $\|\mathbf{X}\|_{\max}$. We can write $\mathbf{X} = \mathbf{Z}\sqrt{\mathbf{\Sigma}}$ where $\mathbf{Z}$ is drawn from the $\mathbf{I}$-Gaussian ensemble. Let $\mathbf{Z}^{(i)}\in\mathbb{R}^{d\times 1}$ be the $i$-th row of $\mathbf{Z}$. Then 
    \begin{align*}
        \|\mathbf{X}\|_{\max} = \|\mathbf{Z}\sqrt{\mathbf{\Sigma}}\|_{\max} = \max_{i\in[n]}\|\sqrt{\mathbf{\Sigma}}^\top\mathbf{Z}^{(i)}\|_\infty \leq \max_{i\in[n]}\|\sqrt{\mathbf{\Sigma}}^\top\|_\infty\|\mathbf{Z}^{(i)}\|_\infty = \|\sqrt{\mathbf{\Sigma}}\|_1 \|\mathbf{Z}\|_{\max}.
    \end{align*}
    By a Chernoff bound (\Cref{fact:Chernoff}) plus a union bound, we have that $\mathbb{P}[\|\mathbf{Z}\|_{\max} \leq \sqrt{\ln(4nd/\delta)}] \geq 1-\delta/2$, which implies that $\mathbb{P}[\|\mathbf{X}\|_{\max} \leq \sqrt{\ln(4nd/\delta)}\|\sqrt{\mathbf{\Sigma}}\|_1] \geq 1-\delta/2$.

    We now focus on $\|\mathbf{X}^+_{\mathcal{A}}\|_2$. The rows of $\mathbf{X}_{\mathcal{A}}$ are i.i.d.\ sampled from the Gaussian distribution $\mathcal{N}(\mathbf{0},\mathbf{\Sigma}_{\mathcal{A}|\mathcal{A}^c})$. According to~\cite[Theorem~6.1]{wainwright2019high},
    \begin{align*}
        \mathbb{P}\left[\|\mathbf{X}_{\mathcal{A}}^+\|_2^{-1} \leq (1-\epsilon)\sqrt{n}\sigma_{\min}(\sqrt{\smash[b]{\mathbf{\Sigma}_{\mathcal{A}|\mathcal{A}^c}}}) - \sqrt{\operatorname{Tr}[\mathbf{\Sigma}_{\mathcal{A}|\mathcal{A}^c}]} \right] \leq e^{-n\epsilon^2/2} \quad\text{for all}~\epsilon\in(0,1).
    \end{align*}
    By taking $\epsilon = \sqrt{2\ln(2/\delta)/n} < 1$, the above probability is at most $\delta/2$. The result follows by putting together the bounds on $\|\mathbf{X}\|_{\max}$ and $\|\mathbf{X}_{\mathcal{A}}^+\|_2$.

    In case $\mathbf{\Sigma} = \mathbf{I}$, then $\mathbf{\Sigma}_{\mathcal{A}|\mathcal{A}^c} = \mathbf{I}$, $\sigma_{\min}(\sqrt{\smash[b]{\mathbf{\Sigma}_{\mathcal{A}|\mathcal{A}^c}}}) = 1$, $\operatorname{Tr}[\mathbf{\Sigma}_{\mathcal{A}|\mathcal{A}^c}] = |\mathcal{A}|$, and $\|\mathbf{\Sigma}\|_1 = 1$. Therefore, $\|\mathbf{X}_{\mathcal{A}}^+\|_2^{-1} \geq \sqrt{n} - \sqrt{|\mathcal{A}|} - \sqrt{2\ln(2/\delta)} \geq (1-1/\sqrt{2})\sqrt{n} - \sqrt{2\ln(2/\delta)}$ with high probability.
    %
\end{proof}


We now move on to $\|\mathbf{X}_{\mathcal{A}}^+\mathbf{X}_{\mathcal{A}^c}\|_1$. The next lemma tells us that, if $|\mathcal{A}| = O(\sigma_{\min}(\mathbf{\Sigma}) n/\log{d})$, where $\sigma_{\min}(\mathbf{\Sigma})$ is the minimum eigenvalue of $\mathbf{\Sigma}$, then the mutual incoherence is a constant bounded away from $1$ with high probability. 
A similar result can be found in~\cite[Exercise~7.19]{wainwright2019high}.

\begin{lemma}\label{lem:bound_mutual_incoherence}
    Let $\mathbf{X}\in\mathbb{R}^{n\times d}$ be a random matrix from the $\mathbf{\Sigma}$-Gaussian ensemble, where $\mathbf{\Sigma}\succ \mathbf{0}$ has minimum eigenvalue $\sigma_{\min}(\mathbf{\Sigma})\in(0,1]$ and its diagonal entries are at most $1$. Given $\mathcal{A}\subseteq[d]$, suppose that there is $\bar{\alpha}\in[0,1)$ such that
    \begin{align*}
        \max_{j\in\mathcal{A}^c} \|\mathbf{\Sigma}_{j\mathcal{A}} (\mathbf{\Sigma}_{\mathcal{A}\mathcal{A}})^{-1}\|_1 \leq \bar{\alpha}.
    \end{align*}
    Given $\delta\in(0,1)$, if $|\mathcal{A}| \leq (1-\bar{\alpha})^2\sigma_{\min}(\mathbf{\Sigma)}n/72\ln(4d/\delta)$, then 
    \begin{align*}
        \mathbb{P}\left[\|\mathbf{X}_{\mathcal{A}}^+\mathbf{X}_{\mathcal{A}^c}\|_1 \leq \frac{1}{2} + \frac{\bar{\alpha}}{2} \right] \geq 1-\delta.
    \end{align*}
\end{lemma}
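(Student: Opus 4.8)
The plan is to reduce the matrix $\ell_1$-norm to a column-wise statement and then split each column into a population part, controlled by the hypothesis on $\mathbf{\Sigma}$, and a stochastic deviation that must be shown to be small. Since $\|\mathbf{X}_{\mathcal{A}}^+\mathbf{X}_{\mathcal{A}^c}\|_1 = \max_{j\in\mathcal{A}^c}\|\mathbf{X}_{\mathcal{A}}^+\mathbf{X}_j\|_1$, it suffices to bound $\|\mathbf{X}_{\mathcal{A}}^+\mathbf{X}_j\|_1 \le \tfrac12(1+\bar{\alpha})$ for each fixed $j\in\mathcal{A}^c$ and to finish with a union bound over the at most $d$ inactive columns. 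For a fixed $j$, I would invoke the Gaussian conditional decomposition: since the rows of $\mathbf{X}$ are i.i.d.\ $\mathcal{N}(\mathbf{0},\mathbf{\Sigma})$, conditioning the $j$-th column on $\mathbf{X}_{\mathcal{A}}$ gives $\mathbf{X}_j = \mathbf{X}_{\mathcal{A}}\mathbf{\Sigma}_{\mathcal{A}\mathcal{A}}^{-1}\mathbf{\Sigma}_{\mathcal{A}j} + \mathbf{w}_j$, where $\mathbf{w}_j\sim\mathcal{N}(\mathbf{0},\sigma_{j|\mathcal{A}}^2\mathbf{I}_n)$ is independent of $\mathbf{X}_{\mathcal{A}}$ and the conditional variance $\sigma_{j|\mathcal{A}}^2 = \mathbf{\Sigma}_{jj} - \mathbf{\Sigma}_{j\mathcal{A}}\mathbf{\Sigma}_{\mathcal{A}\mathcal{A}}^{-1}\mathbf{\Sigma}_{\mathcal{A}j}$ satisfies $\sigma_{j|\mathcal{A}}^2 \le \mathbf{\Sigma}_{jj}\le 1$. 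Because $\mathbf{X}_{\mathcal{A}}$ has full column rank almost surely (\cref{fact:continuous_distribution}), so that $\mathbf{X}_{\mathcal{A}}^+\mathbf{X}_{\mathcal{A}} = \mathbf{I}$, this yields the exact splitting $\mathbf{X}_{\mathcal{A}}^+\mathbf{X}_j = \mathbf{\Sigma}_{\mathcal{A}\mathcal{A}}^{-1}\mathbf{\Sigma}_{\mathcal{A}j} + \mathbf{X}_{\mathcal{A}}^+\mathbf{w}_j$. The first, deterministic term has $\ell_1$-norm at most $\bar{\alpha}$ by assumption, using $\|\mathbf{\Sigma}_{\mathcal{A}\mathcal{A}}^{-1}\mathbf{\Sigma}_{\mathcal{A}j}\|_1 = \|\mathbf{\Sigma}_{j\mathcal{A}}\mathbf{\Sigma}_{\mathcal{A}\mathcal{A}}^{-1}\|_1$ (the inverse of the symmetric matrix $\mathbf{\Sigma}_{\mathcal{A}\mathcal{A}}$ is symmetric).

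It then remains to show that the deviation obeys $\|\mathbf{X}_{\mathcal{A}}^+\mathbf{w}_j\|_1 \le \tfrac12(1-\bar{\alpha})$ with probability at least $1-\delta/d$, so that $\|\mathbf{X}_{\mathcal{A}}^+\mathbf{X}_j\|_1 \le \bar{\alpha} + \tfrac12(1-\bar{\alpha}) = \tfrac12 + \tfrac{\bar{\alpha}}{2}$. I would handle this conditionally on $\mathbf{X}_{\mathcal{A}}$ in two steps. First, control the spectral norm $\|\mathbf{X}_{\mathcal{A}}^+\|_2 = 1/\sigma_{\min}(\mathbf{X}_{\mathcal{A}})$: the rows of $\mathbf{X}_{\mathcal{A}}$ are i.i.d.\ $\mathcal{N}(\mathbf{0},\mathbf{\Sigma}_{\mathcal{A}\mathcal{A}})$ and, since a principal submatrix of $\mathbf{\Sigma}$ cannot have smaller minimum eigenvalue than $\mathbf{\Sigma}$ itself, $\sigma_{\min}(\mathbf{\Sigma}_{\mathcal{A}\mathcal{A}}) \ge \sigma_{\min}(\mathbf{\Sigma})$; the same Gaussian smallest-singular-value concentration invoked in \cref{lem:condition_number} then gives $\sigma_{\min}(\mathbf{X}_{\mathcal{A}}) \gtrsim \sqrt{n\,\sigma_{\min}(\mathbf{\Sigma})}$, hence $\|\mathbf{X}_{\mathcal{A}}^+\|_2 \lesssim 1/\sqrt{n\,\sigma_{\min}(\mathbf{\Sigma})}$, on an event of probability at least $1-\delta/2$ (this is where the smallness of $|\mathcal{A}|$ relative to $n$ first enters). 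Second, conditionally on this event, $\mathbf{X}_{\mathcal{A}}^+\mathbf{w}_j$ is a centred Gaussian vector in $\mathbb{R}^{|\mathcal{A}|}$ with covariance $\sigma_{j|\mathcal{A}}^2(\mathbf{X}_{\mathcal{A}}^\top\mathbf{X}_{\mathcal{A}})^{-1}$; writing $\|\mathbf{X}_{\mathcal{A}}^+\mathbf{w}_j\|_1 = \max_{\mathbf{s}\in\{\pm1\}^{|\mathcal{A}|}}\mathbf{s}^\top\mathbf{X}_{\mathcal{A}}^+\mathbf{w}_j$, each fixed-sign linear form is a scalar Gaussian whose variance is at most $\sigma_{j|\mathcal{A}}^2\|\mathbf{X}_{\mathcal{A}}^+\|_2^2\,|\mathcal{A}| \le \|\mathbf{X}_{\mathcal{A}}^+\|_2^2\,|\mathcal{A}|$, which is controlled by the Chernoff tail of \cref{fact:Chernoff}. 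Combining this with the spectral bound from the first step, the restriction on $|\mathcal{A}|$ stated in the hypothesis is exactly what forces the deviation below $\tfrac12(1-\bar{\alpha})$ at the required confidence.

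Finally I would union-bound the per-column failure probability over the (at most $d$) indices $j\in\mathcal{A}^c$ together with the event bounding $\sigma_{\min}(\mathbf{X}_{\mathcal{A}})$, and specialise the resulting threshold to recover the stated constant. The main obstacle I anticipate is precisely the passage from coordinate-wise Gaussian control to a bound on the summed ($\ell_1$) deviation that holds uniformly over all inactive columns: the $\ell_1$-norm aggregates $|\mathcal{A}|$ correlated Gaussian coordinates, so the quantitative interplay between the dimension $|\mathcal{A}|$, the sample size $n$ (through $\|\mathbf{X}_{\mathcal{A}}^+\|_2$), the conditioning $\sigma_{\min}(\mathbf{\Sigma})$, and the logarithmic union-bound cost $\ln(4d/\delta)$ is delicate, and it is this step that dictates the admissible scaling of $|\mathcal{A}|$. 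The remaining manipulations—the conditional-Gaussian identity, the Hölder/symmetry bound on the population term, and the singular-value concentration—are routine given the tools already established in \cref{lem:condition_number} and \cref{fact:Chernoff}.
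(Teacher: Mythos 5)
Your proposal follows essentially the same route as the paper's proof: the conditional Gaussian decomposition $\mathbf{X}_j^\top = \mathbf{\Sigma}_{j\mathcal{A}}(\mathbf{\Sigma}_{\mathcal{A}\mathcal{A}})^{-1}\mathbf{X}_{\mathcal{A}}^\top + \mathbf{W}_j^\top$, the population term bounded by $\bar{\alpha}$, the stochastic term treated as a (conditionally) sub-Gaussian linear form with parameter at most $|\mathcal{A}|\,\|(\mathbf{X}_{\mathcal{A}}^\top\mathbf{X}_{\mathcal{A}})^{-1}\|_2$, the spectral concentration bound $\|(\mathbf{X}_{\mathcal{A}}^\top\mathbf{X}_{\mathcal{A}})^{-1}\|_2 \lesssim 1/(n\sigma_{\min}(\mathbf{\Sigma}))$, and a Chernoff-plus-union bound over $j\in\mathcal{A}^c$ that fixes the admissible size of $|\mathcal{A}|$. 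The only difference is presentational: the paper rewrites $\|\mathbf{X}_{\mathcal{A}}^+\mathbf{X}_{\mathcal{A}^c}\|_1$ as the $\ell_\infty$ operator norm of the transpose and tests against a vector $\mathbf{u}$ with $\|\mathbf{u}\|_\infty = 1$, whereas you express the column $\ell_1$-norm as a maximum over sign vectors — the underlying estimate (and the delicate point you flag about aggregating $|\mathcal{A}|$ correlated coordinates) is the same in both.
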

\begin{proof}
    First notice that
    \begin{align*}
        \|\mathbf{X}_{\mathcal{A}}^+\mathbf{X}_{\mathcal{A}^c}\|_1 = \|(\mathbf{X}_{\mathcal{A}}^\top \mathbf{X}_{\mathcal{A}})^{-1} \mathbf{X}_{\mathcal{A}}^\top\mathbf{X}_{\mathcal{A}^c}\|_1 =  \|\mathbf{X}_{\mathcal{A}^c}^\top \mathbf{X}_{\mathcal{A}}(\mathbf{X}_{\mathcal{A}}^\top \mathbf{X}_{\mathcal{A}})^{-1} \|_\infty.
    \end{align*}
    For any $\mathbf{u}\in\mathbb{R}^{|\mathcal{A}|}$ such that $\|\mathbf{u}\|_\infty = 1$, note that
    \begin{align*}
        \|\mathbf{X}_{\mathcal{A}^c}^\top \mathbf{X}_{\mathcal{A}}(\mathbf{X}_{\mathcal{A}}^\top \mathbf{X}_{\mathcal{A}})^{-1} \|_\infty \geq \|\mathbf{X}_{\mathcal{A}^c}^\top \mathbf{X}_{\mathcal{A}}(\mathbf{X}_{\mathcal{A}}^\top \mathbf{X}_{\mathcal{A}})^{-1}\mathbf{u} \|_\infty = \max_{j\in\mathcal{A}^c} |\mathbf{X}_{j}^\top \mathbf{X}_{\mathcal{A}}(\mathbf{X}_{\mathcal{A}}^\top \mathbf{X}_{\mathcal{A}})^{-1}\mathbf{u}|.
    \end{align*}
    Consider the quantity $|\mathbf{X}_{j}^\top \mathbf{X}_{\mathcal{A}}(\mathbf{X}_{\mathcal{A}}^\top \mathbf{X}_{\mathcal{A}})^{-1}\mathbf{u}|$ for fixed $j\in\mathcal{A}^c$. The zero-mean Gaussian vector $\mathbf{X}_j$ can be decomposed into a linear prediction based on $\mathbf{X}_{\mathcal{A}}$ plus a prediction error as (cf.\ \cite[Section~V.A]{wainwright2009sharp} and \cite[Equation (11.26) \& Exercise~11.3]{wainwright2019high})
    \begin{align}\label{eq:linear_prediction}
        \mathbf{X}_j^\top = \mathbf{\Sigma}_{j\mathcal{A}}(\mathbf{\Sigma}_{\mathcal{A}\mathcal{A}})^{-1}\mathbf{X}_{\mathcal{A}}^\top + \mathbf{W}_j^\top,
    \end{align}
    where $\mathbf{W}_j\in\mathbb{R}^n$ is a vector with i.i.d.\ $\mathcal{N}(0,[\mathbf{\Sigma}_{\mathcal{A}^c|\mathcal{A}}]_{jj})$ entries. Here the (positive-definite) matrix $\mathbf{\Sigma}_{\mathcal{A}^c|\mathcal{A}} = \mathbf{\Sigma}_{\mathcal{A}^c\mathcal{A}^c} - \mathbf{\Sigma}_{\mathcal{A}^c\mathcal{A}}(\mathbf{\Sigma}_{\mathcal{A}\mathcal{A}})^{-1}\mathbf{\Sigma}_{\mathcal{A}\mathcal{A}^c}$ is the conditional covariance matrix of $\mathbf{X}_{\mathcal{A}^c}|\mathbf{X}_{\mathcal{A}}$. Thus
    \begin{align*}
        |\mathbf{X}_{j}^\top \mathbf{X}_{\mathcal{A}}(\mathbf{X}_{\mathcal{A}}^\top \mathbf{X}_{\mathcal{A}})^{-1}\mathbf{u}| &\leq |\mathbf{\Sigma}_{j\mathcal{A}}(\mathbf{\Sigma}_{\mathcal{A}\mathcal{A}})^{-1}\mathbf{X}_{\mathcal{A}}^\top \mathbf{X}_{\mathcal{A}}(\mathbf{X}_{\mathcal{A}}^\top \mathbf{X}_{\mathcal{A}})^{-1}\mathbf{u}| +  |\mathbf{W}_{j}^\top \mathbf{X}_{\mathcal{A}}(\mathbf{X}_{\mathcal{A}}^\top \mathbf{X}_{\mathcal{A}})^{-1}\mathbf{u}| \\
        &\leq \|\mathbf{\Sigma}_{j\mathcal{A}}(\mathbf{\Sigma}_{\mathcal{A}\mathcal{A}})^{-1}\|_1\|\mathbf{u}\|_\infty + |\mathbf{W}_{j}^\top \mathbf{X}_{\mathcal{A}}(\mathbf{X}_{\mathcal{A}}^\top \mathbf{X}_{\mathcal{A}})^{-1}\mathbf{u}| \\
        &\leq \bar{\alpha} + |\mathbf{W}_{j}^\top \mathbf{X}_{\mathcal{A}}(\mathbf{X}_{\mathcal{A}}^\top \mathbf{X}_{\mathcal{A}})^{-1}\mathbf{u}|.
    \end{align*}
    Since $\mathbf{W}_j$ is independent of $\mathbf{X}_{\mathcal{A}}$, $\mathbf{W}_{j}^\top \mathbf{X}_{\mathcal{A}}(\mathbf{X}_{\mathcal{A}}^\top \mathbf{X}_{\mathcal{A}})^{-1}\mathbf{u}$ is a sub-Gaussian random variable with parameter (\cref{fact:sub_gaussian})
    \begin{align*}
        [\mathbf{\Sigma}_{\mathcal{A}^c|\mathcal{A}}]_{jj}\|\mathbf{X}_{\mathcal{A}}(\mathbf{X}_{\mathcal{A}}^\top \mathbf{X}_{\mathcal{A}})^{-1}\mathbf{u}\|_2^2 &\leq \|\mathbf{X}_{\mathcal{A}}(\mathbf{X}_{\mathcal{A}}^\top \mathbf{X}_{\mathcal{A}})^{-1}\mathbf{u}\|_2^2 \\
        &= \mathbf{u}^\top (\mathbf{X}_{\mathcal{A}}^\top \mathbf{X}_{\mathcal{A}})^{-1}\mathbf{u}
        \leq \|\mathbf{u}\|_2^2 \|(\mathbf{X}_{\mathcal{A}}^\top \mathbf{X}_{\mathcal{A}})^{-1}\|_2
        \leq |\mathcal{A}|\|(\mathbf{X}_{\mathcal{A}}^\top \mathbf{X}_{\mathcal{A}})^{-1}\|_2,
    \end{align*}
    where we used that $[\mathbf{\Sigma}_{\mathcal{A}^c|\mathcal{A}}]_{jj} \leq 1$ since $\mathbf{\Sigma}_{\mathcal{A}^c|\mathcal{A}}\preceq \mathbf{\Sigma}_{\mathcal{A}^c\mathcal{A}^c}$ and the diagonal entries of $\mathbf{\Sigma}$ are at most~$1$. We now relate $\|(\mathbf{X}_{\mathcal{A}}^\top \mathbf{X}_{\mathcal{A}})^{-1}\|_2$ with $\sigma_{\min}(\mathbf{\Sigma}_{\mathcal{A}\mathcal{A}}) \geq \sigma_{\min}(\mathbf{\Sigma})$. According to~\cite[Lemma~9]{wainwright2009sharp},
    \begin{align*}
        \mathbb{P}\left[\|(\mathbf{X}_{\mathcal{A}}^\top \mathbf{X}_{\mathcal{A}})^{-1}\|_2 \geq \frac{(1 + t + \sqrt{|\mathcal{A}|/n})^2}{n\sigma_{\min}(\mathbf{\Sigma})} \right] \leq 2e^{-nt^2/2} \quad\text{for all}~t>0.
    \end{align*}
    Define the event $\mathcal{E}(\mathbf{X}_{\mathcal{A}}) \triangleq \{\|(\mathbf{X}_{\mathcal{A}}^\top \mathbf{X}_{\mathcal{A}})^{-1}\|_2 \geq 9/(n\sigma_{\min}(\mathbf{\Sigma}))\}$. Since $|\mathcal{A}| \leq n$, the event $\mathcal{E}(\mathbf{X}_{\mathcal{A}})$ happens with probability at most $2e^{-n/2}$. Therefore, with probability at least $1-2e^{-n/2}$, the quantity $\mathbf{W}_{j}^\top \mathbf{X}_{\mathcal{A}}(\mathbf{X}_{\mathcal{A}}^\top \mathbf{X}_{\mathcal{A}})^{-1}\mathbf{u}$ is a sub-Gaussian random variable with parameter at most $9|\mathcal{A}|/(n\sigma_{\min}(\mathbf{\Sigma}))$. A Chernoff bound yields
    \begin{align*}
        {\mathbb{P}}\bigg[{\max_{j\in\mathcal{A}^c}}|\mathbf{W}_{j}^\top \mathbf{X}_{\mathcal{A}}(\mathbf{X}_{\mathcal{A}}^\top \mathbf{X}_{\mathcal{A}})^{-1}\mathbf{u}| \!\geq\! \frac{1-\bar{\alpha}}{2}\! \bigg] \!&\leq\! {\mathbb{P}}\bigg[{\max_{j\in\mathcal{A}^c}}|\mathbf{W}_{j}^\top \mathbf{X}_{\mathcal{A}}(\mathbf{X}_{\mathcal{A}}^\top \mathbf{X}_{\mathcal{A}})^{-1}\mathbf{u}| \!\geq\! \frac{1-\bar{\alpha}}{2}\bigg|\mathcal{E}^c(\mathbf{X}_{\mathcal{A}}) \!\bigg] \!+\! \mathbb{P}[\mathcal{E}(\mathbf{X}_{\mathcal{A}})] \\
        &\leq 2(d-|\mathcal{A}|)\exp\left(-\frac{(1-\bar{\alpha})^2 n  \sigma_{\min}(\mathbf{\Sigma})}{72|\mathcal{A}|}\right) + 2\exp(-n/2)\\
        &\leq 4d\exp\left(-\frac{(1-\bar{\alpha})^2 n  \sigma_{\min}(\mathbf{\Sigma})}{72|\mathcal{A}|}\right).
    \end{align*}
    The above probability is at most $\delta$ if $|\mathcal{A}| \leq (1-\bar{\alpha})^2\sigma_{\min}(\mathbf{\Sigma)}n/72\ln(4d/\delta)$. Thus, with probability at least $1-\delta$, $|\mathbf{X}_{j}^\top \mathbf{X}_{\mathcal{A}}(\mathbf{X}_{\mathcal{A}}^\top \mathbf{X}_{\mathcal{A}})^{-1}\mathbf{u}| \leq \frac{1}{2} + \frac{\bar{\alpha}}{2}$ for any $\mathbf{u}\in\mathbb{R}^{|\mathcal{A}|}$ such that $\|\mathbf{u}\|_\infty = 1$. 
\end{proof}

The above lemma applies to standard Gaussian matrices with $\mathbf{\Sigma} = \mathbf{I}$, in which case $\bar{\alpha} = 0$ and $\sigma_{\min}(\mathbf{\Sigma}) = 1$, and thus the mutual incoherence $\|\mathbf{X}_{\mathcal{A}}^+\mathbf{X}_{\mathcal{A}^c}\|_1$ is less than $1/2$ with high probability if $|\mathcal{A}| = O(n/\log{d})$.

Finally, we analyse the mutual overlap between $\mathbf{y}$ and $\mathbf{X}$. The next lemma states that the mutual overlap between $\mathbf{y}$ and $\mathbf{X}$ is $\Omega\Big(\frac{\|\mathbf{y}\|_2}{\|\mathbf{y}\|_1}\frac{1}{\sqrt{\smash[b]{\log{d}}}}\Big)$ with high probability if $|\mathcal{A}| = O(n)$, $\|\sqrt{\mathbf{\Sigma}}\|_1 = O(1)$, and the diagonal entries of the conditional covariance matrices $\mathbf{\Sigma}_{\mathcal{A}^c|\mathcal{A}}$ are bounded away from $0$ (this condition can be relaxed to a constant fraction of the diagonal entries be bounded away from~$0$). Hence, if $\|\mathbf{y}\|_1/\|\mathbf{y}\|_2 = O(1)$, e.g., in the case when $\mathbf{y}$ is sparse, then the mutual overlap is $\Omega(1/\sqrt{\log{d}})$, independent of $n$! 

\begin{lemma}\label{lem:bound_mutual_overlap}
    Let $\mathbf{y}\in\mathbb{R}^n$ and $\mathbf{X}\in\mathbb{R}^{n\times d}$ a random matrix from the $\mathbf{\Sigma}$-Gaussian ensemble, $\mathbf{\Sigma}\succ \mathbf{0}$. Given $\mathcal{A}\subseteq[d]$, suppose that
    \begin{align*}
        \mathbf{\Sigma}_{\mathcal{A}^c|\mathcal{A}} \succeq C_{\mathbf{\Sigma}}\mathbf{I} \quad\text{for}\quad C_{\mathbf{\Sigma}}>0,
    \end{align*}
    where $\mathbf{\Sigma}_{\mathcal{A}^c|\mathcal{A}} = \mathbf{\Sigma}_{\mathcal{A}^c\mathcal{A}^c} - \mathbf{\Sigma}_{\mathcal{A}^c\mathcal{A}}(\mathbf{\Sigma}_{\mathcal{A}\mathcal{A}})^{-1}\mathbf{\Sigma}_{\mathcal{A}\mathcal{A}^c}\succ \mathbf{0}$ is the conditional covariance matrix of $\mathbf{X}_{\mathcal{A}^c}|\mathbf{X}_{\mathcal{A}}$. Given $\delta\in(0,1)$, if $|\mathcal{A}| \leq \min\{n/2, n- 16\ln(3/\delta)\}$, then 
    \begin{align*}
        \mathbb{P}\left[\frac{\|\mathbf{X}^\top(\mathbf{I} - \mathbf{X}_{\mathcal{A}}\mathbf{X}_{\mathcal{A}}^+) \mathbf{y}\|_\infty}{\|\mathbf{X}\|_{\max}\|(\mathbf{I} - \mathbf{X}_{\mathcal{A}}\mathbf{X}_{\mathcal{A}}^+) \mathbf{y}\|_1} \geq \frac{(\delta/3)^{2/d}}{4\|\sqrt{\mathbf{\Sigma}}\|_1}\frac{\|\mathbf{y}\|_2}{\|\mathbf{y}\|_1}\sqrt{\frac{\pi C_{\mathbf{\Sigma}}}{\ln(6nd/\delta)}} \right] \geq 1-\delta.
    \end{align*}
\end{lemma}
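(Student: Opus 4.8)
The plan is to lower-bound the ratio by controlling its numerator, the factor $\|\mathbf{X}\|_{\max}$, and the two norms of $\mathbf{v}\triangleq(\mathbf{I}-\mathbf{X}_{\mathcal{A}}\mathbf{X}_{\mathcal{A}}^+)\mathbf{y}$ separately, splitting the budget $\delta$ into three equal parts and recombining by a union bound. First I would dispose of $\|\mathbf{X}\|_{\max}$ exactly as in the proof of \Cref{lem:condition_number}: writing $\mathbf{X}=\mathbf{Z}\sqrt{\mathbf{\Sigma}}$ and applying a Chernoff bound (\Cref{fact:Chernoff}) with a union bound over the $nd$ entries gives $\|\mathbf{X}\|_{\max}\le\|\sqrt{\mathbf{\Sigma}}\|_1\sqrt{\ln(6nd/\delta)}$ with probability at least $1-\delta/3$, which supplies the $\|\sqrt{\mathbf{\Sigma}}\|_1$ and the $\sqrt{\ln(6nd/\delta)}$ of the claimed bound.

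The heart of the argument is the numerator $\|\mathbf{X}^\top\mathbf{v}\|_\infty=\max_{j\in\mathcal{A}^c}|\mathbf{X}_j^\top\mathbf{v}|$ (the coordinates $j\in\mathcal{A}$ vanish since $\mathbf{v}\perp\operatorname{col}(\mathbf{X}_{\mathcal{A}})$). I would reuse the linear-prediction decomposition \cref{eq:linear_prediction}, $\mathbf{X}_j^\top=\mathbf{\Sigma}_{j\mathcal{A}}(\mathbf{\Sigma}_{\mathcal{A}\mathcal{A}})^{-1}\mathbf{X}_{\mathcal{A}}^\top+\mathbf{W}_j^\top$ with $\mathbf{W}_j$ independent of $\mathbf{X}_{\mathcal{A}}$ and having i.i.d.\ $\mathcal{N}(0,[\mathbf{\Sigma}_{\mathcal{A}^c|\mathcal{A}}]_{jj})$ entries. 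Since $\mathbf{X}_{\mathcal{A}}^\top\mathbf{v}=\mathbf{0}$, the prediction term drops and $\mathbf{X}_j^\top\mathbf{v}=\mathbf{W}_j^\top\mathbf{v}$, which, conditioned on $\mathbf{X}_{\mathcal{A}}$, is centered Gaussian with variance $[\mathbf{\Sigma}_{\mathcal{A}^c|\mathcal{A}}]_{jj}\|\mathbf{v}\|_2^2\ge C_{\mathbf{\Sigma}}\|\mathbf{v}\|_2^2$. The joint law of $(\mathbf{X}_j^\top\mathbf{v})_{j\in\mathcal{A}^c}$ is $\mathcal{N}(\mathbf{0},\|\mathbf{v}\|_2^2\mathbf{\Sigma}_{\mathcal{A}^c|\mathcal{A}})$; writing $\mathbf{\Sigma}_{\mathcal{A}^c|\mathcal{A}}=C_{\mathbf{\Sigma}}\mathbf{I}+(\mathbf{\Sigma}_{\mathcal{A}^c|\mathcal{A}}-C_{\mathbf{\Sigma}}\mathbf{I})$ (both PSD) I would decompose this vector as an independent sum and invoke Anderson's inequality on the symmetric convex box $\{\|\mathbf{x}\|_\infty<\tau\}$ to bound the probability that all coordinates are small by the independent case with covariance $C_{\mathbf{\Sigma}}\|\mathbf{v}\|_2^2\mathbf{I}$ (for $\mathbf{\Sigma}=\mathbf{I}$ the coordinates are already independent). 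Combined with the anti-concentration estimate $\mathbb{P}[|\mathcal{N}(0,\sigma^2)|<\tau]\le\sqrt{2/\pi}\,\tau/\sigma$ this gives $\mathbb{P}[\|\mathbf{X}^\top\mathbf{v}\|_\infty<\tau]\le(\sqrt{2/\pi}\,\tau/(\sqrt{C_{\mathbf{\Sigma}}}\|\mathbf{v}\|_2))^{d'}$ with $d'=|\mathcal{A}^c|$. Setting the right-hand side to $\delta/3$, solving for $\tau$, and using $d'=d-|\mathcal{A}|\ge d/2$ (so $(\delta/3)^{1/d'}\ge(\delta/3)^{2/d}$) yields $\|\mathbf{X}^\top\mathbf{v}\|_\infty\ge\sqrt{\pi/2}\,\sqrt{C_{\mathbf{\Sigma}}}\,\|\mathbf{v}\|_2\,(\delta/3)^{2/d}$ with probability at least $1-\delta/3$, which is precisely the origin of the $\sqrt{\pi C_{\mathbf{\Sigma}}}$ and $(\delta/3)^{2/d}$ factors.

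It remains to convert $\|\mathbf{v}\|_2/\|\mathbf{v}\|_1$ into $\|\mathbf{y}\|_2/\|\mathbf{y}\|_1$. For the lower bound on $\|\mathbf{v}\|_2$ I would observe that $\operatorname{col}(\mathbf{X}_{\mathcal{A}})$ is a uniformly random $|\mathcal{A}|$-dimensional subspace (its distribution is unaffected by the invertible right factor coming from $\sqrt{\mathbf{\Sigma}_{\mathcal{A}\mathcal{A}}}$), so $\mathbf{I}-\mathbf{X}_{\mathcal{A}}\mathbf{X}_{\mathcal{A}}^+$ projects onto a uniformly random $(n-|\mathcal{A}|)$-dimensional subspace. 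Applying Lévy's lemma (\Cref{fact:levy_lemma}) to the $1$-Lipschitz map $\mathbf{u}\mapsto\|(\mathbf{I}-\mathbf{X}_{\mathcal{A}}\mathbf{X}_{\mathcal{A}}^+)\mathbf{u}\|_2$ on the sphere, whose mean is $\approx\sqrt{(n-|\mathcal{A}|)/n}\,\|\mathbf{y}\|_2$, gives $\|\mathbf{v}\|_2\ge\tfrac12\|\mathbf{y}\|_2$ with probability $\ge1-\delta/3$ exactly under the stated hypotheses $|\mathcal{A}|\le n/2$ and $|\mathcal{A}|\le n-16\ln(3/\delta)$ (the latter guaranteeing a complement dimension large enough to concentrate). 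Bounding $\|\mathbf{v}\|_1\le\|\mathbf{y}\|_1$ and intersecting the three events then produces the constant $\tfrac14=\tfrac{1}{\sqrt2}\cdot\tfrac12\cdot\tfrac{1}{\sqrt2}$ together with the full claimed estimate.

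The step I expect to be the genuine obstacle is this last $\ell_1$ control. Unlike the $\ell_2$ case, the orthogonal projector $\mathbf{I}-\mathbf{X}_{\mathcal{A}}\mathbf{X}_{\mathcal{A}}^+$ is \emph{not} a contraction in $\ell_1$, so $\|\mathbf{v}\|_1\le\|\mathbf{y}\|_1$ does not follow from projection alone; indeed $\|\mathbf{X}_{\mathcal{A}}\mathbf{X}_{\mathcal{A}}^+\mathbf{y}\|_1$ can be of order $\sqrt{|\mathcal{A}|}\,\|\mathbf{y}\|_2$, which dominates $\|\mathbf{y}\|_1$ precisely in the sparse regime the lemma targets. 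Making this step rigorous — or establishing the weaker $\|\mathbf{v}\|_1=O(\|\mathbf{y}\|_1)$ that would suffice up to constants — is where the real care is needed and likely requires either an additional hypothesis on $\mathbf{y}$ or a sharper, distribution-specific estimate of $\|(\mathbf{I}-\mathbf{X}_{\mathcal{A}}\mathbf{X}_{\mathcal{A}}^+)\mathbf{y}\|_1$. By contrast, the correlated-Gaussian maximum, which at first glance looks the most delicate, is dispatched cleanly by the anti-concentration-plus-Anderson argument above.
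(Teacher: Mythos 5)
Your route is, step for step, the paper's: the same Chernoff-plus-union bound giving $\|\mathbf{X}\|_{\max}\le\|\sqrt{\mathbf{\Sigma}}\|_1\sqrt{\ln(6nd/\delta)}$ with probability $1-\delta/3$; the same reduction of the numerator to $\max_{j\in\mathcal{A}^c}|\mathbf{W}_j^\top\mathbf{v}|$ via \cref{eq:linear_prediction} and $\mathbf{X}_{\mathcal{A}}^\top\mathbf{v}=\mathbf{0}$; the same L\'evy's-lemma bound $\|\mathbf{v}\|_2\ge\tfrac12\sqrt{1-|\mathcal{A}|/n}\,\|\mathbf{y}\|_2$ (with $|\mathcal{A}|\le n-16\ln(3/\delta)$ absorbing the failure probability); and the same single-coordinate Gaussian anti-concentration raised to the power $|\mathcal{A}^c|$ with $t$ tuned to make that power equal to $\delta/3$. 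Your constant accounting reproduces the stated bound exactly. Where you diverge you are actually more careful than the paper. The paper passes from one coordinate to the maximum by invoking ``the independence of all $\mathbf{W}_j$, $j\in\mathcal{A}^c$'', but the $\mathbf{W}_j$ are the coordinates of rows drawn i.i.d.\ from $\mathcal{N}(\mathbf{0},\mathbf{\Sigma}_{\mathcal{A}^c|\mathcal{A}})$ and are independent across $j$ only when $\mathbf{\Sigma}_{\mathcal{A}^c|\mathcal{A}}$ is diagonal; your decomposition $\mathbf{\Sigma}_{\mathcal{A}^c|\mathcal{A}}=C_{\mathbf{\Sigma}}\mathbf{I}+(\mathrm{PSD})$ plus Anderson's inequality on the box $\{\|\mathbf{x}\|_\infty<\tau\}$ is exactly the repair that step needs in the correlated case. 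Likewise your exponent $t=(\delta/3)^{1/(d-|\mathcal{A}|)}$ with $|\mathcal{A}|\le n/2\le d/2$ yields $(\delta/3)^{2/d}$ cleanly, whereas the paper's choice $t=(\delta/3)^{1/(d-n)}$ dominates $(\delta/3)^{2/d}$ only when $d\ge 2n$.

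The step you could not close, $\|(\mathbf{I}-\mathbf{X}_{\mathcal{A}}\mathbf{X}_{\mathcal{A}}^+)\mathbf{y}\|_1\le\|\mathbf{y}\|_1$, is precisely the step the paper closes by assertion: its $\operatorname{erf}$ estimate is justified there ``using that $\|(\mathbf{I}-\mathbf{X}_{\mathcal{A}}\mathbf{X}_{\mathcal{A}}^+)\mathbf{y}\|_1\le\|\mathbf{y}\|_1$'', with no argument supplied. Your scepticism is well founded: an orthogonal projector is not an $\ell_1$-contraction (project $(1,0,0)$ off the span of $(1,1,1)$ and the residual has $\ell_1$-norm $4/3$), and for a uniformly random $(n-|\mathcal{A}|)$-dimensional projection one expects $\|\mathbf{X}_{\mathcal{A}}\mathbf{X}_{\mathcal{A}}^+\mathbf{y}\|_1=\Theta(\sqrt{|\mathcal{A}|}\,)\|\mathbf{y}\|_2$, which exceeds $\|\mathbf{y}\|_1$ exactly in the sparse-$\mathbf{y}$ regime $\|\mathbf{y}\|_1/\|\mathbf{y}\|_2=\operatorname{poly}\log n$ that \Cref{res:result_gaussian} targets. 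So you have not missed an idea that the paper supplies; you have located an inequality the paper uses without proof and which does not hold generically. A conservative repair is to prove the high-probability upper bound $\|\mathbf{v}\|_1\le\|\mathbf{y}\|_1+O(\sqrt{|\mathcal{A}|}\,)\|\mathbf{y}\|_2$ and carry the extra term into the denominator of the conclusion, which weakens the stated $\gamma_{\mathcal{A}}$ by a factor of order $1+\sqrt{|\mathcal{A}|}\,\|\mathbf{y}\|_2/\|\mathbf{y}\|_1$ but leaves the downstream complexity claims qualitatively intact.
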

\begin{proof}
    Let us start by bounding $\|\mathbf{X}\|_{\max}$. Similarly to \Cref{lem:condition_number}, $\|\mathbf{X}\|_{\max} \leq \|\sqrt{\mathbf{\Sigma}}\|_1\|\mathbf{Z}\|_{\max}$, where $\mathbf{Z}$ is drawn from the $\mathbf{I}$-Gaussian ensemble. By a Chernoff bound (\Cref{fact:Chernoff}) plus a union bound, $\mathbb{P}[\|\mathbf{Z}\|_{\max} \leq \sqrt{\ln(6nd/\delta)}] \geq 1-\delta/3$, which implies that $\mathbb{P}[\|\mathbf{X}\|_{\max} \leq \sqrt{\ln(6nd/\delta)}\|\sqrt{\mathbf{\Sigma}}\|_1] \geq 1-\delta/3$.
    
    We now move on to the remaining part of the expression. First notice that
    \begin{align*}
        \|\mathbf{X}^\top(\mathbf{I} - \mathbf{X}_{\mathcal{A}}\mathbf{X}_{\mathcal{A}}^+) \mathbf{y}\|_\infty = \max_{j\in\mathcal{A}^c}|\mathbf{X}_j^\top(\mathbf{I} - \mathbf{X}_{\mathcal{A}}\mathbf{X}_{\mathcal{A}}^+) \mathbf{y}|.
    \end{align*}
    Fix $j\in\mathcal{A}^c$. Again by \Cref{eq:linear_prediction}, we can write
    \begin{align*}
        \mathbf{X}_j^\top = \mathbf{\Sigma}_{j\mathcal{A}}(\mathbf{\Sigma}_{\mathcal{A}\mathcal{A}})^{-1}\mathbf{X}_{\mathcal{A}}^\top + \mathbf{W}_j^\top,
    \end{align*}
    where $\mathbf{W}_j\in\mathbb{R}^n$ is a vector with i.i.d.\ $\mathcal{N}(0,[\mathbf{\Sigma}_{\mathcal{A}^c|\mathcal{A}}]_{jj})$ entries. Therefore,
    \begin{align*}
        |\mathbf{X}_j^\top(\mathbf{I} - \mathbf{X}_{\mathcal{A}}\mathbf{X}_{\mathcal{A}}^+) \mathbf{y}| = |\mathbf{W}_j^\top(\mathbf{I} - \mathbf{X}_{\mathcal{A}}\mathbf{X}_{\mathcal{A}}^+) \mathbf{y}|,
    \end{align*}
    where we used that $\mathbf{X}_{\mathcal{A}}^\top(\mathbf{I} - \mathbf{X}_{\mathcal{A}}\mathbf{X}_{\mathcal{A}}^+) = \mathbf{0}$.
    Since $\mathbf{W}_j$ is independent of $\mathbf{X}_{\mathcal{A}}$, then $\mathbf{W}_j^\top(\mathbf{I} - \mathbf{X}_{\mathcal{A}}\mathbf{X}_{\mathcal{A}}^+) \mathbf{y}$ is a Gaussian random variable with variance 
    \begin{align*}
        [\mathbf{\Sigma}_{\mathcal{A}^c|\mathcal{A}}]_{jj}\|(\mathbf{I} - \mathbf{X}_{\mathcal{A}}\mathbf{X}_{\mathcal{A}}^+) \mathbf{y}\|_2^2 \geq C_{\mathbf{\Sigma}}\|(\mathbf{I} - \mathbf{X}_{\mathcal{A}}\mathbf{X}_{\mathcal{A}}^+) \mathbf{y}\|_2^2
    \end{align*}
    since $[\mathbf{\Sigma}_{\mathcal{A}^c|\mathcal{A}}]_{jj} \geq C_{\mathbf{\Sigma}}$. Furthermore, since $\mathbf{I} - \mathbf{X}_{\mathcal{A}}\mathbf{X}_{\mathcal{A}}^+$ is a random projection, i.e., a projection onto a random $(n-|\mathcal{A}|)$-dimensional subspace uniformly distributed in the Grassmann manifold $G_{n,n-|\mathcal{A}|}$, then L\'evy's lemma (\Cref{fact:levy_lemma}) yields\footnote{The function $\mathbf{u}\in\mathbb{R}^n\mapsto\|\mathbf{P}\mathbf{u}\|_2$ is $1$-Lipschitz for any orthogonal projector $\mathbf{P}\in\mathbb{R}^{n\times n}$.} (see also~\cite[Lemma~5.3.2]{Vershynin_2018})
    \begin{align*}
        \mathbb{P}\left[\|(\mathbf{I} - \mathbf{X}_{\mathcal{A}}\mathbf{X}_{\mathcal{A}}^+) \mathbf{y}\|_2 \geq \frac{1}{2}\sqrt{\frac{n-|\mathcal{A}|}{n}}\|\mathbf{y}\|_2 \right] \geq 1 - e^{-(n-|\mathcal{A}|)/16}.
    \end{align*}
    Define the event $\mathcal{E}(\mathcal{A})\triangleq \big\{\|(\mathbf{I} - \mathbf{X}_{\mathcal{A}}\mathbf{X}_{\mathcal{A}}^+) \mathbf{y}\|_2 \geq\sqrt{1-|\mathcal{A}|/n}\|\mathbf{y}\|_2/2\big\}$ and let $\operatorname{erf}(t) = \frac{2}{\sqrt{\pi}}\int_{0}^t e^{-z^2}\text{d}z$, $t\geq 0$, be the error function. Then, for $t\in[0,1]$ and conditioned on the event $\mathcal{E}(\mathcal{A})$ happening,
    \begin{align*}
        \mathbb{P}\left[\frac{|\mathbf{W}_j^\top(\mathbf{I} - \mathbf{X}_{\mathcal{A}}\mathbf{X}_{\mathcal{A}}^+) \mathbf{y}|}{\|(\mathbf{I} - \mathbf{X}_{\mathcal{A}}\mathbf{X}_{\mathcal{A}}^+) \mathbf{y}\|_1} \leq t\sqrt{\frac{\pi C_{\mathbf{\Sigma}}(n-|\mathcal{A}|)}{8n}}\frac{\|\mathbf{y}\|_2}{\|\mathbf{y}\|_1} ~\bigg|~ \mathcal{E}(\mathcal{A})\right] \leq \operatorname{erf}\left(\frac{t\sqrt{\pi}\|(\mathbf{I} - \mathbf{X}_{\mathcal{A}}\mathbf{X}_{\mathcal{A}}^+) \mathbf{y}\|_1}{2\|\mathbf{y}\|_1}\right) \leq t,
    \end{align*}
    using that $\|(\mathbf{I} - \mathbf{X}_{\mathcal{A}}\mathbf{X}_{\mathcal{A}}^+) \mathbf{y}\|_1 \leq \|\mathbf{y}\|_1$ and $\operatorname{erf}(t) \leq 2t/\sqrt{\pi}$. By the independence of all $\mathbf{W}_j$, $j\in\mathcal{A}^c$,
    \begin{align}\label{eq:bound_probability}
        \mathbb{P}\left[\max_{j\in\mathcal{A}^c}\frac{|\mathbf{W}_j^\top(\mathbf{I} - \mathbf{X}_{\mathcal{A}}\mathbf{X}_{\mathcal{A}}^+) \mathbf{y}|}{\|(\mathbf{I} - \mathbf{X}_{\mathcal{A}}\mathbf{X}_{\mathcal{A}}^+) \mathbf{y}\|_1} \leq t\sqrt{\frac{\pi C_{\mathbf{\Sigma}}(n-|\mathcal{A}|)}{8n}}\frac{\|\mathbf{y}\|_2}{\|\mathbf{y}\|_1} \right] &\leq t^{d-|\mathcal{A}|} + e^{-(n-|\mathcal{A}|)/16}.
    \end{align}
    By taking $t = (\delta/3)^{1/(d-n)}$ and $|\mathcal{A}| \leq n - 16\ln(3/\delta)$, then the above probability is at most $2\delta/3$. Finally, we put \Cref{eq:bound_probability} with $n-|\mathcal{A}| \geq n/2$ together with the bound on $\|\mathbf{X}\|_{\max}$.
\end{proof}


\begin{remark}
    The results from this section apply to a fixed active set $\mathcal{A}$. It is possible to bound $\|\mathbf{X}\|_{\max}\|\mathbf{X}^+_{\mathcal{A}}\|_2$, the mutual incoherence $\alpha_{\mathcal{A}}$, and the mutual overlap $\gamma_{\mathcal{A}}$ for several sets $\mathcal{A}$ by taking a union bound on the failure probability in {\rm \Cref{lem:condition_number,lem:bound_mutual_incoherence,lem:bound_mutual_overlap}}. Since the approximate regularisation path $\widetilde{\mathcal{P}}$ has at most $O(n)$ kinks for when $\mathbf{X}$ is a random Gaussian matrix ({\rm \cref{fact:continuous_distribution}}), {\rm \Cref{alg:classical_quantum,alg:approximate_classical}} need only to consider $O(n)$ different active sets. Therefore, we only need to consider a union bound over $O(n)$ different sets. 
\end{remark}

\section{Query lower bounds for Lasso}

In this section we prove classical and quantum query lower bounds for Lasso. Our results follow from the fact that an algorithm for Lasso can be used to solve specific set-finding problems, which were studied by Chen and de Wolf~\cite{chen2021quantum} (and also employed to derive their lower bounds).

\subsection{Set-finding problems}

We start by defining the set-finding problems of interest, which basically hide a set $W$ in the columns of a matrix $\mathbf{X}$.
\begin{definition}[Exact and approximate set-finding problems $\mathsf{ESF}^{d,N}_{w,p}$ and $\mathsf{ASF}^{d,N}_{w,p}$]
    Let $p\in(0,1/2)$ and $W\subset[d]$ of size $w$. Consider a matrix $\mathbf{X}\in\{-1,1\}^{N\times d}$ such that
    \begin{align*}
        \sum_{i=1}^N X_{ij} = 2pN \quad\text{if}~j\in W \qquad\text{and}\qquad \sum_{i=1}^N X_{ij} = 0 \quad\text{if}~j\notin W.
    \end{align*}
    In the \emph{exact set-finding problem} $\mathsf{ESF}^{d,N}_{w,p}$ the task is to find $W$. In the \emph{approximate set-finding problem} $\mathsf{ASF}^{d,N}_{w,p}$ the task is to find $\widetilde{W}\subseteq[d]$ such that $|W\Delta \widetilde{W}| \leq w/5$.
\end{definition}
%
Chen and de Wolf~\cite{chen2021quantum} proved the following query lower bound on any algorithms that solve $\mathsf{ESF}^{d,N}_{w,p}$ and $\mathsf{ASF}^{d,N}_{w,p}$.\footnote{In~\cite{chen2021quantum}, Chen and de Wolf define the approximate set-finding problem as \emph{worst-case set-finding problem} $\mathsf{WSF}_{d,w,p,N}$. We decided to change the name slightly in order to draw a clearer parallel with $\mathsf{ESF}^{d,n}_{w,p}$.}
\begin{fact}[{\cite[Corollary~4.9]{chen2021quantum}}]\label{fact:quantum_asf_lower_bound}
    Let $N\in\mathbb{N}$ be even and $p\in(0,1/2)$ an integer multiple of $1/N$. Every bounded-error quantum algorithm that solves the approximate set-finding problem $\mathsf{ASF}^{d,N}_{w,p}$ with matrix $\mathbf{X}\in\{-1,1\}^{N\times d}$ uses $\Omega(\sqrt{wd}/p)$ quantum queries to $\mathbf{X}$.
\end{fact}

\begin{fact}[{\cite[Theorem~B.5]{chen2021quantum}}]\label{fact:classical_esf_lower_bound}
    Let $N\in\mathbb{N}$ be even, $p\in(1/\sqrt{N},1/2)$ an integer multiple of $1/N$, and $w=1/2p$. Every classical algorithm that solves the exact set-finding problem $\mathsf{ESF}^{d,N}_{w,p}$ with matrix $\mathbf{X}\in\{-1,1\}^{N\times d}$ uses $\Omega((d-w)/p^2)$ queries to $\mathbf{X}$.
\end{fact}

We will also need a distributional version of the approximate set-finding problem.
\begin{definition}[Distributional set-finding problem $\mathsf{DSF}^{d,n}_{w,p}$]\label{def:distributional_set_finding_problem}
    Let $p\in(0,1/2)$, $n\in\mathbb{N}$, and $W\subset[d]$ of size $w$. Consider the distribution $\mathcal{D}_{p,W}$ over $(\mathbf{x},y)\in\{-1,1\}^d\times\{-1,1\}$ defined as follows:
    \begin{align*}
        \operatorname{Pr}[x_j = 1] =
        \begin{cases}
            \frac{1}{2} &\text{if}~j\notin W,\\
            \frac{1}{2} + p &\text{if}~j\in W,
        \end{cases}\qquad\text{and}\qquad
        \operatorname{Pr}[y=1] = 1.
    \end{align*}
    In the \emph{distributional set-finding problem} $\mathsf{DSF}^{d,n}_{w,p}$ the task is to find $\widetilde{W}\subset[d]$ such that $|\widetilde{W}\Delta W| \leq w/5$ given $n$ samples from $\mathcal{D}_{p,W}$.
\end{definition}

Chen and de Wolf proved that one can convert an instance of $\mathsf{ASF}^{d,N}_{w,p}$ (to be viewed as a worst-case instance) to an instance of $\mathsf{DSF}^{d,n}_{w,p}$ (to be viewed as an average-case instance).
\begin{fact}[{\cite[Theorem~4.10]{chen2021quantum}}]\label{fact:reduction}
    Let $N\in\mathbb{N}$ be even, $n\in\mathbb{N}$, $p\in(0,1/2)$ an integer multiple of $1/n$, and $w\in(2,d/2)$ a natural number. Let $\mathbf{X}\in\{-1,1\}^{N\times d}$ be a valid input to $\mathsf{ASF}^{d,N}_{w,p}$ and let $W\subset[d]$ be the set of indices of columns of $\mathbf{X}$ whose entries add to $2pn$. Let $\mathbf{R}\in[n]^{n\times d}$ be a matrix whose entries are i.i.d.\ samples from the uniform distribution over $[n]$. Define $\mathbf{Z}\in\{-1,1\}^{n\times d}$ as $Z_{ij} = X_{R_{ij},j}$. Then the $n$ vectors $(\mathbf{Z}_{i\cdot},1)\in\{-1,1\}^{d+1}$, where $\mathbf{Z}_{i\cdot}$ is the $i$-th row of $\mathbf{Z}$ and $i\in[n]$, are i.i.d.\ samples from the distribution $\mathcal{D}_{p,W}$ defining $\mathsf{DSF}^{d,n}_{w,p}$.
\end{fact}

Finally, the authors also proved that an algorithm for $\mathsf{DSF}^{d,n}_{w,p}$ can be used to solve $\mathsf{ESF}^{d,N}_{w,p}$.
\begin{fact}[{\cite[Theorem~B.6]{chen2021quantum}}]\label{fact:est_to_asf_reduction}
    Let $N\in\mathbb{N}$ be even, $p\in(1/\sqrt{N},1/4)$ be an integer multiple of $1/N$, and $w = 1/2p$. Suppose $\mathscr{A}$ is an algorithm for $\mathsf{DSF}^{d,n}_{w,p}$ that outputs $\widetilde{W}\subset[d]$ for every $W\subset[d]$ of size $w$ such that $|\widetilde{W}\Delta W| \leq w/200$ with probability at least $1 - c/\log{d}$ for a small enough constant $c>0$. Then there is an algorithm that solves $\mathsf{ESF}^{d,N}_{w,p}$ with success probability at least $99/100$ by querying $\mathscr{A}$ a number of $O(\log{d})$ times.
\end{fact}

\subsection{Solving set-finding problems with a Lasso solver}

We shall now prove that an algorithm for Lasso can be used to find a good approximation to the hidden set $W$, after which the sought-out lower bounds will follow from \Cref{fact:quantum_asf_lower_bound,fact:classical_esf_lower_bound}. In order to do so, we must first consider an expected version of the Lasso problem defined as
\begin{align*}
    \hat{\bfbeta} \in \argmin_{\bfbeta\in\mathbb{R}^d}\overline{\mathcal{L}}^{(\lambda)}_{\mathcal{D}}(\bfbeta) \quad\text{where}\quad \overline{\mathcal{L}}^{(\lambda)}_{\mathcal{D}}(\bfbeta) \triangleq \frac{1}{2}\operatorname*{\mathbb{E}}_{(\mathbf{x},y)\sim\mathcal{D}}[(\mathbf{x}^\top \bfbeta - y)^2] + \lambda\|\bfbeta\|_1 \quad\text{for}~\lambda > 0.
\end{align*}
Here, $\mathcal{D}$ is a distribution over $(\mathbf{x},y)\in \mathbb{R}^d\times \mathbb{R}$. We start by studying the expected Lasso $\overline{\mathcal{L}}^{(\lambda)}_{\mathcal{D}_{p,W}}(\bfbeta)$ relative to the distribution $\mathcal{D}_{p,W}$ from \Cref{def:distributional_set_finding_problem}.
\begin{lemma}\label{lem:properties_expected_lasso}
    Let $p\in(0,1/2)$ and $W\subset[d]$ of size $w$. Consider the distribution $\mathcal{D}_{p,W}$ over $\{-1,1\}^d\times\{-1,1\}$ from {\rm \Cref{def:distributional_set_finding_problem}}. Then the unique minimiser of the expected Lasso $\overline{\mathcal{L}}^{(\lambda)}_{\mathcal{D}_{p,W}}(\bfbeta)$ is the vector $\hat{\bfbeta}\in\mathbb{R}^d$ given by
    \begin{align*}
        \hat{\beta}_j = \begin{cases}
             \frac{2p}{1 + \lambda + 4p^2(w-1)} &\text{if}~j\in W,\\
             0 &\text{if}~j\notin W.
        \end{cases}
    \end{align*}
    Moreover, $\nabla^2 \overline{\mathcal{L}}^{(\lambda)}_{\mathcal{D}_{p,W}}(\bfbeta) \succeq (1-4p^2)\mathbf{I}_d$ for all $\bfbeta\in\mathbb{R}^d$, $\nabla \overline{\mathcal{L}}^{(\lambda)}_{\mathcal{D}_{p,W}}(\hat{\bfbeta}) = \mathbf{0}$, and 
    \begin{align*}
        \overline{\mathcal{L}}^{(\lambda)}_{\mathcal{D}_{p,W}}(\hat{\bfbeta}) = \frac{1}{2} + \frac{2pw(\lambda - 2p)}{1+\lambda + 4p^2(w-1)} + \frac{2p^2 w(1+4p^2(w-1))}{(1+\lambda+4p^2(w-1))^2} \leq \frac{1}{2} + \frac{2pw(\lambda - p)}{1 + \lambda + 4p^2(w-1)}.
    \end{align*}
\end{lemma}
\begin{proof}
    First notice that, for $i\neq j$, $\mathbb{E}_{(\mathbf{x},y)\sim\mathcal{D}_{p,W}}[x_i x_j] = 4p^2$ if $i,j\in W$ and $\mathbb{E}_{(\mathbf{x},y)\sim\mathcal{D}_{p,W}}[x_i x_j] = 0$ otherwise. Let $\hat{\bfbeta} \in \argmin_{\bfbeta\in\mathbb{R}^d}\overline{\mathcal{L}}^{(\lambda)}_{\mathcal{D}_{p,W}}(\bfbeta)$. Thus
    \begin{align}
        \overline{\mathcal{L}}^{(\lambda)}_{\mathcal{D}_{p,W}}(\bfbeta) &= \frac{1}{2}\operatorname*{\mathbb{E}}_{(\mathbf{x},y)\sim\mathcal{D}_{p,W}}\left[y^2 - 2y\sum_{i\in[d]}x_i\beta_i + \sum_{i,j\in[d]}x_i x_j \beta_i \beta_j \right] + \lambda\|\bfbeta\|_1 \nonumber\\
        &= \frac{1}{2}\left(1 - 4p\sum_{i\in W}\beta_i + \sum_{i\in [d]}\beta_i^2 + 4p^2\sum_{i \neq j\in W}\beta_i\beta_j\right) + \lambda\|\bfbeta\|_1 \label{eq:aux_1}
    \end{align}
    and
    \begin{align*}
        \nabla\overline{\mathcal{L}}^{(\lambda)}_{\mathcal{D}_{p,W}}(\bfbeta) &= \operatorname*{\mathbb{E}}_{(\mathbf{x},y)\sim\mathcal{D}_{p,W}}[(\mathbf{x}^\top\bfbeta - y)\mathbf{x}] + \lambda \mathbf{s} \implies \\ \nabla_j \overline{\mathcal{L}}^{(\lambda)}_{\mathcal{D}_{p,W}}(\bfbeta) &= \begin{cases}
            \beta_j + 4p^2\sum_{i\neq j\in W} \beta_i - 2p + \lambda s_j &\text{if}~j\in W,\\
            \beta_j +  \lambda s_j &\text{if}~j\notin W,
        \end{cases}
    \end{align*}
    where $\mathbf{s}\in\partial\|\bfbeta\|_1$ is a sub-gradient. Since $\hat{\bfbeta}$ minimises $\overline{\mathcal{L}}^{(\lambda)}_{\mathcal{D}_{p,W}}(\bfbeta)$, then $\nabla\overline{\mathcal{L}}^{(\lambda)}_{\mathcal{D}_{p,W}}(\hat{\bfbeta}) = \mathbf{0}$. Moreover, if we arrange the indices such that $\{1,\dots,w\}= W$ and $\{w+1,\dots, d\}=[d]\setminus W$, then
    \begin{align*}
        \nabla^2 \overline{\mathcal{L}}^{(\lambda)}_{\mathcal{D}_{p,W}}(\bfbeta) = \operatorname*{\mathbb{E}}_{(\mathbf{x},y)\sim\mathcal{D}_{p,W}}[\mathbf{x}^\top\mathbf{x}] = \begin{pmatrix*}
            4p^2 \mathbf{J}_w + (1-4p^2)\mathbf{I}_w & \mathbf{0}_{w\times \bar{w}}\\
            \mathbf{0}_{\bar{w}\times w} & 2\mathbf{I}_{\bar{w}}
        \end{pmatrix*} \succeq (1-4p^2)\mathbf{I}_d,
    \end{align*}
    where $\bar{w} \triangleq d - w$ and $\mathbf{J}_w$ is the all-ones matrix. We see that $\nabla^2 \overline{\mathcal{L}}^{(\lambda)}_{\mathcal{D}_{p,W}}(\bfbeta)$ is a constant matrix independent of $\bfbeta$. Regarding the optimal solution $\hat{\bfbeta}$, since $s_j = 0$ if $j\notin W$ and $s_j = \operatorname{sign}(\hat{\beta}_j)$ if $j\in W$, then
    \begin{align*}
        \begin{cases}
            (1-4p^2)\hat{\beta}_j + \lambda\operatorname{sign}(\hat{\beta}_j) - 2p + 4p^2\sum_{i\in W} \hat{\beta}_i = 0 &\text{if}~ j\in W, \\
            \hat{\beta}_j = 0 &\text{if}~j\notin W,
        \end{cases} \implies 
        \begin{cases}
            \hat{\beta}_j = \frac{2p}{1 + \lambda + 4p^2(w-1)} &\text{if}~j\in W,\\
            \hat{\beta}_j = 0 &\text{if}~j\notin W.
        \end{cases}
    \end{align*}
    Since the Hessian $\nabla^2 \overline{\mathcal{L}}^{(\lambda)}_{\mathcal{D}_{p,W}}(\bfbeta)$ is positive definite, the above solution is unique. Finally, the expression for $\overline{\mathcal{L}}^{(\lambda)}_{\mathcal{D}_{p,W}}(\hat{\bfbeta})$ follows from \Cref{eq:aux_1} and the expression for $\hat{\bfbeta}$.
\end{proof}

We now relate the entries of an approximate solution of the expected Lasso problem to the elements of the hidden set $W$.
\begin{lemma}\label{lem:approximate_lasso_set_finding}
    Let $\epsilon,b\geq 0$, $\lambda>0$, $w\in\mathbb{N}$, $p=(0,1/2)$, and $W\subset[d]$ of size $w$. Let $\widetilde{\bfbeta}\in\mathbb{R}^d$ be an approximate solution of the expected Lasso problem $\overline{\mathcal{L}}^{(\lambda)}_{\mathcal{D}_{p,W}}(\bfbeta)$ with error $\lambda \epsilon\|\widetilde{\bfbeta}\|_1 + b\epsilon$. Then
    \begin{align*}
        \sum_{j\in W}\left(\widetilde{\beta}_j - \frac{2p}{1 + \lambda + 4p^2(w-1)} \right)^2 + \sum_{j\notin W} \widetilde{\beta}_j^2 \leq \frac{\epsilon}{(1-\epsilon)(1-4p^2)}\left(1 + 2b + \frac{4pw(\lambda-p)}{1+\lambda+4p^2(w-1)}\right).
    \end{align*}
\end{lemma}
\begin{proof}
    Let $\hat{\bfbeta} = \argmin_{\bfbeta\in\mathbb{R}^d}\overline{\mathcal{L}}^{(\lambda)}_{\mathcal{D}_{p,W}}(\bfbeta)$ from \Cref{lem:properties_expected_lasso}. First notice that
    \begin{align}
        \overline{\mathcal{L}}^{(\lambda)}_{\mathcal{D}_{p,W}}(\widetilde{\bfbeta}) - \overline{\mathcal{L}}^{(\lambda)}_{\mathcal{D}_{p,W}}(\hat{\bfbeta}) \leq \lambda\epsilon\|\widetilde{\bfbeta}\|_1 + b\epsilon &\leq \epsilon\overline{\mathcal{L}}^{(\lambda)}_{\mathcal{D}_{p,W}}(\widetilde{\bfbeta}) + b\epsilon \implies \nonumber \\
        (1-\epsilon)\big(\overline{\mathcal{L}}^{(\lambda)}_{\mathcal{D}_{p,W}}(\widetilde{\bfbeta}) - \overline{\mathcal{L}}^{(\lambda)}_{\mathcal{D}_{p,W}}(\hat{\bfbeta})\big) &\leq \epsilon \overline{\mathcal{L}}^{(\lambda)}_{\mathcal{D}_{p,W}}(\hat{\bfbeta}) + b\epsilon. \label{eq:aux_2}
    \end{align}
    By expanding $\overline{\mathcal{L}}^{(\lambda)}_{\mathcal{D}_{p,W}}(\widetilde{\bfbeta})$ around $\hat{\bfbeta}$,
    \begin{align*}
        \frac{\epsilon}{1-\epsilon}(\overline{\mathcal{L}}^{(\lambda)}_{\mathcal{D}_{p,W}}(\hat{\bfbeta}) + b) &\geq \overline{\mathcal{L}}^{(\lambda)}_{\mathcal{D}_{p,W}}(\widetilde{\bfbeta}) - \overline{\mathcal{L}}^{(\lambda)}_{\mathcal{D}_{p,W}}(\hat{\bfbeta}) \tag{by \Cref{eq:aux_2}}\\
        &= \frac{1}{2}(\widetilde{\bfbeta} - \hat{\bfbeta})^\top \nabla^2 \overline{\mathcal{L}}^{(\lambda)}_{\mathcal{D}_{p,W}}(\hat{\bfbeta})(\widetilde{\bfbeta} - \hat{\bfbeta}) \tag{by $\nabla \overline{\mathcal{L}}^{(\lambda)}_{\mathcal{D}_{p,W}}(\hat{\bfbeta}) = \mathbf{0}$}\\
        &\geq \frac{1-4p^2}{2}\|\widetilde{\bfbeta} - \hat{\bfbeta}\|_2^2. \tag{by $\nabla^2 \overline{\mathcal{L}}^{(\lambda)}_{\mathcal{D}_{p,W}}(\hat{\bfbeta}) \succeq (1-4p^2)\mathbf{I}_d$}
    \end{align*}
    This finishes the proof after inserting the expression for $\overline{\mathcal{L}}^{(\lambda)}_{\mathcal{D}_{p,W}}(\hat{\bfbeta})$ from \Cref{lem:properties_expected_lasso}.
\end{proof}

We are now ready to prove that approximate solutions to expected Lasso problems can be used to find good approximations to the hidden set $W$.
\begin{theorem}\label{thr:lasso_algorithm_set_finding}
    Let $\lambda > \sqrt{5/d}$, $\epsilon\in (1/d\lambda^2,\min\{1/\lambda^2,1/5\})$, $w=\lfloor1/(\lambda^2\epsilon)\rfloor$, $p=1/(2\lfloor 1/(\lambda^2\epsilon)\rfloor)$, and $W\subset[d]$ of size $w$. Let $\widetilde{\bfbeta}\in\mathbb{R}^d$ be an approximate solution of the expected Lasso problem $\overline{\mathcal{L}}^{(\lambda)}_{\mathcal{D}_{p,W}}(\bfbeta)$ with error $\lambda \epsilon\|\widetilde{\bfbeta}\|_1/400 + \epsilon/400$. Then the set $\widetilde{W} \triangleq \{i\in[d]:|\widetilde{\beta}_i| \geq \lambda\epsilon/2\}$ satisfies $|\widetilde{W}\Delta W| \leq w/5$.
\end{theorem}
\begin{proof}
    Let $\nu \triangleq \frac{2p}{1 + \lambda + 4p^2(w-1)}$. Since $\widetilde{\bfbeta}$ minimises $\overline{\mathcal{L}}^{(\lambda)}_{\mathcal{D}_{p,W}}(\bfbeta)$ up to error $\lambda\epsilon\|\widetilde{\bfbeta}\|_1$, \Cref{lem:approximate_lasso_set_finding} yields
    \begin{align*}
        \sum_{j\in W}(\widetilde{\beta}_j - \nu)^2 + \sum_{j\notin W} \widetilde{\beta}_j^2 \leq \frac{\epsilon/400}{(1-\epsilon/400)(1-4p^2)}\left(2 + \frac{4pw(\lambda-p)}{1+\lambda+4p^2(w-1)}\right)
        \leq \frac{\epsilon}{60},
    \end{align*}
    where we used that $\epsilon\leq 1/5$, $w\geq 1$, $p\leq 1/8$, and that $2pw \leq 1$. Therefore,
    \begin{itemize}
        \item at most $w/10$ many $j\in [d]\setminus W$ have $|\widetilde{\beta}_j| \geq \sqrt{\frac{\epsilon}{60}\frac{10}{w}} = \sqrt{\frac{1}{6}\frac{\epsilon}{\lfloor1/\lambda^2\epsilon\rfloor}}$;
        \item at least $w-w/10$ many $j\in W$ have $\widetilde{\beta}_j \geq \nu - \sqrt{\frac{1}{6}\frac{\epsilon}{\lfloor1/\lambda^2\epsilon\rfloor}}$.
    \end{itemize}
    Note that $\nu - \sqrt{\frac{1}{6}\frac{\epsilon}{\lfloor1/\lambda^2\epsilon\rfloor}} = \frac{1/\lfloor1/\lambda^2\epsilon\rfloor}{1+\lambda + 1/\lfloor1/\lambda^2\epsilon\rfloor - 1/\lfloor1/\lambda^2\epsilon\rfloor^2} - \sqrt{\frac{1}{6}\frac{\epsilon}{\lfloor1/\lambda^2\epsilon\rfloor}} \geq \frac{\lambda\epsilon}{2} \geq \sqrt{\frac{1}{6}\frac{\epsilon}{\lfloor1/\lambda^2\epsilon\rfloor}}$ since it holds that $\lambda^2\epsilon \leq \frac{1}{\lfloor1/\lambda^2\epsilon\rfloor} \leq \frac{\lambda^2\epsilon}{1-\lambda^2\epsilon} \leq \frac{3\lambda^2\epsilon}{4}$. This means that the set $\widetilde{W} \triangleq \{i\in[d]:|\widetilde{\beta}_i| \geq \lambda\epsilon/2\}$ omits at most $w/10$ indices $j\in W$ and includes at most $w/10$ indices $j\in[d]\setminus W$. Therefore $|\widetilde{W}\Delta W| \leq w/5$.
\end{proof}

The above result tells us that a Lasso algorithm that finds an approximate expected Lasso solution with respect to $\mathcal{D}_{p,W}$ can also find a set $\widetilde{W}\subset[d]$ such that $|\widetilde{W}\Delta W| \leq w/5$.

\subsection{Quantum lower bound}

We now finally piece all the results from the previous sections together in order to obtain query lower bounds for Lasso algorithms. The first step is to relate the expected Lasso $\overline{\mathcal{L}}^{(\lambda)}_{\mathcal{D}_{p,W}}(\bfbeta)$ with the usual Lasso $\mathcal{L}^{(\lambda)}_{\mathbf{X},\mathbf{y}}(\bfbeta)$ from \Cref{eq:lasso_problem} considered throughout the paper. For such, we consider the normalised Lasso problem with inputs $(\mathbf{X},\mathbf{y})\in\mathbb{R}^{n\times d}\times\mathbb{R}^n$ defined as
\begin{align*}
    \hat{\bfbeta} \in \argmin_{\bfbeta \in \mathbb{R}^d} \overline{\mathcal{L}}^{(\lambda)}_{\mathbf{X},\mathbf{y}}(\bfbeta) \quad\text{where}\quad \overline{\mathcal{L}}^{(\lambda)}_{\mathbf{X},\mathbf{y}}(\bfbeta) \triangleq \frac{1}{2n}\|\mathbf{y}-\mathbf{X}{\bfbeta}\|_2^2 + \lambda \|\bfbeta\|_1 \quad \text{for}~ \lambda >0.
\end{align*}
\begin{fact}[{\cite[Theorem~11.16]{mohri2018foundations}}]
    Let $\mathcal{D}$ be a distribution over $[-1,1]^d\times[-1,1]$. Given a sample set $\{(\mathbf{x}_i,y_i)\}_{i\in[n]}$ containing $n$ i.i.d.\ samples from $\mathcal{D}$, let $\mathbf{X}\in[-1,1]^{n\times d}$ and $\mathbf{y}\in[-1,1]^n$ be such that the $i$-th row of $\mathbf{X}$ is $\mathbf{x}_i$ and the $i$-th entry of $\mathbf{y}$ is $y_i$. For all $\delta >0$ and $\bfbeta\in\mathbb{R}^d$ such that $\|\bfbeta\|_1 \leq \Lambda$, with probability at least $1-\delta$ over the choice of $(\mathbf{X},\mathbf{y})$,
    \begin{align*}
        \overline{\mathcal{L}}^{(\lambda)}_{\mathcal{D}}(\bfbeta) - \overline{\mathcal{L}}^{(\lambda)}_{\mathbf{X},\mathbf{y}}(\bfbeta) \leq \Lambda(1+\Lambda)\sqrt{\frac{2\ln(2d)}{n}} + (1+\Lambda)^2\sqrt{\frac{\ln(1/\delta)}{8n}}.
    \end{align*}
\end{fact}

Using a reduction from $\mathsf{ASF}^{d,N}_{w,p}$ to algorithms for Lasso, we now prove our quantum lower bound.
\begin{theorem}\label{thr:lower_bound}
    Let $n < d$ such that $n = \Omega(\log{d})$, $\lambda = \Omega(n)$, and $\epsilon \in (0,1/5)$ such that $\epsilon = \Omega(\sqrt{n^3\log{d}}/\lambda^2)$. There is $\mathbf{X}\in\{-1,1\}^{n\times d}$ such that every bounded-error quantum algorithm that computes an $\epsilon\lambda\|\widetilde{\bfbeta}\|_1$-minimiser for the Lasso $\mathcal{L}^{(\lambda)}_{\mathbf{X},1^n}(\bfbeta)$ uses 
    \begin{align*}
        \Omega\left(\frac{n^3\sqrt{d}}{\lambda^3\epsilon^{3/2}} \right) \quad\text{quantum queries to}~\mathbf{X}.
    \end{align*}
\end{theorem}
\begin{proof}
    Let $w=\lfloor1/(\overline{\lambda}^2\epsilon)\rfloor$ and $p=1/(2\lfloor 1/(\overline{\lambda}^2\epsilon)\rfloor)$. Consider a valid input $\mathbf{X}'\in\{-1,1\}^{N\times d}$ for $\mathsf{ASF}^{d,N}_{w,p}$, and let $W\subset[d]$ be the set of $w$ indices of columns of $\mathbf{X}'$ whose entries add up to $2pN$. According to \Cref{fact:reduction}, we can obtain from $\mathbf{X}'$ a number of $n=\Omega(\log{d})$ i.i.d.\ samples $\{(\mathbf{x}_i,1)\}_{i\in[n]}$ from the distribution $\mathcal{D}_{p,W}$. Let $\mathbf{X}\in\{-1,1\}^{n\times d}$ be the matrix whose rows are $\{\mathbf{x}_i\}_{i\in[n]}$. For a small enough constant $c\in(0,1)$, suppose we have a quantum Lasso algorithm which returns a $c\lambda\epsilon\|\widetilde{\bfbeta}\|_1$-minimiser $\widetilde{\bfbeta}\in\mathbb{R}^d$ with high probability for the Lasso $\mathcal{L}^{(\lambda)}_{\mathbf{X},1^n}(\bfbeta)$ with inputs $(\mathbf{X},\mathbf{y}=1^n)$. The vector $\widetilde{\bfbeta}$ is also a $c\overline{\lambda}\epsilon\|\widetilde{\bfbeta}\|_1$-minimiser to the normalised Lasso $\overline{\mathcal{L}}^{(\overline{\lambda})}_{\mathbf{X},1^n}(\bfbeta)$ where $\overline{\lambda} = \lambda/n$. Note that $\overline{\lambda} = \Omega(1)$. Consider now the expected Lasso $\overline{\mathcal{L}}^{(\overline{\lambda})}_{\mathcal{D}_{p,W}}(\bfbeta)$. According to \Cref{lem:properties_expected_lasso}, the solution $\hat{\bfbeta}\in\argmin_{\bfbeta\in\mathbb{R}^d}\overline{\mathcal{L}}^{(\overline{\lambda})}_{\mathcal{D}_{p,W}}(\bfbeta)$ is
    \begin{align}\label{eq:optimal_solution}
        \hat{\beta}_j = \begin{cases}
             \frac{2p}{1 + \overline{\lambda} + 4p^2(w-1)} &\text{if}~j\in W,\\
             0 &\text{if}~j\notin W.
        \end{cases}
    \end{align}
    For $n = \Omega(\epsilon^{-2}(1+\Lambda)^4\log(d/\delta))$, with probability at least $1-\delta$ we have that
    \begin{align}\label{eq:relation_lassos}
        \overline{\mathcal{L}}^{(\overline{\lambda})}_{\mathcal{D}_{p,W}}(\bfbeta) - \overline{\mathcal{L}}^{(\overline{\lambda})}_{\mathbf{X},\mathbf{y}}(\bfbeta) \leq c\epsilon \quad\text{for all}~\bfbeta\in\mathbb{R}^d~\text{such that}~\|\bfbeta\|_1 \leq \Lambda.
    \end{align}
    Assume for now that $\Lambda$ is large enough so that $\max\{\|\widetilde{\bfbeta}\|_1,\|\hat{\bfbeta}\|_1\} \leq \Lambda$. According to \Cref{eq:optimal_solution}, we know that $\|\hat{\bfbeta}\|_1 = \frac{2pw}{1+\overline{\lambda}+4p^2(w-1)} \leq \frac{1}{1+\overline{\lambda}}$. On the other hand,
    \begin{align*}
        \overline{\lambda}\|\widetilde{\bfbeta}\|_1 \leq \overline{\mathcal{L}}_{\mathbf{X},1^n}^{(\overline{\lambda})}(\widetilde{\bfbeta}) \leq \overline{\mathcal{L}}_{\mathbf{X},1^n}^{(\overline{\lambda})}(\hat{\bfbeta}) + c\overline{\lambda}\epsilon\|\widetilde{\bfbeta}\|_1 \leq \overline{\mathcal{L}}^{(\overline{\lambda})}_{\mathcal{D}_{p,W}}(\hat{\bfbeta}) + c\epsilon + c\overline{\lambda}\epsilon\|\widetilde{\bfbeta}\|_1,
    \end{align*}
    since $\widetilde{\bfbeta}$ is a $c\overline{\lambda}\epsilon\|\widetilde{\bfbeta}\|_1$-minimiser and by using \Cref{eq:relation_lassos}. By also using that $\overline{\mathcal{L}}^{(\lambda)}_{\mathcal{D}_{p,W}}(\hat{\bfbeta}) \leq \frac{1}{2} + \frac{2pw(\overline{\lambda} - p)}{1 + \overline{\lambda} + 4p^2(w-1)}$ according to \Cref{lem:properties_expected_lasso}, we obtain that
    \begin{align*}
        \|\widetilde{\bfbeta}\|_1 \leq \frac{1}{c\overline{\lambda}(1-\epsilon)}\left(\frac{1}{2} + \frac{2pw(\overline{\lambda} - p)}{1 + \overline{\lambda} + 4p^2(w-1)} + c\epsilon\right) \leq \frac{1}{c\overline{\lambda}} + \frac{5}{4c(1+\overline{\lambda})}. \tag{since $\epsilon \leq 1/5$}
    \end{align*}
    We can therefore set $\Lambda = 3/(c\overline{\lambda}) = O(1)$ in order to employ \Cref{eq:relation_lassos} with $\widetilde{\bfbeta}$ and $\hat{\bfbeta}$. Note that $n=\Omega(\epsilon^{-2}(1+\Lambda)^2\log(d/\delta))$ is satisfied within our range of parameters since $\frac{n^4}{\epsilon^2\lambda^4}\log{d} = O(n)$. Thus, by \Cref{eq:relation_lassos} and a simple Chernoff bound in order to argue that $\overline{\mathcal{L}}^{(\overline{\lambda})}_{\mathbf{X},\mathbf{y}}(\hat{\bfbeta}) - \overline{\mathcal{L}}^{(\overline{\lambda})}_{\mathcal{D}_{p,W}}(\hat{\bfbeta}) \leq c\epsilon$ (which only applies to $\hat{\bfbeta}$ and not all $\bfbeta$ such that $\|\bfbeta\|_1 \leq \Lambda$ as in \Cref{eq:relation_lassos}),
    \begin{align*}
        \overline{\mathcal{L}}^{(\overline{\lambda})}_{\mathcal{D}_{p,W}}(\widetilde{\bfbeta}) - \overline{\mathcal{L}}^{(\overline{\lambda})}_{\mathcal{D}_{p,W}}(\hat{\bfbeta}) \leq 2c\epsilon + \overline{\mathcal{L}}^{(\overline{\lambda})}_{\mathbf{X},1^n}(\widetilde{\bfbeta}) - \overline{\mathcal{L}}^{(\overline{\lambda})}_{\mathbf{X},1^n}(\hat{\bfbeta}) \leq 2c\epsilon + c\overline{\lambda}\epsilon\|\widetilde{\bfbeta}\|_1,
    \end{align*}
    which means that $\widetilde{\bfbeta}$ is an approximate solution to $\overline{\mathcal{L}}^{(\overline{\lambda})}_{\mathcal{D}_{p,W}}(\bfbeta)$ with error $2c\epsilon + c\overline{\lambda}\epsilon\|\widetilde{\bfbeta}\|_1$. Now note that the condition $\epsilon\in(1/d\overline{\lambda}^2,\min\{1/\overline{\lambda}^2,1/5\})$ is satisfied since $1/d\overline{\lambda}^2 = O(\sqrt{n^3\log{d}}/\lambda^2)$ and $1/\overline{\lambda} = O(1)$ and $1/\overline{\lambda}^2 = n^2/\lambda^2 = \Omega(\sqrt{n^3\log{d}}/\lambda^2)$. Therefore, according to \Cref{thr:lasso_algorithm_set_finding}, we can use $\widetilde{\bfbeta}$ in order to output $\widetilde{W}\subset[d]$ such that $|\widetilde{W}\Delta W| \leq w/5$ with high probability, thus solving the $\mathsf{ASF}^{d,N}_{w,p}$ problem. Since the query complexity of $\mathsf{ASF}^{d,N}_{w,p}$ is $\Omega(\sqrt{wd}/p)$ by \Cref{fact:quantum_asf_lower_bound}, this yields that the query complexity of the Lasso algorithm for $\mathcal{L}^{(\lambda)}_{\mathbf{X},1^n}(\bfbeta)$ is
    \[
        \Omega\left(\frac{\sqrt{wd}}{p}\right) = \Omega\left(\frac{1}{\overline{\lambda}^2\epsilon}\sqrt{\frac{d}{\overline{\lambda}^2\epsilon}}\right) = \Omega\left(\frac{n^3\sqrt{d}}{\lambda^3\epsilon^{3/2}} \right). \qedhere
    \]
\end{proof}

\subsection{Classical lower bound}

We now turn our attention to proving a classical query lower bound by reducing the $\mathsf{ESF}^{d,N}_{w,p}$ problem to an approximate Lasso one. The proof is similar to the quantum lower bound, the main difference being that the reduction from \Cref{fact:est_to_asf_reduction} is used instead of \Cref{fact:reduction}.
\begin{theorem}\label{thr:classical_lower_bound}
    Let $n < d$ such that $n = \Omega(\log{d})$, $\lambda = \Omega(n)$, and $\epsilon \in (0,1/5)$ such that $\epsilon = \Omega(\sqrt{n^3\log{d}}/\lambda^2)$. There is $\mathbf{X}\in\{-1,1\}^{n\times d}$ such that every bounded-error classical algorithm that computes an $\epsilon\lambda\|\widetilde{\bfbeta}\|_1$-minimiser for the Lasso $\mathcal{L}^{(\lambda)}_{\mathbf{X},1^n}(\bfbeta)$ uses 
    \begin{align*}
        \Omega\left(\frac{n^4 d}{\lambda^4\epsilon^{2}\log{d}} \right) \quad\text{classical queries to}~\mathbf{X}.
    \end{align*}
\end{theorem}
\begin{proof}
    Let $w=\lfloor1/(\overline{\lambda}^2\epsilon)\rfloor$ and $p=1/(2\lfloor 1/(\overline{\lambda}^2\epsilon)\rfloor)$. Consider a valid input $\mathbf{X}'\in\{-1,1\}^{N\times d}$ for $\mathsf{ESF}^{d,N}_{w,p}$. For a small enough constant $c\in(0,1)$, suppose we have a classical Lasso algorithm which returns a $c\lambda\epsilon\|\widetilde{\bfbeta}\|_1$-minimiser $\widetilde{\bfbeta}\in\mathbb{R}^d$ with high probability for the Lasso $\mathcal{L}^{(\lambda)}_{\mathbf{X},1^n}(\bfbeta)$ with inputs  $(\mathbf{X},\mathbf{y} = 1^n)$ depending on $\mathbf{X}'$. The vector $\widetilde{\bfbeta}$ is also a $c\overline{\lambda}\epsilon\|\widetilde{\bfbeta}\|_1$-minimiser to the normalised Lasso $\overline{\mathcal{L}}^{(\overline{\lambda})}_{\mathbf{X},1^n}(\bfbeta)$ where $\overline{\lambda} = \lambda/n$. Similarly to the proof of \Cref{thr:lower_bound}, one can show that $\widetilde{\bfbeta}$ is, with high probability, an approximate solution to $\overline{\mathcal{L}}^{(\overline{\lambda})}_{\mathcal{D}_{p,W}}(\bfbeta)$ with error $2c\epsilon + c\overline{\lambda}\epsilon\|\widetilde{\bfbeta}\|_1$ if $n = \Omega(\epsilon^{-2}(1+\Lambda)^4\log(d/\delta))$ where $\Lambda = 3/(c\overline{\lambda}) = O(1)$. This condition is satisfied within our range of parameters since $\frac{n^4}{\epsilon^2\lambda^4}\log{d} = O(n)$. On the other hand, again we note that the condition $\epsilon\in(1/d\overline{\lambda}^2,\min\{1/\overline{\lambda}^2,1/5\})$ is satisfied since $1/d\overline{\lambda}^2 = O(\sqrt{n^3\log{d}}/\lambda^2)$ and $1/\overline{\lambda} = O(1)$ and $1/\overline{\lambda}^2 = n^2/\lambda^2 = \Omega(\sqrt{n^3\log{d}}/\lambda^2)$. Therefore, by \Cref{thr:lasso_algorithm_set_finding}, we can use the classical Lasso algorithm to solve $\mathsf{DSF}^{d,n}_{w,p}$. Finally, this means that, by \Cref{fact:est_to_asf_reduction}, $O(\log{d})$ uses of the classical Lasso algorithm can solve $\mathsf{ESF}^{d,N}_{w,p}$. Since the query complexity of $\mathsf{ESF}^{d,N}_{w,p}$ is $\Omega((d-w)/p^2)$ by \Cref{fact:classical_esf_lower_bound}, this yields that the query complexity of the classical Lasso algorithm for $\mathcal{L}^{(\lambda)}_{\mathbf{X},1^n}(\bfbeta)$ is
    \[
        \Omega\left(\frac{d-w}{p^2\log{d}}\right) = \Omega\left(\frac{d - 1/(\overline{\lambda}^2\epsilon)}{\overline{\lambda}^4\epsilon^2\log{d}}\right) = \Omega\left(\frac{n^4d}{\epsilon^2\lambda^4\log{d}} \right). \qedhere
    \]
\end{proof}

\section{Bounds in noisy regime}\label{sect:bounds_noisy}

In this section, we assume that the observation vector $\mathbf{y}\in\mathbb{R}^n$ and the design matrix $\mathbf{X}\in\mathbb{R}^{n\times d}$ are connected via the standard linear model 
\[
\mathbf{y} = \mathbf{X}\bfbeta^* + \mathbf{w},
\]
where the true solution $\bfbeta^*\in\mathbb R^d$ (also referred to as regression vector) is sparse and $\mathbf{w}\in\mathbb R^n$ is a noise vector. Our aim is to determine how close the Lasso solution produced by the LARS algorithm, especially its approximate version from \cref{alg:classical_quantum}, is to the true solution $\bfbeta^*$. More specifically, we shall focus on bounding the \emph{mean-squared prediction error} $\|\mathbf{X}(\bfbeta^\ast - \widetilde{\bfbeta})\|_2^2/n$. Throughout this section we shall assume the following.
\begin{assumption}\label{ass: maxnormbdd}
    Given the design matrix $\mathbf{X}\in\mathbb{R}^{n\times d}$, then $\max_{i\in[d]}\norm{\mathbf{X}_i}_2\leq \sqrt{C n}$ for some $C>0$.  
\end{assumption}
The results from this section are standard in the Lasso literature~\cite{wainwright2019high, buhlmann2011statistics} for the case when the regularisation path from LARS algorithm exactly minimises the Lasso cost function. We generalise them for case when $\widetilde{\mathcal{P}}$ is an approximate regularisation path with error $\lambda\epsilon\|\widetilde{\bfbeta}\|_1$. 

\subsection{Slow rates}

It is possible to obtain bounds on the mean-squared prediction error without barely any assumptions on the design matrix $\mathbf{X}\in\mathbb{R}^{n\times d}$ and the noise vector $\mathbf{w}\in\mathbb{R}^n$, as the next result shows.
\begin{theorem}\label{thm:Lasso_error}
    Let $\lambda > 0$ and $\epsilon \in [0,1)$. Any approximate solution $\widetilde{\bfbeta}$ with error $\lambda\epsilon\|\widetilde{\bfbeta}\|_1$ of the Lasso with regularisation parameter $\lambda \geq \| \mathbf{X}^\top \mathbf{w} \|_\infty/(1-\epsilon)$ satisfies
    %
    %
    \begin{align*}
        \norm{\mathbf{X}(\bfbeta^* - \widetilde{\bfbeta})}_2^2 \leq 2(2-\epsilon)\lambda \norm{\bfbeta^*}_1.
    \end{align*}
\end{theorem}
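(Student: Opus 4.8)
The plan is to adapt the standard ``basic inequality'' argument for the Lasso slow rate to the approximate-optimality setting, keeping everything deterministic. First I would invoke \cref{def:approximate_Lasso_solution}: since $\widetilde{\bfbeta}$ is an approximate Lasso solution with error $\lambda\epsilon\norm{\widetilde{\bfbeta}}_1$, and since the true regression vector $\bfbeta^*$ is itself a competitor in the minimisation over $\bfbeta\in\mathbb{R}^d$, the defining inequality becomes
\[
\tfrac{1}{2}\norm{\mathbf{y}-\mathbf{X}\widetilde{\bfbeta}}_2^2 + \lambda\norm{\widetilde{\bfbeta}}_1 \le \tfrac{1}{2}\norm{\mathbf{y}-\mathbf{X}\bfbeta^*}_2^2 + \lambda\norm{\bfbeta^*}_1 + \lambda\epsilon\norm{\widetilde{\bfbeta}}_1.
\]
This is the only place the approximation error enters, and it is precisely what turns the usual constant $2$ into $(2-\epsilon)$.

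Next I would substitute the linear model $\mathbf{y} = \mathbf{X}\bfbeta^* + \mathbf{w}$, so that $\mathbf{y}-\mathbf{X}\bfbeta^* = \mathbf{w}$ and $\mathbf{y}-\mathbf{X}\widetilde{\bfbeta} = \mathbf{w} + \mathbf{X}(\bfbeta^* - \widetilde{\bfbeta})$. Expanding the squared norm on the left-hand side and cancelling the common $\tfrac{1}{2}\norm{\mathbf{w}}_2^2$ term on both sides reduces the inequality to
\[
\tfrac{1}{2}\norm{\mathbf{X}(\bfbeta^* - \widetilde{\bfbeta})}_2^2 \le \lambda\norm{\bfbeta^*}_1 - \lambda\norm{\widetilde{\bfbeta}}_1 + \lambda\epsilon\norm{\widetilde{\bfbeta}}_1 - \mathbf{w}^\top\mathbf{X}(\bfbeta^* - \widetilde{\bfbeta}).
\]

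Finally I would control the cross term by H\"older's inequality, $-\mathbf{w}^\top\mathbf{X}(\bfbeta^* - \widetilde{\bfbeta}) \le \norm{\mathbf{X}^\top\mathbf{w}}_\infty\norm{\widetilde{\bfbeta}-\bfbeta^*}_1$, then apply the triangle inequality $\norm{\widetilde{\bfbeta}-\bfbeta^*}_1 \le \norm{\widetilde{\bfbeta}}_1 + \norm{\bfbeta^*}_1$ together with the hypothesis $\norm{\mathbf{X}^\top\mathbf{w}}_\infty \le (1-\epsilon)\lambda$. Collecting terms, the coefficient of the unknown quantity $\norm{\widetilde{\bfbeta}}_1$ becomes $-\lambda + \lambda\epsilon + (1-\epsilon)\lambda = 0$, while the coefficient of $\norm{\bfbeta^*}_1$ becomes $\lambda + (1-\epsilon)\lambda = (2-\epsilon)\lambda$, yielding $\tfrac{1}{2}\norm{\mathbf{X}(\bfbeta^*-\widetilde{\bfbeta})}_2^2 \le (2-\epsilon)\lambda\norm{\bfbeta^*}_1$ and hence the claim after multiplying through by $2$. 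There is no genuine obstacle here beyond careful bookkeeping; the single point that must be checked is that $\norm{\widetilde{\bfbeta}}_1$ cancels \emph{exactly}, which is exactly why the threshold is stated as $\lambda \ge \norm{\mathbf{X}^\top\mathbf{w}}_\infty/(1-\epsilon)$ rather than the usual $\lambda \ge \norm{\mathbf{X}^\top\mathbf{w}}_\infty$. Note that \cref{ass: maxnormbdd} is not needed for this deterministic slow-rate bound; it only enters later when concentration is used to verify the threshold on $\lambda$ probabilistically.
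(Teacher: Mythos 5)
Your proposal is correct and follows essentially the same route as the paper's proof: the approximate basic inequality from \cref{def:approximate_Lasso_solution}, substitution of $\mathbf{y}=\mathbf{X}\bfbeta^*+\mathbf{w}$, H\"older plus the triangle inequality on the cross term, and the observation that the hypothesis $\lambda \ge \norm{\mathbf{X}^\top\mathbf{w}}_\infty/(1-\epsilon)$ makes the $\norm{\widetilde{\bfbeta}}_1$ contribution nonpositive while the $\norm{\bfbeta^*}_1$ contribution is at most $(2-\epsilon)\lambda$. Your side remark that \cref{ass: maxnormbdd} is not needed for this deterministic bound is also consistent with how the paper uses it only in the subsequent probabilistic examples.
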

\begin{proof}
Notice first that, since $\widetilde{\bfbeta}$ is a minimiser up to additive error $\lambda\epsilon\|\widetilde{\bfbeta}\|_1$, then
\[
\mathcal{L}_\lambda(\widetilde{\bfbeta}) - \mathcal{L}_\lambda(\bfbeta^\ast) \leq \lambda\epsilon\|\widetilde{\bfbeta}\|_1 \implies \frac{1}{2}\norm{\mathbf{y}-\mathbf{X}\widetilde{\bfbeta}}_2^2 + \lambda(1 -\epsilon) \norm{\widetilde{\bfbeta}}_1  \leq \frac{1}{2}\norm{\mathbf{y} - \mathbf{X}\bfbeta^*}_2^2 + \lambda \norm{\bfbeta^*}_1.
\]
Use the equality $\mathbf{y}=\mathbf{X}\bfbeta^* + \mathbf{w}$ in the above inequality to obtain
\begin{align*}
    \frac{1}{2}\norm{\mathbf{X}(\bfbeta^*-\widetilde{\bfbeta}) + \mathbf{w}}_2^2 + \lambda(1-\epsilon) \norm{\widetilde{\bfbeta}}_1  &\leq \frac{1}{2}\norm{\mathbf{w}}_2^2 + \lambda \norm{\bfbeta^*}_1,
\end{align*}
from which we get the following inequality
\begin{align}
    \frac{1}{2}\norm{\mathbf{X}(\bfbeta^*-\widetilde{\bfbeta})}_2^2 + \lambda(1-\epsilon)\norm{\widetilde{\bfbeta}}_1 + \frac{1}{2}\norm{\mathbf{w}}_2^2 + \mathbf{w}^\top \mathbf{X}(\bfbeta^*-\widetilde{\bfbeta}) &\leq \frac{1}{2}\norm{\mathbf{w}}_2^2 + \lambda \norm{\bfbeta^*}_1  \iff \nonumber\\
    \frac{1}{2}\norm{\mathbf{X}(\bfbeta^*-\widetilde{\bfbeta})}_2^2 + \lambda(1-\epsilon)\norm{\widetilde{\bfbeta}}_1 + \mathbf{w}^\top \mathbf{X}(\bfbeta^*-\widetilde{\bfbeta}) &\leq \lambda \norm{\bfbeta^*}_1 \label{eq:inequality1}
\end{align}
and finally
\begin{align*}
    \frac{1}{2}\norm{\mathbf{X}(\bfbeta^*-\widetilde{\bfbeta})}_2^2   &\leq \mathbf{w}^\top \mathbf{X}(\widetilde{\bfbeta} - \bfbeta^*)  + \lambda \norm{\bfbeta^*}_1  - \lambda(1-\epsilon)\| \widetilde{\bfbeta}\|_1 \tag{\Cref{eq:inequality1}}\\
    &\leq \| \mathbf{X}^\top \mathbf{w}\|_\infty \|\widetilde{\bfbeta} - \bfbeta^*\|_1   + \lambda \norm{\bfbeta^*}_1  - \lambda(1-\epsilon)\| \widetilde{\bfbeta}\|_1 \tag{H\"older's inequality}\\
    &\leq  \| \mathbf{X}^\top \mathbf{w}\|_\infty(\| \widetilde{\bfbeta}\| + \|\bfbeta^*\|_1)   + \lambda \norm{\bfbeta^*}_1  - \lambda(1-\epsilon)\| \widetilde{\bfbeta}\|_1 \tag{Triangle inequality}\\
    &= \norm{\widetilde{\bfbeta}} \left(\|\mathbf{X}^\top \mathbf{w}\|_\infty - \lambda(1-\epsilon) \right) + \norm{\bfbeta^*} (\| \mathbf{X}^\top \mathbf{w}\|_\infty+\lambda ) \\
    &\leq \norm{{\bfbeta}^*} (\| \mathbf{X}^\top \mathbf{w}\|_\infty+\lambda ) \tag{$\|\mathbf{X}^\top \mathbf{w}\|_\infty \leq (1-\epsilon)\lambda$}\\
    &\leq (2-\epsilon)\lambda \norm{\bfbeta^*}. \tag*{($\|\mathbf{X}^\top \mathbf{w}\|_\infty \leq (1-\epsilon)\lambda$) \quad\qedhere}
\end{align*}
\end{proof}

We give some examples of the above result using common linear regression models.

\paragraph*{Linear Gaussian model.} In the classical linear Gaussian model, the design matrix is deterministic, meaning that $\mathbf{X}\in\mathbb{R}^{n\times d}$ is fixed, while the noise vector $\mathbf{w}\in\mathbb{R}^n$ has i.i.d.\ $\mathcal{N}(0,\sigma^2)$ entries. We then bound the probability that $\lambda \geq \| \mathbf{X}^\top \mathbf{w} \|_\infty/(1-\epsilon)$. Using that $\mathbf{X}_i^\top \mathbf{w}$ is a sub-Gaussian random variable with parameter $\sigma\|\mathbf{X}_i\|_2 \leq \sigma\sqrt{Cn}$ (\cref{fact:sub_gaussian}), we bound the complement of this probability:
\begin{align*}
    \mathbb{P}\left[\norm{\mathbf{X}^\top \mathbf{w}}_\infty> t \right]
    = \mathbb{P}\left[\max_{i\in[d]} |\mathbf{X}_i^\top \mathbf{w}|>t\right]
    \leq \sum_{i=1}^d\mathbb{P}[|\mathbf{X}_i^\top \mathbf{w}|>t]
     \leq 2d\exp \left( -\frac{t^2}{2C\sigma^2 n} \right),
\end{align*}
where the first inequality follows from a union bound.
Setting $t=\lambda(1-\epsilon)=\sqrt{2C\sigma^2n\log(2d/\delta)}$, the above probability is at most $\delta$. Consequently,
\begin{align*}
    \mathbb{P}\left[\norm{\mathbf{X}^\top \mathbf{w}}_\infty \leq \lambda(1-\epsilon)\right] \geq 1-\delta. 
\end{align*}
Substituting this value of $\lambda$ in \cref{thm:Lasso_error}, with probability at least $1-\delta$,
\begin{align*}
    \frac{\norm{\mathbf{X}(\bfbeta^*-\widetilde{\bfbeta})}_2^2}{n}  \leq \frac{2(2-\epsilon)\lambda\norm{\bfbeta^*}}{n} = \frac{2-\epsilon}{1-\epsilon}\sqrt{\frac{8C\sigma^2 \log(2d/\delta)}{n}}\norm{\bfbeta^*}.
\end{align*}
We see that the mean square error vanishes by a square root of $n$ factor as $n$ becomes large. Without additional assumptions, this is known as a slow rate.

\paragraph*{Compressed sensing.} In compressed sensing, it is common for the design matrix $\mathbf{X}\in\mathbb{R}^{n\times d}$ to be chosen by the user, and a standard choice is the standard Gaussian matrix with i.i.d.\ $\mathcal{N}(0,1)$ entries. Assume further that the noise vector $\mathbf{w}\in\mathbb{R}^n$ is deterministic with $\|\mathbf{w}\|_\infty \leq \sigma$. Thus $\mathbf{X}_i^\top \mathbf{w}$ is a sub-Gaussian random variable with parameter $\|\mathbf{w}\|_2 \leq \sqrt{n}\|\mathbf{w}\|_\infty \leq \sqrt{n}\sigma$ (\cref{fact:sub_gaussian}). By following the exact same steps as the previous example (this time with $C=1$), we conclude that, with probability at least $1-\delta$,
\begin{align*}
    \frac{\norm{\mathbf{X}(\bfbeta^*-\widetilde{\bfbeta})}_2^2}{n}  \leq \frac{2-\epsilon}{1-\epsilon}\sqrt{\frac{8\sigma^2 \log(2d/\delta)}{n}}\norm{\bfbeta^*}.
\end{align*}
So the mean square error also vanishes by a square root of $n$ factor as $n$ becomes large. We shall see next that, by imposing additional restrictions on the design matrix, faster rates can be obtained.

\subsection{Fast Rates}

To obtain faster rates, we need additional assumptions on the design matrix $\mathbf{X}$. The condition below originates actually in the compressed sensing literature. We assume the true solution $\bfbeta^*$ has a hard finite support $\operatorname{supp}(\bfbeta^*) $.

\begin{definition}[Restricted Eigenvalue condition] 
    For $S \subseteq [d]$ and $\zeta\in\mathbb{R}$, let
    \[
    \mathbb{C}_\zeta (S) \triangleq \{\bfDelta\in \mathbb{R}^d : \norm{\bfDelta_{S^c}}_1\leq \zeta \norm{\bfDelta_{S}}_1 \}.
    \]
    A matrix $\mathbf{X}\in\mathbb{R}^{n\times d}$ satisfies the Restricted Eigenvalue {\rm (RE)} condition over $S$ with parameters $(\kappa, \zeta)$ if 
    \[
    \frac{1}{n}\norm{\mathbf{X}\bfDelta}_2^2 \geq \kappa \norm{\bfDelta}_2^2 \quad \forall \bfDelta \in \mathbb{C}_\zeta (S). 
    \]
\end{definition}
%
An interpretation of the {\rm RE} condition is that it bounds away the minimum eigenvalues of the matrix $\mathbf{X}^\top \mathbf{X}$ in specific directions. Ideally, we would like to bound the curvature of the cost function $\|\mathbf{y} - \mathbf{X}\bfbeta\|_2^2$ in all directions, which would guarantee strong bounds on the mean square error. However, in the high-dimensional setting $d\gg n$, $\mathbf{X}^\top \mathbf{X}$ is a $d\times d$ matrix with rank at most $n$, so it is impossible to guarantee a positive curvature in all directions. Therefore, we must relax such stringent condition on the curvature and require that it holds only on a subset $\mathbb{C}_\zeta (S)$ of vectors. The RE condition yields the following stronger result.

\begin{theorem}
    Let $\lambda,\kappa > 0$ and $\epsilon\in[0,1/4]$. Let $S \triangleq \operatorname{supp}(\bfbeta^\ast)$. Suppose $\mathbf{X}\in\mathbb{R}^{n\times d}$ satisfies the {\rm RE} condition over $S$ with parameters $(\kappa, 5)$. Any approximate solution $\widetilde{\bfbeta}\in\mathbb{R}^d$ with error $\lambda\epsilon\|\widetilde{\bfbeta}\|_1$ of the Lasso with regularisation parameter $\lambda \geq 4\norm{\mathbf{X}^\top \mathbf{w}}_\infty$ satisfies
    \[
        \|\mathbf{X}(\widetilde{\bfbeta} - \bfbeta^\ast)\|_2^2  + 3\lambda\|\widetilde{\bfbeta} - \bfbeta^\ast\|_1 \leq 81\lambda^2\frac{|S|}{n\kappa} + 18\lambda\epsilon\|\bfbeta_S^\ast\|_1.
    \]
\end{theorem}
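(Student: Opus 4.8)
The plan is to follow the standard oracle-inequality argument for the Lasso under a restricted eigenvalue condition, while tracking the extra slack introduced by the approximation error $\lambda\epsilon\|\widetilde{\bfbeta}\|_1$. Write $\bfDelta \triangleq \widetilde{\bfbeta} - \bfbeta^\ast$. The starting point is the \emph{basic inequality}: since $\widetilde{\bfbeta}$ minimises the Lasso cost up to additive error $\lambda\epsilon\|\widetilde{\bfbeta}\|_1$ and $\bfbeta^\ast$ is feasible, I have $\frac{1}{2}\|\mathbf{y} - \mathbf{X}\widetilde{\bfbeta}\|_2^2 + \lambda(1-\epsilon)\|\widetilde{\bfbeta}\|_1 \leq \frac{1}{2}\|\mathbf{y} - \mathbf{X}\bfbeta^\ast\|_2^2 + \lambda\|\bfbeta^\ast\|_1$, exactly as at the beginning of the proof of \cref{thm:Lasso_error}. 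Substituting $\mathbf{y} = \mathbf{X}\bfbeta^\ast + \mathbf{w}$, expanding the square, cancelling $\frac12\|\mathbf{w}\|_2^2$, and bounding the noise cross term by $|\mathbf{w}^\top\mathbf{X}\bfDelta| \leq \|\mathbf{X}^\top\mathbf{w}\|_\infty\|\bfDelta\|_1 \leq \frac{\lambda}{4}\|\bfDelta\|_1$ (Hölder together with $\lambda \geq 4\|\mathbf{X}^\top\mathbf{w}\|_\infty$) leads to $\frac12\|\mathbf{X}\bfDelta\|_2^2 \leq \frac{\lambda}{4}\|\bfDelta\|_1 + \lambda\|\bfbeta^\ast\|_1 - \lambda(1-\epsilon)\|\widetilde{\bfbeta}\|_1$.

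Next I would localise on the support $S$. Since $\bfbeta^\ast_{S^c} = \mathbf{0}$, I have $\|\widetilde{\bfbeta}\|_1 \geq \|\bfbeta^\ast_S\|_1 - \|\bfDelta_S\|_1 + \|\bfDelta_{S^c}\|_1$ and $\|\bfbeta^\ast\|_1 = \|\bfbeta^\ast_S\|_1$; splitting $\|\bfDelta\|_1 = \|\bfDelta_S\|_1 + \|\bfDelta_{S^c}\|_1$ and collecting terms yields the \emph{cone-type inequality}
\begin{equation*}
    \tfrac12\|\mathbf{X}\bfDelta\|_2^2 + \lambda\big(\tfrac34 - \epsilon\big)\|\bfDelta_{S^c}\|_1 \leq \lambda\big(\tfrac54 - \epsilon\big)\|\bfDelta_S\|_1 + \lambda\epsilon\|\bfbeta^\ast_S\|_1. \tag{$\star$}
\end{equation*}
The approximation term $\lambda\epsilon\|\bfbeta^\ast_S\|_1$ prevents $(\star)$ from immediately placing $\bfDelta$ in the cone $\mathbb{C}_5(S)$, so I would split into two cases according to whether the cone condition holds.

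If the cone condition fails, i.e.\ $\|\bfDelta_{S^c}\|_1 > 5\|\bfDelta_S\|_1$, then discarding the nonnegative quadratic term in $(\star)$ and substituting $\|\bfDelta_S\|_1 < \frac15\|\bfDelta_{S^c}\|_1$ gives $\big(\tfrac12 - \tfrac{4\epsilon}{5}\big)\|\bfDelta_{S^c}\|_1 < \epsilon\|\bfbeta^\ast_S\|_1$; since $\epsilon \leq 1/4$ makes the bracket at least $3/10$, all of $\|\bfDelta_{S^c}\|_1$, $\|\bfDelta_S\|_1$, $\|\bfDelta\|_1$, and (again via $(\star)$) $\|\mathbf{X}\bfDelta\|_2^2$ are $O(\epsilon\|\bfbeta^\ast_S\|_1)$, and tracking constants shows $\|\mathbf{X}\bfDelta\|_2^2 + 3\lambda\|\bfDelta\|_1 \leq 18\lambda\epsilon\|\bfbeta^\ast_S\|_1$, which is dominated by the claimed bound. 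If instead $\bfDelta \in \mathbb{C}_5(S)$, I invoke the RE condition $\|\mathbf{X}\bfDelta\|_2^2 \geq n\kappa\|\bfDelta\|_2^2$ through $\|\bfDelta_S\|_1 \leq \sqrt{|S|}\,\|\bfDelta\|_2 \leq \sqrt{|S|/(n\kappa)}\,\|\mathbf{X}\bfDelta\|_2$. Dropping the $\|\bfDelta_{S^c}\|_1$ term in $(\star)$ turns it into a quadratic inequality in $u \triangleq \|\mathbf{X}\bfDelta\|_2$, from which a Young-type step yields $\|\mathbf{X}\bfDelta\|_2^2 \leq \frac{25}{4}\lambda^2\frac{|S|}{n\kappa} + 4\lambda\epsilon\|\bfbeta^\ast_S\|_1$, while the cone bound $\|\bfDelta\|_1 \leq 6\|\bfDelta_S\|_1$ gives $3\lambda\|\bfDelta\|_1 \leq 18\lambda\sqrt{|S|/(n\kappa)}\,u$.

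Finally I would add the two bounds and absorb the remaining cross term, which has the shape $\lambda\sqrt{|S|/(n\kappa)}\,\sqrt{\lambda\epsilon\|\bfbeta^\ast_S\|_1}$, using a weighted AM--GM inequality $2pq \leq \rho p^2 + \rho^{-1}q^2$ with $p \propto \lambda\sqrt{|S|/(n\kappa)}$ and $q \propto \sqrt{\lambda\epsilon\|\bfbeta^\ast_S\|_1}$; choosing $\rho$ in the admissible range (e.g.\ $\rho = 3/2$) keeps the coefficient of $\lambda^2|S|/(n\kappa)$ below $81$ and that of $\lambda\epsilon\|\bfbeta^\ast_S\|_1$ below $18$, giving the stated inequality. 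The hard part will be precisely this closing bookkeeping: the slack factors $\tfrac54$, $\tfrac34$ and the cone parameter $5$ must be chosen so that, after the quadratic is solved and the cross term is split, everything fits inside the comparatively tight constants $81$ and $18$. The hypothesis $\epsilon \leq 1/4$ is exactly what is needed to keep the bracket $\tfrac12 - \tfrac{4\epsilon}{5}$ positive in the cone-failure case and to keep $\tfrac54 - \epsilon$ and $\tfrac34 - \epsilon$ of the right signs throughout.
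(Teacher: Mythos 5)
Your proposal is correct; I checked the constants and they do close (in the cone-failure case you get $\|\mathbf{X}\bfDelta\|_2^2 + 3\lambda\|\bfDelta\|_1 \leq \tfrac{47}{3}\lambda\epsilon\|\bfbeta^\ast_S\|_1 < 18\lambda\epsilon\|\bfbeta^\ast_S\|_1$, and in the cone case $\rho = 3/2$ gives coefficients $\tfrac{25}{4}+45+27 = 78.25 \leq 81$ and $4+12 = 16 \leq 18$). The skeleton is the standard RE oracle inequality in both your argument and the paper's, but the two proofs diverge in how they organise it. You absorb the approximation slack directly into the coefficients of the cone inequality $(\star)$, leaving only a clean $\lambda\epsilon\|\bfbeta^\ast_S\|_1$ remainder; the paper instead first establishes the auxiliary bound $\|\widetilde{\bfbeta}\|_1 \leq \tfrac{5}{2}\|\bfbeta^\ast\|_1$ and uses it to replace $\epsilon\|\widetilde{\bfbeta}\|_1$ by $\tfrac{5}{2}\epsilon\|\bfbeta^\ast_S\|_1$, arriving at $2\|\mathbf{X}\bfDelta\|_2^2 \leq \lambda(5\|\bfDelta_S\|_1 - 3\|\bfDelta_{S^c}\|_1) + 10\lambda\epsilon\|\bfbeta^\ast_S\|_1$. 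More importantly, the case splits differ: you branch on whether $\bfDelta \in \mathbb{C}_5(S)$, whereas the paper branches on whether $\|\bfDelta_S\|_1 \geq \epsilon\|\bfbeta^\ast_S\|_1$, which has the effect that in each branch only \emph{one} of the two terms on the right-hand side survives (cone membership then follows for free in the first branch). That makes the paper's closing step a single application of $2ab \leq a^2 + b^2$ with no cross term, while your cone branch retains both terms and needs the extra weighted AM--GM to absorb the $\lambda\sqrt{|S|/(n\kappa)}\sqrt{\lambda\epsilon\|\bfbeta^\ast_S\|_1}$ cross term --- workable, as you note, but with noticeably tighter bookkeeping. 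Your version buys a slightly more self-contained cone inequality; the paper's buys cleaner constants.
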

The proof has many variations, and we give the most direct one below. Note the above implies
\begin{align*}
    \|\mathbf{X}(\widetilde{\bfbeta} - \bfbeta^\ast)\|_2^2 \leq 81\lambda^2\frac{|S|}{n\kappa} + 18\lambda\epsilon\|\bfbeta_S^\ast\|_1 \quad\text{and}\quad
    \|\widetilde{\bfbeta} - \bfbeta^\ast\|_1 \leq 27\lambda\frac{|S|}{n\kappa} + 6\epsilon\|\bfbeta_S^\ast\|_1.
\end{align*}

\begin{proof}
We use the basic optimality of $\widetilde{\bfbeta}$,  
\[
\frac{1}{2}\norm{\mathbf{y}-\mathbf{X}\widetilde{\bfbeta}}_2^2 + \lambda(1-\epsilon) \norm{\widetilde{\bfbeta}}_1  \leq \frac{1}{2}\norm{\mathbf{y}-\mathbf{X}\bfbeta^*}_2^2 + \lambda \norm{\bfbeta^*}_1.
\]
Define $\bfDelta \triangleq \widetilde{\bfbeta} -\bfbeta^*$. Expanding out the terms like in \cref{thm:Lasso_error},
\begin{align*}
    0 \leq 2\norm{\mathbf{X}\bfDelta}_2^2 &\leq 4\mathbf{w}^\top \mathbf{X}(\widetilde{\bfbeta} - \bfbeta^*)  + 4\lambda \norm{\bfbeta^*}_1  - 4\lambda(1 -\epsilon)\| \widetilde{\bfbeta}\|_1 \tag{\Cref{eq:inequality1}}\\
    &\leq 4\|\mathbf{X}^\top \mathbf{w}\|_\infty \|\bfDelta\|_1 + 4\lambda \norm{\bfbeta^*}_1  - 4\lambda(1-\epsilon)\| \widetilde{\bfbeta}\|_1 \tag{H\"older's inequality}\\
    &\leq \lambda \|\bfDelta\|_1 + 4\lambda \norm{\bfbeta^*}_1  - 4\lambda(1-\epsilon)\| \widetilde{\bfbeta}\|_1 \tag{$4\|\mathbf{X}^\top \bfDelta\|_\infty \leq \lambda$}\\
    &\leq \lambda (\|\widetilde{\bfbeta}\|_1 + \|\bfbeta^\ast\|_1) + 4\lambda \norm{\bfbeta^*}_1  - 4\lambda(1-\epsilon)\| \widetilde{\bfbeta}\|_1, \tag{Triangle inequality}
\end{align*}
from which we infer that $\|\widetilde{\bfbeta}\|_1 \leq \frac{5}{3 - 4\epsilon}\|\bfbeta^\ast\|_1 \leq \frac{5}{2}\|\bfbeta^\ast\|_1$ since $\epsilon \leq \frac{1}{4}$. Let us now return to the inequality $2\norm{\mathbf{X}\bfDelta}_2^2 \leq \lambda \|\bfDelta\|_1 + 4\lambda( \norm{\bfbeta^*}_1  - \| \widetilde{\bfbeta}\|_1) + 4\lambda\epsilon\| \widetilde{\bfbeta}\|_1$ and further decompose the right-hand side into the respective partitioning sets $S$ and $S^c$ by using that $\bfbeta_S^* + \bfDelta_S = \widetilde{\bfbeta}_S$ and $\bfDelta_{S^c}= \widetilde{\bfbeta}_{S^c} $, since $\bfbeta_{S^c}^* = \mathbf{0}$. We also use the inequality $\|\widetilde{\bfbeta}\|_1 \leq \frac{5}{2}\|\bfbeta^\ast\|_1 = \frac{5}{2}\|\bfbeta^\ast_S\|_1$. 
Therefore
\begin{align*}
    2\norm{\mathbf{X}\bfDelta}_2^2 &\leq \lambda \|\bfDelta\|_1 + 4\lambda(\norm{\bfbeta^*}_1  - \| \widetilde{\bfbeta}\|_1) + 10\lambda\epsilon\|\bfbeta^\ast_S\|_1 \\
    &= \lambda (\| \bfDelta_S\|_1 + \| \bfDelta_{S^c}\|_1)  + 4\lambda (\norm{\bfbeta^*}_1  - \| \widetilde{\bfbeta}\|_1) + 10\lambda\epsilon\|\bfbeta^\ast_S\|_1\\
    &=  \lambda (\| \bfDelta_S\|_1 + \| \bfDelta_{S^c}\|_1)  + 4\lambda(\norm{\bfbeta_S^*}_1 - \|\bfbeta_S^* + \bfDelta_S\|_1 - \|\bfDelta_{S^c}\|_1) + 10\lambda\epsilon\|\bfbeta^\ast_S\|_1 \\ 
    &\leq \lambda (\| \bfDelta_S\|_1 + \| \bfDelta_{S^c}\|_1) + 4\lambda (\|\bfDelta_S\|_1 - \|\bfDelta_{S^c}\|_1) + 10\lambda\epsilon\|\bfbeta^\ast_S\|_1 \tag{\Cref{eq:auxiliary_equation}} \\
    &= \lambda(5\|\bfDelta_S\|_1 - 3\|\bfDelta_{S^c}\|_1) + 10\lambda\epsilon\|\bfbeta^\ast_S\|_1,
\end{align*}
where we used that
\begin{align}\label{eq:auxiliary_equation}
 \| \bfbeta_S^*\|_1 - \| \bfbeta_S^* +\bfDelta_S\|_1 \leq \| \bfbeta_S^*\|_1 - (\| \bfbeta_S^*\|_1  - \|\bfDelta_S\|_1) = \| 
     \bfDelta_S\|_1.   
\end{align}
We now proceed by considering two cases: (Case 1) $\|\bfDelta_S\|_1 \geq \epsilon \|\bfbeta^\ast_S\|_1$ or (Case 2) $\|\bfDelta_S\|_1 < \epsilon \|\bfbeta^\ast_S\|_1$. Therefore it must hold that either (Case 1)
\begin{align*}
    2\|\mathbf{X}\bfDelta\|_2^2 + 3\lambda\|\bfDelta_{S^c}\|_1 \leq 15\lambda\|\bfDelta_S\|_1
\end{align*}
or (Case 2)
\begin{align*}
    2\|\mathbf{X}\bfDelta\|_2^2  + 3\lambda\|\bfDelta_{S^c}\|_1 \leq 15\lambda\epsilon\|\bfbeta_S^\ast\|_1.
\end{align*}
In the first case, we find that $\|\bfDelta_{S^c}\|_1 \leq 5\|\bfDelta_S\|_1 \implies \bfDelta \in\mathbb{C}_{5}(S)$, which means that we can now apply the {\rm RE} condition. More specifically, from $2\norm{\mathbf{X}\bfDelta}_2^2 + 3\lambda\|\bfDelta_{S^c}\|_1 \leq 15 \lambda \norm{\bfDelta_S}_1$ we get
\begin{align*} 
    2\norm{\mathbf{X}\bfDelta}_2^2 + 3\lambda\|\bfDelta\|_1 &\leq 18 \lambda \norm{\bfDelta_S}_1 
    \leq 18\lambda \sqrt{|S|}\|\bfDelta\|_2
    \overset{\rm RE}{\leq} 18\lambda\sqrt{\frac{|S|}{n\kappa}}\|\mathbf{X}\bfDelta\|_2
    \leq 81\lambda^2\frac{|S|}{n\kappa} + \|\mathbf{X}\bfDelta\|_2^2,
\end{align*}
where the last inequality follows from $2ab \leq a^2 + b^2$ for $a,b\in\mathbb{R}$. Thus $\norm{\mathbf{X}\bfDelta}_2^2 + 3\lambda\|\bfDelta\|_1 \leq  81\lambda^2\frac{|S|}{n\kappa}$.

In the second case, from $2\|\mathbf{X}\bfDelta\|_2^2  + 3\lambda\|\bfDelta_{S^c}\|_1 \leq 15\lambda\epsilon\|\bfbeta_S^\ast\|_1$ we get
\[
    2\|\mathbf{X}\bfDelta\|_2^2  + 3\lambda\|\bfDelta\|_1 \leq 18\lambda\epsilon\|\bfbeta_S^\ast\|_1 \implies \|\mathbf{X}\bfDelta\|_2^2  + 3\lambda\|\bfDelta\|_1 \leq 18\lambda\epsilon\|\bfbeta_S^\ast\|_1.
\]
Putting both Cases 1 and 2 together leads to the desired result.
\end{proof}

\paragraph*{Linear Gaussian model and compressed sensing.} Once again, we can apply the above result to common linear regression models. In the case when $\mathbf{X}\in\mathbb{R}^{n\times d}$ is fixed and $\mathbf{w}\in\mathbb{R}^n$ has i.i.d.\ $\mathcal{N}(0,\sigma^2)$ entries, the same reasoning from the previous section yields
\begin{align*}
    \mathbb{P}\left[\lambda \geq 4\norm{\mathbf{X}^\top \mathbf{w}}_\infty \right] \geq 1-\delta
\end{align*}
by setting $\lambda = \sqrt{32C\sigma^2n\log(2d/\delta)}$. This value of $\lambda$ leads to, with probability at least $1-\delta$,
\begin{align*}
    \frac{\|\mathbf{X}(\widetilde{\bfbeta} - \bfbeta^\ast)\|_2^2}{n} &\leq \frac{2592C\sigma^2|S|\log(2d/\delta)}{\kappa n} + 18\epsilon\sqrt{\frac{32C\sigma^2\log(2d/\delta)}{\kappa n}}\|\bfbeta_S^\ast\|_1.
\end{align*}
The same bound (with $C=1$) holds for the model where $\mathbf{X}\in\mathbb{R}^{n\times d}$ is the standard Gaussian matrix with i.i.d.\ $\mathcal{N}(0,1)$ entries and $\mathbf{w}\in\mathbb{R}^n$ is fixed with $\|\mathbf{w}\|_\infty \leq \sigma$. Thus, under the RE condition, if $\epsilon = O(\sqrt{\log(d)/n})$, then the mean square error vanishes by an $n$ factor as $n$ becomes large, which can be substantially smaller than the $\sqrt{\log(d)/n}$ bound from the previous section. This is known as a fast rate.

\section{Discussion and future work}\label{sect:discussion}

We studied and quantised the LARS pathwise algorithm (or homotopy method) proposed by Efron \textit{et al.}~\cite{efron2004least} and Osborne \textit{et al.}~\cite{osborne2000Lasso,osborne2000new} which produces the set of Lasso solutions for varying the penalty term $\lambda$. By assuming quantum access to the Lasso input, we proposed two quantum algorithms. The first one (\cref{alg:warmup_classical_quantum}) simply replaces the classical search within the joining time calculation with the quantum minimum-finding subroutine from D\"{u}rr and H\o{}yer~\cite{durr1996quantum}. Similar to the classical LARS algorithm, it outputs the exact Lasso path, but has an improved runtime by a quadratic factor in the number of features $d$. Our second quantum algorithm (\cref{alg:classical_quantum}), on the other hand, outputs an approximate Lasso path $\widetilde{\mathcal{P}}$ by computing the joining times up to some error. This is done by approximating the joining times  using quantum amplitude estimation~\cite{brassard2002quantum} and finding their maximum via the approximate quantum minimum-finding subroutine from Chen and de Wolf~\cite{chen2021quantum}. Consequently, the runtime of our approximate quantum LARS algorithm is quadratically improved in both the number of features $d$ and observations $n$. We established the correctness of \cref{alg:classical_quantum} by employing an approximate version of the KKT conditions and a duality gap, and showed that $\widetilde{\mathcal{P}}$ is a minimiser of the Lasso cost function up to additive error $\lambda\epsilon\Vert\widetilde \bfbeta(\lambda)\Vert_1$. Finally, we dequantised \cref{alg:classical_quantum} and proposed an approximate classical LARS algorithm (\cref{alg:approximate_classical}) based on sampling.

The time complexity of our approximate LARS algorithms, both classical and quantum, directly depends on the design matrix $\mathbf{X}$ via the quantity $\|\mathbf{X}\|_{\max}\|\mathbf{X}^+_{\mathcal{A}}\|_2$, the mutual incoherence of $\mathbf{X}$, and the mutual overlap between $\mathbf{X}$ and the vector of observations $\mathbf{y}$. When $\mathbf{X}$ is a random matrix from the $\mathbf{\Sigma}$-Gaussian ensemble, where $\mathbf{\Sigma}$ is a covariance matrix with ``good'' properties, and $\mathbf{y}$ is sparse, we showed that these three quantities are well bounded and that \Cref{alg:classical_quantum} and \Cref{alg:approximate_classical} have complexities $\widetilde{O}(\sqrt{d}/\epsilon)$ and $\widetilde{O}(d/\epsilon^2)$ per iteration, respectively, which exponentially improves the dependence on $n$ compared to the $O(nd)$ complexity of the standard LARS algorithm. The condition that $\mathbf{y}$ is sparse follows naturally from the fact that there could be data situations in which $\mathbf{X}$ being Gaussian leads to very small, in absolute magnitude, responses in the dependent variable $\mathbf{y}$. Another reason for $\mathbf{y}$ being sparse is inaccurate or incomplete measurements~\cite{candes2006stable}. Usually, sparsity can be artificially generated by the corresponding  matrix $\mathbf{X}$ through setting some of the regression coefficients in $\bfbeta^\ast$  to zero or setting some of the components in $\mathbf{y}$ to zero. In both cases, our algorithms would be robust enough and  exhibit speed up to recover the approximate original coefficient vector. Even though we only proved such result for random Gaussian matrices, we conjecture that it should also apply to other matrices.

We point out some future directions of research following our work. In \cref{alg:classical_quantum,alg:approximate_classical}, the crossing times are computed exactly and the joining times are estimated up to some error. A natural extension would be to allow errors also in computing the crossing times. This would require new techniques in order to retain the path solution continuity. Another direction is to reduce the number of iterations of the LARS algorithm. In the classical setting, Mairal and Yu~\cite{mairal2012complexity} proposed an approximate LARS algorithm with a maximum number of iterations by employing first-order optimisation methods when two kinks are too close to each other. Moreover, it would be interesting to design a fully quantum LARS algorithm by using efficient quantum subroutines for matrix multiplication and matrix inversion, e.g., based on block-encoding techniques~\cite{low2019hamiltonian,gilyen2019quantum,chakraborty2019power}. Also, we believe that our quantum query lower bound is not tight and can be improved, specially the dependence on $\epsilon$ (also not tight in the work of Chen and de Wolf~\cite{chen2021quantum}).

Our algorithms are designed to work only in the fault tolerant regime. Another future direction is to study the LARS algorithm in the NISQ (Noisy Intermediate-Scale Quantum) setting~\cite{preskill2018quantum}, where algorithms operate on a relatively small number of qubits and have shallow circuit depths. 
In addition, while we have studied the plain-vanilla LARS algorithm, other Lasso related algorithms still remain unexplored in the quantum setting. For example, fused Lasso~\cite{tibshirani2005sparsity, xin2016efficient, hoefling2010path}, which penalises the $\ell_1$-norm of both the coefficients and their successive differences, is a generalisation of Lasso with applications to support vector classifier~\cite{gunn1998support, rebentrost2014quantum, suykens1999least}. On the other hand, grouped Lasso is a generalised model for linear regression with $\ell_1$ and $\ell_2$-penalties~\cite{simon2013sparse, friedman2010note}. This model is used in settings where the design matrix can be 
decomposed into groups of submatrices. Simon \textit{et al.}~\cite{simon2013sparse} proposed a sparse grouped Lasso algorithm which finds applications in sparse graphical modelling. We believe that similar techniques employed in this work could be applied to these alternative Lasso settings. 

\section{Acknowledgements}
JFD thanks Hanzhong Liu and Bin Yu for very useful clarifications regarding Ref.~\cite{mairal2012complexity} and Lasso in general, Yanlin Chen and Ronald de Wolf for helpful clarifications regarding Ref.~\cite{chen2021quantum}, András Gilyén for Ref.~\cite{van2017quantum}, Yihui Quek for Ref.~\cite{quek2020robust}, and Rahul Jain, Josep Lumbreras, and Marco Tomamichel for interesting discussions. JFD also thanks Iosif Pinelis for answering a question on MathOverflow~\cite{mathoverflow}.
JFD acknowledges funding from ERC grant No.\ 810115-DYNASNET. DL acknowledges funding from the Latvian Quantum Initiative under EU Recovery and Resilience Facility under project No.\ 2.3.1.1.i.0/1/22/I/CFLA/001 in the final part of the project. CSP gratefully acknowledges Ministry of Education (MOE), AcRF Tier 2 grant (Reference No: MOE-T2EP20220-0013) for the funding of this research. This research is supported by the National Research Foundation, Singapore and A*STAR under its CQT Bridging Grant and its Quantum Engineering Programme under grant NRF2021-QEP2-02-P05.

\DeclareRobustCommand{\DE}[2]{#2}
\DeclareRobustCommand{\VANDE}[3]{#3}
\bibliographystyle{alphaurl}
\bibliography{QuantumLassobib}
\appendix

\section{Summary of symbols}
\label{app:symbols}

The symbols and their corresponding concept are summarised in the table below.

{\renewcommand{\arraystretch}{1.01}
\begin{table}[ht]
\centering
\begin{tabular}{|c|c|c|}
\hline
Symbol & Explanation  \\
\hline
$n$ & Number of observations or sample points  \\
$d$ & Number of features \\
$\mathbf{y}$ & Vector of observations \\
$\mathbf{X}$ & Design matrix \\
$\bfbeta$ & Optimisation variables  \\
$\hat \bfbeta$ & Optimal Lasso solution \\
$\widetilde \bfbeta$ & Approximate Lasso solution \\
$\lambda$ & Regularisation or penalty parameter \\
$\mathcal{L}^{(\lambda)}_{\mathbf{X},\mathbf{y}}(\bfbeta)$ & Lasso function with input $(\mathbf{X},\mathbf{y})$\\
$\overline{\mathcal{L}}^{(\lambda)}_{\mathbf{X},\mathbf{y}}(\bfbeta)$ & Normalised lasso function with input $(\mathbf{X},\mathbf{y})$\\
$\overline{\mathcal{L}}^{(\lambda)}_{\mathcal{D}}(\bfbeta)$ & Expected Lasso function relative to distribution $\mathcal{D}$\\
$\mathcal P$ & Optimal regularisation Path \\
$\widetilde{\mathcal P}$ & Approximate regularisation Path \\
$\mathcal A$ & Active set  \\
$\mathcal I$ & Inactive set  \\
$\bfeta$ & Equicorrelation signs  \\
$\alpha_{\mathcal{A}}$ & Mutual incoherence \\
$\gamma_{\mathcal{A}}$ & Mutual overlap  \\
\hline
\end{tabular}\caption{Summary of symbols and their corresponding concept used in the paper.}
\end{table}
}

\end{document}